\definecolor{blueviolet}{rgb}{0.2, 0.2, 0.6}
\definecolor{webgreen}{rgb}{0,.5,0}
\definecolor{webbrown}{rgb}{.6,0,0}
\numberwithin{equation}{section}
\newtheorem{theorem}{Theorem}
\newtheorem{corollary}{Corollary}
\newtheorem{definition}{Definition}
\newtheorem{lemma}{Lemma}
\newtheorem{proposition}{Proposition}
\newtheorem{fact}{Fact}
\theoremstyle{definition}
\newcommand{\mg}[1]{\mathsf{\color{gray} #1}}
\DeclareMathOperator{\poly}{poly}
\newcommand{\PD}{\Pi^{\text{dist}}}
\newcommand{\cPD}{\Pi^{\text{c-dist}}}
\newcommand{\cPDLR}{\Pi^{\text{c-dist}}_{\mathsf{X}_\mathsf{L} \mathsf{Y}_\mathsf{R}}}
\newcommand{\vertiii}[1]{{\left\vert\kern-0.25ex\left\vert\kern-0.25ex\left\vert #1 \right\vert\kern-0.25ex\right\vert\kern-0.25ex\right\vert}}
\newcommand{\rom}[1]{\mathtt{\uppercase\expandafter{\romannumeral #1\relax}}}
\DeclareMathOperator*{\E}{{\mathbb{E}}}
\newcommandx{\lz}[2][1=]{\todo[linecolor=red,backgroundcolor=red!10,bordercolor=red,#1]{LZ: #2}}
\newtheorem*{theorem*}{Theorem}
\newtheorem*{task*}{Task}
\newtheorem*{proposition*}{Proposition}
\newcommand{\bs}{\boldsymbol}
\newcommand{\ee}{\end{equation}}
\DeclareMathOperator{\Wg}{Wg}
\def\:={\,\raisebox{0.85pt}{.}\hspace{-2.78pt}\raisebox{2.85pt}{.}\!\!=\,}
\def\=:{\,=\!\!\raisebox{0.85pt}{.}\hspace{-2.78pt}\raisebox{2.85pt}{.}\,}
\newcommand{\dk}[1]{{ \color{orange} DK: #1 }}
\newif\ifnumerics
\begin{document}

\title{Hardness of recognizing phases of matter}

\author[*,1,2]{Thomas Schuster}
\author[*,3]{Dominik Kufel}
\author[3]{Norman Y. Yao}
\author[1,2]{Hsin-Yuan Huang}

\affil[1]{California Institute of Technology}
\affil[2]{Google Quantum AI}
\affil[3]{Harvard University}

\date{\today}

\maketitle

\begin{abstract}\normalsize 
We prove that recognizing the phase of matter of an unknown quantum state is quantum computationally hard. 
More precisely, we show that there exist  quantum states for which the quantum computational resources of any phase-recognition algorithm  must grow exponentially with the correlation range~$\xi$, i.e.~the light-cone size of a local unitary circuit that maps the state to a fixed point.
This exponential scaling renders phase recognition impractical even for moderate correlation ranges, and leads to super-polynomial computational time in the system size~$n$ whenever $\xi=\omega(\log n)$.
Our results encompass a substantial portion of all known phases of matter, including symmetry-breaking phases and symmetry-protected topological phases for any discrete on-site symmetry group in any spatial dimension. 
To establish this hardness, we extend the study of pseudorandom unitaries (PRUs) to quantum systems with symmetries. We prove that symmetric PRUs exist under standard cryptographic conjectures, and can be constructed in extremely low circuit depths. We also establish similar hardness for purely classical phases of matter.
Looking forward, our results raise a fundamental open question: if not symmetries, what properties of physical systems guarantee that their phases are efficient to recognize?
\end{abstract}

\pagenumbering{arabic} 
\setcounter{page}{1}

\addtocontents{toc}{\protect\setcounter{tocdepth}{0}}

\section{Introduction}

Phases of matter form a cornerstone of modern physics, and provide a fundamental organizing principle for understanding complex quantum systems. 
Classical phases of matter are famously characterized by the Landau paradigm, where phase transitions are driven by the spontaneous breaking of symmetries in the system~\cite{landau2013statistical}.
Beyond the Landau paradigm, quantum mechanics has unveiled entirely new phases of matter, including topological order and symmetry-protected topological phases~\cite{wen2004quantum,chen2013symmetry,zeng2019quantum}.
The ability to identify and characterize these diverse phases of matter is of fundamental interest across physics and information science and crucial for advancing quantum technologies~\cite{altman2021quantum,bauer2020quantum,semeghini2021probing,huang2021classical,clark2020observation,satzinger2021realizing,zhang2022digital,dupont2022quantum,iqbal2024non,leonard2023realization,haghshenas2025digital,will2025probing}.

A fundamental question underlying this scientific pursuit is the following: Given experimental access to a quantum state, how can one determine its phase of matter~\cite{jiang2012identifying,rodriguez2019identifying,cian2021many,herrmann2022realizing,huang2022provably,cian2022extracting,cong2019quantum,lake2022exact,cong2024enhancing,liu2023model,bouland2023public}? 
Despite tremendous interest, precisely answering this question has proven challenging. 
In many physical settings, phase recognition is known to be efficient, using either basic local measurements~\cite{sachdev1999quantum,huang2022provably} or more powerful quantum computational algorithms~\cite{cong2019quantum,lake2022exact,cong2024enhancing,liu2023model}.
On the other hand, all \emph{general} rigorous algorithms for phase recognition, without physical assumptions, have runtimes that scale exponentially in the range of correlations $\xi$ of the state of interest~\cite{huang2024learning,kim2024learning,landau2025learning}.
This scaling mirrors that of brute-force quantum state tomography on local patches of radius $\mathcal{O}(\xi)$ and becomes infeasible for physical systems with even a moderate correlation range.
Intriguingly, recent work has proven that for certain phases, this general computational hardness is fundamental: recognizing whether a quantum state has \emph{topological order} can require super-polynomial quantum computational resources whenever $\xi = \omega(\log n)$~\cite{schuster2024random}. 
This follows from the discovery that pseudorandom unitaries (PRUs)~\cite{ji2018pseudorandom,metger2024simple,chen2024efficient,ma2024construct} can be realized in extremely shallow geometrically-local circuits~\cite{schuster2024random}.
Nonetheless, this result leaves open an essential question: \emph{How hard is it to recognize phases of matter beyond topological order?}

In this work, we prove that a substantial portion of all known phases of matter are quantum computationally hard to recognize.
This includes symmetry-breaking phases and symmetry-protected topological phases for any discrete on-site symmetry group in any spatial dimension.
Our results apply equally to ground states and mixed states, and extend even to \emph{classical} states and phases of matter using distinct techniques.
In all of these settings, we prove that there exist quantum or classical states for which the quantum computational complexity of any phase-recognition algorithm must grow exponentially in the range of correlations $\xi$\footnote{The range of correlations $\xi$ refers to the maximum distance at which any connected correlation functions remain non-zero or above a small value.}.
Asymptotically, the computational runtime become super-polynomial in the system size $n$ whenever $\xi = \omega(\log n)$.
Our results apply to any definition of phases of matter that is invariant under shallow geometrically-local symmetric unitary circuits with light-cone size $\xi$~\cite{chen2010local,chen2011complete}.

Notably, the states we identify that exhibit this hardness have several peculiar properties.
Most crucially, while local-unitary equivalence preserves all long-range correlations, quasiparticles, and entanglement properties of a phase of matter, it can cause the interaction range of the parent Hamiltonian to grow.
In particular, the parent Hamiltonians of the states we identify consist of local terms that act on $\mathcal{O}(\xi)$ qubits.
Our results, and those of~\cite{schuster2024random}, therefore leave open a central physical question: What is the complexity of recognizing phases of matter in the ground states of \emph{constant-local} Hamiltonians?
Answering this question will demand new mathematical insights into ground states and phases beyond current knowledge based on spectral gaps and Lieb-Robinson bounds~\cite{chen2010local,chen2011complete,hastings2005quasiadiabatic,bravyi2006lieb,schuch2011classifying}.
More broadly, our results should also be regarded as a worst-case statement: there \emph{exist} quantum states whose phase of matter is impossible to recognize efficiently. 
Nonetheless, our daily experience shows that many familiar phases of matter, such as liquids, solids, and magnets, are easy to recognize in practice.
This raises further questions, in parallel to the one above.
If not symmetries, what properties guarantee that phases of matter can be recognized efficiently?
Is the locality of the parent Hamiltonian alone sufficient, or are further physical assumptions required?

To establish our results, we extend the notion of PRUs, and their formation in extremely low circuit depths, to incorporate a broad range of physical symmetry operations.
A PRU is an efficient quantum circuit that is indistinguishable from a completely random unitary by any polynomial-time quantum computation~\cite{ji2018pseudorandom,ma2024construct,schuster2025strong}.
Together with their close cousin, unitary designs~\cite{emerson2003pseudo,gross2007evenly,dankert2005efficient,dankert2009exact,brandao2016local, haah2024efficient, chen2024incompressibility,laracuente2024approximate,schuster2024random,cui2025unitary,zhang2025designs}, PRUs are central to numerous areas of quantum science, including quantum cryptography~\cite{ji2018pseudorandom,ananth2022cryptography,kretschmer2023quantum}, device benchmarking~\cite{emerson2005scalable,knill2008randomized,elben2023randomized}, quantum supremacy experiments~\cite{arute2019quantum, morvan2023phase, abanin2025constructive}, quantum learning theory~\cite{huang2021quantum,chen2021exponential,chen2021hierarchy,aharonov2021quantum,cotler2023information},  quantum gravity~\cite{sekino2008fast,hayden2007black,yoshida2017efficient,brown2023quantum,schuster2022many}, and many-body quantum dynamics~\cite{deutsch1991quantum,srednicki1994chaos,rigol2008thermalization,fisher2023random,nahum2017entgrowth,cotler2022fluctuations}.
A critical question in recent research is how to generalize PRUs to physically constrained classes of quantum dynamics, such as those with symmetries~\cite{marvian2022restrictions,kong2021charge,kong2022near,mitsuhashi2023clifford,li2024efficient,li2024designs,liu2024unitary,mitsuhashi2025unitary,schuster2024random,haah2025short,west2025no,grevink2025will}.
Several works have established no-go results, showing that low-depth constructions of PRUs are impossible in the presence of continuous~\cite{haah2025short} or time-reversal symmetries~\cite{schuster2024random,west2025no,grevink2025will}.
To achieve our results, we develop an extensive collection of tools to prove that nearly all features of PRUs \emph{do} extend to quantum systems with any discrete on-site symmetries.
Beyond their implications for recognizing phases of matter, our results thus also provide broader support for using PRUs as models of realistic quantum systems in the physical world.

The remainder of our manuscript is structured as follows. Section~\ref{sec:quantum-background} provides background on quantum phases of matter, and Section~\ref{sec:quantum-summary} summarizes our main results on recognizing quantum phases of matter. These results are founded upon our low-depth construction of symmetric PRUs, which we describe in Sections~\ref{sec:symPRU} and~\ref{sec:symPRUlowdepth}.
\ifnumerics
We return to quantum phases of matter in Section~\ref{sec:purestate}, providing clarifying discussions, intuition, and detailed numerical studies that demonstrate how hardness can emerge even at small correlation lengths for common phase recognition approaches.
\else
We return to quantum phases of matter in Section~\ref{sec:purestate} to provide clarifying discussions, intuition, and detailed analyses that demonstrate how hardness can emerge even at small correlation lengths for common phase recognition approaches.
\fi
Section~\ref{sec:mixedstate} extends our results to mixed state quantum phases of matter. We then present our results on classical phases of matter in Section~\ref{sec:classical}. Section~\ref{sec:discussion} concludes with additional discussions and open questions.

\begin{figure}
    \centering
    \includegraphics[width=1.0\textwidth]{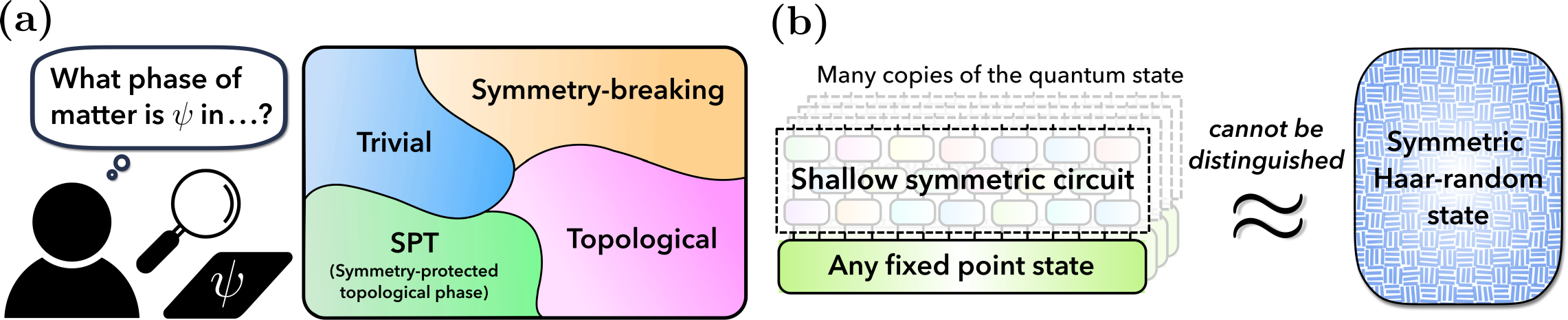}
    \caption{Illustration of our main results.
    \textbf{(a)} We consider the question: Given experimental access to many copies of a quantum state $\ket{\psi}$, can one recognize what phase of matter $\ket{\psi}$ is in?
    Examples of phases of matter include the trivial phase, symmetry-breaking phases, symmetry-protected topological (SPT) phases, and topological order.
    Our main result is that the complexity of recognizing phases of matter can grow exponentially in the correlation range $\xi$ of the quantum state, becoming super-polynomial in the system size $n$ as soon as $\xi = \poly(\log n)$. 
    \textbf{(b)} We achieve this result by extending the study of pseudorandom unitaries (PRUs) to quantum systems with discrete symmetries.
    We show that any fixed point state of any phase of matter can become indistinguishable from a symmetric Haar-random state after a low-depth symmetric circuit is applied.}
    \label{fig:1}
\end{figure}

\section{Quantum phases of matter} \label{sec:quantum}

In this section, we describe our main results on recognizing phases of matter in quantum systems.

\subsection{Background and definitions} \label{sec:quantum-background}

Quantum phases of matter capture properties of the ground states of gapped many-body quantum systems that remain invariant under smooth deformations of the system~\cite{wen2004quantum,zeng2019quantum}. 
The simplest example is a symmetry-breaking phase, in which the system is invariant under a symmetry $G$, yet features multiple nearly-degenerate ground states that transform into one another under the action of $G$~\cite{landau2013statistical}.
This contrasts with the trivial phase of a symmetric system, which features a unique ground state that respects the symmetry operation.
Quantum states can also exhibit novel phases beyond the symmetry-breaking paradigm.
These include topological orders~\cite{wen2004quantum,wen2017colloquium}, which can occur in the absence of any symmetry and are characterized by their patterns of long-range entanglement, and symmetry-protected topological (SPT) phases~\cite{chen2013symmetry,zeng2019quantum}, which are short-range entangled, yet in a manner that is non-trivial when the symmetry is considered.  We refer to Section~\ref{sec:purestate} for several concrete examples.

Our modern understanding of how to classify phases of matter centers upon ideas from quantum information~\cite{zeng2019quantum}.
Two quantum states are in the same phase of matter if and only if they are connected by (i) a low-depth local unitary circuit~\cite{chen2010local,chen2011complete,chen2013symmetry}, or (ii) a short-time adiabatic evolution under gapped Hamiltonians~\cite{hastings2005quasiadiabatic,bravyi2006lieb,schuch2011classifying}.
The two definitions are equivalent up to moderate overheads: any short-time adiabatic evolution can be Trotterized to a low-depth local unitary circuit up to a $\poly(\log n)$ overhead in circuit depth, while any depth-$\ell$ unitary circuit yields a gapped adiabatic path interpolating between an original parent Hamiltonian, $H_1$, and a final parent Hamiltonian, $H_2 = U H_1 U^\dagger$, up to a $\poly(\ell)$ increase in the Hamiltonian's locality.
Intuitively, low-depth circuits and short-time evolutions cannot change the structure of correlations and entanglement in a quantum state, owing to their limited light-cone.
Hence, the phase of matter remains invariant.
If the system is symmetric, the circuit or Hamiltonian must respect the symmetry operation as well.

Our results apply to any definition of phases of matter such that the following holds.
For each phase of matter, we can associate a representative ``fixed point'' state $\ket{\psi_0}$.
Then, we assume that any quantum state $\ket{\psi}$ that is related to $\ket{\psi_0}$ by a symmetric geometrically-local depth-$\ell$ unitary circuit $U$ is in the same phase as $\ket{\psi_0}$.
The parameter $\ell$ controls the maximum allowed circuit depth of the unitary $U$.
We also let $\xi$ denote the size of the light-cone of $U$.
Physically, $\xi$ upper bounds the maximum range of correlations in the quantum state $\ket{\psi}$; that is, the distance beyond which all connected correlation functions are zero or below a small value.
We note that this definition differs slightly from other common notions of correlation length, which concern the asymptotic exponential decay rate of connected correlation functions.
We have $\xi \leq \ell$ by definition.

\begin{figure}
    \centering
    \includegraphics[width=1.0\textwidth]{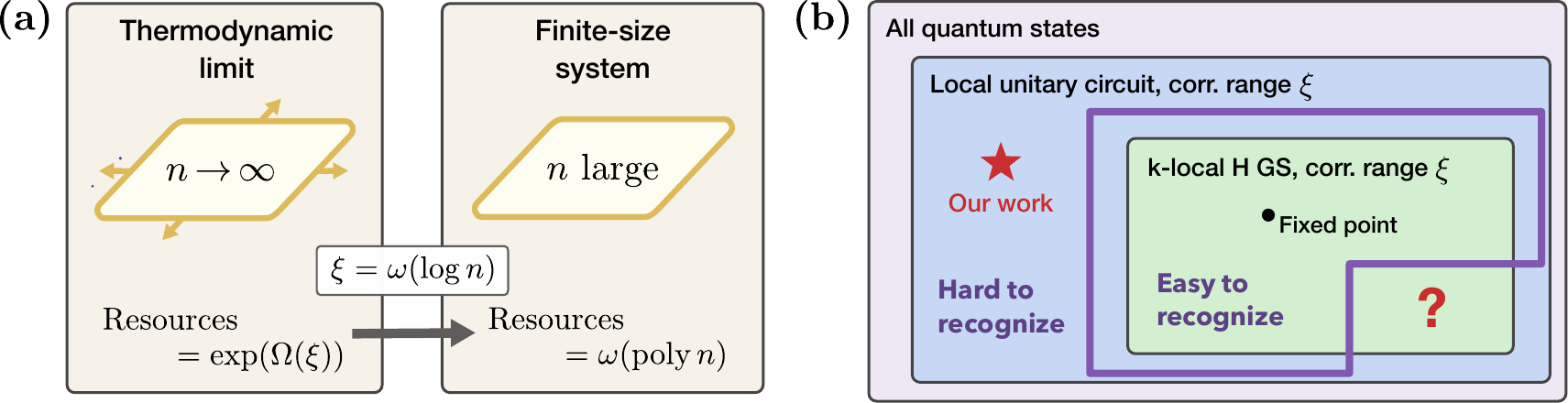}
    \caption{Additional details and contextualization of our results.
    \textbf{(a)} We consider two regimes.
    In the thermodynamic limit, $n \rightarrow \infty$, we show that the complexity of any phase recognition algorithm must grow exponentially in the correlation range $\xi$.
    In finite-size systems, the same exponential growth also applies, and implies that the complexity becomes super-polynomial in the number of qubits $n$ when $\xi = \omega(\log n)$.
    \textbf{(b)} Our results (red star) apply to quantum states that are related to fixed point states by local unitary circuits with light-cone size $\xi$ (blue region);
    this encompasses a larger set of states than definitions that restrict to ground states of constant-local Hamiltonians (green region). Whether there exist  ground states of constant-local Hamiltonians whose phases of matter become exponentially hard to recognize in the correlation length remains an open question (red question mark).
    For the sake of illustration, in this figure we have neglected the $\poly(\log n)$ overhead in light-cone size when translating adiabatic constant-local Hamiltonian evolution to local unitary circuits (see main text).}
    \label{fig:2}
\end{figure}

In condensed matter literature, the correlation range $\xi$ and circuit depth $\ell$ are often assumed to be constant with respect to the system size, $\xi, \ell = \mathcal{O}(1)$.
This is chosen because many properties of phases of matter become sharply defined only in the strict thermodynamic limit, $n \rightarrow \infty$.
In contrast, in order to study the complexity of recognizing phases of matter as a computational problem, one must identify a large but finite scaling parameter to study the complexity with respect to.
We consider two physically-motivated choices of scaling parameter in our work [Fig.~\ref{fig:2}(a)].
First, working in the strict $n \rightarrow \infty$ limit, we consider the complexity of phase recognition as a function of the correlation range $\xi$.
Our results will show that the resources of any phase recognition algorithm must grow exponentially in $\xi$.
We remark that in this limit, $n \rightarrow \infty$, a bounded quantum algorithm cannot access all $n$ qubits of a quantum state.
Second, we consider the standard computer scientific setting, and study the complexity of phase recognition as a function of the system size $n$ itself.
Consistent with the above scaling, our results will show that the resources of any phase recognition algorithm become super-polynomial in $n$ whenever $\xi = \omega(\log n)$.
For smaller correlation ranges, $\xi = \mathcal{O}(\log n)$, any phase of matter can be recognized in $\poly n$ time by performing brute-force local state tomography~\cite{huang2024learning,landau2025learning}.

%

A natural question is whether the scaling $\xi = \omega(\log n)$ is well-motivated physically.
We note that this scaling is not fundamental to our results; rather, it is an artifact of studying the complexity with respect to $n$ as opposed to $\xi$.
%
%
In favor of this scaling, we note that from a quantum-information perspective, all hallmark properties of phases of matter remain robust under any local unitary circuit with light-cone smaller than the system radius, $\xi = o(n^{1/d})$, where $d$ is the spatial dimension.
%
%
For example, this sub-extensive light-cone is sufficient to guarantee that all long-range correlations, entanglement, and quasiparticle excitations of a quantum state are precisely preserved, after coarse-graining over a distance $\xi$.
Hence, one can view these states as well-defined, verifiable examples of quantum phases of matter.
Moreover, from a practical perspective, an essential question for experiments is how the resources of phase recognition strategies grow with the correlation range $\xi$.
To address this question theoretically, one must allow $\xi$ to grow with the scaling parameter of any study.

On a more cautionary note, however, we note that the set of local unitary circuits with light-cone $\xi$ is much larger than the set of gapped adiabatic constant-local Hamiltonian evolutions  [Fig.~\ref{fig:2}(b)].
This follows because a local unitary circuit will generically increase the locality of a parent Hamiltonian from $\mathcal{O}(1)$ to $\mathcal{O}(\xi)$.
This opens the possibility that recognizing phases of matter in physical ground states, with large correlation ranges but constant-local Hamiltonians, may be meaningfully easier than recognizing phases of matter in complete generality (i.e.~under the local unitary circuit definition).
Our hardness results apply to the local-unitary-circuit definition and not to constant-local Hamiltonians.
This highlights that, if phase recognition is to be easy for all constant-local Hamiltonians, it must arise from a special structure of their ground states that, to our knowledge, is not yet understood.


\subsection{Summary of main results} \label{sec:quantum-summary}

Our main result establishes that the quantum computational complexity of recognizing quantum phases of matter grows exponentially in the correlation range $\xi$.
For technical reasons, the circuit depth $\ell$ for generating the states we consider is polynomial in the correlation range $\xi$; hence, the complexity grows as a stretched exponential in the circuit depth $\ell$.
This rapid growth implies that the complexity is super-polynomial in $n$ even for small values of $\xi = \omega(\log n)$ or $\ell = \poly(\log n)$.

\begin{theorem}[Hardness of recognizing quantum phases of matter] \label{thm:quantum}
    For any discrete on-site symmetry group $G$, distinguishing any quantum phase of matter from the trivial phase requires quantum algorithms with computational time scaling exponentially in the correlation range $\xi$. The computational time becomes super-polynomial in the system size $n$ whenever $\xi = \omega(\log n)$.
\end{theorem}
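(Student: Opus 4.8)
The plan is to reduce the phase-recognition problem to the problem of distinguishing a symmetric pseudorandom unitary (PRU) from a symmetric Haar-random unitary, which is hard by the cryptographic conjectures invoked earlier. Concretely, fix the target phase of matter and let $\ket{\psi_0}$ be its fixed-point state and $\ket{\phi_0}$ the fixed-point state of the trivial phase in the same symmetry sector (e.g.\ a symmetric product state). The key observation, advertised in Figure~\ref{fig:1}(b), is that applying a \emph{symmetric} geometrically-local low-depth unitary $U$ to $\ket{\psi_0}$ keeps the state inside the target phase by assumption, and similarly $U$ applied to $\ket{\phi_0}$ stays in the trivial phase. So a putative efficient phase-recognition algorithm $\mathcal{A}$ must, with high probability, accept on inputs of the form $U^{\otimes k}\ket{\psi_0}^{\otimes k}$ and reject on $U^{\otimes k}\ket{\phi_0}^{\otimes k}$, for \emph{every} admissible $U$ — in particular for $U$ drawn from a low-depth symmetric PRU ensemble.

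The main steps, in order, are: (1) Invoke the low-depth symmetric PRU construction (Sections~\ref{sec:symPRU}–\ref{sec:symPRUlowdepth}) to obtain an ensemble of symmetric geometrically-local circuits of depth $\ell = \poly(\xi)$ and light-cone radius $\xi$ that is computationally indistinguishable from the symmetric Haar measure, even given polynomially many queries. (2) Show that under the \emph{symmetric Haar} measure, the states $U\ket{\psi_0}$ and $U\ket{\phi_0}$ are statistically close — more precisely, that no poly-time quantum algorithm receiving $\poly(n)$ copies of $U\ket{\psi_0}$ versus $U\ket{\phi_0}$ can distinguish them, because both are close to a Haar-random state in the relevant symmetry sector (one should check that $\ket{\psi_0}$ and $\ket{\phi_0}$ lie in, or can be padded to lie in, the same global symmetry charge sector so that the symmetric Haar orbit is the same; this may require tensoring in ancillary sites or choosing fixed points with matching quantum numbers). (3) Chain (1) and (2): if $\mathcal{A}$ distinguished the two phases, then running $\mathcal{A}$ on copies of $U\ket{\psi_0}$ vs.\ $U\ket{\phi_0}$ with $U$ from the PRU ensemble would distinguish the PRU ensemble from symmetric Haar (using PRU indistinguishability to pass from the Haar computation to the PRU computation), contradicting the security of symmetric PRUs. (4) Track parameters: the PRU security bound is super-polynomially small once the circuit depth $\ell = \poly(\log n)$, equivalently $\xi = \omega(\log n)$, giving the stated asymptotic hardness; for smaller $\xi$ one gets the stretched-exponential-in-$\ell$, exponential-in-$\xi$ lower bound on runtime by the quantitative form of PRU security.

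The main obstacle is step (2): establishing that a symmetric-Haar-random image of a fixed-point state is indistinguishable (given polynomially many copies) from a symmetric-Haar-random image of the trivial fixed point. This is where the discrete on-site symmetry structure must be used carefully — the global Hilbert space decomposes into charge sectors, and within each sector the symmetric Haar measure acts irreducibly (or as a direct sum of irreps), so one needs a Schur–Weyl / representation-theoretic argument showing that the $k$-copy reduced density matrices of $U\ket{\psi_0}$ and $U\ket{\phi_0}$ both concentrate around the same canonical ``symmetric Haar'' $k$-copy state, up to error that is negligible in the query count $k = \poly(n)$. A subtlety is that fixed-point states of nontrivial SPT or symmetry-breaking phases may have support spread across several charge sectors (symmetry-breaking fixed points are cat states), so one must verify the argument sector-by-sector and confirm that the symmetric low-depth circuit cannot itself be used as a ``decoder'' that detects the phase label — this is precisely what the locality and depth constraints ($\ell = \poly(\xi) = o(n^{1/d})$) rule out via a light-cone argument. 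Once this indistinguishability lemma is in hand, the reduction is routine.
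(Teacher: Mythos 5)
Your reduction for the regime $\xi = \omega(\log n)$ is essentially the paper's own argument: draw $U$ from the low-depth symmetric PRU ensemble of Theorem~\ref{thm:polylog}, note that $U\ket{\psi_0}$ and $U\ket{\phi_0}$ remain in their respective phases, use PRU security to swap $U$ for a symmetric Haar-random unitary, and observe that the two image ensembles then carry no distinguishing information. One remark on your step (2), which you single out as the main obstacle: it is lighter than you fear. Once each phase is represented by a \emph{symmetric} fixed point (every phase admits one, and a symmetric state occupies a single charge sector --- the even GHZ cat is an eigenstate of the global flip, not spread across sectors), the state $V\ket{\psi_0}$ for symmetric-Haar $V$ is exactly a Haar-random state of the multiplicity space $\mathcal{M}_\lambda$ of that sector, with distribution independent of which fixed point you started from. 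No Schur--Weyl concentration estimate over $k$ copies is needed; unitary invariance of the Haar measure within the block already gives exact equality of the induced ensembles (the paper's mixed-state argument makes the general statement precise: only the sector populations are extractable). Your padding/charge-matching caveat is the right thing to check, but it is a bookkeeping step, not an analytic one.

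The genuine gap is in your step (4). The theorem asserts runtime exponential in $\xi$ for \emph{all} $\xi$, including $\xi = \mathcal{O}(\log n)$, and in that regime your appeal to ``the quantitative form of PRU security'' does not work: the $n$-qubit two-layer ensemble is simply not a PRU (or even an approximate design) there, since the gluing lemma requires $\xi \gtrsim \log(nk^2|G|/\varepsilon)$, so there is no security bound of the global ensemble to invoke. The paper closes this case with a separate light-cone truncation argument: an adversary of circuit size $C = \exp(o(\xi))$ can act on at most $C$ qubits of each copy, so all small random unitaries outside its footprint can be traced out and the experiment's output is identical to one run against a fictitious two-layer circuit on only $\mathcal{O}(C)$ qubits; that fictitious circuit \emph{is} a symmetric PRU secure against $\poly(C)$-time adversaries whenever $\xi = \omega(\log C)$, forcing $C = \exp(\Omega(\xi))$. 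Your light-cone remark at the end of the proposal addresses a different worry (the circuit acting as a decoder of the phase label) and does not substitute for this truncation step, so as written your proposal establishes only the $\xi = \omega(\log n)$ half of the theorem.
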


\noindent This lower bound on the hardness of phase recognition is optimal, since general rigorous algorithms exist with runtime growing exponentially in $\xi$~\cite{huang2024learning,landau2025learning}. These algorithms succeed by learning the entire circuit description of a shallow quantum state. Hence, our result proves that, in general settings, recognizing the phase of matter of a shallow quantum state is comparably hard to performing full state tomography. We discuss the extension of Theorem~\ref{thm:quantum} to mixed states in Section~\ref{sec:mixedstate}.

Before proceeding to our constructions, let us provide a high-level overview of our proof of Theorem~\ref{thm:quantum}.
We consider the following phase recognition task: one is given access to many copies of a pure quantum state $\ket{\psi}$ drawn from either the trivial phase or a non-trivial phase of matter. The goal is to recognize the phase of matter of $\ket{\psi}$ using arbitrary efficient quantum circuits and measurements on all copies of the state.
To prove that this task is computationally hard, we consider the complexity of recognizing a \emph{random example} of each phase of matter~\cite{schuster2024random}.
That is, for both the trivial and any non-trivial phase, we generate a random quantum state in the phase by applying a \emph{symmetric pseudorandom unitary} $U$ to the fixed point state, $\ket{\psi} = U \ket{\psi_0}$.
Crucially, we show that such unitaries can be formed in very low circuit depths.
By definition, applying a symmetric PRU to any state produces a state that is indistinguishable from Haar-random by any efficient quantum computation.
Hence, neither the random trivial state nor the random non-trivial state can be distinguished from Haar-random, which implies that the two states also cannot be distinguished from one another.
Therefore, recognizing the phase of matter is quantum computationally hard.
The states generated in our proof are the ground states of parent Hamiltonians $U H_0 U^\dagger$, where $H_0$ is the Hamiltonian of the fixed-point state and $U$ is the low-depth symmetric PRU.
%
%

%

%

In what follows, we first present the core technical results behind our proof by describing our definitions and constructions of low-depth symmetric PRUs.
\ifnumerics
We then return to quantum phases of matter, and provide additional discussions and numerical studies corroborating our result.
\else
We then return to quantum phases of matter and provide additional discussions and analysis corroborating our result.
\fi
We provide an extension to mixed state phases of matter at the end of the section.

\ifnumerics
\begin{figure}
    \centering
    \includegraphics[width=1.0\textwidth]{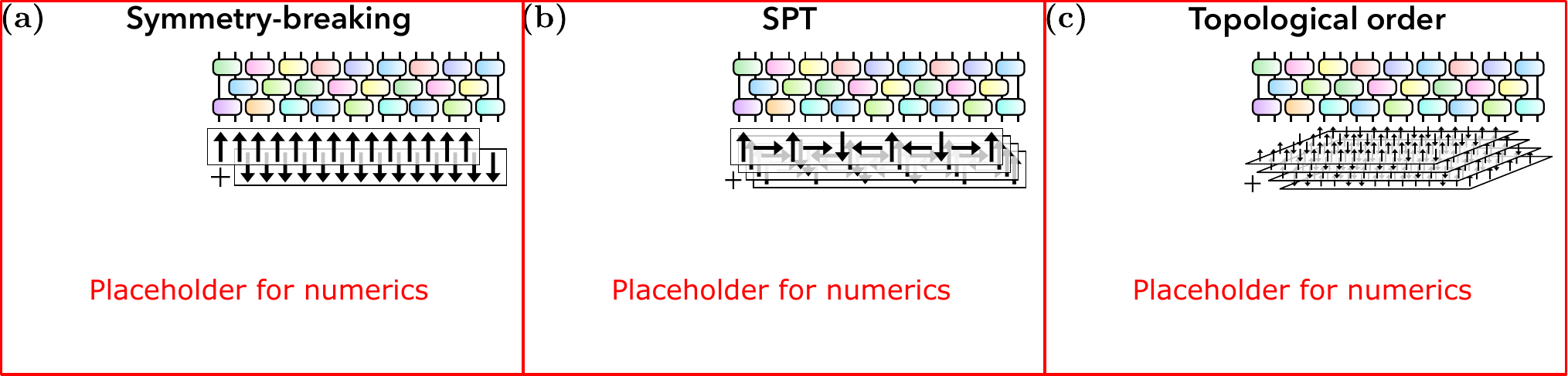}
    \caption{Results on distinguishing three exemplary phases of matter from the trivial phase. The complexity is studied as a function of the circuit depth of a random circuit applied to the fixed point state.
    \textbf{(a)} The $\mathbbm{Z}_2$ symmetry-breaking phase.
    \textbf{(b)} The $\mathbbm{Z}_2 \times \mathbbm{Z}_2$ SPT phase.
    \textbf{(c)} The toric code phase.
    {\color{red} To be completed with the numerical results.}}
    \label{fig:numerics}
\end{figure}
\fi

\subsection{Symmetric pseudorandom unitaries (PRUs)} \label{sec:symPRU}

Let us now introduce our construction of symmetric PRUs.
Pseudorandom unitaries (PRUs) are efficient random unitary ensembles that are indistinguishable from the Haar ensemble by any efficient quantum algorithm~\cite{ji2018pseudorandom,ma2024construct}.
Recently, the existence of PRUs that are secure against any subexponential-time quantum algorithms has been proven~\cite{ma2024construct} under a widely-accepted cryptographic conjecture: the quantum hardness of solving learning with errors (LWE)~\cite{regev2009lattices}.

In the presence of a symmetry group, one can naturally consider the symmetric Haar ensemble, containing all unitaries that commute with the symmetry operation~\cite{kong2021charge,kong2022near,mitsuhashi2023clifford,li2024efficient,li2024designs,liu2024unitary,mitsuhashi2025unitary}.
A symmetric PRU is then defined as an efficient ensemble of symmetric random unitaries that is indistinguishable from the symmetric Haar ensemble by any bounded-time quantum computation.
The existence of symmetric PRUs does not follow from the existence of standard PRUs, and, to date, no constructions of symmetric PRUs have been established.

Our first main technical result is a construction of symmetric PRUs for any discrete on-site symmetry in $\poly(n)$ circuit depth.
We will improve the circuit depth to $\poly(\log n)$ in the next subsection.

\begin{theorem}[Symmetric pseudorandom unitaries] \label{thm:poly}
    Under the conjecture that no subexponential-time quantum algorithm can solve LWE,
    symmetric pseudorandom unitaries with security against any $\exp(o(n))$-time quantum adversary exist, for every discrete on-site symmetry group $G$. The unitaries can be formed in polynomial circuit depth on any architecture. When $G$ is Abelian, the unitaries can be compiled from individually symmetric $\mathcal{O}(1)$-local gates.
\end{theorem}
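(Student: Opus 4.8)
\textbf{Proof proposal for Theorem~\ref{thm:poly}.}

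The plan is to build a symmetric PRU by ``dressing'' a standard (non-symmetric) PRU so that it acts within each symmetry sector in a way that mimics the symmetric Haar ensemble. The first step is to decompose the Hilbert space $\mathcal{H} = (\mathbb{C}^q)^{\otimes n}$ into isotypic components for the on-site action of $G$. For a discrete on-site symmetry, $\mathcal{H} \cong \bigoplus_\lambda \mathcal{H}_\lambda \otimes \mathbb{C}^{m_\lambda}$, where $\lambda$ runs over irreducible representations of $G$, $\mathcal{H}_\lambda$ carries the $G$-action, and $\mathbb{C}^{m_\lambda}$ is the multiplicity space; by Schur's lemma a unitary commutes with the $G$-action iff it acts as $\bigoplus_\lambda \mathbb{1}_{\mathcal{H}_\lambda} \otimes V_\lambda$ with each $V_\lambda$ an arbitrary unitary on the multiplicity space. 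So the symmetric Haar ensemble is exactly the product of independent Haar measures on the multiplicity spaces, and constructing a symmetric PRU amounts to implementing independent pseudorandom unitaries on each multiplicity space $\mathbb{C}^{m_\lambda}$ simultaneously, using only $G$-equivariant gates. The key subtlety is that $m_\lambda$ can be exponentially large, so we cannot enumerate sectors; instead we need a coherent unitary that restricts to a PRU on each block.

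The main technical work is to reduce this to an ordinary PRU on a smaller register. I would proceed as follows. (i) In the \textbf{Abelian case}, every irrep is one-dimensional, the sectors are labeled by total charge $c \in \hat G^{\,?}$ (more precisely by the value of the $G$-character summed over sites), and each sector is the span of computational basis states of a fixed ``charge.'' Here a natural route is a symmetric analogue of the standard PRU construction (e.g.\ the $PFC$ or ``random function $\times$ random permutation $\times$ random function'' template of \cite{ma2024construct}): replace the random permutation by a \emph{charge-preserving} pseudorandom permutation of basis states, and replace the random-phase (diagonal) steps by diagonal unitaries built from pseudorandom functions, which are automatically symmetric since diagonal gates commute with on-site $Z$-type symmetries. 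A charge-preserving pseudorandom permutation can be obtained by computing the charge $c$ of a basis string into an ancilla, applying a PRP on the index-within-sector (via a format-preserving / cycle-walking or rank-unrank encoding of the sector), and uncomputing; each such arithmetic+PRP+uncompute block is a sequence of $\mathcal{O}(1)$-local reversible gates, and crucially, because charge is a sum of single-site contributions, we can further argue each of these gates can be chosen individually $G$-symmetric (this is the content of the last sentence of the theorem). Security follows by a reduction: any adversary distinguishing our ensemble from symmetric Haar, composed with the fixed (public) sector-encoding isometry, would distinguish a standard PRP/PRF-built pseudorandom object from Haar on the multiplicity register, contradicting the LWE-based security of \cite{ma2024construct}; one tracks that the reduction is efficient and preserves the $\exp(o(n))$ adversary scaling. (ii) In the \textbf{general (non-Abelian) discrete} case, one additionally has nontrivial $\mathcal{H}_\lambda$ factors. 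I would apply the $G$-equivariant version of the above: use the (efficiently computable, since $G$ is a fixed finite group) generalized Clebsch–Gordan / symmetry-adapted basis transformation $W$ that block-diagonalizes the $G$-action, conjugate a standard PRU acting on the union of multiplicity registers by $W$, and fill the $\mathcal{H}_\lambda$ factors with identity. The transformation $W$ is not itself $G$-symmetric and not $\mathcal{O}(1)$-local, which is why the ``individually symmetric $\mathcal{O}(1)$-local gates'' claim is stated only for Abelian $G$; for general $G$ we only claim polynomial depth on any architecture, which $W$ together with the PRU comfortably satisfies.

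The step I expect to be the main obstacle is making the ``independent PRU on every multiplicity space, implemented coherently'' claim both precise and secure. The danger is correlations \emph{across} sectors: a single pseudorandom seed is reused for all blocks, so the blocks are not genuinely independent, only jointly pseudorandom — one must verify that joint pseudorandomness against Haar-on-each-block still holds, which is exactly where the structure of the \cite{ma2024construct} construction (and its ``path-recording'' / purified-oracle analysis) has to be invoked rather than used as a black box. A secondary obstacle is bookkeeping the reversible-circuit implementation of the sector (un)ranking map within symmetric gates: one must check that intermediate ancillae and arithmetic can be organized so that every elementary gate commutes with the global on-site $G$-action, not merely that the overall circuit does. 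Finally, one must confirm that no step secretly costs superpolynomial depth — the charge computation is a depth-$\mathcal{O}(\log n)$ adder tree, the PRP/PRF evaluations are $\poly(n)$-size classical circuits made reversible, and the Clebsch–Gordan transform for fixed finite $G$ is $\poly(n)$ — so the total is $\poly(n)$ depth, as claimed. The depth improvement to $\poly(\log n)$ is deferred to the next subsection.
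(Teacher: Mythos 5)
Your high-level architecture (Schur decomposition into $\bigoplus_\lambda \mathcal{M}_\lambda \otimes \mathcal{F}_\lambda$, per-block pseudorandomness, an efficient symmetry-adapted basis change, and charge arithmetic for the Abelian compilation) matches the paper's plan, but the step you yourself flag as the main obstacle is exactly the missing ingredient, and your proposed workaround does not close it. The paper's mechanism is to iterate over the irreps $\lambda$ (a \emph{constant} number, since $G$ is a fixed finite group) and apply, for each $\lambda$, an \emph{independently keyed controlled} pseudorandom unitary on $\mathcal{M}_\lambda$, conditioned on the irrep label computed into an ancilla; the genuinely new technical contributions are (i) a proof that controlled PRUs exist (controlled PFC/LRFC ensembles, analyzed via a distinct-subspace argument that handles the control register), and (ii) PRFs and approximate unitary designs on \emph{non-uniform} local tensor product spaces, which is forced because the $\mathcal{M}_\lambda$ do not have uniform qudit dimensions. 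Your substitute --- a ``symmetric PFC'' with a charge-preserving PRP and diagonal PRF phases, secured by a reduction to the standard PRP/PRF assumptions --- does not establish indistinguishability from the symmetric Haar ensemble: component-wise pseudorandomness does not imply that the composite ideal ensemble (random charge-preserving permutation, random phases, scrambler) is close to \emph{independent} Haar unitaries in every sector, which is the analogue of the entire PFC security proof and is precisely what you defer to a ``non-black-box'' invocation. Worse, if the same randomness literally acts identically across blocks, there is an outright information-theoretic distinguisher: prepare a superposition of two charge sectors with the same multiplicity-space state; identical per-block unitaries preserve the coherence between sectors, while independent per-block Haar unitaries destroy it, and the sector-label register's purity is efficiently testable after the public basis change. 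Your non-Abelian step has the same flaw in sharper form: ``a standard PRU acting on the union of multiplicity registers'' is not well defined (the $\mathcal{M}_\lambda$ are blocks of a direct sum of varying dimension, not a common tensor factor), and a single-keyed unitary across blocks is distinguishable from the symmetric Haar ensemble by the attack above.

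Two secondary gaps: first, your Abelian sketch omits the scrambling layer altogether --- a charge-preserving permutation times diagonal phases maps a computational basis state to a basis state up to phase, which is trivially distinguishable from a symmetric-Haar image; you need a per-sector (controlled) 2-design, which the paper builds via a nested two-layer construction valid on arbitrary local tensor product spaces. Second, the claim that every elementary gate in the Abelian compilation ``can be chosen individually $G$-symmetric'' is asserted rather than argued; the paper's proof of this point is not mere bookkeeping --- it conjugates each two-qudit gate of the controlled PRU through the tail of a generalized CNOT ladder that stores cumulative charges, shows the conjugated gate is a charge-controlled, symmetric operation on three consecutive system qudits, and implements it as a symmetric geometrically 5-local gate using two charge ancillas (note: 5-local, not $\mathcal{O}(1)$-local ``for free''), at the cost of an extra factor of $n$ in depth. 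Without the controlled-PRU existence proof, the per-sector independent keying, the designs/PRFs on non-uniform qudit registers, and the explicit symmetric-gate compilation, the proposal does not yet constitute a proof of the theorem.
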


\noindent Here, an on-site symmetry refers to any symmetry whose representation on $n$ qudits is a tensor product, $R_g = \bigotimes_{i} R^i_g$, for all $g \in G$. 
We refer to Appendix~\ref{sec: background} for a detailed introduction to the representation theory of symmetry groups in many-body quantum systems.
The proof of Theorem~\ref{thm:poly} is described below and given formally in Appendix~\ref{sec: symmetric PRUs}.

\paragraph{Construction overview.} Our construction of symmetric PRUs for Theorem~\ref{thm:poly} proceeds in three steps.
First, we decompose any $n$-qudit Hilbert space into a tensor sum of irreducible representations (``irreps'') of the symmetry group, $\mathcal{H} = \bigoplus_\lambda ( \mathcal{M}_\lambda \otimes \mathcal{F}_\lambda )$ [Fig.~\ref{fig:symPRU}(a)].
Each $\mathcal{F}_\lambda$ is the Hilbert space of the irrep $\lambda$, and each $\mathcal{M}_\lambda$ captures the multiplicity of the irrep.
In this basis, a symmetric Haar-random unitary is simply a tensor sum of random unitaries $U_\lambda$ in each symmetry sector, $U = \bigoplus_\lambda (U_\lambda \otimes \mathbbm{1}_{\mathcal{F}_\lambda})$.

Second, following this decomposition, we implement a sequence of \emph{controlled pseudorandom unitaries} $U_\lambda$ conditioned on the value of the system irrep $\lambda$.
This directly yields $U$ as above.
Prior to our work, the existence of controlled PRUs had been widely believed but not formally proven in the literature.
Hence, to implement this step, we prove that controlled PRUs exist under standard cryptographic assumptions and can be formed in polynomial circuit depth on any circuit geometry\footnote{During the preparation of this work, Tang and Wright~\cite{tang2025controlled} provided a distinct and independent proof of the existence of controlled pseudorandom unitaries.}.
Our construction applies to arbitrary tensor product Hilbert spaces, which is necessary due to the variable dimensionality of $\mathcal{M}_\lambda$.

Third, we prove that the entire construction in steps one and two can be implemented using \emph{individually symmetric} local gates whenever $G$ is Abelian.
Without this step, our construction would still yield symmetric PRUs in polynomial circuit depth.
However, the individual gates in the circuit might not be symmetric (only the circuit as a whole would be symmetric).
This is undesirable in several applications.
To resolve this, we prove a general result: any Abelian symmetric unitary $U$ can be implemented in polynomial depth via geometrically $5$-local symmetric gates whenever the unitary of each irrep $U_\lambda$ can be implemented in polynomial depth via geometrically 2-local gates.
This result uses a mild connectivity assumption on the circuit architecture.

\begin{figure}
    \centering
    \includegraphics[width=1.0\textwidth]{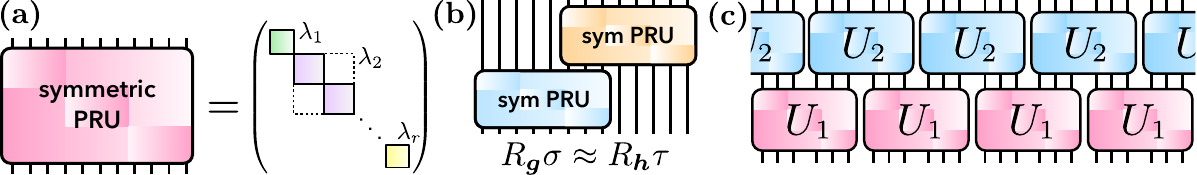}
    \caption{Schematic of our main results on symmetric pseudorandom unitaries (PRUs). 
    \textbf{(a)} Any symmetric unitary is block-diagonal in the irreducible representations (irreps) $\lambda_1,\ldots,\lambda_r$ of the symmetry group. Our $\poly n$-depth construction of symmetric PRUs  applies independent random unitaries in each block, by performing a sequence of controlled PRUs conditioned on the irrep of the entire system (Theorem~\ref{thm:poly}).
    \textbf{(b)} To construct symmetric PRUs in $\poly(\log n)$ depth, we first prove that two small symmetric PRUs can ``glue'' into a symmetric PRU on a larger system (Lemma~\ref{lemma:gluing}). This is used to prove that the two-layer circuit in which each unitary is drawn independently from a symmetric PRU ensemble is a low-depth symmetric PRU (Theorem~\ref{thm:polylog}). At a technical level, the symmetric gluing lemma follows by identifying the  terms $R_{\bs g} \sigma$ and $R_{\bs h} \tau$ arising from the first and second unitaries' twirls, where $\bs g, \bs h \in G^{\otimes k}$ are symmetry group elements and $\sigma, \tau \in S_k$ are permutation operators.
    \textbf{(c)} A translation-invariant modification of the two-layer circuit, which underlies our construction of $\poly(\log n)$-depth symmetric PRUs with translation invariance (Theorem~\ref{thm:TI}).}
    \label{fig:symPRU}
\end{figure}

\subsection{Symmetric PRUs in extremely low depth} \label{sec:symPRUlowdepth}

We now present our second main technical result, which shows that symmetric PRUs can be realized in extremely low circuit depths.

\paragraph{Why low-depth symmetric PRUs may not exist.} Before proceeding, we recall a potential counterargument against the possibility of low-depth symmetric random unitaries.
Symmetries of physical systems lead to the existence of conserved charges, which label how the system responds to the symmetry operation.
In a system with geometrically-local dynamics, charges are also conserved locally.
Thus, one might question: Is it possible for a low-depth symmetric unitary to shuffle charges around the system in the same manner as a deep random unitary?
This argument was used to rule out low-depth random unitaries in systems with any \emph{continuous} symmetry~\cite{haah2025short}.
In contrast, in what follows, we show that low-depth random unitaries can be naturally realized in systems with any discrete on-site symmetry.
Intuitively, charges of discrete symmetries can be locally created and annihilated in neutral combinations, which allows the dynamics to mimic those of a deep random unitary even at very low depths.

\vspace{1em}
At a high level, our construction of low-depth symmetric PRUs proceeds similarly to the construction of low-depth PRUs without symmetry~\cite{schuster2024random}.
We organize a system of $n$ qudits~\cite{fn10} along a 1D line, and consider a two-layer brickwork circuit composed of symmetric PRUs acting on small patches of $2\xi$ qudits each.
A translation-invariant version of this construction is shown in Fig.~\ref{fig:symPRU}(c). Without translation invariance, our construction uses different small random unitaries throughout each circuit layer instead of the same small random unitary.
We prove that the resulting two-layer circuit forms a symmetric PRU on all $n$ qudits by proving that one can ``glue'' the small PRUs together one at a time. Namely, we prove the following gluing lemma for symmetric random unitaries [Fig.~\ref{fig:symPRU}(b)]:

\begin{lemma}[Gluing symmetric random unitaries] \label{lemma:gluing}
    For any approximation error $\varepsilon \leq 1$, suppose each small unitary in the two-layer brickwork ensemble $\mathcal{E}$ is drawn from the symmetric Haar ensemble on $2\xi$ qudits. Then $\mathcal{E}$ forms an $\varepsilon$-approximate symmetric unitary $k$-design whenever $\xi \geq \log_2(nk^2|G|/\varepsilon)$.
\end{lemma}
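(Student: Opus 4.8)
The plan is to prove the gluing lemma by induction, adding the small unitaries one patch at a time and controlling the error accumulated at each step. Concretely, think of the two-layer brickwork circuit as built up by starting from the identity and successively replacing one block of the circuit (a $2\xi$-qudit symmetric Haar unitary) at a time, tracking how close the resulting ensemble's $k$-th moment operator is to that of the symmetric Haar ensemble on the full $n$ qudits. If each single-block replacement incurs error at most $\delta$ and there are $O(n/\xi) \le n$ blocks, the total error is $O(n\delta)$, so it suffices to show $\delta \le \varepsilon/n$ under the stated hypothesis on $\xi$.

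The heart of the argument is the single-step bound, which is where I expect the real work to live. Following the structure sketched in Fig.~\ref{fig:symPRU}(b), the $k$-fold twirl of a symmetric Haar unitary on a region is not the symmetric group algebra $\mathbbm{C}[S_k]$ (as in the symmetry-free case) but the larger commutant generated by permutations $\sigma \in S_k$ together with the diagonal symmetry actions $R_{\bs g}$, $\bs g \in G^{\otimes k}$ — i.e. spanned by operators of the form $R_{\bs g}\sigma$. The gluing step amounts to showing that when one composes the twirl over a block on qudits overlapping region $A$ with the twirl over an adjacent block overlapping region $B$, the ``cross terms'' $R_{\bs g}\sigma$ (from the $A$-block) and $R_{\bs h}\tau$ (from the $B$-block) that survive are exactly those that assemble into a global element $R_{\bs{gh}}(\sigma\tau)$ of the full symmetric commutant, up to a small error. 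I would quantify this by writing the moment operator as a sum over the relevant labels and bounding the ``mismatched'' contributions. The natural source of the error is that the Gram matrix of the operators $\{R_{\bs g}\sigma\}$ on a region of $m$ qudits is close to — but not exactly — the idealized inner-product structure; the deviations are exponentially small in $m$, scaling like $|G|^k k^2 / q^m$ for local dimension $q$ (here $q=2$), or more precisely the relevant overlap terms are suppressed by $2^{-m}$ times a combinatorial factor counting pairs of permutations and group elements, which is $O(k^2 |G|^k)$ — actually more carefully $O((k!)^2 |G|^k)$, but one can organize the bound so only a $k^2|G|$-type factor appears per gluing after taking logarithms. Taking $m = 2\xi$ and demanding this be at most $\varepsilon/n$ gives precisely $\xi \ge \log_2(nk^2|G|/\varepsilon)$.

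In more detail, the steps I would carry out are: (1) set up the irrep/commutant framework from Appendix~\ref{sec: background}, recording that the symmetric Haar $k$-twirl projects onto $\mathrm{span}\{R_{\bs g}\sigma : \bs g \in G^{\otimes k}, \sigma \in S_k\}$ and giving the Gram-matrix estimates for these operators restricted to a contiguous region of $r$ qudits (overlaps are $\pm 1$ on the diagonal-ish terms and $O(2^{-r})$ off-diagonal in an appropriate sense); (2) prove the deterministic ``one-block gluing'' inequality: if $\mathcal{E}_1$ is $\varepsilon_1$-close to the symmetric Haar $k$-design on region $R_1$ and we compose with a fresh symmetric Haar block on an overlapping region $R_2$ of size $2\xi$, the result is $(\varepsilon_1 + \varepsilon')$-close to the symmetric Haar $k$-design on $R_1 \cup R_2$, with $\varepsilon' \le n^{-1}\varepsilon$ when $\xi \ge \log_2(nk^2|G|/\varepsilon)$ — this is the step that uses the explicit overlap bounds and is the main obstacle; (3) iterate over all $O(n/\xi)$ blocks of the two-layer brickwork, using the telescoping/triangle inequality for the diamond norm (or $k$-th moment operator norm) to get total error $\le \varepsilon$; and (4) observe that the two-layer brickwork geometry ensures each newly-added block overlaps the current region so that the union grows to cover all $n$ qudits, so the final ensemble is an $\varepsilon$-approximate symmetric unitary $k$-design on the whole system. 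The one subtlety I want to flag is bookkeeping for how the factor $k^2|G|$ (rather than a worse $(k!)^2|G|^k$) ends up inside the logarithm: this comes from the fact that the ``bad'' events in a single gluing are differences of unit-norm operators weighted by overlaps, and a careful count — using that the commutant dimension enters only through its square root in the relevant operator-norm bound, and that $\log$ of the exponential suppression absorbs the combinatorial prefactors — yields the clean threshold stated. I would defer the fully explicit constants to the appendix and here present the induction and the single-step estimate in the form above.
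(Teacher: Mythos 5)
Your overall strategy---identify the commutant span $\{R_{\bs g}\sigma\}$, compose the twirls of two overlapping blocks, keep the terms from the two twirls that match up into a single global commutant element and bound the mismatched cross terms by traces over the overlap region, then iterate block-by-block across the brickwork---is the same route the paper takes (Lemma~\ref{lemma: AB BC to ABC app} and Theorem~\ref{thm:main-design}). However, there is a genuine gap exactly at the point you flagged. Your accounting runs in an additive metric (diamond norm or moment-operator norm) with per-step errors summing to $O(n\delta)$, whereas the lemma's conclusion is a \emph{relative-error} design statement (Definition~\ref{def: strong unitary design}), and your proposed mechanism for why only $k^2|G|$, rather than something like $(k!)^2|G|^k$, survives inside the logarithm (``the commutant dimension enters only through its square root'') is not the operative reason and I do not see how to make it work. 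The paper closes this gap with a specific device: the approximate symmetric Haar twirl applied to half of a maximally entangled state is exactly $\tfrac{k!\,|G|^k}{D^{2k}}$ times a projector, because $\tfrac{1}{k!\,|G|^k}\sum_{\sigma,\bs g} R_{\bs g}\sigma\otimes R^{*}_{\bs g}\sigma$ squares to itself (Lemma~\ref{lemma: relative error to additive}). A spectral-norm bound on the difference of twirled EPR states therefore converts directly into a relative-error bound upon dividing by this flat eigenvalue, and it is precisely that division which cancels the $k!\,|G|^k$ combinatorial prefactor; the residual cross-term sum (after factoring out $k!\,|G|^k/D^{2k}$) is bounded by $\sum_{\sigma\neq \mathbbm{1}}(|G|/D_B)^{|\sigma|}\le e^{k(k-1)|G|/2D_B}-1\approx k^2|G|/2D_B$, which is where the clean $k^2|G|/2^{\xi}$ per gluing comes from. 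Without this (or an equivalent) additive-to-relative conversion, your induction yields only additive closeness, which can be exponentially weaker than the design guarantee being claimed.

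Two further points of bookkeeping. First, relative errors compose multiplicatively across the $m-2\le n/\xi$ gluings, so the total is $\prod_i(1+\varepsilon_i)-1$ rather than a sum; this is harmless when each $\varepsilon_i\lesssim\varepsilon/n$ but must be tracked, and each gluing also pays factors such as $\bigl(1-k^2|G|/2D_{AB}\bigr)^{-1}$ and $\bigl(1+k^2|G|/D_{ABC}\bigr)$ for replacing the exact small symmetric Haar twirls and the exact global twirl by the simple sum over $R_{\bs g}\sigma$: your remark that the Gram matrix is ``close to but not exactly'' ideal corresponds to Lemma~\ref{lemma: approximate symmetric Haar twirl}, whose relative error $|G|k^2/D$ must itself be proved (the paper does this via the Weingarten bound $\sum_{\tau}|\Wg(\tau;D_\lambda)|=(D_\lambda-k)!/D_\lambda!$ within each irrep block of the regular representation), not assumed. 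Second, when bounding the cross terms you should use that the trace over the overlap region $B$ forces, for each cycle of $\sigma\tau^{-1}$, the product of the group elements around that cycle to equal the identity; it is this constraint that keeps the group-element count at $|G|^{k+|\sigma\tau^{-1}|}$ instead of $|G|^{2k}$, and a looser count at this step (e.g.\ the $|G|^k k^2/q^m$ suppression you posit) would spoil the threshold $\xi\ge\log_2(nk^2|G|/\varepsilon)$.
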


\noindent The lemma references the notion of a symmetric unitary design, which we review in Appendix~\ref{sec: symmetric random unitaries}. 
By applying the gluing lemma $n/\xi$ times~\cite{schuster2024random}, we find that the two-layer  circuit forms a symmetric PRU whenever $\xi = \omega(\log n)$ (Appendix~\ref{sec: symmetric PRUs}).  
Moreover, as one increases $\xi$ above this value, the security becomes exponentially stronger in the value of $\xi$.

\begin{theorem}[Symmetric random unitaries in extremely low depth] \label{thm:polylog}
    Under the conjecture that no subexponential-time quantum algorithm can solve LWE,
    symmetric PRUs on $n$ qudits with security against any $\exp(o(\xi))$-time adversary can be constructed  with light-cone size $\mathcal{O}(\xi)$ and circuit depth $\poly(\xi)$, for any discrete on-site symmetry group $G$ and $\xi = \omega(\log n)$.
\end{theorem}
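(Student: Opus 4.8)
The plan is to instantiate the two-layer brickwork already sketched and then upgrade it from ``Haar blocks'' to ``PRU blocks'' by a hybrid argument that leans on Lemma~\ref{lemma:gluing} for the information-theoretic part and on Theorem~\ref{thm:poly} for the computational part. Concretely: arrange the $n$ qudits on a line, partition them into consecutive blocks of $2\xi$ qudits, and in the first layer apply to each block an independent copy of the symmetric PRU from Theorem~\ref{thm:poly} on $2\xi$ qudits (with the on-site representation of $G$ restricted to that block, which is again on-site); in the second layer do the same on the blocks shifted by $\xi$. Because the blocks within a layer act on disjoint qudits, each layer has depth $\poly(\xi)$ by Theorem~\ref{thm:poly}, so the whole circuit has depth $\poly(\xi)$; and every qudit sits in the light-cone of only the (at most two) blocks containing it across the two layers, so the light-cone radius is $O(\xi)$. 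In the Abelian case the small PRUs, hence the whole circuit, are built from individually symmetric $O(1)$-local gates. The ensemble is manifestly symmetric, so only pseudorandomness against $\exp(o(\xi))$-time adversaries remains.

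For the security analysis I would compare three ensembles: $\mathcal{E}_0$, the construction above; $\mathcal{E}_1$, the same brickwork but with each small block replaced by an exact symmetric Haar unitary on $2\xi$ qudits; and $\mathcal{E}_2$, the symmetric Haar ensemble on all $n$ qudits. For $\mathcal{E}_1$ versus $\mathcal{E}_2$, set the design error to $\varepsilon = 2^{-c\xi}$ for a small constant $c<1$ and let $k$ be the number of queries of the adversary, so $k \le \exp(o(\xi))$; then $\log_2(nk^2|G|/\varepsilon) = \log_2 n + 2\log_2 k + \log_2|G| + c\xi = o(\xi) + c\xi \le \xi$ for all large $\xi$ (using $\xi = \omega(\log n)$ and $\log_2|G| = o(\xi)$, e.g.\ $G$ fixed), so Lemma~\ref{lemma:gluing} makes $\mathcal{E}_1$ an $\varepsilon$-approximate symmetric $k$-design and no $k$-query adversary separates $\mathcal{E}_1$ from $\mathcal{E}_2$ with advantage better than $O(\varepsilon) = 2^{-\Omega(\xi)}$. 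For $\mathcal{E}_0$ versus $\mathcal{E}_1$ I would swap the small blocks from Haar to PRU one at a time across the $\Theta(n/\xi)$ blocks, each swap reducing to the security of the small symmetric PRU of Theorem~\ref{thm:poly} — which holds against $\exp(o(2\xi)) = \exp(o(\xi))$-time adversaries under the LWE conjecture. The reduction simulates the remaining blocks internally: PRU blocks via the explicit $\poly(\xi)$-size circuits from Theorem~\ref{thm:poly}, and Haar blocks — each queried at most $\exp(o(\xi))$ times — via lazy sampling / the symmetric path-recording oracle in time polynomial in the number of queries, so the reduction runs in time $\poly(n)\cdot\exp(o(\xi)) = \exp(o(\xi))$, inside the small PRU's security regime. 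A union bound over the $\Theta(n/\xi)$ swaps keeps the total loss negligible for $\xi = \omega(\log n)$, and combining the two hybrid bounds gives indistinguishability of $\mathcal{E}_0$ from the symmetric Haar ensemble with advantage $2^{-\Omega(\xi)}$, as claimed.

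The step I expect to be the main obstacle is $\mathcal{E}_0$ versus $\mathcal{E}_1$: because the small PRUs are only computationally secure, the reduction must be efficient, yet the environment it has to fake contains genuine symmetric Haar-random unitaries on $2\xi$ qudits, which cannot be sampled in $\poly(\xi)$ time. Making this work requires a symmetric version of the lazy-sampling / path-recording technique — answering each query to a symmetric Haar block in time polynomial in the number of queries and independent of the (exponential) dimension — together with a careful accounting showing that the $\poly(n)$ bookkeeping of simulating all $\Theta(n/\xi)$ blocks, and the $\Theta(n/\xi)$-fold union bound, are both swallowed by the exponential-in-$\xi$ security margin precisely when $\xi = \omega(\log n)$. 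By comparison, verifying the depth, light-cone, and symmetry bookkeeping, and checking that the restriction of $G$ to a block stays on-site so that Theorem~\ref{thm:poly} applies, are routine.
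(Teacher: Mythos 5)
Your proposal is correct and follows essentially the same route as the paper: a two-layer brickwork of $2\xi$-qudit symmetric PRUs from Theorem~\ref{thm:poly}, with the Haar-block version handled information-theoretically by the gluing result (Lemma~\ref{lemma:gluing}/Theorem~\ref{thm:main-design}) and the PRU-vs-Haar blocks handled by a block-by-block computational hybrid (the paper delegates exactly this hybrid, including the simulation of the remaining Haar blocks that you flag as the main obstacle, to the argument of Theorem~2 of Ref.~\cite{schuster2024random} with Theorem~\ref{thm:main-design} substituted for its gluing theorem).
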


\noindent The circuit depth follows by instantiating each small PRU using Theorem~\ref{thm:poly}, which yields a circuit depth $\ell = \poly(\xi)$. We have $\ell = \poly(\log n)$ if $\xi = \poly(\log n)$.
As in Theorem~\ref{thm:poly}, when $G$ is Abelian, each unitary can be compiled from individually symmetric local gates.

\paragraph{Proof ideas for symmetric gluing lemma}
Our proof of the symmetric gluing lemma generalizes a large amount of technical machinery for random unitary analysis to the symmetric setting.
Although seemingly straightforward, this generalization requires several careful choices to enable a tractable analysis.
For the expert audience, we now summarize a few key intermediary steps and results. We begin by deriving an exact formula for the twirl of a symmetric Haar-random unitary in terms of a sum over permutation operators and tensor products of symmetry operators, $R_{\bs{g}} = \bigotimes_{i=1}^k R_{g_i}$. Then, we prove a much simpler \emph{approximate} formula for the twirl:
\begin{equation} \label{eq: approx sym Haar twirl}
    \E_U\left[ (U^{\otimes k}) \, A \, (U^{\otimes k})^\dagger \right] \approx \frac{1}{D^k} \sum_{\sigma \in S_k} \sum_{\bs{g} \in G^{\otimes k}} \tr( A \sigma^{-1} R_{\bs g}^{-1} ) \cdot R_{\bs g} \sigma.
\end{equation}
This replaces the sum over permutation operators in the approximate Haar twirl without symmetry~\cite{schuster2024random} with a sum over both permutation and symmetry group operators.
Third, we prove a key technical lemma for translating between different notions of error in symmetric unitary designs.
Finally, from these results, we provide a self-contained proof that $\varepsilon$-approximate symmetric unitary $k$-designs can be realized in $\poly(k) \cdot \poly \log (n/\varepsilon)$ depth for any discrete symmetry group $G$, without relying on any spectral gap analysis.
We expect that each of these results may be useful in future works, and refer the interested reader to Appendix~\ref{sec: symmetric random unitaries} for a full exposition and further details.

\paragraph{Translation-invariant symmetric PRUs.} Our third main technical result establishes the existence of \emph{translation-invariant} symmetric PRUs.
Beyond on-site symmetry groups, translation symmetry is by far the most common symmetry in the physical world and plays a crucial simplifying role in many quantum phenomena.
Our translation-invariant random unitary ensemble corresponds to the same two-layer circuit, but in which each unitary in the first layer is identical, and similarly for the second layer [Fig.~\ref{fig:symPRU}(c)].
This yields a symmetric random unitary that is invariant under translations by $a = 2\xi$ lattice sites.
We prove that the resulting unitary is indistinguishable from a translation-invariant symmetric Haar-random unitary by any efficient quantum algorithm.

\begin{theorem}[Translation-invariant symmetric pseudorandom unitaries] \label{thm:TI}
    Under the conjecture that no subexponential-time quantum algorithm can solve LWE, one-dimensional translation-invariant symmetric PRUs with security against any $\exp(o(\xi))$-time adversary can be constructed with light-cone size $\mathcal{O}(\xi)$ and circuit depth $\poly(\xi)$, for any discrete on-site symmetry group $G$ and $a = 2 \xi = \omega(\log n)$.
\end{theorem}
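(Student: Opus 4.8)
The plan is to reduce the translation-invariant case to the non-translation-invariant construction of Theorem~\ref{thm:polylog} by observing that the only place translation invariance enters is in the gluing argument, and that the gluing lemma (Lemma~\ref{lemma:gluing}) can be made to work even when all small unitaries in a given layer are \emph{the same} draw. Concretely, I would proceed as follows. First, set up the translation-invariant two-layer ensemble $\mathcal{E}_{\mathrm{TI}}$ of Fig.~\ref{fig:symPRU}(c): draw a single symmetric Haar-random (or symmetric-PRU) unitary $V$ on $2\xi$ qudits for the first layer and a single independent $W$ on $2\xi$ qudits for the second layer, and tile them periodically so the whole circuit is invariant under translation by $a = 2\xi$ sites. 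The goal is to show $\mathcal{E}_{\mathrm{TI}}$ is an $\varepsilon$-approximate symmetric unitary $k$-design for $\xi \geq \log_2(n k^2 |G|/\varepsilon)$, from which the PRU statement follows by instantiating $V, W$ with the $\poly(\xi)$-depth symmetric PRUs of Theorem~\ref{thm:poly} and invoking the same LWE-based security reduction as in Theorem~\ref{thm:polylog}.

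The key step is a translation-invariant version of the gluing lemma. Recall from the proof sketch that the $k$-th moment operator of a single small symmetric Haar unitary is, up to the approximation~\eqref{eq: approx sym Haar twirl}, a sum over terms $R_{\bs g}\sigma$ with $\bs g \in G^{\otimes k}$ and $\sigma \in S_k$; the non-TI gluing argument tracks how the ``boundary'' labels $(\bs g, \sigma)$ produced by the first-layer twirl are matched against the labels $(\bs h, \tau)$ produced by the second-layer twirl across each overlap region, and shows the matched contribution is exponentially (in $\xi$) close to the single global term $R_{\bs 1}\,\mathrm{id}$ plus the Haar average. In the TI setting, the same unitary $V$ is twirled simultaneously in \emph{every} first-layer block, so its $k$-th moment is computed once and its permutation/symmetry labels are shared across all blocks; likewise for $W$. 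I would argue that this shared-label structure only \emph{reduces} the effective number of free labels, so the error bound of Lemma~\ref{lemma:gluing} still applies — the union bound over the $n/\xi$ gluing events becomes a union bound over at most $n/\xi$ overlap regions with the \emph{same} pair of label sets, which is if anything easier to control. The approximate-twirl formula~\eqref{eq: approx sym Haar twirl} is the crucial enabler here, since it lets one bound the error without needing a spectral gap of the (now non-generic, highly structured) TI random-circuit transfer matrix.

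The main obstacle I anticipate is exactly this point: in the non-TI proof one crucially uses \emph{independence} of the small unitaries in a layer to factor the moment operator across blocks, and that independence is gone. One must verify that the ``gluing'' estimate — which morally says the overlap region acts as an approximate projector onto matched labels — does not secretly rely on independence of $V$ across blocks, but only on independence of $V$ from $W$ (which still holds) and on the single-block approximate-twirl bound. I expect this works because the gluing is a local, region-by-region estimate: each overlap region sees one copy of (the support of) $V$ and one copy of (the support of) $W$, and the bound there depends only on the dimension $D = q^{2\xi}$ of that region and on $|G|, k$, not on what happens in other regions. The remaining items — checking that the approximate twirl~\eqref{eq: approx sym Haar twirl} composes correctly under the TI tiling, translating between diamond-norm and moment-operator error for the TI ensemble using the same error-translation lemma referenced after Theorem~\ref{thm:polylog}, and confirming that instantiating $V, W$ with symmetric PRUs rather than symmetric Haar unitaries preserves indistinguishability against $\exp(o(\xi))$-time adversaries via a hybrid argument — are routine given the tools already developed for Theorems~\ref{thm:poly} and~\ref{thm:polylog}, and I would defer their details to the appendix.
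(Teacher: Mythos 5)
There is a genuine gap, and it appears at the very first step: you set the wrong target ensemble. A translation-invariant symmetric PRU must be indistinguishable from the \emph{translation-invariant} symmetric Haar ensemble, not from the symmetric Haar ensemble itself; your plan to show that $\mathcal{E}_{\mathrm{TI}}$ is an $\varepsilon$-approximate symmetric unitary $k$-design (for $\xi \gtrsim \log_2(nk^2|G|/\varepsilon)$) cannot succeed, because a translation-invariant ensemble is efficiently distinguishable from the symmetric Haar ensemble already at the first moment (e.g.\ $\E_U[U\mathcal{T}U^\dagger]=\mathcal{T}$ for any TI ensemble, whereas the symmetric Haar twirl collapses $\mathcal{T}$ onto the span of the $R_g$). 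The commutant of the TI symmetric Haar ensemble is strictly larger — it contains products $R_{\bs g}\,\bigl(\bigotimes_j \mathcal{T}^{t_j}\bigr)\,\pi$ with independent per-copy translations — and the paper's proof (Appendix~\ref{sec: TI}) must therefore build a new approximate twirl $\Phi^{G,\mathrm{TI}}_a$ and a new error-translation lemma (Lemma~\ref{lemma: relative error to additive TI sym}) adapted to that enlarged commutant, rather than reusing Eq.~(\ref{eq: approx sym Haar twirl}) and Lemma~\ref{lemma:gluing}.

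Your handling of the loss of independence is also not correct. With the same unitary $U_1$ repeated in all $m=n/2\xi$ first-layer blocks, the $k$-th moment of the global circuit is governed by the $mk$-th moment of $U_1$: the paper's Theorem~\ref{thm: gluing TI sym} accordingly requires the small unitaries to be approximate symmetric $mk$-designs, and the resulting sum runs over permutations $\sigma,\tilde\sigma\in S_{K}$ with $K=mk$ acting on patch-copies, not over $S_k$. The cross-block correlations do not ``only reduce the number of free labels''; they generate exactly the extra commutant terms (patch permutations intertwined with $\mathcal{T}$) that must be either retained (those commuting with $\mathcal{T}^{\otimes k}$, which reproduce the TI Haar twirl) or bounded, and bounding them is the hard part — it is done in the paper by the dedicated counting estimates of Propositions~\ref{prop: perm bound 1} and~\ref{prop: perm bound 2}, yielding the weaker requirement $\xi \geq \log_2(32|G|n^6k^6/\varepsilon)$. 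There is also no patch-by-patch gluing in the TI case: the iterative gluing of Lemma~\ref{lemma:gluing} relies on tracing out independent blocks, which is unavailable here, so the paper performs a single global analysis of the two-layer circuit on the EPR state. The final hybrid step (replacing Haar blocks by symmetric PRUs) is fine in spirit, but the core design-level argument you sketch does not go through without the $k\to mk$ moment blow-up, the enlarged commutant, and the new permutation-counting bounds.
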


\noindent As in Theorems~\ref{thm:poly} and~\ref{thm:polylog}, when $G$ is Abelian, the translation-invariant symmetric PRUs can be compiled from individually symmetric local gates.
Extending this result to translations over a constant distance, $a = \mathcal{O}(1)$, remains an interesting open direction. 
The two-layer circuit explicitly breaks translation symmetry for any $a < 2\xi$.

Our proof of Theorem~\ref{thm:TI} is provided in Appendix~\ref{sec: TI} and proceeds along a similar path to Lemma~\ref{lemma:gluing}.
The primary difference is that the proof involves a significantly more intricate bound over a huge sum of permutation operators.
We introduce several helpful methods for counting permutations that enable us to complete this bound.

\begin{figure}
    \centering
    \includegraphics[width=1.0\textwidth]{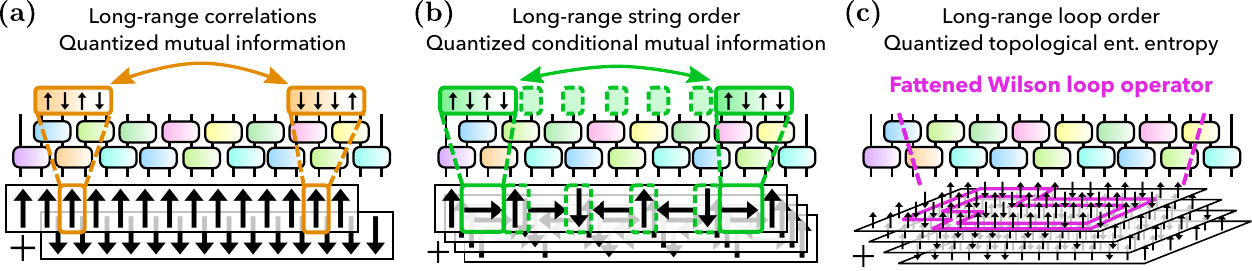}
    \caption{Three phases of matter and their order parameters after application of a low-depth symmetric PRU.
    \textbf{(a)} A symmetry-breaking phase (example: a GHZ state) features long-range correlations and a quantized non-zero mutual information.
    \textbf{(b)} A symmetry-protected topological  (SPT) phase (example: a 1D cluster state) features string order parameters and a quantized non-zero conditional mutual information.
    \textbf{(c)} Topological orders (example: a toric code state) feature loop order parameters and a quantized non-zero topological entanglement entropy.
    In all three cases, the order parameters and quantized correlations are retained after the low-depth PRU.
    However, they cannot be detected by any polynomial-time observer without knowledge of the unitary that was applied.}
    \label{fig:orderparameter}
\end{figure}

\subsection{Hardness of recognizing pure state quantum phases of matter} \label{sec:purestate}

Having established the existence of low-depth symmetric PRUs, we now return to their implications for recognizing quantum phases of matter.
Our main hardness result, Theorem~\ref{thm:quantum}, follows directly from combining our symmetric PRU constructions (Theorems~\ref{thm:polylog} and~\ref{thm:TI}) with the proof strategy outlined in Section~\ref{sec:quantum-summary}.
The key insight is that symmetric PRUs allow us to generate quantum states in well-defined phases that are computationally indistinguishable from Haar-random states, making phase recognition computationally hard.
Complete proof details are provided in Appendix~\ref{sec: phases of matter}.

In the remainder of this section, we explore the implications of this result through concrete examples, provide intuition for why phase recognition becomes hard, and analyze how our results apply to various extensions and practical scenarios.

\paragraph{Concrete examples.} To illustrate the power of our hardness result, consider the following four quantum states on $n$ qubits with symmetry $G = \mathbbm{Z}_2 \times \mathbbm{Z}_2$.
Here, $U$ denotes a low-depth symmetric PRU from our construction.
Our results prove that the four states, each representing a different phase of matter, are all computationally indistinguishable from one another:
\begin{enumerate}
    \item $\ket{\psi_{\text{triv}}} = U \ket{\text{product}}$, where  $\ket{\text{product}}$ is a product state respecting both symmetries. This state is in the trivial phase.
    \item $\ket{\psi_{\text{SSB}}} = U \ket{\text{GHZ}}$, where $\ket{\text{GHZ}}$ is a GHZ state that spontaneously breaks either one or both $\mathbbm{Z}_2$ symmetries. This state is in a spontaneous symmetry-broken phase~\cite{zeng2019quantum}.
    \item $\ket{\psi_{\text{SPT}}} = U \ket{\text{cluster}}$, where $\ket{\text{cluster}}$ is the 1D cluster state~\cite{zeng2019quantum}. This state is in a symmetry-protected topological phase.
    \item $\ket{\psi_{\text{topo}}} = U \ket{\text{TC}}$, where $\ket{\text{TC}}$ is the toric code state and respects both symmetries. This state is in a topological phase~\cite{zeng2019quantum}.
\end{enumerate}
Each state is in a well-defined and distinct phase of matter from all other states.
The different phases of matter are each characterized by their own unique patterns of entanglement and long-range correlations.
If an observer has knowledge of the unitary $U$, then the observer can easily use this knowledge of $U$ to detect these correlations, and verify that each state is in its given phase.
However, our results prove that without knowledge of $U$ (i.e.~when the quantum state is unknown to the observer), no efficient quantum algorithm can distinguish between any of the four families of states.

\paragraph{Why does phase recognition become hard?} While our theorem formally establishes that recognizing phases of matter is hard, our technical proof does not provide much intuition as to why this is the case. 
Let us address this intuition through the lens of one common approach to phase recognition: measuring order parameters (Fig.~\ref{fig:orderparameter}). 
Order parameters are observables whose expectation values diagnose a particular phase of matter.
For example, symmetry-breaking phases feature long-range correlations with respect to an order parameter that breaks the symmetry.
For the GHZ state, these correspond to the expectation values, $\bra{\psi} Z_j Z_i \ket{\psi}$, for distant sites $i, j$.
Similarly, one-dimensional SPT phases feature string-like order parameters, which are equal to a symmetry operator in their bulk while breaking a symmetry at each edge.
For the cluster state, these correspond to the expectation values, $\bra{\psi} Z_i X_{i+1} X_{i+3} \ldots X_{j-3}  X_{j-1} Z_j \ket{\psi}$, for $j = i \text{ mod } 2$.
While the order parameters of many toy-model systems happen to be simple, this is not guaranteed to be the case.

The key intuition behind our results is that, without prior knowledge about a quantum state, its order parameters can be quantum computationally hard to identify.
For example, when considering spontaneous breaking of a $\mathbbm{Z}_2$ symmetry, there is no property of the symmetry or phase of matter that fixes the order parameter to be a simple one-body $Z$ operator (as in the GHZ state).
Rather, after conjugation by a low-depth symmetric PRU, all possible order parameters become scrambled superpositions of many symmetry-breaking operators acting on regions of $\mathcal{O}(\xi)$ qubits [Fig.~\ref{fig:orderparameter}(a)].
The overlap of these scrambled order parameters with any simple one-body operator is typically exponentially small in $\xi$, which makes the phase of matter hard to recognize using simple one-body expectation values.
At the same time, the full $\mathcal{O}(\xi)$-body order parameters contain exponentially many random and non-commuting terms, and hence are hard to reconstruct in any experiment without knowledge of $U$.
This causes order-parameter-based approaches for recognizing phases of matter to require exponential resources as $\xi$ increases.
Analogous arguments apply to the order parameters of trivial, SPT, and topological phases; see Fig.~\ref{fig:orderparameter}(b,c) for an illustration.

\paragraph{What about entanglement-based approaches?} Beyond order parameters, phase recognition often relies on measuring characteristic entanglement signatures that distinguish different phases of matter.
However, these approaches encounter the same fundamental scaling limitations with the correlation range $\xi$.
Consider the Kitaev-Preskill construction \cite{kitaev2006topological} for measuring the topological entanglement entropy (TEE): $S_{\text{topo}}=S_{AB}+S_{BC} + S_{AC} - S_A - S_B - S_C -S_{ABC}$.
This quantity is a powerful tool designed to distinguish topological phases ($S_{\text{topo}} > 0$) from trivial phases ($S_{\text{topo}} = 0$); see \cite{kim2023universal} for a rigorous version of TEE that addresses spurious cases \cite{williamson2019spurious, kato2020toy}.
When the correlation range is $\xi$, the topological entanglement entropy needs to be measured on regions of radius $\mathcal{O}(\xi)$.
The sample complexity for measuring entropies on such regions grows exponentially in $\xi$.
Similarly, the mutual information $I(A:B) = S_{A} + S_B - S_{AB}$ for detecting long-range correlations in symmetry-breaking phases, and the conditional mutual information $I(A:B|C)$ for capturing long-range string correlations in SPT phases, both require measuring entropies on regions with radius $\mathcal{O}(\xi)$.
This leads to the same exponential scaling in the sample complexity, confirming that entanglement-based phase recognition approaches suffer from the same fundamental hardness as order-parameter approaches.

\paragraph{Implications for existing phase recognition algorithms.} Our hardness results establish the optimality of existing rigorous algorithms for phase recognition~\cite{huang2024learning,kim2024learning,landau2025learning}, which achieve runtime exponential in $\xi$.
This proves that no algorithm can fundamentally improve upon this scaling.
%
%
On the other hand, most common approaches to phase recognition are highly heuristic and leverage assumptions and prior knowledge about the quantum state of interest~\cite{jiang2012identifying,rodriguez2019identifying,cian2021many,herrmann2022realizing,cian2022extracting}.
Our results show that there exist quantum states where all such approaches must fail; we emphasize that this does not preclude their success in specific practical settings.


A recent class of algorithms based on the renormalization group represents a particularly interesting case~\cite{cong2019quantum,cong2024enhancing,lake2022exact}.
These algorithms operate in a hierarchal manner, first extracting small signals from local properties and gradually zooming out to larger scales to amplify these signals and determine the phase of matter. 
In particular, Ref.~\cite{lake2022exact} provides an explicit renormalization-group-inspired algorithm for recognizing phases of matter, and proves that the algorithm produces a correct answer for generic quantum states as long as the system size and circuit depth are sufficiently large.
They also show that this convergence occurs in circuit depth $d = \mathcal{O}(\log \xi)$ under various additional physical assumptions.
Our results do not contradict the main claims in Ref.~\cite{lake2022exact}. 
The mechanism behind the algorithm in Ref.~\cite{lake2022exact} is akin to nested majority voting, and its success hinges on the absence of near degeneracies in local voting outcomes. 
Our symmetric PRUs produce quantum states that exhibit degeneracies in the local voting outcomes that are exponentially small in $\xi$.
Amplifying these small signals requires a circuit depth $d = \mathcal{O}(\xi)$ in the algorithm of Ref.~\cite{lake2022exact}.
This translates to a number of gates $\exp(\mathcal{O}(\xi))$ owing to the algorithm's hierarchical architecture, consistent with our bound.
It remains open to rigorously prove if the algorithm in Ref.~\cite{lake2022exact} always attains this optimality without any assumptions.

\paragraph{What about non-symmetric ground states?} In our proof of Theorem~\ref{thm:quantum} and the examples above, we focus on states $\ket{\psi}$ that are symmetric in each phase.
This is always possible because there exists a symmetric representative of any phase of matter, since the ground state of a symmetric Hamiltonian is symmetric.
However, our results also imply strong limitations on recognizing non-symmetric representatives of a phase as well.
For example, consider the state $U \ket{0^n}$, which explicitly breaks the $\mathbbm{Z}_2$ symmetry $\prod_i X_i$, instead of $U \ket{\text{GHZ}}$ where $\ket{\text{GHZ}} = (\ket{0^n}+\ket{1^n})/\sqrt{2}$, which preserves the symmetry.
Our results prove that the only information one can efficiently extract from the state is the total population in each symmetry sector; see Appendix~\ref{sec: phases of matter} for details.
This allows distinguishing explicit symmetry-breaking states like $U \ket{0^n}$ from the trivial phase.
However, it still rules out distinguishing such states from non-symmetric states in other phases of matter, such as SPT phases~\cite{fn1}.
Moreover, Theorem~\ref{thm:quantum} implies that \emph{obtaining} the explicitly symmetry-broken state $U \ket{0^n}$ from the symmetric state $U \ket{\text{GHZ}}$ is computationally hard without knowledge of $U$.
If this were not true, then one could efficiently learn the phase of matter.

\paragraph{What about states generated from individually symmetric local gates?} Following our discussion of symmetric PRUs, our results for Abelian symmetry groups are slightly stronger than those for non-Abelian groups.
For Abelian symmetries, our results apply even when each individual constant-local gate in the unitary $U$ is required to be symmetric.
For non-Abelian symmetries, our results apply only when each individual $2\xi$-local patch is symmetric [see Fig.~\ref{fig:symPRU}(c)], but not necessarily each individual constant-local gate comprising each patch.
To achieve this in the Abelian case, we require a mild assumption that one can perform symmetric gates between nearest-neighbor triplets of qudits on which the symmetry acts; see Appendix~\ref{sec: symmetric PRUs}.
This avoids unusual counter-examples where no symmetric qudits can interact directly with one another~\cite{fn9}.

\ifnumerics
\paragraph{Numerical studies.} We have proven that, asymptotically, the resources required to recognize phases of matter must grow exponentially in the correlation range $\xi$. 
One might wonder: How quickly can this hardness set in in practice?
To this end, we now perform several numerical studies to quantify the resource overhead of common phase recognition strategies, as the correlation range $\xi$ is increased (Fig.~\ref{fig:numerics}).
The strategies we consider include: measuring order parameters, detecting higher-body order parameters (via classical shadow tomography and neural network training), and measuring quantized entanglement metrics such as the mutual information (MI), conditional mutual information (CMI) and topological entanglement entropy (TEE).
We apply these strategies to recognize the phase of matter of an unknown quantum state that has been scrambled from a fixed point state by a low-depth symmetric random unitary circuit (where the depth of the circuit controls the correlation range).
We consider such states, as opposed to the ground states of simple toy-model Hamiltonians, in order to probe the hardness of phase recognition in \emph{generic} settings.
In contrast, it is already well-established that phases of matter are easy to recognize in a variety of toy model Hamiltonians~\cite{bayo2025machine,carleo2019machine}.

We begin by describing the states under consideration, followed by our phase recognition strategies, which parallel the previous examples. The first state we consider is a 2D toric code ground state on $L\times L$ square lattice on periodic boundary conditions with $n=2L^2$ qubits defined on the edges of the lattice: $|\psi_{TC} \rangle = \left(\prod_v \frac{1+A_v}{2} \right)|0^n\rangle$ where $A_v = \prod_{j \in v} X_j$ is defined at each vertex of the square lattice. This state is a fixed point of topologically ordered phase. The second state we consider is a 1D GHZ state on $n$ qubits: $|\psi_{GHZ} \rangle = \frac{1}{\sqrt{2}}\left( |0^n\rangle + |1^n\rangle \right)$ which has $\mathbb{Z}_2$ symmetry under global $X$ flip operator $\prod_{j} X_j $ which probes symmetry-breaking order. The third state is a 1D $ZXZ$ cluster state $|\psi_{cluster} \rangle$ chain on open boundary conditions. This state has a $\mathbb{Z}_2 \times \mathbb{Z}_2$ symmetry generated by $\bar{X}_e=\prod_{j \in \textrm{even}} X_j $ and $\bar{X}_o=\prod_{j \in \textrm{odd}} X_j $ operators ($n$ is chosen to be even) which gives rise to a symmetry-protected topological phase (SPT); we choose $|\psi_{cluster} \rangle $ such that it is a simultaneous $+1$ eigenstate of both symmetry operators i.e. $|\psi_{cluster} \rangle =\frac{1+\bar{X}_{e}}{2} \frac{1+\bar{X}_{o}}{2}\prod_{i=1}^{n-1}\textrm{CZ}_{i,i+1} |+^n \rangle$. Finally we define a trivial state $|\textrm{product}\rangle = |+\cdots+\rangle$ which is an eigenstate of $G= \mathbb{Z}_2 \times \mathbb{Z}_2 \times \cdots$ symmetry.  


Now our task is to distinguish $|\psi \rangle$ (corresponding to $|\psi \rangle_{TC}$, $|\psi \rangle_{GHZ}$ or $|\psi \rangle_{cluster}$ above)
from the trivial state $|\textrm{product}\rangle$ when both states are dressed by the same unitary $U$, which is constructed in a brickwork fashion: $U= U_2 U_1$ where $U_1, U_2$ are both constructed as tensor products of $2\xi$ geometrically-local Clifford random gates. For the case of initial states with $\mathbb{Z}_2$ or $\mathbb{Z}_2 \times \mathbb{Z}_2$ symmetry we require dressing circuit to be comprised of Clifford unitaries commuting with symmetry operators (we describe further details in the Appendix \ref{sec: numerics_appendix}). 

We first demonstrate hardness in calculating low-body order parameters which we study as a function of $\xi$. Such order parameters are extracted with a help of a neural network learning a (possibly) non-linear function of $k$-body reduced density matrices of the system. Such $k$-body reduced density matrices can be arbitrarily well approximated through local classical shadows or extracted exactly from stabilizers at the output of the circuit (recall circuits we consider are Clifford). As expected, for $\xi=0$ (fixed points) in all cases one can easily distinguish ordered phase from trivial. However, for $\xi > 0$ we find that classification accuracy rapidly drops with $\xi$ for all quantum phases considered, regardless of the postprocessing method (classical shadows or exact reduced density matrix) or complexity of the neural network with data requirements growing exponentially in $\xi$. \dk{TBC}

Next, we demonstrate calculations of entanglement entropies which should allow us to distinguish trivial and non-trivial states regardless of $\xi$:
\begin{enumerate}
    \item For topologically ordered state we evaluate topological entanglement entropy through Kitaev-Preskill construction \cite{kitaev2006topological}. To this end we divide the system into partitions $A,B,C,D$ arranged in a circular geometry - see Fig. (inset) - and calculate $S_{topo}=S_{AB}+S_{BC} + S_{AC} - S_A - S_B - S_C -S_{ABC}$. We expect that in topologically ordered state $\textrm{TEE} = \log 2$ and $\textrm{TEE} = 0$ for trivial state. We show that the size of the partition required to measure $\textrm{TEE} = \log2$ with precision allowing to distinguish from a non-zero value grows linearly in $\xi$ which implies exponential sampling complexity (see Fig. ). This is because minimum diameter of the circular region $ABC$ scales like $\xi$ and thus $S_{ABC} \sim O(\xi)$. Then $S_2 \sim -\log_2 Tr(\rho^2)$ and thus resolving $\# \textrm{samples} \sim \frac{1}{Tr(\rho_{AB}^2)} \sim \mathcal{O}(2^{S_{AB}}) = \mathcal{O}(2^{\xi})$.
    \item We find a similar relationship while measuring mutual information $I(A:B) = S_{A} + S_B - S_{AB}$ for the GHZ state with partitions shown in Fig. ! (inset) with $|A|=|B|$: for GHZ at $\xi=0$ we find $I(A:B)=1$ indicating long range entanglement (and  $I(A:B)=0$ in a trivial phase). For $\xi>0$ we find $I(A:B)=1$ whenever $|A|,|B| \gtrsim \mathcal{O}(\xi)$. To resolve $I(A:B)=1$ vs $I(A:B)=0$ one probes $S_{AB} \sim \mathcal{O}(\xi)$ in turn requiring $\# \textrm{samples} \sim \mathcal{O}(2^{\xi})$ 
    \item Finally, for a cluster state we measure conditional mutual information $I(A:B|C)=S_{AB} + S_{BC} - S_B -S_{ABC}$ defined on partitions in Fig . (inset). We take a freedom to trace out region $D$ \textit{in the bulk} since topological information is carried only within the edge modes \cite{zeng2019quantum}. We find very similar results: for $\xi=0$ conditional mutual information $I(A:B|C)=2$ for cluster state (and $I(A:B|C)=0$ for trivial phase). For $\xi>0$ we find that $I(A:B|C)=2$ as long as $|D|\gtrsim \mathcal{O}(\xi)$ which by the same reasoning as above implies $\# \textrm{samples} \sim O(2^{\xi})$. 
\end{enumerate}

This concludes that while restricted to polynomial complexity in $\xi$ we are not able to distinguish trivial and non-trivial phases using entanglement entropies.
\fi

\subsection{Hardness of recognizing mixed state quantum phases of matter} \label{sec:mixedstate}

We conclude this section by discussing the extension of our results to mixed state quantum phases of matter.
Traditionally, mixed state phases have been studied in the context of finite-temperature Gibbs states of quantum or classical Hamiltonians~\cite{landau2013statistical}.
They have also received renewed interest recently~\cite{lee2023quantum,bao2023mixed,wang2025intrinsic,sang2024mixed,sang2025stability,coser2019classification,xue2024tensor,ma2025symmetry,yang2025topological}, sparked by experimental realizations affected by noise and decoherence.

The precise definition and classification of mixed state phases of matter remains an active research topic~\cite{lee2023quantum,bao2023mixed,wang2025intrinsic,sang2024mixed,sang2025stability,coser2019classification,xue2024tensor,ma2025symmetry,yang2025topological}.
One common definition proceeds as follows~\cite{sang2024mixed,coser2019classification,sang2025stability,xue2024tensor}.
Two mixed quantum states $\rho_1$ and $\rho_2$ are in the same phase of matter if and only if there exists a pair of local channels, $\mathcal{C}_1$ and $\mathcal{C}_2$, such that $\mathcal{C}_1(\rho_1) = \rho_2$ and $\mathcal{C}_2(\rho_2) = \rho_1$.
The use of quantum channels instead of unitary transformations is needed to enact equivalence transformations between mixed states with different spectra. 
The use of two channels instead of one follows because quantum channels are, in general, not invertible.
If the system is symmetric, then one also requires that each channel is covariant under the symmetry transformation, $\mathcal{C}(\cdot) = R_g^\dagger \mathcal{C}(R_g (\cdot) R_g^\dagger) R_g$ for all $g \in G$.
It was recently argued that additional smoothness constraints on the channels $\mathcal{C}_1$ and $\mathcal{C}_2$ should also be imposed~\cite{yang2025topological}.

Similar to the pure state setting, our results apply to any definition of mixed state phases of matter such that the following holds.
For each phase of matter, we can associate a representative fixed point mixed state $\rho_0$.
Then, we assume that any state $\rho$ that is related to $\rho_0$ by symmetric geometrically-local quantum channels as above is in the same phase as $\rho_0$.
We let $\ell$ denote the maximum allowed circuit depth of the channels and $\xi$ the light-cone radius.
We require only that the allowed channels include symmetric geometrically-local unitary circuits, and tensor product operations with ancilla qubits that are maximally mixed and unentangled with the system.

Our main result shows that mixed state phases of matter are as hard to recognize as pure state phases of matter.
Namely, we prove that it is quantum computationally hard to distinguish any mixed state phase of matter from a mixture of maximally mixed states in each symmetry charge sector.
The latter should be viewed as belonging to the trivial phase.
The population in each charge sector is inherited from the fixed point of the mixed state phase.

\begin{theorem}[Hardness of recognizing mixed state phases of matter] \label{thm:mixed}
    For any discrete on-site symmetry group $G$, distinguishing any mixed state phase of matter from a mixture of maximally mixed states in each symmetry charge sector requires quantum computational time scaling exponentially in the correlation range $\xi$. The time becomes super-polynomial in the system size $n$ whenever $\xi = \omega(\log n)$.
\end{theorem}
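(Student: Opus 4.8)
The plan is to reduce the mixed-state problem to the pure-state hardness result (Theorem~\ref{thm:quantum}) via purification, using the fact that the allowed equivalence operations for mixed-state phases include symmetric geometrically-local unitaries together with tensoring in maximally mixed, unentangled ancillas. First I would fix a representative fixed-point mixed state $\rho_0$ of the given phase and write a purification $\ket{\Psi_0}$ of $\rho_0$ on system qudits plus a bath register. The key structural observation is that any mixed state $\rho$ of the form $\CC_1 \circ \cdots$ reachable from $\rho_0$ by the allowed channels can itself be purified: adding maximally mixed unentangled ancillas corresponds, on the purifying side, to adding EPR-like pairs, and applying a symmetric geometrically-local unitary $U$ to the system extends to a symmetric geometrically-local unitary $U \otimes \mathbbm{1}$ on the purified picture. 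Conversely, a maximally mixed state in a fixed charge sector is the partial trace of a suitable symmetric pure state (an EPR-type state on system$\times$bath, projected into that charge sector on the system), and a mixture over charge sectors is the partial trace of a pure state where the bath additionally holds the ``which-sector'' label.

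Next I would instantiate the construction from the pure-state proof: take $\ket{\Psi} = (U \otimes \mathbbm{1}_{\text{bath}}) \ket{\Psi_0}$, where $U$ is a low-depth symmetric PRU from Theorem~\ref{thm:polylog} (or its translation-invariant version, Theorem~\ref{thm:TI}) acting on the system qudits with light-cone radius $\CO(\xi)$. Then $\rho = \Tr_{\text{bath}}[\ketbrat{\Psi}]$ is a legitimate member of the mixed-state phase of $\rho_0$, since it is obtained from $\rho_0$ by appending ancillas (to realize the purification dilation as an allowed operation, one can absorb the bath into the ``maximally mixed ancilla'' resource when $\rho_0$ is chosen to be a fixed point whose purification is itself a product of local maximally mixed pieces — e.g. a tensor product of EPR pairs within $\CO(\xi)$-size patches, consistent with the short-range-correlated nature of fixed-point states) followed by the symmetric low-depth unitary $U$. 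The trivial-phase comparison state is $\sigma = \sum_\lambda p_\lambda \, (\mathbbm{1}_\lambda / d_\lambda)$, the charge-sector mixture with populations $p_\lambda$ inherited from $\rho_0$; crucially, $\sigma$ is also of the form $\Tr_{\text{bath}}[(U \otimes \mathbbm{1})\ketbra{\Phi_0}{\Phi_0}]$ for an appropriate symmetric fixed-point purification $\ket{\Phi_0}$ of the trivial phase, with the \emph{same} charge-sector weights, so that $\ket{\Psi_0}$ and $\ket{\Phi_0}$ are genuinely in different phases but have matching global charge distributions.

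The distinguishing argument then runs as in the pure-state case: any efficient quantum algorithm receiving $m = \poly(n)$ copies of $\rho$ (resp.\ $\sigma$) can be simulated by an efficient algorithm receiving $m$ copies of the purification $\ket{\Psi}$ (resp.\ the trivial purification) that simply discards the bath registers; hence a distinguisher for $\rho$ vs.\ $\sigma$ yields a distinguisher for $(U\otimes\mathbbm{1})\ket{\Psi_0}$ vs.\ $(U\otimes\mathbbm{1})\ket{\Phi_0}$. But the symmetric PRU property, combined with the fact that $\ket{\Psi_0}$ and $\ket{\Phi_0}$ have the same charge-sector weights (so that the symmetric-Haar twirls of the two ensembles coincide up to the same charge data the adversary could already compute), shows both families are indistinguishable from the same symmetric-Haar-random reference state, hence from each other — this is exactly the content we invoke from Appendix~\ref{sec: phases of matter} applied at the level of purifications, giving a security bound of $\exp(-\Omega(\xi))$ against $\exp(o(\xi))$-time adversaries and the claimed super-polynomial scaling once $\xi = \omega(\log n)$.

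The main obstacle I anticipate is the careful bookkeeping of charge sectors in the purified picture: one must ensure that (i) the purification $\ket{\Psi_0}$ can be chosen symmetric under the on-site representation $R_g \otimes R_g^{\text{bath}}$ so that $U$ only needs to be symmetric on the system, (ii) the bath-tracing operation that the adversary implicitly performs does not leak more than the charge-sector populations — i.e. the reduced state on the system within each sector is still close to maximally mixed, which follows from the local-scrambling / local-indistinguishability property of PRUs but must be checked to hold sector-by-sector — and (iii) the ``append maximally mixed ancilla'' operation genuinely suffices to realize the purification dilation as an allowed channel for the specific fixed points used, which constrains the choice of $\rho_0$ and $\ket{\Psi_0}$ (this is why one should take fixed points with short-range-entangled, locally-structured purifications). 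Handling (ii) rigorously is the crux, and I expect it to reduce to the same symmetric-twirl estimate, Eq.~\eqref{eq: approx sym Haar twirl}, restricted to each irrep block.
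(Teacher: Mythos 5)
Your high-level skeleton (apply a low-depth symmetric PRU to an entropy-padded fixed point and reduce to a symmetric-Haar twirl) matches the paper, but the step you yourself flag as the crux is genuinely missing, and the justification you sketch for it does not work. The claim that $(U\otimes\mathbbm{1}_{\text{bath}})\ket{\Psi_0}$ and $(U\otimes\mathbbm{1}_{\text{bath}})\ket{\Phi_0}$ are ``both indistinguishable from the same symmetric-Haar-random reference state'' is false at the purification level: the PRU property only lets you replace $U$ by a symmetric Haar $V$ acting on the \emph{system} half, and $(V\otimes\mathbbm{1})\ket{\Psi_0}$ is nothing like a Haar-random joint state --- it retains the full bath marginal and, after tracing the bath, the full spectrum of $\rho_0$ in each charge sector. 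Equivalently, at the level the adversary actually sees, what must be proven is
\begin{equation}
  \E_{V\sim H_G}\big[(V\tilde{\rho}_0 V^\dagger)^{\otimes k}\big]\;\approx\;\rho_{\text{mixed}}^{\otimes k},
\end{equation}
and this is simply \emph{false} for your instance $\rho=\Tr_{\text{bath}}[(U\otimes\mathbbm{1})\ketbrat{\Psi}]=U\rho_0 U^\dagger$ when $\rho_0$ has low entropy (e.g.\ a pure or near-pure fixed point): the spectrum is invariant under $V$, so a two-copy SWAP test distinguishes $V\rho_0V^\dagger$ from the sector-wise maximally mixed state with constant advantage. ``Local scrambling'' of reduced density matrices cannot rescue this, because the adversary holds polynomially many global copies; a genuine $k$-copy moment bound is required.

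The ingredient you are missing is the one the paper's proof is built around: before applying the PRU, tensor the fixed point with $n$ maximally mixed ancilla qudits that \emph{remain part of the system} (an allowed operation that preserves the phase), so the hard instance is $\rho=U(\rho_0\otimes(\mathbbm{1}/d)^{\otimes n})U^\dagger$ rather than $U\rho_0U^\dagger$. The injected local entropy is what makes the twirl forget everything but the charge populations: with probability $1-O(k^2/d^n)$ the $k$ copies lie in the ancilla-distinct subspace, and since $\tr_{\text{ancillas}}(\Pi^{\text{dist}}\pi^{-1})=\delta_{\mathbbm{1},\pi}$, all nontrivial permutation terms in the approximate symmetric Haar twirl (Lemma~\ref{lemma: approximate symmetric Haar twirl}) drop out, leaving only the sum over $R_{\bs g}$, which reproduces $\rho_{\text{mixed}}^{\otimes k}$ exactly. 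Your point (ii) correctly guesses that the twirl estimate is the relevant tool, but ``restricted to each irrep block'' it still retains the permutation/spectral data; without the ancilla-entropy plus distinct-subspace argument the conclusion does not follow, and with it the purification detour (and the delicate choices of $\ket{\Psi_0}$, $\ket{\Phi_0}$ it forces) is unnecessary --- the paper works directly with the mixed states and compares the single random ensemble to the fixed reference $\rho_{\text{mixed}}$.
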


\noindent This theorem holds for all types of mixed state phases of matter, including both topological and symmetry-breaking mixed state phases.
This complements recent results on the hardness of recognizing strong-to-weak spontaneous symmetry breaking~\cite{feng2025hardness}.
As with Theorem~\ref{thm:quantum}, our proof of Theorem~\ref{thm:mixed} follows directly from our construction of low-depth symmetric PRUs.

\begin{figure}
    \centering
    \includegraphics[width=0.95\textwidth]{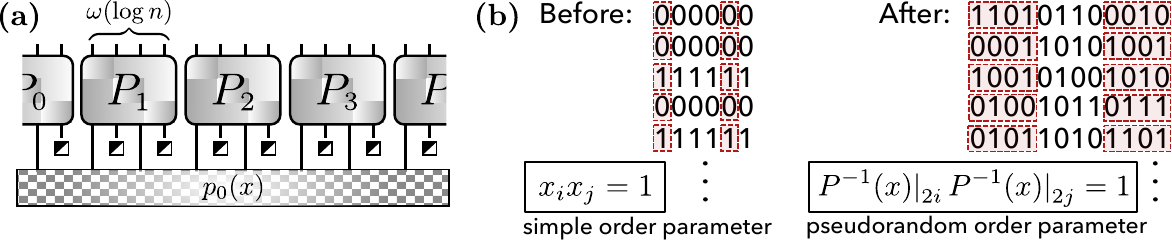}
    \caption{Extension of our results to classical phases of matter.
    \textbf{(a)} We consider a reversible one-layer brickwork circuit composed of symmetric pseudorandom permutations acting on patches of $\xi = \omega(\log n)$ bits each.
    The input to the circuit is any fixed point probability distribution, $p_0(x)$, tensored with $n$ mixed ancilla bits (black-white squares).
    We prove that for any input $p_0(x)$, the output of the circuit is indistinguishable from the maximally mixed distribution, i.e.~the trivial phase.
    \textbf{(b)} Samples from the $\mathbbm{Z}_2$-symmetry-breaking fixed point distribution before and after the circuit. The circuit transforms the order parameter from a simple single-body observable into a pseudorandom function on $2\xi = \poly(\log n)$ bits. This hides the phase of matter from any $\poly n$-time observer.
    }
    \label{fig:classical}
\end{figure}

\section{Classical phases of matter} \label{sec:classical}

While our primary focus is on quantum phases of matter, we find that, surprisingly, the hardness of recognizing phases of matter extends to purely classical states and phases as well.
The hallmark example of a classical phase of matter occurs in the two-dimensional Ising model, $H = \sum_{\langle i,j \rangle} Z_i Z_j$, at finite temperature $1/\beta$~\cite{landau2013statistical}.
The Gibbs state defines a classical probability distribution, 
\begin{equation}
    p_\beta(x) = \frac{\bra{x} e^{-\beta H} \ket{x}}{\sum_x \bra{x} e^{-\beta H} \ket{x}},
\end{equation}
where $x \in \{0,1\}^n$.
At high temperatures ($\beta < \beta_c \approx 0.441$), the distribution is in the trivial phase and features only short-range correlations.
At low temperatures ($\beta > \beta_c$), the distribution becomes symmetry-breaking, leading to long-range correlations across the entire system.

Similar to quantum systems, we can formalize the definition of classical phases of matter through local circuit equivalences.
Following our discussion of mixed state quantum phases of matter, two classical probability distributions $p_1(x)$ and $p_2(x)$ are in the same phase of matter if and only if there exists a pair of geometrically-local stochastic maps $\mathcal{M}_1$ and $\mathcal{M}_2$ such that $\mathcal{M}_1(p_1) = p_2$ and $\mathcal{M}_2(p_2) = p_1$.
We let $\ell$ denote the circuit depth of the maps and $\xi$ the light-cone radius.
This precisely mirrors the definition of mixed state quantum phases of matter, but with all objects restricted to be purely classical.
For systems possessing a symmetry, both maps must be covariant under the symmetry operation.
We refer to Appendix~\ref{sec: classical} for further details.

Our main result establishes the computational hardness of recognizing classical phases of matter:

\begin{theorem}[Hardness of recognizing classical phases of matter] \label{thm:classical}
    For any discrete on-site symmetry group $G$, recognizing whether a classical probability distribution is in a trivial versus symmetry-breaking phase requires a quantum or classical computational time scaling exponentially in the correlation range $\xi$. The time becomes super-polynomial in the system size $n$ whenever $\xi = \omega(\log n)$.
    \end{theorem}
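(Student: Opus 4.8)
The plan is to mirror the strategy used for the quantum and mixed-state cases (Theorems~\ref{thm:quantum} and~\ref{thm:mixed}), but replacing symmetric pseudorandom unitaries with \emph{symmetric pseudorandom permutations} (PRPs) acting reversibly on classical bitstrings. First I would fix the fixed-point distribution of the symmetry-breaking phase — for a $G$-symmetry-breaking phase the natural choice is the uniform mixture over the $|G|$ (or over the relevant orbit of) symmetry-broken configurations, e.g.~the two-point distribution $\tfrac12(\delta_{0^n} + \delta_{1^n})$ for $G=\mathbbm{Z}_2$ — and the fixed point of the trivial phase, which is the maximally mixed distribution $2^{-n}$ restricted to the appropriate charge sector. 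I would then generate a \emph{random representative} of each phase by appending $n$ maximally-mixed ancilla bits and applying a low-depth, symmetry-covariant pseudorandom permutation $\pi$ in a one-layer brickwork of patches of width $\xi = \omega(\log n)$, exactly as depicted in Fig.~\ref{fig:classical}(a). The ancillas are essential here: unlike unitaries, a single reversible map cannot enlarge the support of $\tfrac12(\delta_{0^n}+\delta_{1^n})$ to look maximally mixed, so one first tensors in fresh randomness and then scrambles, and the output distribution on the physical register becomes the image of a near-maximally-mixed distribution under a pseudorandom permutation.

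The key steps, in order, would be: (i) construct symmetric pseudorandom permutations in $\poly(\log n)$ depth — this is the classical analogue of Theorems~\ref{thm:polylog}/\ref{thm:TI}, and I would obtain it either by invoking the same two-layer/one-layer brickwork gluing argument specialized to permutation ensembles (the permutation-design analogue of Lemma~\ref{lemma:gluing}, which should in fact be \emph{easier} since the symmetric group is discrete and the relevant moment bounds are combinatorial), together with a cryptographic PRP on each $\xi$-bit patch (standard PRPs exist from one-way functions, themselves implied by the LWE conjecture); (ii) argue that within each charge sector a symmetric pseudorandom permutation applied to a distribution of min-entropy $\ge n - O(1)$ produces an output that is computationally indistinguishable from uniform-on-that-sector, so that both the ``trivial'' and ``symmetry-breaking'' random representatives are indistinguishable from the same reference distribution (a mixture of maximally-mixed distributions over charge sectors, inheriting the fixed-point's sector populations); (iii) conclude by a hybrid/triangle-inequality argument that the two families are mutually indistinguishable to any $\poly(n)$-time quantum \emph{or} classical algorithm given polynomially many samples, hence recognizing the phase is hard; and (iv) check the light-cone bookkeeping so that the correlation range of the generated distributions is $\Theta(\xi)$ and the hardness kicks in precisely at $\xi = \omega(\log n)$, with stretched-exponential hardness in $\xi$ above that threshold — and verify that the generated distributions genuinely lie in the claimed phases under the local-stochastic-map equivalence relation, which is immediate since a low-depth reversible circuit plus ancilla-tensoring is a valid (invertible, symmetry-covariant) equivalence in both directions.

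The main obstacle I expect is step (i): proving that low-depth \emph{symmetry-covariant} pseudorandom permutations exist and that the brickwork gluing works in the permutation setting. The subtlety is that a covariant reversible map must respect the partition of $\{0,1\}^n$ into charge sectors of wildly different sizes (for $G=\mathbbm{Z}_2$ acting by a global bit-flip, the two cosets each have size $2^{n-1}$, but for a general on-site $G$ the orbit-size distribution is genuinely nonuniform), so one needs a permutation that acts pseudorandomly \emph{within} each orbit-size class while permuting orbits of equal size among themselves — the exact combinatorial analogue of the ``block-diagonal in irreps'' structure from Fig.~\ref{fig:symPRU}(a), now phrased over group orbits rather than isotypic components. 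I would handle this by decomposing the configuration space by orbit type, running a controlled pseudorandom permutation conditioned on the orbit type (the classical analogue of the controlled-PRU step in the proof of Theorem~\ref{thm:poly}), and then gluing; the detailed counting of permutations needed for the approximate-design bound is the place where the real work lies, but it should parallel — and be strictly simpler than — the permutation-operator counting already carried out for Lemma~\ref{lemma:gluing} and Theorem~\ref{thm:TI}. A secondary technical point is ensuring the reduction holds against classical adversaries as well as quantum ones; this is automatic because a classical PRP is secure against $\poly$-time classical distinguishers by definition, and the sample-access model only strengthens the statement. Full details are deferred to Appendix~\ref{sec: classical}.
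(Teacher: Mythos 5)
Your proposal reproduces the paper's \emph{construction} (fixed-point distribution tensored with $n$ maximally mixed ancilla bits, followed by a single layer of symmetric PRPs on patches of $\omega(\log n)$ bits, as in Fig.~\ref{fig:classical}), but the argument you give for why it works contains a genuine gap at its central step. Step (i) — proving a permutation analogue of Lemma~\ref{lemma:gluing} so that the low-depth circuit forms a \emph{globally} pseudorandom symmetric permutation — is not merely harder than the unitary case, it is false. A shallow classical reversible circuit can never look globally random under query access: flipping one input bit changes only the $\mathcal{O}(\xi)$ output bits inside its light cone, while a (symmetric) random permutation changes roughly half of all output bits, giving an efficient distinguisher. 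The paper explicitly flags this obstruction, citing~\cite{schuster2024random}, and emphasizes that this is precisely why the classical result is surprising. Moreover, the circuit you describe is a \emph{one-layer tensor product} of disjoint patch permutations, so there is nothing to glue in the first place, and adjacent patches are manifestly uncorrelated. Since your step (ii) (output indistinguishable from uniform on each sector) is justified by appeal to step (i), and your ``main obstacle'' paragraph (orbit-type decomposition, controlled PRPs conditioned on orbit type) is machinery built to rescue step (i), the core of the proof as proposed does not go through.

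What the paper does instead is abandon any claim of global pseudorandomness and prove hardness directly in the sample-access model by a short counting argument. With $k=\poly(n)$ samples, the maximally mixed ancillas ensure that, up to failure probability $\mathcal{O}(mk^{2}/2^{\xi})$, the $k$ samples restricted to \emph{every} patch are pairwise distinct even up to the symmetry action (the ``symmetric distinct subspace''). Conditioned on this event, a symmetry-covariant truly random permutation on each patch maps the tuple to a \emph{uniformly} random patchwise symmetry-distinct tuple — a fixed distribution completely independent of the input $p_0(x)$. Hence the $k$-sample output statistics of the symmetry-breaking and trivial (maximally mixed) inputs agree up to $\mathcal{O}(mk^{2}/2^{\xi})=\mathrm{negl}(n)$ for $\xi=\omega(\log n)$, and PRP security lets one replace truly random patch permutations by efficient ones. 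Note also that no charge-sector bookkeeping is needed classically: the output is indistinguishable from the plain uniform distribution. Your steps (iii) and (iv), and your remarks on PRPs from LWE, covariance, and phase membership under the local stochastic equivalence, are consistent with the paper; the missing ingredient is the patchwise-distinctness-plus-exact-uniformity argument, which replaces any design or gluing analysis.
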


\noindent We focus on symmetry-breaking phases since topological phases are not present in classical systems.

As in the quantum setting, we prove Theorem~\ref{thm:classical} by considering the complexity of recognizing a random example of each classical phase of matter.
Our random example is generated by applying a random reversible circuit $\mathcal{M}$ to a fixed point distribution $p_0(x)$ of each phase, as depicted in Fig.~\ref{fig:classical}(a).
The circuit $\mathcal{M}$ consists of only a single layer: a tensor product of $2n/\xi$ symmetric pseudorandom permutations (PRPs)~\cite{zhandry2016note} applied to disjoint patches of $\xi$ bits across the system.
Before applying the circuit $\mathcal{M}$, we tensor the fixed point distribution $p_0(x)$ with $n$ ancilla bits (one per physical bit), each initialized to be uniformly random (i.e.~in the maximally mixed state, $\mathbbm{1}/2$).
This operation adds entropy to the system without changing its phase of matter.

We prove Theorem~\ref{thm:classical} by showing that for any fixed point distribution $p_0(x)$, the resulting random ensemble is exponentially hard to distinguish from the maximally mixed distribution by any classical or quantum observer.
The intuition behind our result parallels the quantum case; nonetheless, it is somewhat unexpected given that classical circuits in one-dimension cannot appear globally random in any depth sublinear in the system size~\cite{schuster2024random}.
Consider the fixed point distribution for the $\mathbbm{Z}_2$-symmetry-broken phase: $p_0(x) = \frac{1}{2} \delta_{x, 0^n} + \frac{1}{2} \delta_{x, 1^n}$.
This features long-range correlations $\langle f_i(x) f_j(x) \rangle = 1$ with respect to the order parameter $f_i(x) = (-1)^{x_i}$.
After conjugating by PRPs on each patch, the distribution still features long-range correlations, but now with respect to a \emph{scrambled} order parameter $f'_i(x) = f_i(P^{-1}(x))$, where $P$ is the permutation acting on the patch containing bit $i$.
Since $P$ is a PRP, the scrambled order parameter $f'_i$ is a pseudorandom function on the patch, making it very difficult to learn for any classical or quantum algorithm.
Hence, as in the quantum case, the random probability distribution maintains genuine long-range symmetry-breaking order, yet the parameter signaling this order is impossible to detect efficiently.

The technical details of our proof are relatively straightforward.
We consider the distribution of bitstrings $x^{(1)},x^{(2)},\ldots, x^{(k)}$ obtained by querying the unknown distribution $k = \poly(n)$ times.
Due to the entropy of the ancilla bits, the $k$ bitstrings are \emph{distinct} from one another, i.e.~$x^{(i)}\neq x^{(j)}$ for all $i < j$~\cite{metger2024simple}, with high probability. 
Moreover, this distinctness also holds at the level of each individual patch: the set of $k$ $\xi$-bit strings obtained by restricting $x^{(1)},x^{(2)},\ldots, x^{(k)}$ to any patch of $\xi$ bits is distinct with high probability whenever $\xi = \omega(\log n)$~\cite{cui2025unitary}.
The key observation is that the probability of any given set of $k$ patchwise-distinct bitstrings appearing is uniform across all such sets due to the pseudorandom permutations on each patch.
Therefore, the distribution of interest cannot be distinguished from the uniform distribution over all sets of $k$ bitstrings, which corresponds to the trivial phase of matter. 
We refer to Appendix~\ref{sec: classical} for complete details.

Our classical construction relies crucially on the probability distributions of interest having a high local entropy, which is provided by the maximally mixed ancilla bits.
This high local entropy allows one to locally hide information about the order parameters of the phase.
This intuition suggests that a similar one-layer tensor-product circuit should be able to hide the phase of matter of highly-mixed quantum states as well.
However, this approach cannot extend to pure quantum states, where a simple purity test would distinguish any state generated by any tensor-product circuit from a truly random state.
One could also measure mutual information between distant patches of the quantum state to enable phase recognition.
Hence, the hardness of recognizing \emph{pure state} quantum phases of matter arises crucially from the ability of quantum circuits to appear globally random at very low depth, which does not occur in classical systems.

\section{Discussions} \label{sec:discussion}

Our results raise several questions regarding both the recognition of phases of matter, and symmetric random unitaries more broadly.
First, as discussed in the introduction, how does one reconcile our results with the observation that phases of matter are often efficiently recognizable in practice?
Is there a fundamental property of the ground states encountered in the physical world that guarantees this efficiency?
In particular, the most natural limitation in our results is that the locality of the parent Hamiltonian grows with the correlation range $\xi$.
However, even a $2$-local Hamiltonian can possess ground states with arbitrarily large correlation ranges $\xi$, since the $\xi$ will typically scale with the inverse of the spectral gap.
This raises a crucial question: Can one prove that the ground states of constant-local Hamiltonians exhibit different physical behaviors that enable efficient phase recognition with computational time scaling only polynomially in the correlation range $\xi$?
And if so, does this hold for every ground state, or only for a large class typically encountered in practice?

Indeed, in the same direction, arguments from conformal field theory suggest that the complexity of recognizing many phases of matter should grow only polynomially in the correlation length, and that this scaling can be achieved by measuring extremely simple fixed point order parameters~\cite{sachdev1999quantum}.
For example, the magnetization of a spin chain (a one-body Pauli operator) is predicted to decay polynomially in the correlation length $\xi$ as one approaches a phase transition.
Thus, one can detect the phase by measuring the magnetization with only a polynomial number of samples.
This prediction is tremendously simplifying, yet it goes far beyond our current mathematical understanding of quantum ground states and phases of matter via short-time adiabatic evolutions.
Is it possible to improve our understanding to encompass these simplifying properties?

Finally, beyond phases of matter, our results make progress on extending ideas from pseudorandom unitaries to physical systems that appear in the natural world.
This provides evidence that other important properties of random unitaries, such as quantum sampling advantages~\cite{arute2019quantum, morvan2023phase}, barren plateaus in variational algorithms~\cite{mcclean2018barren}, and classical shadow tomography~\cite{huang2020predicting}, should also persist at low depths in quantum systems with any discrete symmetry.
An important remaining question concerns quantum systems with continuous symmetries, such as charge or energy conservation.
While simple light-cone arguments rule out the complete formation of low-depth random unitaries in these settings~\cite{haah2025short}, is it possible that other features of low-depth random unitaries might persist?

\vspace{0.5em}
\subsection*{Acknowledgments:} We are grateful to  Samuel J. Garratt, Soumik Ghosh, Jeongwan Haah, Jonas Haferkamp, Isaac Kim, John Preskill, Norbert Schuch, and Nathanan Tantivasadakarn for valuable discussions and insights.
We are especially grateful to Shankar Balasubramanian, Soonwon Choi, Ethan Lake, and Yu-Jie Liu for several discussions and feedback that helped us to significantly improve the clarity of our results.
T.S. acknowledges
support from the Walter Burke Institute for Theoretical Physics at Caltech.
T.S. and H.H. acknowledge support the U.S. Department of Energy, Office of Science, National Quantum Information Science Research Centers, Quantum Systems Accelerator.
The Institute for Quantum Information and Matter, with which T.S. and H.H. are affiliated, is an NSF Physics Frontiers Center (NSF Grant PHY-2317110).
H.H. acknowledge support from the Broadcom Innovation Fund.

\vspace{2.5em}
\appendix

\addtocontents{toc}{\protect\setcounter{tocdepth}{2}}

\noindent 
\textbf{\LARGE{}Appendices}
\vspace{2em}

\noindent Our Appendices are organized as follows.
%
%
In Appendix~\ref{sec: symmetric random unitaries}, we present our main results on symmetric Haar-random unitaries and designs.
In Appendix~\ref{sec: symmetric PRUs}, we present our main results on symmetric pseudorandom unitaries.
In Appendix~\ref{sec: controlled PRUs}, we present our construction of controlled pseudorandom unitaries, which is used in our construction of symmetric pseudorandom unitaries in Appendix~\ref{sec: symmetric PRUs}.
In Appendix~\ref{sec: TI}, we present our results on translation-invariant random unitaries.
In Appendix~\ref{sec: phases of matter}, we present the implications of these results for recognizing quantum phases of matter.
In Appendix~\ref{sec: classical}, we present our independent results on the hardness of recognizing classical phases of matter.

\tableofcontents


\section{Symmetric random unitaries} \label{sec: symmetric random unitaries}

In this section, we present the full details of our results on gluing symmetric random unitaries.
We begin in Appendix~\ref{sec: background} with a basic introduction to  symmetry groups and their representations on qudit Hilbert spaces.
We then turn to random symmetric unitaries in Appendix~\ref{sec: exact haar symmetric}, and provide an explicit formula for the twirl over a Haar-random symmetric unitary.
In Appendix~\ref{sec: approx haar symmetric}, we present our first main result by providing a much simpler approximate expression for the symmetric Haar twirl.
Both our exact and approximate expressions hold for any unitary symmetry group, discrete or continuous.
In Appendix~\ref{sec: symmetric unitary designs}, we provide the definition of a symmetric unitary $k$-design, following the standard definition for unitary $k$-designs without symmetry.
In Appendix~\ref{sec: gluing symmetric}, we present our second main result by proving that random symmetric unitaries ``glue'' together whenever they overlap on a small region.
Our gluing lemma applies to any discrete on-site symmetry group, and forms the basis of our results on generating low-depth symmetric pseudorandom unitaries in Appendix~\ref{sec: symmetric PRUs}.
%

\subsection{Background on representations of symmetry groups} \label{sec: background}

We begin by providing a short introduction to the representation theory of symmetry groups in quantum many-body systems. We refer to standard textbooks for a comprehensive review~\cite{dresselhaus2007group}.

\emph{Unitary representations of symmetry groups.}---Consider any group $G$ and any finite Hilbert space $\mathcal{H}$.
A unitary representation $R$ of $G$ on $\mathcal{H}$ is a homomorphism mapping each element $g \in G$ to a unitary operator $R_g$ acting on $\mathcal{H}$.
A standard fact of mathematics states that any unitary representation on a finite Hilbert space can be decomposed as a tensor sum of \emph{irreducible representations}, or ``irreps'', as
\begin{equation} \label{eq: irrep decomp}
    \mathcal{H} = \bigoplus_\lambda  \big( \mathcal{M}_\lambda \otimes \mathcal{F}_\lambda \big), \,\,\,\,\,\,\,\, R_g = \bigoplus_\lambda \big( \mathbbm{1} \otimes R^\lambda_g \big).
\end{equation}
Here, the tensor sum is over all irreps $\lambda$ of the symmetry group. 
Each irrep $\lambda$ corresponds to a representation $R^\lambda$ of the group acting on a Hilbert space $\mathcal{F}_\lambda$.
By definition, the irreducible representation cannot be decomposed into a tensor sum of any sub-representations.
We denote the dimension of each irrep as $d_\lambda = |\mathcal{F}_\lambda|$.
Each irrep might also appear many times in the decomposition.
To capture this, each irrep is also associated with a Hilbert space $\mathcal{M}_\lambda$, whose dimension is equal to the number of times that the irrep appears, $D_\lambda \equiv | \mathcal{M}_\lambda |$.

In our work, we are interested in random symmetric unitary operators $U$ on $\mathcal{H}$, which commute with all elements of the symmetry group.
From Schur's lemma, any such unitary can be written in the block-diagonal form,
\begin{equation} \label{eq: U lambda}
    U = \bigoplus_\lambda \big( U_\lambda \otimes \mathbbm{1}_{\mathcal{F}_\lambda} \big),
\end{equation}
where each $U_\lambda$ acts on $\mathcal{M}_\lambda$.
The unitary acts trivially on the $\mathcal{F}_\lambda$ registers, which is necessary in order to commute with all symmetry group elements.
One can also easily see that any unitary of the form above is symmetric, i.e.~it commutes with all symmetry group elements.

\emph{Discrete symmetry groups.}---In quantum many-body systems, the Hilbert space factorizes into a tensor product of small components, $\mathcal{H} = \bigotimes_{i=1}^n \mathcal{H}_i$, where each $\mathcal{H}_i$ is the Hilbert space of an individual qubit or qudit of the system.
The most common symmetry operations in many-body physical systems fall into two categories.
In the first, \emph{on-site} symmetries, symmetry group elements act in a parallel on all qudits at a time.
That is, the representation, $R_g$, of a symmetry group element $g \in G$ on $\mathcal{H}$ is given by the tensor product of local representations of $g$ on each local Hilbert space, $R_g = \bigotimes_{i=1}^n R^i_g$.
In the second, \emph{lattice} symmetries, symmetry group elements act to permute the qudits (and their associated Hilbert spaces) amongst one another.
In this work, we will primarily focus on on-site symmetries. 
We extend our results to one particular lattice symmetry, translation-invariance, in Appendix~\ref{sec: TI}.

Let us now turn to the representation theory of discrete on-site symmetry groups in particular.
Any finite discrete group possesses a finite number of irreducible representations.
%
In our work, we focus for simplicity on the most common representation of a discrete symmetry group, known as the \emph{regular} representation.
As discussed in the following paragraphs, the tensor product of any representation of a symmetry group with a qudit in the regular representation yields a a larger regular representation. 
Hence, for our results on phases of matter, we can assume without loss of generality that the symmetry acts in the regular representation on each qudit. 
If it does not, then we can take the tensor product of each qudit with a symmetric ancilla qudit where the symmetry acts in the regular representation on the ancilla. 
This does not change the phase of matter, and yields a larger on-site qudit Hilbert space where the symmetry acts in the regular representation.

Let us now describe the regular representation and its properties.
For any finite discrete group $G$, the regular representation $R^{\text{reg}}$ of $G$ acts on a Hilbert space, $\mathcal{F}_{\text{reg}} = \{ \ket{g} | g \in G\}$.
The Hilbert space has dimension $|G|$, and is spanned by computational basis states labeled by each element $g \in G$.
The regular representation of $G$ acts on these basis states via right multiplication, $R^{\text{reg}}_g \ket{h} = \ket{g h}$.
One useful property of the regular representation is that the symmetry group operators, $R^{\text{reg}}_g$, are orthogonal to one another.
In particular, we have 
\begin{equation}
    \frac{1}{|G|} \tr( R^{\text{reg}}_g (R^{\text{reg}}_h)^{\dagger} ) = \frac{1}{|G|}\tr(R^{\text{reg}}_{gh^{-1}}) = \delta_{g,h},
\end{equation}
for any $g,h \in G$.
From this property, one can show that the regular representation has an irrep decomposition [Eq.~(\ref{eq: irrep decomp})] in which every irrep $\lambda$ appears with multiplicity equal to its dimension $d_\lambda$~\cite{dresselhaus2007group}.
This fact also yields the convenient formula, $\sum_\lambda d_\lambda^2 = |G|$~\cite{dresselhaus2007group}.

As previously mentioned, a second useful property of the regular representation concerns its behavior under tensor products.
Namely, the regular representation is a \emph{fixed point} under the tensor product.
To elaborate, consider the tensor product of the regular representation, $\mathcal{F}_{\text{reg}}, R_g^{\text{reg}}$, of a group $G$, with any other representation, $\mathcal{F}_2, R_g^2$, of $G$.
From these two original representations, one can naturally define a third representation, $R_g = R_g^{\text{reg}} \otimes R_g^2$, which acts on the combined Hilbert space $\mathcal{F}_{\text{reg}} \otimes \mathcal{F}_2$.
As for any representation, the tensor product representation can be decomposed as a sum of irreps.
For a general pair of original representations, this decomposition can be complicated to determine.
However, when either one of the input representations is regular, one finds that the tensor product representation has an irrep decomposition identical to that of the regular representation, up to an overall constant factor increase in the multiplicity.
That is, the irrep decomposition of the tensor product representation contains each irrep $\lambda$ of $G$ with multiplicity $d_\lambda |\mathcal{F}_2|$.
This implies that there exists an isomorphism, $\mathcal{F}_{\text{reg}} \otimes \mathcal{F}_2 \cong \mathbb{C}^{|\mathcal{F}_2|} \otimes \mathcal{F}_{\text{reg}}$, such that the symmetry group acts only on the rightmost register, $R_g = \mathbbm{1}_{|\mathcal{F}_2|} \otimes R^{\text{reg}}_g$.

We can now return to the quantum many-body setting.
Suppose that each local Hilbert space $\mathcal{H}_i$ transforms under the regular representation of $G$.
That is, we can write $\mathcal{H}_i \cong \mathcal{F}_{\text{reg},i} \otimes \mathcal{A}_i$, where $R_g^i = R_g^{\text{reg},i} \otimes \mathbbm{1}_{\mathcal{A}_i}$ and $\mathcal{A}_i$ is an ancilla register of arbitrary dimension $d'$.
Repeating the tensor product procedure in the previous paragraph $n-1$ times, one finds that the many-body Hilbert space $\mathcal{H} = \bigotimes_{i=1}^n \mathcal{H}_i$ can be decomposed as,
\begin{equation}
    \mathcal{H} = \bigotimes_{i=1}^n \mathcal{H}_i = \bigoplus_{\lambda} \big( \mathcal{M}_\lambda \otimes \mathcal{F}_\lambda \big), \,\,\,\,\,\,\,\,\,\,\,\,\, \text{where } \,\,\, D_\lambda \equiv |\mathcal{M}_\lambda| = d_\lambda |G|^{n-1} d'^n = d_\lambda D / |G|,
\end{equation}
where $D = |G|^n d'^n$ is the many-body Hilbert space dimension, and the symmetry operator $R_g = \bigotimes_{i=1}^n R_g^i$ acts as $R_g = \bigoplus_\lambda (\mathbbm{1}_{\mathcal{M}_\lambda} \otimes R^\lambda_g)$.
As a sanity check, one can easily compute $\sum_\lambda d_\lambda D_\lambda = (D/|G|) \sum_\lambda d_\lambda^2 = D$.
If desired, one can also write the many-body Hilbert space as $\mathcal{H} \cong \mathbb{C}^{D/|G|} \otimes \mathcal{F}_{\text{reg}}$, where $R_g$ acts in the regular representation on the second register.
This shows that the symmetry operators remain orthogonal to one another on the many-body Hilbert space, $\tr(R_g R_h^{-1}) = D \delta_{g,h}$, as one would expect.

\emph{Additional notation.}---We conclude this introduction by setting up several definitions for the analysis that follows.
We let $P_\lambda$ denote the projector onto the irrep $\lambda$, i.e.~$P_\lambda = \mathbbm{1}_{\mathcal{M}_\lambda} \otimes \mathbbm{1}_{\mathcal{F}_\lambda}$.
We also define a set of computational basis states spanning each irrep Hilbert space, $\mathcal{F}_\lambda = \{ \ket{ a } | a = 1,\ldots,d_\lambda \}$.
Any choice of basis is valid.
We let $P^\lambda_{k\ell}$ denote the transition operator between basis states $a$ and $b$, i.e.~$P^\lambda_{ab} = \mathbbm{1}_{\mathcal{M}_\lambda} \otimes \dyad{k}{\ell}$.
Both the set of $P^\lambda_{k \ell}/\sqrt{D_\lambda}$ and the set of $R_g/\sqrt{D}$ form complete orthonormal bases for the set of operators in $\text{span}\{ R_g \}$~\cite{dresselhaus2007group}.
This implies the equality,
\begin{equation} \label{eq: R P regular}
    \sum_{\lambda k \ell} \frac{1}{D_\lambda} \tr( (\cdot) P^{\lambda}_{\ell k}) P^{\lambda}_{k \ell} = \frac{1}{D} \sum_g \tr( (\cdot) R_g^{-1} ) R_g.
\end{equation}
This will be useful for exchanging summations over the two sets of operators later on.

\subsection{The exact symmetric Haar twirl} \label{sec: exact haar symmetric}

In this section, we derive an explicit and exact formula for the twirl over a symmetric Haar-random unitary.
Our results apply to any unitary representation $R$ of any symmetry group $G$ on any Hilbert space $\mathcal{H}$.

Recall from the previous subsection [Eq.~(\ref{eq: U lambda})] that any symmetric unitary $U$ can be written in the form,
\begin{equation} 
    U = \bigoplus_\lambda \big( U_\lambda \otimes \mathbbm{1}_{\mathcal{F}_\lambda} \big),
\end{equation}
where $U_\lambda \in U(D_\lambda)$, the unitary group on $D_\lambda$ states.
The Haar measure on the group of symmetric unitaries is the unique measure that is invariant under left and right multiplication, $U \rightarrow U V$ and $U \rightarrow V U$, by any symmetric unitary $V$.
One can easily see that this corresponds to drawing each $U_\lambda$ independently from the Haar measure on $U(D_\lambda)$.
We denote the resulting $G$-symmetric Haar ensemble as $H_G$.




We begin by providing an exact formula for the twirl over $k$ copies of a symmetric Haar-random unitary.
This exact formula is rather complicated; we will provide a much simpler approximate formula in the following subsection.
To set up notation,  we decompose the $k$-copy Hilbert space into a tensor product of $k$ irreps, $\mathcal{H}^{\otimes k} = \bigoplus_{\bs \lambda} ( \mathcal{M}_{\bs \lambda} \otimes \mathcal{F}_{\bs \lambda} )$, where $\bs{\lambda} = (\lambda_1,\ldots,\lambda_k)$ is a length-$k$ vector denoting the irrep in each of the $k$ copies, and $\mathcal{M}_{\bs \lambda} = \bigotimes_{j=1}^k  \mathcal{M}_{\lambda_j}$, $\mathcal{F}_{\bs \lambda} = \bigotimes_{j=1}^k  \mathcal{F}_{\lambda_j}$.
We define the basis operators $P^{\bs{\lambda}}_{\bs{k} \bs{\ell}} = \bigotimes_{j=1}^k P^{\lambda_j}_{k_j \ell_j}$, where $\bs k, \bs \ell$ are also length-$k$ vectors.

\begin{proposition}[Exact symmetric Haar twirl]\label{prop: exact Haar}
    For any unitary representation of any symmetry group $G$.
    The twirl over $k$ copies of a symmetric Haar-random unitary is,
    \begin{equation} \label{eq: exact Haar}
    \Phi^{G}_{H}(\rho) \equiv \E_{U \sim H_G} \left[ U^{\otimes k} \rho U^{\dagger, \otimes k} \right]
    =
    \sum_{\sigma}
    \sum_{\bs{\lambda} \bs{k} \bs{\ell}}
    \sum_{ \,\,\,\,\, \tau_{\alpha_\lambda} }
    \tr( \rho \sigma^{-1} P^{\bs{\lambda}}_{\sigma(\bs{\ell}) \bs{k}} ) \cdot  \prod_\lambda \Wg(\tau_{\alpha_\lambda}; D_\lambda) \cdot P^{\bs{\lambda}}_{\bs{k} \tau(\bs{\ell})} \, \tau,
    \end{equation}
    for any $k \leq \min_\lambda D_\lambda$.
    Here, the permutation $\tau \in S_k$ is defined as $\tau = \big( \bigotimes_{\alpha_\lambda} \tau^{}_{\alpha_\lambda} \big) \sigma \in S_k$, where each $\tau_{\alpha_\lambda} \in S_{k_\lambda}$ permutes the copies $\alpha_\lambda = \{ j | \lambda_j = \lambda\}$ that contain a given irrep $\lambda$, and $k_\lambda = |\alpha_\lambda|$.
\end{proposition}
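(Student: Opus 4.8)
The plan is to reduce the symmetric Haar twirl to an independent product of ordinary (non-symmetric) Haar twirls, one for each irrep multiplicity space, and then assemble the pieces using the standard Weingarten calculus. First I would use the block decomposition $U = \bigoplus_\lambda (U_\lambda \otimes \mathbbm{1}_{\mathcal{F}_\lambda})$ from Eq.~(\ref{eq: U lambda}) together with the defining property of the symmetric Haar measure $H_G$ noted just above the statement: drawing $U \sim H_G$ is the same as drawing each $U_\lambda$ independently from the ordinary Haar measure on $U(D_\lambda)$. Taking $k$ copies, $U^{\otimes k} = \bigoplus_{\bs\lambda} \big( (\bigotimes_{j=1}^k U_{\lambda_j}) \otimes \mathbbm{1}_{\mathcal{F}_{\bs\lambda}} \big)$, so conjugation of $\rho$ decomposes over the sectors $\bs\lambda$, and within each sector the multiplicity part is a tensor product of the independent Haar unitaries $U_{\lambda_1},\ldots,U_{\lambda_k}$. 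The key bookkeeping observation is that, since the $U_\lambda$ are independent, the copies carrying a common irrep $\lambda$ — indexed by $\alpha_\lambda = \{j : \lambda_j = \lambda\}$ — get twirled by $U_\lambda^{\otimes k_\lambda}$, while copies carrying distinct irreps are twirled independently.

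Next I would apply the standard Haar averaging formula for $U_\lambda^{\otimes k_\lambda}$ acting by conjugation on $U(D_\lambda)$: $\E[U_\lambda^{\otimes k_\lambda} A U_\lambda^{\dagger,\otimes k_\lambda}] = \sum_{\pi,\tau' \in S_{k_\lambda}} \Wg(\pi^{-1}\tau'; D_\lambda)\, \tr(A\, \pi^{-1})\, \tau'$, valid for $k_\lambda \le D_\lambda$ (which is implied by $k \le \min_\lambda D_\lambda$). Doing this sector-by-sector produces, for each $\bs\lambda$, a sum over a tuple of permutations $(\tau_{\alpha_\lambda})_\lambda$ of the sub-blocks, together with a product $\prod_\lambda \Wg(\tau_{\alpha_\lambda}; D_\lambda)$ of Weingarten functions. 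The two remaining tasks are then to (i) repackage the local permutations on the multiplicity factors into global permutation operators $\sigma, \tau \in S_k$ acting on $\mathcal{H}^{\otimes k}$, using the relation $\tau = \big(\bigotimes_{\alpha_\lambda}\tau_{\alpha_\lambda}\big)\sigma$ stated in the proposition, where $\sigma$ records how the inner index of the "bra" side is permuted; and (ii) re-express the identity operators on the irrep spaces $\mathcal{F}_{\bs\lambda}$ in terms of the transition operators $P^{\bs\lambda}_{\bs k \bs\ell}$, accounting for how $\sigma$ permutes the $\mathcal{F}$-registers — this is where the factors $P^{\bs\lambda}_{\sigma(\bs\ell)\bs k}$ inside the trace and $P^{\bs\lambda}_{\bs k\, \tau(\bs\ell)}$ outside arise. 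Concretely, one inserts resolutions of the identity $\mathbbm{1}_{\mathcal{F}_\lambda} = \sum_a \dyad{a}{a}$ on each copy, tracks the action of the permutation operator $\sigma$ (which permutes both $\mathcal{M}$- and $\mathcal{F}$-tensor factors), and collects indices into the vectors $\bs k, \bs\ell$.

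I expect the main obstacle to be purely notational/combinatorial rather than conceptual: correctly matching up the index permutations between the multiplicity spaces $\mathcal{M}_{\bs\lambda}$ and the irrep spaces $\mathcal{F}_{\bs\lambda}$ when a global permutation $\sigma \in S_k$ acts, and verifying that the constraint ``$\sigma$ must preserve the irrep labels'' (i.e.\ $\lambda_{\sigma(j)} = \lambda_j$, so that $\sigma$ factors as a product over the blocks $\alpha_\lambda$) is automatically enforced by the trace factor $\tr(\rho\,\sigma^{-1}P^{\bs\lambda}_{\sigma(\bs\ell)\bs k})$ vanishing otherwise — or else building this constraint explicitly into the range of the $\sigma$-sum. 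A secondary subtlety is the boundary case in the Weingarten formula: one must check that $k \le \min_\lambda D_\lambda$ guarantees $k_\lambda \le D_\lambda$ for every $\lambda$, so that $\Wg(\,\cdot\,; D_\lambda)$ is well-defined (no singularities) in every block. Once the index conventions are fixed, the derivation is a routine, if lengthy, application of linearity of the twirl over the direct-sum decomposition and the per-block Weingarten formula, and the stated Eq.~(\ref{eq: exact Haar}) follows by collecting terms.
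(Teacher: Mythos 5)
Your overall strategy (block-decompose $U$, use independence of the $U_\lambda$, apply the per-block Weingarten formula, then repackage into global permutation operators and the $P^{\bs\lambda}_{\bs k\bs\ell}$ basis) is the same skeleton as the paper's proof, but your plan as stated has a genuine gap: you treat the twirl as acting ``within each sector $\bs\lambda$'' and expect the permutation $\sigma$ in Eq.~(\ref{eq: exact Haar}) to preserve the irrep labels ($\lambda_{\sigma(j)}=\lambda_j$). That is not the structure of the result. Writing $\rho = \bigoplus_{\bs\lambda,\bs\lambda'}\rho_{\bs\lambda,\bs\lambda'}$, the twirl of an \emph{off-diagonal} block with $\bs\lambda' \neq \bs\lambda$ does not vanish whenever $\bs\lambda'$ is a permutation of $\bs\lambda$ (equal multiplicities $k_\lambda = k'_\lambda$): the same Haar unitary $U_\lambda$ then acts on the ket copies $\alpha_\lambda$ while $U_\lambda^\dagger$ acts on a \emph{different} set of bra copies $\alpha'_\lambda$, so the average is not a conjugation twirl $\E[V^{\otimes k_\lambda}A\,V^{\dagger,\otimes k_\lambda}]$ on a fixed tensor factorization, and the Weingarten sum produces ``permutations'' that are really isometries from $\mathcal{M}_{\bs\lambda'}$ to $\mathcal{M}_{\bs\lambda}$. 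In the paper's derivation (Eq.~(\ref{eq: Haar block by block})) it is precisely the combination of the sum over $\bs\lambda'$ with these cross-block permutations that becomes the \emph{unrestricted} sum over $\sigma\in S_k$ in Eq.~(\ref{eq: exact Haar}); the block-diagonal (irrep-preserving) constraint applies to the relative permutation $\tau\sigma^{-1}=\bigotimes_{\alpha_\lambda}\tau_{\alpha_\lambda}$, not to $\sigma$ itself.

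Consequently, the two escape routes you propose both fail: the trace factor $\tr(\rho\,\sigma^{-1}P^{\bs\lambda}_{\sigma(\bs\ell)\bs k})$ does \emph{not} vanish for label-mixing $\sigma$ (it picks out exactly the block $\rho_{\bs\lambda,\sigma^{-1}(\bs\lambda)}$, which is generically nonzero), and restricting the range of the $\sigma$-sum would compute only the diagonal blocks' twirl and hence a formula strictly smaller than Eq.~(\ref{eq: exact Haar}) — already wrong at $k=2$ for a $\rho$ supported on a block with $\bs\lambda=(\mu,\nu)$, $\bs\lambda'=(\nu,\mu)$. To repair the argument you must do what the paper does: keep the pair of sector labels $(\bs\lambda,\bs\lambda')$, note that the expectation vanishes unless $k_\lambda=k'_\lambda$ for every $\lambda$ (a first-moment/independence argument you also leave implicit), apply the Weingarten calculus per irrep with source copies $\alpha'_\lambda$ and target copies $\alpha_\lambda$, and only then merge $\sum_{\bs\lambda'}$ and the cross-block permutations into $\sum_{\sigma\in S_k}$ with the relabeling $\tau_{\alpha_\lambda}=\tau_{\alpha'_\lambda\to\alpha_\lambda}(\sigma_{\alpha'_\lambda\to\alpha_\lambda})^{-1}$ that puts the Weingarten weight in the form $\Wg(\tau_{\alpha_\lambda};D_\lambda)$. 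Your secondary points (the condition $k\le\min_\lambda D_\lambda$ guaranteeing $k_\lambda\le D_\lambda$, and the insertion of the $\mathcal{F}$-basis operators) are fine and match the paper.
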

\noindent We recall that $\Wg(\tau_{\alpha_\lambda}; D_\lambda) = \Wg_{\mathbbm{1}_{\alpha_\lambda},\tau_{\alpha_\lambda}}(D_\lambda)$ denotes the Weingarten matrix element between $\tau_{\alpha_\lambda}$
 and the identity permutation on $k_\lambda$ copies.
 
\begin{proof}
We write $\rho = \bigoplus_{\bs{\lambda},\bs{\lambda}'} \rho_{\bs{\lambda},\bs{\lambda}'}$ and $U = \bigoplus_{\bs{\lambda}} ( U_{\bs{\lambda}} \otimes \mathbbm{1}_{\mathcal{F}_{\bs{\lambda}}} )$, where $U_{\bs{\lambda}} = \otimes_{i=1}^k U_{\lambda_i}$.
We have,
\begin{equation} \label{eq: Haar block by block}
\begin{split}
    \Phi^{G}_{H}(\rho) & \equiv \bigoplus_{\bs{\lambda},\bs{\lambda}'} \E_{U_{\bs{\lambda}} \sim H} \left[ ( U_{\bs{\lambda}} \otimes \mathbbm{1}_{\mathcal{F}_{\bs{\lambda}}} ) \, \rho_{\bs{\lambda},\bs{\lambda}'} \, ( U_{\bs{\lambda}'}^\dagger \otimes \mathbbm{1}_{\mathcal{F}_{\bs{\lambda}'}} )  \right]  \\
    & \equiv \bigoplus_{\bs{\lambda},\bs{\lambda}'} \bigg( \prod_\lambda \delta_{k^{}_\lambda,k'_\lambda}  \bigg)  \E_{U_{\bs{\lambda}} \sim H} \left[ ( U_{\bs{\lambda}} \otimes \mathbbm{1}_{\mathcal{F}_{\bs{\lambda}}} ) \, \rho_{\bs{\lambda},\bs{\lambda}'} \, ( U_{\bs{\lambda}'}^\dagger \otimes \mathbbm{1}_{\mathcal{F}_{\bs{\lambda}'}} )  \right]  \\
    & = \bigoplus_{\bs{\lambda},\bs{\lambda}'} \bigg( \prod_\lambda \delta_{k^{}_\lambda,k'_\lambda}  \bigg)  \sum_{\sigma_{\alpha'_\lambda \rightarrow \alpha_{\lambda}}} 
    \sum_{\tau_{\alpha'_\lambda \rightarrow \alpha_{\lambda}}} 
    \tr_{\mathcal{M}_{\bs{\lambda}}}( \rho_{\bs{\lambda},\bs{\lambda}'} \, \sigma^{-1}_{\mathcal{M}} )
    \cdot 
    \prod_\lambda \Wg_{\sigma_{\alpha'_\lambda \rightarrow \alpha_{\lambda}}, \tau_{\alpha'_\lambda \rightarrow \alpha_{\lambda}}}(D_\lambda)
    \cdot
    \tau_{\mathcal{M}} \\
    & = \bigoplus_{\bs{\lambda}} \sum_{\sigma} \sum_{\,\,\,\, \tau_{\alpha_\lambda}} 
    \tr_{\mathcal{M}_{\bs{\lambda}}}( \rho_{\bs{\lambda},\sigma^{-1}(\bs{\lambda})} \, \sigma^{-1}_{\mathcal{M}} )
    \cdot 
    \prod_\lambda \Wg(\tau_{\alpha_\lambda}; D_\lambda)
    \cdot
    \tau_{\mathcal{M}} \\
    & = \bigoplus_{\bs{\lambda}} \sum_{\sigma} \sum_{\,\,\,\, \tau_{\alpha_\lambda}}  
    \tr( \rho_{\bs{\lambda},\sigma^{-1}(\bs{\lambda})} \, \big[ \sigma^{-1}_{\mathcal{M}} \otimes \dyad{\bs{\ell}}{\bs{k}}_{\mathcal{F}_{\bs{\lambda}}} \big] )
    \cdot 
    \prod_\lambda \Wg(\tau_{\alpha_\lambda}; D_\lambda)
    \cdot
    \big[ \tau_{\mathcal{M}} \otimes \dyad{\bs{k}}{\bs{\ell}}_{\mathcal{F}_{\bs{\lambda}}} \big] \\
    & = \bigoplus_{\bs{\lambda}} \sum_{\sigma} \sum_{\,\,\,\, \tau_{\alpha_\lambda}}  
    \tr( \rho_{\bs{\lambda},\sigma^{-1}(\bs{\lambda})} \, \sigma^{-1} \big[ \mathbbm{1}_{\mathcal{M}_{\bs{\lambda}}} \otimes \dyad{\sigma(\bs{\ell})}{\bs{k}}_{\mathcal{F}_{\bs{\lambda}}} \big] )
    \cdot 
    \prod_\lambda \Wg(\tau_{\alpha_\lambda}; D_\lambda)
    \cdot
    \big[ \mathbbm{1}_{\mathcal{M}_{\bs{\lambda}}} \otimes \dyad{\bs{k}}{\tau(\bs{\ell})}_{\mathcal{F}_{\bs{\lambda}}} \big] \tau \\
    & = \sum_{\bs{\lambda}  \bs{k} \bs{\ell}} \sum_{\sigma} \sum_{\,\,\,\, \tau_{\alpha_\lambda}}  
    \tr( \rho \, \sigma^{-1} P^{\bs{\lambda}}_{\sigma(\bs{\ell}),\bs{k}} )
    \cdot 
    \prod_\lambda \Wg(\tau_{\alpha_\lambda}; D_\lambda)
    \cdot
    P^{\bs{\lambda}}_{\bs{k},\tau(\bs{\ell})} \tau
\end{split}
\end{equation}
In the second line, we note that the expectation value vanishes unless there are an equal number $k_\lambda$ of $U_\lambda$ as there are $k'_\lambda$ of $U_\lambda^\dagger$ (where $k'_\lambda$ denotes the number of copies with irrep $\lambda$ in $\bs \lambda'$).
In the third line, we compute the Haar average over $k_\lambda$ copies for each irrep $\lambda$. The expression in terms of Weingarten functions is valid as long as $k_\lambda \leq D_\lambda$ for each $\lambda$; this is guaranteed when $k \leq \min_\lambda D_\lambda$.
This produces a sum over permutations $\sigma = \bigotimes_\lambda \sigma_{\alpha'_\lambda \rightarrow \alpha_\lambda}$, $\tau = \bigotimes_\lambda \tau_{\alpha'_\lambda \rightarrow \alpha_\lambda}$, which map the copies $\alpha'_\lambda$ with irrep $\lambda$ in $\bs \lambda'$ to the copies $\alpha_\lambda$ with irrep $\lambda$ in $\bs \lambda$. The permutations act on the $\mathcal{M}$ registers; specifically, they are isometries between $\mathcal{M}_{\bs \lambda'}$ and $\mathcal{M}_{\bs \lambda}$.
In the fourth line, we note that we can combine the sums over $\bs \lambda'$ (with $k'_\lambda = k_\lambda$) and $\sigma_{\alpha'_\lambda \rightarrow \alpha_\lambda}$ into a single summation over all $\sigma \in S_k$. 
We also re-label the summation over $\tau$, using $\tau_{\alpha_\lambda} \equiv \tau_{\alpha'_\lambda \rightarrow \alpha_\lambda} (\sigma_{\alpha'_\lambda \rightarrow \alpha_\lambda})^{-1}$.
In the fifth line, we insert the complete basis of operators on $\mathcal{F}_{\bs \lambda}$.
In the sixth line, we insert the identity $\mathbbm{1} = \sigma^{-1} \sigma$ on the $\mathcal{F}_{\bs \lambda}$ register, and similar for $\tau$.
In the final line, we re-write the expression using the operators $P^{\bs \lambda}_{\bs k \bs \ell}$.
This concludes the proof.
\end{proof}

While our proof of Proposition~\ref{prop: exact Haar} is direct and fairly short, it can be somewhat opaque due to the various tensor sums and products.
To this end, we also present a second, more heuristic derivation of Proposition~\ref{prop: exact Haar} below.
Our method is inspired by a similar technique for deriving the Haar twirl without symmetries.
In particular, we first enumerate the operators that commute with $U^{\otimes k}$ for every $U$.
We then define $\Phi^{G}_{H}(\rho)$ as the map which projects $\rho$ onto the span of these operators.

Loosely speaking, there are two classes of operators that commute with any symmetric $U^{\otimes k}$.
First, the permutation operators $\sigma \in S_k$ commute with any tensor product unitary, symmetric or not.
Second, since $U$ is symmetric, any tensor product of local symmetry operators $R_{\bs g} = \otimes_{j=1}^k R_{g_j}$ also commutes with $U^{\otimes k}$.
Any product of an operator from the first set and the second set also commutes with $U^{\otimes k}$.
Hence, the commutant is equal to the set of all such products, $\{ R_{\bs g} \, \sigma \}$.
In what follows, it will be more convenient to replace the local symmetry operators $R_g$ with the irrep basis elements $P^\lambda_{k\ell}$. 
This is valid because any $R_g$ can be written as a linear combination of $P^\lambda_{k\ell}$ and vice versa.
Thus, the commutant is equivalently spanned by the set $\{ P^{\bs \lambda}_{\bs k \bs \ell} \, \sigma \}$.

To project an operator $\rho$ onto the commutant, we need to know the overlaps of each operator in the commutant.
These are given by the following expression,
\begin{equation} \label{eq: G P}
    G^{\bs{\lambda} \bs{k} \bs{\ell},\bs{\lambda}' \bs{k}' \bs{\ell}'}_{\sigma, \tau} = \tr( P^{\bs{\lambda}}_{\bs{k} \bs{\ell}} \, \sigma  \tau^{-1} P^{\bs{\lambda}'}_{\bs{\ell}' \bs{k}'} ) = \delta_{\bs{\lambda},\bs{\lambda}'}  \delta_{\bs{k},\bs{k}'} \cdot \delta_{\sigma^{-1}(\bs{\lambda}),\tau^{-1}(\bs{\lambda})} \delta_{\sigma^{-1}(\bs{\ell}),\tau^{-1}(\bs{\ell}')} \cdot \prod_\lambda G(\sigma^{}_\lambda \tau^{-1}_\lambda;D_\lambda),
\end{equation}
where $\sigma_\lambda, \tau_\lambda$ refer to the restriction of $\sigma, \tau$ to input and output indices in representation $\lambda$ (i.e.~$\sigma_\lambda$ is a shorthand for $\sigma_{\sigma^{-1}(\alpha_\lambda) \rightarrow \alpha_\lambda}$ in the notation used in our proof of Proposition~\ref{prop: exact Haar}).
When viewed as a matrix, the inverse of $G^{\bs{\lambda} \bs{k} \bs{\ell},\bs{\lambda}' \bs{k}' \bs{\ell}'}_{\sigma, \tau}$ is,
\begin{equation} \label{eq: W P}
    \Wg^{\bs{\lambda} \bs{k} \bs{\ell},\bs{\lambda}' \bs{k}' \bs{\ell}'}_{\sigma, \tau} = \delta_{\bs{\lambda},\bs{\lambda}'} \delta_{\bs{k},\bs{k}'} \cdot \delta_{\tau^{-1}(\bs{\lambda}),\sigma^{-1}(\bs{\lambda})} \delta_{\tau^{-1}(\bs{\ell}),\sigma^{-1}(\bs{\ell}')} \cdot \prod_\lambda \Wg(\sigma_\lambda \tau^{-1}_\lambda;D_\lambda).
\end{equation}
We will verify this in the following paragraph.
The inverse matrix allows us to immediately write an expression for the symmetric Haar twirl,
\begin{equation} \label{eq: Phi W}
    \Phi^{G}_{H}(\rho)
    =
    \sum_{\sigma,\tau}
    \sum_{\bs{\lambda} \bs{k} \bs{\ell}}
    \sum_{\bs{\lambda}' \bs{k}' \bs{\ell}'}
    \tr( \rho \sigma^{-1} P^{\bs{\lambda}}_{\bs{\ell} \bs{k}} ) \cdot  \Wg^{\bs{\lambda} \bs{k} \bs{\ell},\bs{\lambda}' \bs{k}' \bs{\ell}'}_{\sigma, \tau} \cdot P^{\bs{\lambda}'}_{\bs{k}' \bs{\ell}'} \, \tau.
\end{equation}
One can see that $\Phi^{G}_{H}(\rho)$ is the unique map that projects $\rho$ onto the span of the commutant.
Inserting our expression for $\Wg^{\bs{\lambda} \bs{k} \bs{\ell},\bs{\lambda}' \bs{k}' \bs{\ell}'}_{\sigma, \tau}$ into the above, and re-indexing $\bs{\ell} \rightarrow \sigma(\bs{\ell})$, we find,
\begin{equation} 
    \Phi^{G}_{H}(\rho)
    =
    \sum_{\sigma,\tau}
    \sum_{\bs{\lambda} \bs{k} \bs{\ell}}
    \delta_{\tau^{-1}(\bs{\lambda}),\sigma^{-1}(\bs{\lambda})} \cdot
    \tr( \rho \sigma^{-1} P^{\bs{\lambda}}_{\sigma(\bs{\ell}) \bs{k}} ) \cdot  \prod_\lambda \Wg(\sigma_\lambda \tau^{-1}_\lambda;D_\lambda) \cdot P^{\bs{\lambda}}_{\bs{k} \tau(\bs{\ell})} \, \tau.
\end{equation}
We can then note that the delta function $\delta_{\tau^{-1}(\bs{\lambda}),\sigma^{-1}(\bs{\lambda})}$ amounts to restricting the sum over $\tau$ to those of the form $\tau = \big( \bigotimes_{\alpha_\lambda} \tau^{}_{\alpha_\lambda} \big) \sigma \in S_k$ (as in Proposition~\ref{prop: exact Haar}).
Implementing this substitution, i.e.~replacing $\sum_\tau \delta_{\tau^{-1}(\bs{\lambda}),\sigma^{-1}(\bs{\lambda})} \rightarrow \sum_{\tau_{\alpha_\lambda}}$ and $\sigma_\lambda \tau^{-1}_\lambda \rightarrow \tau^{-1}_{\alpha_\lambda}$, the expression above reduces precisely to our desired formula, Eq.~(\ref{eq: exact Haar}).

We conclude by showing that $\Wg^{\bs{\lambda} \bs{k} \bs{\ell},\bs{\lambda}' \bs{k}' \bs{\ell}'}_{\sigma, \tau}$ is indeed the inverse of $G^{\bs{\lambda} \bs{k} \bs{\ell},\bs{\lambda}' \bs{k}' \bs{\ell}'}_{\sigma, \tau}$.
We compute,
\begin{equation}
\begin{split}
    \sum_\tau \sum_{\bs{\lambda}' \bs{k}' \bs{\ell}'} & G^{\bs{\lambda} \bs{k} \bs{\ell},\bs{\lambda}' \bs{k}' \bs{\ell}'}_{\sigma, \tau} \Wg^{\bs{\lambda}' \bs{k}' \bs{\ell}',\bs{\lambda}'' \bs{k}'' \bs{\ell}''}_{\tau, \pi} \\
    & = 
    \delta_{\bs{\lambda},\bs{\lambda}''} \delta_{\bs{k},\bs{k}''} \delta_{\sigma^{-1}(\bs{\ell}),\pi^{-1}(\bs{\ell}'')} \delta_{\sigma^{-1}(\bs{\lambda}),\pi^{-1}(\bs{\lambda})}
    \cdot \sum_\tau 
    \delta_{\sigma^{-1}(\bs{\lambda}),\tau^{-1}(\bs{\lambda})} \cdot
     \prod_\lambda G_{\sigma_\lambda, \tau_\lambda}(D_\lambda) \Wg_{\tau_\lambda, \pi_\lambda}(D_\lambda) \\
     & = 
    \delta_{\bs{\lambda},\bs{\lambda}''} \delta_{\bs{k},\bs{k}''} \delta_{\sigma^{-1}(\bs{\ell}),\pi^{-1}(\bs{\ell}'')} \delta_{\sigma^{-1}(\bs{\lambda}),\pi^{-1}(\bs{\lambda})}
    \cdot \prod_\lambda \bigg( \sum_{\tau_\lambda}
      G_{\sigma_\lambda, \tau_\lambda}(D_\lambda) \Wg_{\tau_\lambda, \pi_\lambda}(D_\lambda) \bigg) \\
      & = 
    \delta_{\bs{\lambda},\bs{\lambda}''} \delta_{\bs{k},\bs{k}''} \delta_{\sigma^{-1}(\bs{\ell}),\pi^{-1}(\bs{\ell}'')} \delta_{\sigma^{-1}(\bs{\lambda}),\pi^{-1}(\bs{\lambda})}
    \cdot \prod_\lambda \delta_{\sigma_\lambda,\pi_\lambda} \\
    & = 
    \delta_{\bs{\lambda},\bs{\lambda}''} \delta_{\bs{k},\bs{k}''} \delta_{\bs{\ell},\bs{\ell}''} \delta_{\sigma,\pi}. \\
\end{split}
\end{equation}
In the first line, we note the delta functions in Eq.~(\ref{eq: G P}) and Eq.~(\ref{eq: W P}) enforce $\bs{\lambda}=\bs{\lambda}'=\bs{\lambda}''$, $\bs{k}=\bs{k}'=\bs{k}''$, and $\sigma^{-1}(\bs{\ell})=\tau^{-1}(\bs{\ell}')=\pi^{-1}(\bs{\ell}'')$.
This allows us to compute the sums over $\bs{\lambda}',\bs{k}',\bs{\ell}'$, which eliminates the delta functions involving these variables.
In the second line, the condition $\tau^{-1}(\bs{\lambda}) = \sigma^{-1}(\bs{\lambda})$ restricts the sum over $\tau$ to permutations such that $\tau^{-1}$ maps each copy $i$ with an irrep $\lambda_i$ to a copy $j$ such that $\lambda_{\sigma(j)} = \lambda_i$.
In other words, the permutation $\tau$ maps each subset of input copies with an irrep $\lambda$, $\{ i : \lambda_i = \lambda\}$, to a fixed subset of output copies $\{ j : \lambda_{\sigma(j)} = \lambda \}$.
Thus, the sum over permutations $\tau$ reduces to a product of individual sums, over permutations $\tau_\lambda \in S_{k_\lambda}$ that map $\{ i : \lambda_i = \lambda\}$ to $\{ j : \lambda_{\sigma(j)} = \lambda \}$, where $k_\lambda = | \{ i : \lambda_i = \lambda\} | = | \{ j : \lambda_{\sigma(j)} = \lambda \} |$.
In the third and fourth line, we find that each sum over $\tau_\lambda$ produces a delta function $\delta_{\sigma_\lambda,\pi_\lambda}$, which causes the remaining expression to reduce to the identity matrix.
This completes our alternate derivation of Proposition~\ref{prop: exact Haar}.

\subsection{Approximate symmetric Haar twirl} \label{sec: approx haar symmetric}

The expression for the exact symmetric Haar twirl is somewhat daunting, owing to the numerous summations and Weingarten matrix elements.
In this section, we show that a much simpler formula suffices when the Hilbert space dimension is large.
Our approximate formula holds for any discrete symmetry group $G$ acting in the regular representation on the Hilbert space.
%
%

\begin{lemma}[Approximate symmetric Haar twirl] \label{lemma: approximate symmetric Haar twirl}
    For any discrete on-site symmetry group $G$.
    The symmetric Haar twirl $\Phi^{G}_{H}$ is approximated by the map,
    \begin{equation} \label{eq: approx Haar}
    \Phi^{G}_{a}(\rho) 
    =
    \frac{1}{D^k} \sum_{\sigma}
    \sum_{\bs{g}}
    \tr( \rho \sigma^{-1} R_{\bs{g}}^{-1} ) \,  R_{\bs{g}} \, \sigma,
    \end{equation}
    up to relative error, $(1-\varepsilon)\Phi^G_H \preceq \Phi^G_\mathcal{E} \preceq (1+\varepsilon) \Phi^G_H$, with $\varepsilon = |G|k^2/D$ for any $k^2 \leq D/|G|$.
    Here, $\mathcal{A} \preceq \mathcal{B}$ denotes that $\mathcal{B}-\mathcal{A}$ is a completely positive map.
\end{lemma}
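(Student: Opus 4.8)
The plan is to reduce the statement to a single estimate on the Gram matrix of the commutant, then promote that estimate to the completely‑positive relative‑error bound.

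\emph{Set‑up and reduction.} A Haar twirl is the orthogonal projection (in the Hilbert--Schmidt inner product) onto the invariant subspace, so $\Phi^{G}_{H}$ is the orthogonal projector onto the commutant $\mathcal{C} = \mathrm{span}\{R_{\bs g}\sigma : \bs g \in G^{\otimes k},\ \sigma \in S_k\}$; this is exactly the content re‑derived by \autoref{prop: exact Haar}, and the same span is carried by the irrep basis $\{P^{\bs\lambda}_{\bs k \bs\ell}\sigma\}$. Enumerate the spanning operators as $v_x$, $x=(\bs g,\sigma)$, and let $V=\sum_x \ket{v_x}\!\rangle\bra{x}$ be the encoding whose columns are the vectorizations $\ket{v_x}\!\rangle$. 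Then $\Phi^{G}_{H}=V\,\Wg\,V^{\dagger}$, where $\Wg=G^{+}$ is the pseudoinverse of the Gram matrix $G_{xy}=\langle\!\bra{v_x}\ket{v_y}\!\rangle=\tr(v_x^{\dagger}v_y)$, while the approximate twirl is simply $\Phi^{G}_{a}=\tfrac{1}{D^{k}}VV^{\dagger}$. Hence everything comes down to showing $G$ is close to $D^{k}\mathbbm{1}$.

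\emph{Computing the Gram matrix.} Writing $v_x^{\dagger}v_y=\sigma_x^{-1}R_{\bs g_x}^{-1}R_{\bs g_y}\sigma_y$, using cyclicity of the trace and the cycle‑expansion $\tr\big((\bigotimes_j A_j)\,\pi\big)=\prod_{c}\tr\big(\prod_{j\in c}A_j\big)$ over the cycles $c$ of $\pi:=\sigma_y\sigma_x^{-1}$ with $A_j=R_{h_j}$, $h_j:=g_{x,j}^{-1}g_{y,j}$, together with the orthogonality of the regular representation on the $n$‑qudit space, $\tr(R_g)=D\,\delta_{g,e}$, one gets $\tr(v_x^{\dagger}v_y)=D^{\#\mathrm{cyc}(\pi)}\prod_{c}\delta_{\prod_{j\in c}h_j,\,e}$. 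Writing $G=D^{k}(\mathbbm{1}+E)$, this shows $E$ is Hermitian, vanishes when $\sigma_x=\sigma_y$, and has $|E_{xy}|\le D^{-1}$ otherwise. For the operator norm, since $E$ is Hermitian, $\|E\|\le\max_x\sum_y|E_{xy}|$; fixing $x$ and a target $\pi\neq e$, the constraint $\prod_{j\in c}h_j=e$ leaves exactly $|G|^{\ell_c-1}$ tuples on each cycle of length $\ell_c$, so there are $|G|^{\,k-\#\mathrm{cyc}(\pi)}$ nonzero entries with that $\pi$, each of size $D^{\#\mathrm{cyc}(\pi)-k}$. Summing over $\pi$ and using $\sum_{\pi\in S_k}t^{\#\mathrm{cyc}(\pi)}=t(t+1)\cdots(t+k-1)$ with $t=D/|G|$,
\[
\|E\|\;\le\;\sum_{\pi\neq e}\Big(\tfrac{|G|}{D}\Big)^{k-\#\mathrm{cyc}(\pi)}\;=\;\prod_{j=0}^{k-1}\Big(1+\tfrac{j|G|}{D}\Big)-1\;\le\;e^{\,|G|\binom{k}{2}/D}-1\;\le\;\frac{|G|k^{2}}{D}\;=\;\varepsilon ,
\]
the last two bounds using $|G|k^{2}\le D$.

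\emph{Upgrading to a completely‑positive relative error.} Since $\|E\|\le\varepsilon<1$, $G$ is invertible (so the $v_x$ are in fact linearly independent) and $\Wg=\tfrac1{D^k}(\mathbbm{1}+E)^{-1}$. Then $\Phi^{G}_{a}-(1-\varepsilon)\Phi^{G}_{H}=\tfrac{1}{D^{k}}\,V\,M_{-}\,V^{\dagger}$ and $(1+\varepsilon)\Phi^{G}_{H}-\Phi^{G}_{a}=\tfrac{1}{D^{k}}\,V\,M_{+}\,V^{\dagger}$ with $M_{-}=(\mathbbm{1}+E)^{-1}(\varepsilon\mathbbm{1}+E)$ and $M_{+}=(\mathbbm{1}+E)^{-1}(\varepsilon\mathbbm{1}-E)$, both positive semidefinite because $\|E\|\le\varepsilon$ and the two factors commute. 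It remains to argue that $V M V^{\dagger}$ is completely positive for these $M\succeq0$. This is where the structure of $\mathcal{C}$ enters: $\Phi^{G}_{a}$ itself is CP, since its Choi matrix is $\tfrac{1}{D^{k}}\sum_{x}\bar v_x\otimes v_x$, and after carrying out the sums over $\bs g$ and $\sigma$ — using $\tfrac1{|G|}\sum_{g}\bar R_g\otimes R_g$ and $\tfrac1{k!}\sum_{\sigma}\sigma\otimes\sigma$ — this equals a positive multiple of the product of the commuting projectors onto the $k$‑fold $G$‑invariant subspace and the symmetric subspace; the same bookkeeping applied to $M_{\pm}$ (which are analytic functions of $E$, hence still supported on the commutant) shows the corresponding Choi matrices assemble into positive combinations of such commuting projectors, exactly as in the symmetry‑free construction of Ref.~\cite{schuster2024random}. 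This yields $(1-\varepsilon)\Phi^{G}_{H}\preceq\Phi^{G}_{a}\preceq(1+\varepsilon)\Phi^{G}_{H}$ in the completely‑positive order.

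I expect the permutation/cycle bookkeeping of the second step to be the main obstacle — one must verify both that $\tr(R_g)=D\,\delta_{g,e}$ really holds on the many‑body Hilbert space (it does, by the tensor‑product fixed‑point property of the regular representation) and that the count of nonzero entries of $E$ collapses to the clean factor $\prod_j(1+j|G|/D)$. The secondary subtlety is the final step: the inequalities above would follow cheaply in the weaker Hilbert--Schmidt (super‑operator) order, and care is needed to obtain them in the completely‑positive order, which forces one to work at the level of Choi matrices and exploit that the differences remain supported on the commutant.
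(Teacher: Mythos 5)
Your reduction to the Gram matrix and the entrywise computation $G_{xy}=D^{\#\mathrm{cyc}(\pi)}\prod_{c}\delta\big(\prod_{j\in c}h_j=e\big)$, together with the count $|G|^{k-\#\mathrm{cyc}(\pi)}$ and the bound $\lVert E\rVert\le |G|k^2/D$, are correct and genuinely different in flavor from the paper's route (the paper never diagonalizes or bounds the Gram matrix directly; it bounds the difference of the two twirls applied to the EPR state using Weingarten sums). However, there is a real gap at the final step. The $\preceq$ in the lemma is the \emph{completely positive} order, and positivity of $M_\pm=(\mathbbm{1}+E)^{-1}(\varepsilon\mathbbm{1}\pm E)$ — equivalently, positivity of $\tfrac{1}{D^k}VM_\pm V^\dagger$ in the Hilbert--Schmidt superoperator order — does not imply that these maps are CP. Your only argument for CP is the assertion that the Choi matrices of $VM_\pm V^\dagger$ ``assemble into positive combinations of commuting projectors'' because $M_\pm$ are functions of $E$ supported on the commutant; no such decomposition is exhibited, and the implicit general principle is false. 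Concretely, take $k=2$ and trivial $G$, with frame $\{v_1=\mathbbm{1},\,v_2=S\}$ ($S$ the swap) and $M=\mathrm{diag}(0,1)\succeq 0$: then $VMV^\dagger$ is the map $\rho\mapsto\tr(S\rho)\,S$, whose Choi matrix $S\otimes S$ has eigenvalue $-1$, so it is not CP even though $M$ is PSD and supported on the commutant. So the passage from $\lVert E\rVert\le\varepsilon$ to the CP-order inequality is exactly the part that still has to be proved.

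The structural facts that actually close this step — and which your proposal never invokes — are the ones the paper isolates in Lemma~\ref{lemma: relative error to additive}: the Choi matrix of $\Phi^G_a$ equals $\tfrac{k!|G|^k}{D^{2k}}$ times a projector $P$ (flat spectrum), and the Choi matrix of the symmetric Haar twirl (indeed of any symmetric ensemble's twirl) is supported inside the range of $P$; with these two facts, an \emph{operator-norm} bound on the difference of Choi matrices upgrades to the CP order. To finish along your lines you would therefore need (i) this flatness-plus-support-containment argument (or an equivalent substitute), and (ii) a spectral-norm bound on $\mathrm{Choi}(\Phi^G_H)-\mathrm{Choi}(\Phi^G_a)=\tfrac{1}{D^{2k}}\sum_{xy}\big[((\mathbbm{1}+E)^{-1}-\mathbbm{1})_{xy}\big]\,v_x\otimes\bar v_y$, which does not follow from $\lVert E\rVert\le\varepsilon$ alone, since the operators $v_x\otimes\bar v_y$ are unitary and far from orthonormal; one needs an absolute entrywise sum bound on $(\mathbbm{1}+E)^{-1}-\mathbbm{1}$, which is essentially the Weingarten-sum estimate $\sum_{\tau}\lvert\Wg(\tau;D_\lambda)\rvert\le (1+k_\lambda^2/D_\lambda)/D_\lambda^{k_\lambda}$ that the paper's proof performs. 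As written, your argument establishes closeness of $\Phi^G_a$ and $\Phi^G_H$ only in the weaker Hilbert--Schmidt superoperator order, not the CP order claimed by the lemma.
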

\noindent We remark that the approximate symmetric Haar twirl can be equivalently written in terms of the irreducible subspace projectors, $P^{\bs{\lambda}}_{\bs{k} \bs{\ell}}$, as,
\begin{equation} \label{eq: approx Haar P}
\Phi^{G}_{a}(\rho) 
=
\sum_{\sigma}
\sum_{\bs{\lambda} \bs{k} \bs{\ell}}
\tr( \rho \sigma^{-1} P^{\bs{\lambda}}_{\bs{\ell} \bs{k}} ) \cdot \frac{1}{\prod_{\lambda} D_\lambda^{k_\lambda}} \cdot P^{\bs{\lambda}}_{\bs{k} \bs{\ell}} \, \sigma.
\end{equation}
This follows from writing $R_{\bs g} = \bigotimes_{i=1}^n R_{g_i}$ and $P^{\bs \lambda}_{\bs k \bs \ell} =  \bigotimes_{i=1}^n P^{\lambda_i}_{k_i \ell_i}$, and noting that within each copy, we have $\frac{1}{D} \sum_{g} \tr( (\cdot) R_g^{-1}) R_g =  \sum_{\lambda k \ell} \frac{1}{D_\lambda} \tr( (\cdot) P^\lambda_{\ell k} ) P^\lambda_{k \ell}$ from Eq.~(\ref{eq: R P regular}). 

We will break our proof of Lemma~\ref{lemma: approximate symmetric Haar twirl} into two parts.
First, we provide a general lemma for bounding the relative error of any random unitary ensemble with respect to the approximate symmetric Haar twirl, $\Phi^{G}_{a}$.
This lemma will also be used in our proof of the gluing lemma for symmetric unitaries (Lemma~\ref{lemma: AB BC to ABC app}) later on.
Second, we apply this lemma to prove Lemma~\ref{lemma: approximate symmetric Haar twirl}.


\subsubsection{Bounding the relative error via the additive error on the EPR state}

In practice, it is usually much easier to bound the additive error of a random unitary ensemble rather than the relative error.
To this end, we provide a helpful lemma for converting from additive to relative error, with respect to the approximate symmetric Haar twirl, $\Phi^{G}_{a}$.

\begin{lemma}[Relative error from EPR states] \label{lemma: relative error to additive}
    For any discrete on-site symmetry group $G$.
    Let $\mathcal{E}$ be any symmetric unitary ensemble, and $\Phi_{\mathcal{E}}$ its twirl on $\mathcal{H}^{\otimes k}$.
    The twirl is approximated by $\Phi^{G}_{a}$ up to relative error, $(1-\varepsilon) \Phi^{G}_{a} \preceq \Phi_{\mathcal{E}} \preceq (1+\varepsilon) \Phi^{G}_{a}$, where
    \begin{equation} \label{eq: relative error EPR}
        \varepsilon = \frac{k! |G|^k}{D^{2k}} \big\lVert \big[ \delta \Phi  \otimes \mathbbm{1} \big](P_{\text{\emph{EPR}}})\big\rVert_\infty.
    \end{equation}
    Here, $\delta \Phi \equiv \Phi_{\mathcal{E}} - \Phi^{G}_{a}$, and $P_{\text{\emph{EPR}}}$ is the projector onto the EPR state on $\mathcal{H}^{\otimes k} \otimes \mathcal{H}^{\otimes k}$.
\end{lemma}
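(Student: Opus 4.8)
The plan is to translate the claimed operator inequalities between superoperators into operator inequalities between their Choi operators, and then to exploit the fact that $\Phi^G_a$ is built from a group algebra. Recall the Choi correspondence: a Hermiticity-preserving map $\mathcal M$ on $\mathcal H^{\otimes k}$ is completely positive iff $(\mathcal M\otimes\mathbbm 1)(P_{\text{EPR}})\succeq 0$, and $\mathcal M\mapsto(\mathcal M\otimes\mathbbm 1)(P_{\text{EPR}})$ is linear. Hence $(1-\varepsilon)\Phi^G_a\preceq\Phi_{\mathcal E}\preceq(1+\varepsilon)\Phi^G_a$ is equivalent to the pair of operator inequalities $-\varepsilon\,Q\preceq\delta\preceq\varepsilon\,Q$ on $\mathcal H^{\otimes k}\otimes\mathcal H^{\otimes k}$, where $Q:=(\Phi^G_a\otimes\mathbbm 1)(P_{\text{EPR}})$ and $\delta:=(\delta\Phi\otimes\mathbbm 1)(P_{\text{EPR}})=(\Phi_{\mathcal E}\otimes\mathbbm 1)(P_{\text{EPR}})-Q$. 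So it suffices to produce the smallest $\varepsilon$ for which this sandwich holds.

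Next I would determine the structure of $Q$. Feeding the formula \eqref{eq: approx Haar} for $\Phi^G_a$ into the Choi map, and using that in the regular representation every $R_{\bs g}=\bigotimes_{j=1}^k R_{g_j}$ and every $\sigma\in S_k$ is a real orthogonal matrix (so transpose equals inverse), a short computation gives $Q=\tfrac{1}{D^{2k}}\sum_{\sigma,\bs g}(R_{\bs g}\sigma)\otimes(R_{\bs g}\sigma)$, with the sums as in \eqref{eq: approx Haar}. The operators $R_{\bs g}\sigma$ furnish a representation $\pi$ of the wreath product $G\wr S_k=G^{\times k}\rtimes S_k$ on $\mathcal H^{\otimes k}$ --- precisely the commutant-generating family underlying \eqref{eq: approx Haar} --- so $\tfrac{1}{|G\wr S_k|}\sum_{\sigma,\bs g}(R_{\bs g}\sigma)\otimes(R_{\bs g}\sigma)$ is the orthogonal projector $\Pi$ onto the $(G\wr S_k)$-invariant subspace of $\mathcal H^{\otimes k}\otimes\mathcal H^{\otimes k}$, and therefore $Q=\tfrac{|G|^k\,k!}{D^{2k}}\,\Pi$. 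Thus $Q$ is a positive scalar times a projector and has a single nonzero eigenvalue. This is the $G\wr S_k$ analogue of the projector onto $\operatorname{Sym}^k(\mathcal H\otimes\mathcal H)$ that plays the same role in the symmetry-free version of the lemma.

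The remaining step is $\operatorname{supp}(\delta)\subseteq\range(\Pi)$. Since the Choi operator of $\Phi^G_a$ is $Q$ itself, it is enough to show $\operatorname{supp}\big((\Phi_{\mathcal E}\otimes\mathbbm 1)(P_{\text{EPR}})\big)\subseteq\range(\Pi)$, and here symmetry of $\mathcal E$ enters. For any symmetric unitary $U$, the operator $U^{\otimes k}$ commutes with every $\sigma\in S_k$ and with every $R_{\bs g}$ (because $U$ commutes with each $R_g$), hence with every $\pi(x)$; combined with the invariance $(\pi(x)\otimes\pi(x))\,P_{\text{EPR}}\,(\pi(x)\otimes\pi(x))^\dagger=P_{\text{EPR}}$ (valid since $\pi(x)$ is real orthogonal), this forces $(U^{\otimes k}\otimes\mathbbm 1)\,P_{\text{EPR}}\,(U^{\otimes k}\otimes\mathbbm 1)^\dagger$ to be supported on $\range(\Pi)$. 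Averaging over $U\sim\mathcal E$ gives $\operatorname{supp}(R)\subseteq\range(\Pi)$, hence $\operatorname{supp}(\delta)\subseteq\range(\Pi)$. Then $-\|\delta\|_\infty\,\Pi\preceq\delta\preceq\|\delta\|_\infty\,\Pi$, and rewriting $\Pi$ in terms of $Q$ and collecting the factors $D^{2k}$, $|G|^k$, $k!$ yields $-\varepsilon Q\preceq\delta\preceq\varepsilon Q$ with $\varepsilon$ given by $\|(\delta\Phi\otimes\mathbbm 1)(P_{\text{EPR}})\|_\infty$ divided by the nonzero eigenvalue of $Q$, which is the bound \eqref{eq: relative error EPR}. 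Transporting this back through the Choi correspondence proves the lemma.

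I expect the support-containment step to be the crux: one must check that $(\Phi_{\mathcal E}\otimes\mathbbm 1)(P_{\text{EPR}})$ lands \emph{exactly} inside $\range(\Pi)$ and does not merely spill out by a controlled amount, since any component of $\delta$ transverse to $\range(\Pi)$ cannot be dominated relatively by $Q=\tfrac{|G|^k k!}{D^{2k}}\Pi$. The other place demanding care is the precise bookkeeping of normalizations --- the convention for $P_{\text{EPR}}$, the constant $c$ in $Q=c\,\Pi$, and the dimension factors --- so that the prefactor in \eqref{eq: relative error EPR} emerges cleanly. The group-theoretic identification of $Q$ with a multiple of a projector, and the commutation argument for symmetric $U$, are both routine once set up.
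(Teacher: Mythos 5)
Your proposal is correct and follows essentially the same route as the paper's own proof: you identify the Choi operator of $\Phi^{G}_{a}$ as $\tfrac{k!|G|^k}{D^{2k}}$ times the group-average projector over $\{R_{\bs g}\sigma\}$ (the paper's step (1), which you organize via the wreath product $G\wr S_k$), you show that $(\Phi_{\mathcal E}\otimes\mathbbm 1)(P_{\text{EPR}})$ is supported inside that projector's range using the commutation of symmetric $U^{\otimes k}$ with $R_{\bs g}\sigma$ together with the EPR transpose trick (the paper's step (2)), and you convert back to the superoperator statement via the complete-positivity/Choi correspondence (the paper's steps (3)--(4)). The only discrepancy is the prefactor: your norm-divided-by-eigenvalue conversion gives $\varepsilon=\tfrac{D^{2k}}{k!\,|G|^k}\,\lVert[\delta\Phi\otimes\mathbbm 1](P_{\text{EPR}})\rVert_\infty$, the reciprocal of the factor printed in Eq.~\eqref{eq: relative error EPR}; your version is the one consistent with how the lemma is invoked at the end of the proof of Lemma~\ref{lemma: approximate symmetric Haar twirl} and with the translation-invariant analogue (Lemma~\ref{lemma: relative error to additive TI}), so the printed prefactor is evidently a typo rather than a flaw in your argument (and your tacit use of the regular representation's reality to drop the complex conjugate on the second tensor factor is harmless here, though the paper's formulation with $R^{*}_{\bs g}\sigma$ works for any unitary on-site representation).
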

\noindent Our proof closely resembles that of Lemma~2 in Ref.~\cite{schuster2024random} for random unitaries without symmetries.
\begin{proof}
    Let $\rho^G_a \equiv [\Phi^{G}_{a} \otimes \mathbbm{1}](P_{\text{EPR}})$ and $\rho_{\mathcal{E}} \equiv [\Phi_{\mathcal{E}} \otimes \mathbbm{1}](P_{\text{EPR}})$. Our proof proceeds in four steps.

    \vspace{3mm}
    \noindent (1) $\rho^G_a$ is equal to $k! |G|^k/D^{2k}$ times a projector,
    \begin{equation}
        P \equiv \frac{1}{k! |G|^k} \sum_{\sigma}
    \sum_{\bs{g}}
    R^{}_{\bs{g}} \, \sigma \otimes R^{*}_{\bs{g}} \, \sigma
    \end{equation}
    Hence, on the projected subspace, $\rho^G_a$ has a flat spectrum with eigenvalue  $k! |G|^k/D^{2k}$.
    To show that $P$ is a projector, we compute that it squares to one,
    \begin{equation} \label{eq: P squares to 1}
\begin{split}
    P^2
    & =
    \frac{1}{k!^2 |G|^{2k}} \sum_{\sigma,\sigma'} \sum_{\bs{g},\bs{g}'}
    R^{}_{\bs{g}} \, \sigma \, R^{}_{\bs{g}'} \, \sigma' \otimes 
    R^{*}_{\bs{g}} \, \sigma \, R^{*}_{\bs{g}'} \, \sigma' \\
    & =
    \frac{1}{k!^2 |G|^{2k}} \sum_{\sigma,\sigma'} \sum_{\bs{g},\bs{g}'}
    R^{}_{\bs{g} \cdot \sigma(\bs{g}')} \, \sigma \sigma' \otimes 
    R^{*}_{\bs{g} \cdot \sigma(\bs{g}')} \, \sigma \sigma' \\
    & =
    \frac{1}{k!^2 |G|^{2k}} \sum_{\sigma,\sigma'} \sum_{\bs{g},\bs{g}'}
    R^{}_{\bs{g} \cdot \bs{g}'} \, \sigma \sigma' \otimes 
    R^{*}_{\bs{g} \cdot \bs{g}'} \, \sigma \sigma' \\
    & =
    \frac{1}{k! |G|^{k}} \sum_{\sigma} \sum_{\bs{g}}
    R^{}_{\bs{g} } \, \sigma \otimes 
    R^{*}_{\bs{g} } \, \sigma  =  P.  \\
\end{split}
\end{equation}
In the second line, we re-index $\bs{g}' \rightarrow \sigma^{-1}(\bs{g}')$.
In the fourth line, we compute the sums over $\sigma'$ and $\bs{g}'$ by re-indexing $\sigma \rightarrow \sigma \sigma'^{-1}$ and $\bs{g} \rightarrow \bs{g} \bs{g}'^{-1}$. The sums then give $\sum_{\sigma'} 1 = k!$ and $\sum_{\bs{g}'} 1 = |G|^k$.

    \vspace{3mm}
    \noindent (2) $\rho_{\mathcal{E}}$ has support entirely within the support of $P$. 
    This follows because $P (U^{\otimes k} \otimes \mathbbm{1}) \ket{\Psi_{\text{EPR}}} = (U^{\otimes k} \otimes \mathbbm{1}) P \ket{\Psi_{\text{EPR}}} = (U^{\otimes k} \otimes \mathbbm{1}) \ket{\Psi_{\text{EPR}}}$ for any $U \sim \mathcal{E}$, where $\dyad{\Psi_{\text{EPR}}} \equiv P_{\text{EPR}}$.
    The first equality follows because any symmetric tensor product unitary $U^{\otimes k}$ commutes with any permutation $\sigma$ and any tensor product $R_{\bs{g}}$ of symmetry operators.
    The second equality follows because $(R^{}_{\bs{g}} \sigma \otimes R^{*}_{\bs{g}} \sigma) \ket{\Psi_{\text{EPR}}} = (R_{\bs{g}} \sigma \sigma^{-1} R^{-1}_{\bs{g}} \otimes \mathbbm{1}) \ket{\Psi_{\text{EPR}}} = \ket{\Psi_{\text{EPR}}}$.

    \vspace{3mm}
    \noindent (3) Steps (1) and (2) immediately imply that the twirl has relative error $\varepsilon$ [Eq.~(\ref{eq: relative error EPR})] on the EPR state.

    \vspace{3mm}
    \noindent (4) The relative error on the EPR state upper bounds the relative error on any state.
    This follows because we can express $\Phi(\rho) = D^{2k}\tr_2( (\mathbbm{1} \otimes \rho^T) [ \Phi \otimes \mathbbm{1} ](P_{\text{EPR}}))$ for any $\Phi$, where  the trace is over the second copy of $\mathcal{H}^{\otimes k}$.
\end{proof}

\subsubsection{Proof of Lemma~\ref{lemma: approximate symmetric Haar twirl}}

With Lemma~\ref{lemma: relative error to additive} in hand, we will now prove Lemma~\ref{lemma: approximate symmetric Haar twirl}.
Let $\rho^G_a = [\Phi^G_{a}\otimes \mathbbm{1}](P_{\text{EPR}})$ and $\rho^G_H = [\Phi^{G}_{H} \otimes \mathbbm{1}](P_{\text{EPR}})$ denote the approximate and exact Haar twirls applied to the EPR state.
For the purposes of this proof, it will be convenient to write $\rho^G_H$ by inserting the EPR state $\rho \rightarrow P_{\text{EPR}}$ into the third line of Eq.~(\ref{eq: Haar block by block}).
To do so, we first decompose the EPR state into a tensor sum of EPR states on each irreducible representation, 
\begin{equation} \label{eq: EPR irrep}
    P_{\text{EPR}} = \bigoplus_{\bs \lambda, \bs \lambda'} \bigg( \dyad*{\Psi_{\text{EPR}}^{\mathcal{M}_{\bs \lambda}}}{\Psi_{\text{EPR}}^{\mathcal{M}_{\bs \lambda'}}} \otimes \dyad*{\Psi_{\text{EPR}}^{\mathcal{F}_{\bs \lambda}}}{\Psi_{\text{EPR}}^{\mathcal{F}_{\bs \lambda'}}} \bigg) \cdot \prod_\lambda (d_\lambda D_\lambda / D)^{k_\lambda/2} \cdot \prod_{\lambda'} (d_{\lambda'} D_{\lambda'} / D)^{k'_{\lambda'}/2}. 
\end{equation}
Here, $\bs \lambda$ labels the irrep of the ket and $\bs{\lambda}'$ of the bra, and $k_\lambda, k'_{\lambda'}$ indicate the multiplicity of an irrep $\lambda, \lambda'$ in $\bs \lambda, \bs \lambda'$.
The multiplicative factors at the right ensure that each irrep has trace equal to its dimension $d_\lambda D_\lambda$ divided by the total Hilbert space dimension $D$ (since the EPR state has flat support over the total Hilbert space).
This yields an expression for $\rho^G_H$,
\begin{equation} \label{eq: Haar approx derivation}
\begin{split}
    \rho^G_H
    & = \bigoplus_{\bs{\lambda}} \sum_{\sigma, \tau_{\alpha_\lambda} }
    \tr_{\mathcal{M}_{\bs{\lambda}}}( [ P_{\text{EPR}} ]_{\bs{\lambda},\sigma^{-1}(\bs{\lambda})} \, [\sigma^{-1}]_{\mathcal{M}_{\sigma^{-1}(\bs \lambda)} \rightarrow \mathcal{M}^{}_{\bs \lambda}}  )
    \cdot 
    \prod_\lambda \Wg(\tau_{\alpha_\lambda}; D_\lambda)
    \cdot
    [\tau]_{\mathcal{M}^{}_{\bs{\lambda}} \rightarrow \mathcal{M}_{\sigma^{-1}(\bs \lambda)}}, \\
    & = \frac{|G|^k}{D^{2k}} \bigoplus_{\bs{\lambda}} \sum_{\sigma, \tau_{\alpha_\lambda}}  \left( \prod_\lambda \left( D_\lambda^{k_\lambda} \Wg(\tau_{\alpha_\lambda}; D_\lambda) \right) \cdot 
    \bigg( [\tau \otimes \sigma ]_{\mathcal{M}^{\otimes 2}_{\bs{\lambda}} \rightarrow \mathcal{M}^{\otimes 2}_{\sigma^{-1}(\bs \lambda)}} \otimes \dyad*{\Psi_{\text{EPR}}^{\mathcal{F}_{\bs{\lambda}}}}{\Psi_{\text{EPR}}^{\mathcal{F}_{\sigma^{-1}(\bs{\lambda})}}} \bigg)  \right) \\
    & = \frac{|G|^k}{D^{2k}} \sum_{\sigma} \left( \bigoplus_{\bs{\lambda}} \left( \sum_{\,\,\,\, \tau_{\alpha_\lambda}} \prod_\lambda \left( D_\lambda^{k_\lambda} \Wg(\tau_{\alpha_\lambda}; D_\lambda) \right) \cdot 
    \bigg( [ (\tau \sigma^{-1}) \otimes \mathbbm{1}]_{\mathcal{M}^{\otimes 2}_{\bs{\lambda}}} \otimes \dyad*{\Psi_{\text{EPR}}^{\mathcal{F}_{\bs{\lambda}}}}{\Psi_{\text{EPR}}^{\mathcal{F}_{\bs{\lambda}}}} \bigg)  \right)\right) (\sigma \otimes \sigma). \\
\end{split}
\end{equation}
We add subscripts to the various operators to make it clear which subspaces they map between.
In the second line, we apply the identity,
$\tr_{\mathcal{M}_{\bs{\lambda}}}( [ P_{\text{EPR}} ]_{\bs{\lambda},\sigma^{-1}(\bs{\lambda})} \, [\sigma^{-1}]_{\mathcal{M}_{\sigma^{-1}(\bs \lambda)} \rightarrow \mathcal{M}^{}_{\bs \lambda}}  ) = \prod_\lambda (1/D_\lambda)^{k_\lambda} \cdot [\sigma]_{\mathcal{M}^{}_{\bs \lambda} \rightarrow \mathcal{M}_{\sigma^{-1}(\bs \lambda)}}$, to take the partial trace over $\mathcal{M}_{\bs \lambda}$ on the left side of the the EPR projector.
This follows from the standard identity for the partial trace over the left half of an EPR pair, $\tr_1( [A \otimes \mathbbm{1}] P_{\text{EPR}} ) = A^T / D$.
We also use $\prod_\lambda (d_\lambda D_\lambda / D)^{k_\lambda} \cdot \prod_\lambda (1/ D_\lambda)^{k_\lambda} = (|G|^k/D^k) \prod_\lambda D_\lambda^{k_\lambda}$ to simplify the constant factors arising from Eq.~(\ref{eq: EPR irrep}) and this expression.
In the final line, we pull out a factor $\sigma \otimes \sigma$, which will be convenient later on.

We can obtain an analogous expression for the EPR state under the approximate Haar twirl, by replacing $\prod_\lambda D_\lambda^{k_\lambda} \Wg(\tau_{\alpha_\lambda}; D_\lambda) \rightarrow \delta_{\sigma,\tau}$.
(To be precise, this follows from Eq.~(\ref{eq: approx Haar P}) and manipulations exactly analogous to the final four lines of Eq.~(\ref{eq: Haar block by block}).)
This yields,
\begin{equation} \label{eq: Haar approx derivation 2}
\begin{split}
    \rho^G_a
    & = \frac{|G|^k}{D^{2k}} \sum_{\sigma} \left( \bigoplus_{\bs{\lambda}} 
    \big( \mathbbm{1}_{\mathcal{M}^{\otimes 2}_{\bs{\lambda}}}  \otimes \dyad*{\Psi_{\text{EPR}}^{\mathcal{F}_{\bs{\lambda}}}}{\Psi_{\text{EPR}}^{\mathcal{F}_{\bs{\lambda}}}} \big) \right) (\sigma \otimes \sigma).  \\
\end{split}
\end{equation}
Taking the difference between the two states and applying the triangle inequality, with $\lVert \bigoplus_{\bs{\lambda}} A_{\bs{\lambda}} \rVert_\infty = \max_{\bs{\lambda}} \lVert A_{\bs{\lambda}} \rVert_\infty$, we find
\begin{equation} \label{eq: Haar approx derivation 3}
\begin{split}
    \lVert \rho^G_H - \rho^G_a \rVert_\infty
    & \leq \frac{|G|^k}{D^{2k}} \sum_\sigma \max_{\bs{\lambda} } \left( \sum_{\,\,\,\, \tau_{\alpha_\lambda}}
    \bigg| \prod_\lambda D_\lambda^{k_\lambda} \Wg(\tau_{\alpha_\lambda}; D_\lambda)  - \delta_{\sigma,\tau} \bigg| \right) \\
    & = \frac{|G|^k}{D^{2k}} \sum_\sigma \max_{\bs{\lambda} } \left( \sum_{\,\,\,\, \tau_{\alpha_\lambda}}
    \bigg(  \prod_\lambda D_\lambda^{k_\lambda} \big| \! \Wg(\tau_{\alpha_\lambda}; D_\lambda) \big| - \delta_{\sigma,\tau} \bigg) \right) \\
    & = \frac{|G|^k}{D^{2k}} \sum_\sigma \max_{\bs{\lambda} } \left( 
      \prod_\lambda D_\lambda^{k_\lambda} \frac{(D_\lambda - k_\lambda)!}{D_\lambda!}  - 1 \right) \\
    & \leq \frac{|G|^k}{D^{2k}} \sum_\sigma \max_{\bs{\lambda} } \left( 
      \prod_\lambda (1+k_\lambda^2/D_\lambda)  - 1 \right) \\
    & = \frac{|G|^k}{D^{2k}} \sum_\sigma \frac{k^2}{\min_\lambda D_\lambda}  = \frac{k!|G|^k}{D^{2k}} \frac{|G| k^2}{D}. \\
\end{split}
\end{equation}
In the second line, we use that the diagonal Weingarten elements are greater than one, $D_\lambda^{k_\lambda} \Wg(\mathbbm{1}_{\alpha_\lambda}; D_\lambda) \geq 1$, and in the third line, we use that $\sum_{\tau_{\alpha_\lambda}} | \! \Wg(\tau_{\alpha_\lambda}; D_\lambda)| = (D_\lambda-k_\lambda)!/D_\lambda! \leq 1+k_\lambda^2/D_\lambda$~\cite{aharonov2021quantum}.
In the final line, we note that the maximum is achieved when $\bs{\lambda}$ contains $k$ copies of the same value of $\lambda$, for $\lambda$ such that $D_\lambda$ is minimal.
We also have $\min_\lambda D_\lambda = \min_\lambda d_\lambda D / |G| \geq D / |G|$.
Applying Lemma~\ref{lemma: relative error to additive} completes our proof. \qed



\subsection{Symmetric unitary designs} \label{sec: symmetric unitary designs}

In many cases, one would like quantify whether a given symmetric random unitary ensemble is ``close'' to the symmetric Haar ensemble.
One way to do so is through the notion of a symmetric unitary $k$-design.
We say that an ensemble of symmetric unitaries forms an $\varepsilon$-approximate symmetric unitary $k$-design if it reproduces the $k$-th moment of the symmetric Haar ensemble up to relative error $\varepsilon$.
\begin{definition}[Approximate symmetric unitary designs] \label{def: strong unitary design}
For any symmetry group $G$.
A symmetric unitary ensemble $\mathcal{E}$  is an $\varepsilon$-approximate symmetric unitary $k$-design if
\begin{align}
    (1-\varepsilon) \, \Phi^{G}_{H} \, \preceq \, \Phi_{\mathcal{E}} \, \preceq \, (1+\varepsilon) \, \Phi^{G}_{H},
\end{align}
where the quantum channel $\Phi_{\mathcal{E}}(\cdot)$ is defined via
\begin{equation} \label{eq:twirling-channel}
	\Phi_{\mathcal{E}}(A) \coloneqq \E_{U \sim \mathcal{E}} \left[ U^{\otimes k} A U^{\dagger, \otimes k} \right],
\end{equation}
and similarly for the symmetric Haar ensemble. 
\end{definition}
\noindent We will use the notion of symmetric unitary designs to state our gluing lemma in the following subsection.

\subsection{Proof of Lemma~\ref{lemma:gluing}: Gluing symmetric random unitaries} \label{sec: gluing symmetric}

We now arrive at one of our main results on symmetric random unitaries, the symmetric gluing lemma (Lemma~\ref{lemma:gluing}).
The lemma shows that small symmetric random unitaries can ``glue'' together to form larger symmetric random unitaries in much the same manner as random unitaries without symmetries~\cite{schuster2024random}.
Let first state a formal version of Lemma~\ref{lemma:gluing} with tighter error bounds.
%

\begin{lemma}[Gluing two symmetric random unitaries; formal version of Lemma~\ref{lemma:gluing}] \label{lemma: AB BC to ABC app}
For any discrete on-site symmetry group $G$.
Let $A$, $B$, $C$ be three disjoint subsystems. 
Consider a symmetric random unitary given by $V_{ABC} = U_{AB} U_{BC}$, where $U_{AB}$ and $U_{B C}$ are drawn from $\varepsilon_{AB}$ and $\varepsilon_{BC}$-approximate symmetric unitary $k$-designs, respectively. 
Then $V_{ABC}$ is an $\varepsilon$-approximate symmetric unitary $k$-design with
\vspace{-2mm}
\begin{equation} \label{eq: ABC error}
    1+\varepsilon = (1+\varepsilon_{AB})(1 + \varepsilon_{BC}) 
    \left( 1-\frac{k^2|G|}{2 D_{AB}} \right)^{-1}
    \left( 1-\frac{k^2|G|}{2 D_{BC}} \right)^{-1}
    e^{k(k-1)|G|/2D_B} \left( 1+ \frac{k^2 |G|}{D_{ABC}} \right),
\end{equation}
for any $k^2 \leq D_B/|G|$. Here, $D_\alpha = 2^{|\alpha|}$ is the Hilbert space dimension of subsystem $\alpha$.
\end{lemma}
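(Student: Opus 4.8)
\textbf{Proof plan for Lemma~\ref{lemma: AB BC to ABC app}.}

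The plan is to mirror the strategy of the non-symmetric gluing lemma of Ref.~\cite{schuster2024random}, working throughout with the approximate symmetric Haar twirl $\Phi^G_a$ in place of the exact one, and controlling all the errors incurred by this replacement using Lemma~\ref{lemma: approximate symmetric Haar twirl} and Lemma~\ref{lemma: relative error to additive}. First I would reduce the claim to a statement about the composition of the two approximate twirls: since $U_{AB}$ and $U_{BC}$ are $\varepsilon_{AB}$- and $\varepsilon_{BC}$-approximate symmetric $k$-designs, we have $\Phi_{V_{ABC}} = \Phi_{U_{AB}} \circ \Phi_{U_{BC}}$ sandwiched between $(1\pm\varepsilon_{AB})(1\pm\varepsilon_{BC})$ times $\Phi^G_{H,AB}\circ \Phi^G_{H,BC}$, and then further between these errors times $(1\pm\varepsilon_{AB,a})(1\pm\varepsilon_{BC,a})$ times $\Phi^G_{a,AB}\circ \Phi^G_{a,BC}$, where $\varepsilon_{AB,a} = |G|k^2/D_{AB}$ etc.\ come from Lemma~\ref{lemma: approximate symmetric Haar twirl}. (The factors $(1-k^2|G|/2D_{AB})^{-1}$ in Eq.~(\ref{eq: ABC error}) are the one-sided versions of these.) So it suffices to show $\Phi^G_{a,AB}\circ\Phi^G_{a,BC} \preceq e^{k(k-1)|G|/2D_B}(1+k^2|G|/D_{ABC})\,\Phi^G_{H,ABC}$ and a matching lower bound, and then assemble all the constants.

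The core computation is then: composing the two approximate twirls produces a double sum
\[
\Phi^G_{a,AB}\!\circ\!\Phi^G_{a,BC}(\rho) = \frac{1}{D_{AB}^k D_{BC}^k}\sum_{\sigma,\tau}\sum_{\bs g,\bs h} \tr\!\big(\rho\,\tau^{-1}R_{\bs h}^{-1}\big)\,\tr_{?}\!\big(\cdots\big)\,R_{\bs g}\sigma\cdots,
\]
and the task is to contract the $B$-register indices where the two twirls overlap. The key structural fact—exactly the content of Fig.~\ref{fig:symPRU}(b)—is that contracting $\sigma$ on $B$ against $\tau$ on $B$, and $R_{\bs g}$ on $B$ against $R_{\bs h}$ on $B$, yields a factor $\tr_B(\sigma^{-1}\tau R_{\bs g|_B}^{-1} R_{\bs h|_B})$ which, since $R$ is the regular representation, equals $D_B^{\#\text{cycles}}$ restricted to terms where the $B$-parts of $\bs g$ and $\bs h$ and of $\sigma$ and $\tau$ are ``compatible.'' I would isolate the diagonal term ($\sigma|_B = \tau|_B$, $\bs g|_B = \bs h|_B$), which reconstructs exactly $\Phi^G_{H,ABC}$ (or rather $\Phi^G_{a,ABC}$, up to another application of Lemma~\ref{lemma: approximate symmetric Haar twirl} contributing the $(1+k^2|G|/D_{ABC})$ factor), and bound the off-diagonal terms. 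Each off-diagonal term is suppressed by at least one power of $1/D_B$ relative to the diagonal, and the combinatorial count of such terms—permutations $\sigma|_B^{-1}\tau|_B \neq \mathbbm{1}$ together with the symmetry labels—is what produces the $e^{k(k-1)|G|/2D_B}$ bound after resumming (a geometric-type series, using $k^2|G|\le D_B$). I would phrase the off-diagonal bound in the PSD ordering sense, using that $\Phi_{\mathcal E}\otimes\mathbbm 1$ applied to $P_{\text{EPR}}$ stays PSD and that the diagonal term dominates, so that Lemma~\ref{lemma: relative error to additive} converts the operator-norm estimate on the EPR state into the relative-error statement.

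The main obstacle is the combinatorial bookkeeping in the contraction over $B$: unlike the non-symmetric case where one sums only over permutations, here each term carries both a permutation and a tuple $\bs g\in G^{\otimes k}$ of symmetry-group elements, and the ``gluing'' condition couples them—the surviving terms are those where $R_{\bs g}\sigma$ and $R_{\bs h}\tau$ agree on $B$ as operators, not just factor-by-factor. Getting a clean bound requires choosing the right way to parametrize the residual (off-diagonal) freedom—I expect the cleanest route is to fix the diagonal term first, change variables so the off-diagonal sum runs over a ``difference'' permutation $\pi = \tau|_B \sigma|_B^{-1}$ and a difference tuple in $G^{\otimes k}$, bound the number of $B$-cycles of $\pi$ in terms of its number of transpositions, and observe that each transposition costs a factor $\le k|G|/D_B$ while the $|G|$-fold group sum is already absorbed into the normalization $1/D_B^k$ via $\tr(R_g R_h^{-1}) = D_B\delta_{g,h}$. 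Summing over all $\pi\in S_k$ weighted by $(|G|/D_B)^{\#\text{transpositions}}$ gives $\prod_{j=1}^{k-1}(1+j|G|/D_B) \le e^{k(k-1)|G|/2D_B}$, which is the claimed factor. Once this bound is in hand, the remaining steps are routine multiplication of the error factors and an invocation of Lemma~\ref{lemma: relative error to additive}, and the constraint $k^2 \le D_B/|G|$ is exactly what is needed for the geometric resummation and for Lemma~\ref{lemma: approximate symmetric Haar twirl} to apply on the $B$-register.
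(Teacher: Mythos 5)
Your proposal follows essentially the same route as the paper's proof: replace each small twirl by the approximate symmetric Haar twirl $\Phi^G_a$ (absorbing the design errors and the $(1-k^2|G|/2D_{AB})^{-1}$, $(1-k^2|G|/2D_{BC})^{-1}$ factors via Lemma~\ref{lemma: approximate symmetric Haar twirl}), compose the two on the EPR state, identify the diagonal terms $\sigma=\tau$, $\bs g=\bs g'$ with $\Phi^G_a$ on $ABC$, bound the off-diagonal terms by the cycle-wise trace over $B$ (nonzero only when the group elements in each cycle multiply to the identity, giving $D_B^{\#\mathrm{cycles}}$) and resum $\sum_{\sigma\neq\mathbbm 1}(|G|/D_B)^{|\sigma|}\le e^{k(k-1)|G|/2D_B}-1$, then convert to relative error with Lemma~\ref{lemma: relative error to additive} and pick up the final $(1+k^2|G|/D_{ABC})$ from Lemma~\ref{lemma: approximate symmetric Haar twirl} on $ABC$. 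This matches the paper's argument in both structure and constants, so the plan is sound as stated.
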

\noindent Our proof of the gluing lemma utilizes the simple approximate expression for the symmetric Haar twirl derived in the previous subsections.

From the gluing lemma, we can immediately show that symmetric two-layer symmetric circuit forms a unitary $k$-design.
\begin{theorem}[Gluing small symmetric random unitaries] \label{thm:main-design}
    For any discrete on-site symmetry group and any approximation error $\varepsilon \leq 1$.
    Suppose each small random unitary in the two-layer brickwork ensemble $\mathcal{E}$ is drawn from an $\frac{\varepsilon}{n}$-approximate symmetric unitary $k$-design on $2\xi$ qubits with circuit depth $d$. Then $\mathcal{E}$ forms an $\varepsilon$-approximate symmetric unitary $k$-design on $n$ qubits with depth $2d$, whenever the local patch size satisfies $\xi \geq \log_{|G|}(nk^2|G|/\varepsilon)$.
\end{theorem}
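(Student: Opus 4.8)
The plan is to deduce Theorem~\ref{thm:main-design} from the two‑unitary gluing lemma (Lemma~\ref{lemma: AB BC to ABC app}) by absorbing the bricks of the two‑layer circuit one at a time, exactly as in the symmetry‑free construction of~\cite{schuster2024random}, and then bookkeeping how the relative errors accumulate. \emph{Reduction to a chain of gluings.} First I would order the $m = O(n/\xi)$ constituent unitaries of the brickwork (one per brick, across both layers) into a sequence $W_1,\dots,W_m$ with $V = W_m\cdots W_1$, such that each new brick meets the already‑assembled support in exactly half of its sites: setting $S_t \coloneqq \bigcup_{s\le t}\operatorname{supp}(W_s)$, one has $|\operatorname{supp}(W_{t+1})\cap S_t| = \xi$ and $|\operatorname{supp}(W_{t+1})\setminus S_t| = \xi$ for every $t$, with $S_m$ the whole system. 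That such a ``staircase'' ordering exists follows from the commutation structure of a two‑layer brickwork (bricks within a layer commute; each second‑layer brick overlaps only its two neighbouring first‑layer bricks); it is a statement purely about supports and commutation relations, is carried out in~\cite{schuster2024random}, and transfers verbatim to the symmetric setting.

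\emph{Induction along the chain.} I would then show by induction on $t$ that the partial product $W_t\cdots W_1$, viewed as a symmetric unitary ensemble on $S_t$, is an $\varepsilon_t$‑approximate symmetric unitary $k$‑design. The base case $t=1$ is the hypothesis, with $\varepsilon_1 = \varepsilon/n$. For the inductive step put $B = \operatorname{supp}(W_{t+1})\cap S_t$ (so $|B| = \xi$), $C = \operatorname{supp}(W_{t+1})\setminus S_t$ (so $|C| = \xi$), and $A = S_t\setminus B$, so that $S_{t+1} = A\cup B\cup C$. Then $W_t\cdots W_1$ is an $\varepsilon_t$‑design supported on $AB$, $W_{t+1}$ is an $(\varepsilon/n)$‑design supported on $BC$, and their composition is of the form covered by Lemma~\ref{lemma: AB BC to ABC app} (the two composition orders $U_{AB}U_{BC}$ and $U_{BC}U_{AB}$ are interchangeable, since the lemma's hypotheses are symmetric under relabeling $A\leftrightarrow C$, $U_{AB}\leftrightarrow U_{BC}$). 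The lemma applies since the patch‑size hypothesis guarantees $k^2\le D_B/|G|$, and with $D_B$ the dimension of a $\xi$‑site patch, $D_{BC} = D_B^2 \le D_{AB},D_{ABC}$, Eq.~(\ref{eq: ABC error}) with $\varepsilon_{AB} = \varepsilon_t$, $\varepsilon_{BC} = \varepsilon/n$ gives the recursion
\begin{equation} \label{eq:glue-step-plan}
    1+\varepsilon_{t+1} \;\le\; (1+\varepsilon_t)(1+\varepsilon/n)\left(1-\tfrac{k^2|G|}{2D_B}\right)^{-2} e^{\,k(k-1)|G|/2D_B}\left(1+\tfrac{k^2|G|}{D_B^{2}}\right).
\end{equation}

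\emph{Accumulating the errors.} The patch‑size hypothesis $\xi\ge\log_{|G|}(nk^2|G|/\varepsilon)$ is precisely the inequality $k^2|G|/D_B\le\varepsilon/n$, which (together with $\varepsilon\le 1$) makes every ``overhead'' factor on the right of~(\ref{eq:glue-step-plan}) equal to $1+O(\varepsilon/n)$, so that $1+\varepsilon_{t+1}\le(1+\varepsilon_t)(1+c\,\varepsilon/n)$ for an absolute constant $c$. Unrolling over the $m-1<n$ gluing steps,
\begin{equation}
    1+\varepsilon_m \;\le\; (1+\varepsilon/n)(1+c\,\varepsilon/n)^{m-1}\;\le\; e^{(c+1)\varepsilon m/n}\;\le\; e^{O(\varepsilon)}\;=\;1+O(\varepsilon),
\end{equation}
again using $m/n\le 1$ and $\varepsilon\le 1$. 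Tracking the absolute constants — or, if one wants the bound $\varepsilon_m\le\varepsilon$ on the nose, enlarging the constant inside the logarithm of the patch‑size condition by an $O(1)$ amount — shows that $V = W_m\cdots W_1$ is an $\varepsilon$‑approximate symmetric unitary $k$‑design on the full system. The depth statement is immediate: $V$ composes two depth‑$d$ layers, hence has depth $2d$.

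\emph{Main obstacle.} The error accounting above is a short geometric‑series estimate once Lemma~\ref{lemma: AB BC to ABC app} is in hand, so the genuinely nontrivial ingredient is the combinatorial claim in the first step: that a two‑layer brickwork factors as an ordered product of its bricks in which every prefix overlaps the next brick on exactly $\xi$ sites, so that the gluing lemma can be invoked at every stage. This requires exploiting the commutation structure of the brickwork to rewrite $V$ in the staircase form; it is done in~\cite{schuster2024random} for the non‑symmetric construction and carries over unchanged, since it concerns only supports and commutation and not the symmetry. A minor additional point to verify is that the relative‑error order $\preceq$ is preserved under composition with the completely positive twirl channels and that the symmetric Haar twirl is idempotent, $\Phi^{G}_{H}\circ\Phi^{G}_{H} = \Phi^{G}_{H}$, so that relative errors compose multiplicatively along the chain; both facts are routine.
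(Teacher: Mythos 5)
Your proposal takes essentially the same route as the paper: it applies the two-unitary gluing lemma (Lemma~\ref{lemma: AB BC to ABC app}) brick-by-brick across the two-layer circuit, exactly mirroring Theorem~1 of~\cite{schuster2024random}, and multiplies up the relative errors using that the patch-size hypothesis gives $k^2|G|/|G|^{\xi} \le \varepsilon/n$. The only piece left implicit is the explicit constant bookkeeping (the paper bounds the two accumulated error contributions by $0.29\varepsilon$ and $0.31\varepsilon$, summing to $0.6\varepsilon \le \varepsilon$), which is what lets one conclude $\varepsilon$ on the nose under the stated condition $\xi \geq \log_{|G|}(nk^2|G|/\varepsilon)$, rather than your $O(\varepsilon)$ with an enlarged constant inside the logarithm.
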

\noindent This will form the basis of our proof of Theorem~\ref{thm:polylog}, on low-depth symmetric PRUs, in the next section.

In what follows, we first prove the symmetric gluing lemma and then the two-layer circuit theorem.

\begin{proof}[Proof of Lemma~\ref{lemma: AB BC to ABC app}]
    Let $\delta_{AB} = k^2|G|/2D_{AB}/(1-k^2|G|/2D_{AB})$, $\delta_{BC} = k^2|G|/2D_{BC}/(1-k^2|G|/2D_{BC})$; elementary algebra gives $1+\delta_{AB} = (1-k^2|G|/2D_{AB})^{-1}$ and similar for $\delta_{BC}$.
    From Lemma~\ref{lemma: approximate symmetric Haar twirl}, we can approximate the twirl over $V_{ABC}$ by $(\Phi_{a}^{G})_{AB} \circ (\Phi_{a}^{G})_{BC}$ up to relative error $(1+\varepsilon_{AB})(1+\delta_{AB})(1+\varepsilon_{BC})(1+\delta_{BC})-1$.
    To proceed, let
    \begin{equation}
        \rho^G_a = [\Phi^{G}_{a} \otimes \mathbbm{1}](P_{\text{EPR}}) = \frac{1}{D^{2k}} \sum_{\sigma}
    \sum_{\bs{g}}
    R_{\bs{g}} \, \sigma \otimes R^*_{\bs{g}} \, \sigma,
    \end{equation}
    and
    \begin{equation}
    \begin{split}
        \rho_{\mathcal{E},a} & \equiv \big[ \big( (\Phi_{a}^{G})_{AB} \circ (\Phi_{a}^{G})_{BC} \big) \otimes \mathbbm{1} \big] ( P_{\text{EPR}} ) \\
        & = \frac{1}{D^k} \frac{1}{D_{AB}^k D_{BC}^k} \sum_{\sigma \tau} \sum_{\bs{g} \bs{g}'}
    G^{\bs{g},\bs{g}'}_{\sigma,\tau} \cdot
    [ ( R_{\bs{g}} \sigma)_{AB} \otimes (R_{\bs{g}'} \tau)_{C} ] \otimes [ (R^*_{\bs{g}} \sigma)_A \otimes (R^*_{\bs{g}'} \tau)_{BC} ],
    \end{split}
    \end{equation}
    where we denote $G^{\bs{g},\bs{g}'}_{\sigma,\tau} = \tr_B( (R^{}_{\bs{g}} \sigma^{} \tau^{-1} R_{\bs{g}'}^{-1})_B )$.
    We note that the diagonal terms in $\rho_{\mathcal{E},a}$, with $\sigma = \tau$,  have $G^{\bs{g},\bs{g}'}_{\sigma,\sigma} = \prod_{i=1}^k \tr_{B_i}( R^{}_{g^{}_i} R^{-1}_{g_i'} ) = D_B^k \, \delta_{\bs{g},\bs{g}'}$, and thus yield precisely the expression for $\rho^G_a$.
    Applying the triangle inequality to the remaining terms, which have $\sigma \neq \tau$, we find
    \begin{equation} \label{eq: sym gluing proof}
    \begin{split}
        \lVert \rho^G_a - \rho_{\mathcal{E},a} \rVert_\infty
        & \leq
        \frac{1}{D^k} \frac{1}{D_{AB}^k D_{BC}^k} \sum_{\sigma \neq \tau} \sum_{\bs{g} \bs{g}'} \big| G^{\bs{g},\bs{g}'}_{\sigma,\tau} \big| \\
        & =
        \frac{1}{D^k} \frac{1}{D_{AB}^k D_{BC}^k} \sum_{\sigma \neq \tau}
        \sum_{\bs{g} \bs{g}'}
        \bigg| \prod_{\ell \in \sigma\tau^{-1}} \tr_B \bigg( \prod_{i \in \ell} R^{-1}_{g'_i} R^{}_{g_i} \bigg) \bigg| \\
        & \leq
        \frac{1}{D^k} \frac{1}{D_{AB}^k D_{BC}^k} \sum_{\sigma \neq \tau}
        \sum_{\bs{g} \bs{g}'}
        D_B^{k-|\sigma \tau^{-1}|}
         \prod_{\ell \in \sigma\tau^{-1}} \delta\bigg( \prod_{i \in \ell} g'_i g_i^{-1} = e\bigg) \\
         & =
        \frac{1}{D^{2k}} \sum_{\sigma \neq \tau}
        D_B^{-|\sigma \tau^{-1}|}
         |G|^{k+|\sigma \tau^{-1}|} \\
         & =
        \frac{k! |G|^k}{D^{2k}} \sum_{\sigma \neq \mathbbm{1}}
        (|G|/D_B)^{|\sigma|} \leq \frac{k!|G|^k}{D^{2k}} \left( e^{k(k-1) |G|/2 D_B} -1 \right). \\
    \end{split}
    \end{equation}
    In the second line, we write $G^{\bs{g},\bs{g}'}_{\sigma,\tau}$ as a product of traces over each cycle $\ell$ in $\sigma \tau^{-1}$.
    In the third line, we note that the trace over each cycle is zero unless the product of group elements appearing the cycle is the identity.
    Moreover, when this condition is satisfied, the trace is equal to $D_B$.
    The total number of cycles is $\#(\sigma \tau^{-1}) = k-|\sigma\tau^{-1}|$.
    In the fourth line, we compute the sum over $\bs{g}, \bs{g}'$. This yields $|G|^{2k-\#(\sigma \tau^{-1})} = |G|^{k+|\sigma \tau^{-1}|}$, where $\#(\sigma \tau^{-1})$ counts the number of delta functions that restrict the sum.
    In the final line, we use the well-known bound $\sum_\sigma D^{-|\sigma|} \leq e^{k(k-1)/2D}$ for $D \rightarrow D_B / |G|$~\cite{harrow2023approximate, schuster2024random}.

    We have $\lVert \rho_{\mathcal{E}} - \rho^G_a \rVert_\infty \leq (1+\varepsilon_{AB})(1+\delta_{AB})(1+\varepsilon_{BC})(1+\delta_{BC}) \lVert \rho_{\mathcal{E},a} - \rho^G_a \rVert_\infty$, since $\rho_{\mathcal{E},a}$ approximates $\rho_{\mathcal{E}}$ up to relative error.
    Applying Lemma~\ref{lemma: relative error to additive} (which bounds the relative error between $\Phi_\mathcal{E}$ and $\Phi^G_a$) and Lemma~\ref{lemma: approximate symmetric Haar twirl} (which bounds the relative error between $\Phi^G_a$ and $\Phi^G_H$) completes the proof.
\end{proof}

\begin{proof}[Proof of Theorem~\ref{thm:main-design}]
    Now that the symmetric gluing lemma is established, our proof exactly mirrors that of the proof of Theorem~1 in Ref.~\cite{schuster2024random}. 
    We apply Lemma~\ref{lemma: AB BC to ABC app} patch-by-patch $m-2$ times, where $m = n/\xi$ is the number of local patches.
    After all $m-2$ applications, one finds that the two-layer ensemble forms an approximate symmetric unitary $k$-design with error
    \begin{equation} \label{eq: total error}
    \begin{split}
        (1+\varepsilon/n)^{m-1} & \cdot \left( \left( 1-k^2|G|/2q^2 \right)^{-1}  \left( 1-k^2|G|/2q^2 \right)^{-1}  e^{k(k-1)|G|/2q}  \big(1+k^2|G|/q^3 \big) \right)^{m-2} - 1, \\
    \end{split}
    \end{equation}
    where we let $q \equiv |G|^{\xi}$ denote the Hilbert space dimension of a single patch. We also use $D_B \geq q$, $D_{AB}, D_{BC} \geq q^2$, and $D_{ABC} \geq q^3$.
    Using the inequalities $1/(1-x/2) \leq 1 + x$  and $1+x \leq e^x$ several times, the above expression is upper bounded by,
    \begin{equation} \label{eq: total error 2}
    \begin{split}
        \left( \ldots \right) \leq e^{(m-1) \varepsilon/n + (m-2) f(k,q)} -1  \leq \frac{1}{\log(2)} \left( (m-1) \varepsilon / n + (m-2) f(k,q) \right).
    \end{split}
    \end{equation}
    The second inequality follows from $e^{x}-1 \leq x/\log(2)$ for $x \leq \log(2)$, and we abbreviate,
    \begin{equation}
        f(k,q) = \frac{k(k-1)|G|}{2q} + 2 \frac{k^2|G|}{q^2} + \frac{k^2|G|}{q^3},
    \end{equation}
    which arise from the four error terms in the gluing lemma.
    
    We would like to show that the total error, Eq.~(\ref{eq: total error}), is less than $\varepsilon$.
    We take $k \geq 2$ and $n \geq 3\xi$ since otherwise the theorem holds trivially.
    By assumption, we have $q \geq nk^2|G|/\varepsilon$.
    Combined with $k \geq 2$, $n \geq 3\xi$, $|G| \geq 1$, and $\varepsilon \leq 1$, this implies that $\xi \geq 5$ and so $q \geq 32$.
    The first term in Eq.~(\ref{eq: total error}) is therefore less than,
\begin{equation}
    \frac{(m-1) \varepsilon}{ n \log(2) } \leq \frac{\varepsilon}{ \xi \log(2) }
    \leq \frac{\varepsilon}{5 \log(2) } = 0.29 \varepsilon,
\end{equation}
since $m \leq n / \xi \leq n/5$.
Applying $q \geq nk^2|G|/\varepsilon$ to the second term in Eq.~(\ref{eq: total error 2}), we find
\begin{equation}
\begin{split}
    \frac{(m-2)}{\log(2)} f(k,q) & \leq \frac{n}{\xi \log(2)}   \left( \frac{\varepsilon}{n} + 2\frac{\varepsilon}{nq} + \frac{\varepsilon}{n q^2}  \right) \leq  \varepsilon \cdot \frac{1.07}{5 \log(2)} = 0.31 \varepsilon.
\end{split}
\end{equation}
These error bounds are exactly the same as in Ref.~\cite{schuster2024random} for random unitaries without symmetries.
The sum of the two error terms is less than $0.6 \varepsilon \leq \varepsilon$ as claimed.
\end{proof}





\section{Symmetric pseudorandom unitaries} \label{sec: symmetric PRUs}

In this section, we present the full details of our results on symmetric pseudorandom unitaries.
We begin in Appendix~\ref{sec: sym PRU notation} by reviewing several key definitions and notation.
In Appendix~\ref{sec: sym PRU}, we present our first construction of symmetric pseudorandom unitaries, which has circuit depth $\poly n$ for any discrete on-site symmetry group $G$.
We further prove that our construction can be compiled in terms of constant-local symmetric gates whenever $G$ is Abelian.
In Appendix~\ref{sec: sym PRUs low depth}, we combine this result with our gluing theorem from the previous section.
This leads to our second construction of symmetric pseudorandom unitaries, which has circuit depth $\poly(\log n)$ on any circuit geometry.

\subsection{Definition of symmetric PRUs} \label{sec: sym PRU notation}

We begin by stating the formal definition of a symmetric pseudorandom unitary ensemble.
Namely, a symmetric unitary ensemble is a symmetric PRU ensemble if it is indistinguishable from the symmetric Haar ensemble by any bounded-time quantum experiment.
\begin{definition}[Symmetric pseudorandom unitaries] \label{def: PRU-t(lambda)}
For any symmetry group $G$ acting on $n$ qubits.
An infinite sequence $\{ \mathcal{U}_n \}_{n \in \mathbb{N}}$ of $n$-qubit symmetric unitary ensembles $\mathcal{U}_n = \{ U_{\mathsf{key}} \}_{\mathsf{key} \in \mathcal{K}_{n}}$ for the key space $\mathcal{K}_{n}$ is a symmetric pseudorandom unitary secure against any $t(n)$-time adversary if it satisfies the following.
\begin{itemize}
    \item \textbf{Efficient computation:} There exists a $\mathrm{poly}(n)$-time quantum algorithm that implements the $n$-qubit unitary $U_{\mathsf{key}}$ for all $\mathsf{key} \in \mathcal{K}_{n}$.
    \item \textbf{Indistinguishability from Haar:} Any quantum algorithm $\mathcal{A}^{U}(1^{n})$ that runs in time $\leq t(n)$, queries an $n$-qubit unitary $U$ for any number of times, and outputs $\{0, 1\}$ satisfies
    \begin{equation} \label{eq:distinguishability-adv}
        \left| \Pr_{\mathsf{key} \sim \mathcal{K}_{n}}\left[\mathcal{A}^{U_{\mathsf{key}}}(1^{n}) = 1 \right] - \Pr_{U \sim H_G}\left[ \mathcal{A}^{U}(1^{n}) = 1 \right] \right| \leq \mathrm{negl}(n),
    \end{equation}
    where $H_G$ is the symmetric Haar ensemble and $\mathrm{negl}(n)$ is the negligible function, i.e., a function that is asymptotically smaller than any inverse-polynomial function $1 / \mathrm{poly}(n)$.
\end{itemize}
A $t(n)$-time algorithm $\mathcal{A}^{U}(1^{n})$ is referred to as a $t(n)$-time adversary.
The difference between the probability under $\mathcal{U}_n$ and the symmetric Haar ensemble is referred to as the advantage of $\mathcal{A}^{U}(1^{n})$.
\end{definition}
\noindent This is identical to the definition of a pseudorandom unitary without symmetry when $G$ is trivial.

Our construction of symmetric PRUs will use the notion of a controlled pseudorandom unitary, which are defined as follows.
We prove the existence of controlled PRUs in Appendix~\ref{sec: controlled PRUs}.
\begin{definition}[Controlled pseudorandom unitaries]
    A unitary ensemble is a controlled pseudorandom unitary ensemble secure against any $t(n)$-time adversary if it cannot be distinguished from the controlled Haar ensemble by any $t(n)$-time quantum experiment.
\end{definition}
\noindent For brevity, we state the definition colloquially; the full details follow identically to Definition~\ref{def: PRU-t(lambda)}.
Here, the controlled Haar ensemble refers to a random unitary $U^c = \dyad{0} \otimes \mathbbm{1} + \dyad{1} \otimes U$, where the first qubit is a control register and $U$ is drawn from the Haar measure on the remaining $n$ qubits.

\subsection{Proof of Theorem~\ref{thm:poly}: Symmetric PRUs in polynomial depth} \label{sec: sym PRU}

We can now present our polynomial-depth construction of symmetric pseudorandom unitaries.
Our construction applies to any discrete on-site symmetry group $G$. 
Our construction can be compiled in circuit depth $\poly n$.
%
This being said, the individual circuit gates in our compilation may not necessarily respect the symmetry operation.
For Abelian discrete on-site symmetries in particular, we prove a stronger statement that our construction can be compiled in circuit depth $\poly n$ using only symmetric geometrically 5-local gates.

Our construction is simple following the irreducible decomposition of $\mathcal{H}$.
We proceed in three steps.
First, we apply a unitary circuit that performs the irreducible decomposition, $\mathcal{H} \rightarrow \bigoplus_\lambda \mathcal{M}_\lambda \otimes \mathcal{F}_\lambda$.
%
%
We also compute the irrep index $\lambda$ onto a small additional register of qubits.
Second, conditioned on the small register being in the state $\lambda$, we apply a controlled pseudorandom unitary on the register $\mathcal{M}_\lambda$.
%
%
We repeat this for each irrep $\lambda$.
Finally, we apply the inverse of the unitary circuit that was applied in the first step.
%

%
Our main result is that this construction forms a symmetric pseudorandom unitary.
\begin{theorem}[Symmetric pseudorandom unitaries] \label{thm: sym PRU}
For any discrete on-site symmetry group $G$,
the unitary ensemble described above is a symmetric pseudorandom unitary secure against any sub-exponential time quantum adversary.
\end{theorem}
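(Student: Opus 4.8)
To prove Theorem~\ref{thm: sym PRU}, the plan is to verify the two clauses of Definition~\ref{def: PRU-t(lambda)} separately, obtaining the ``indistinguishability from Haar'' clause through a short hybrid argument over the irreducible representations of $G$ that reduces to the security of controlled pseudorandom unitaries (Appendix~\ref{sec: controlled PRUs}). Concretely, I would (a) show that the \emph{ideal} version of the construction of Appendix~\ref{sec: sym PRU} --- in which each controlled pseudorandom unitary is replaced by a controlled Haar-random unitary --- samples exactly from the symmetric Haar ensemble $H_G$; (b) note as a byproduct that every unitary produced by the construction is symmetric, and that the construction runs in $\poly(n)$ time; and (c) run the hybrid argument for indistinguishability.

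\emph{Step (a): the ideal construction is $H_G$, and every instance is symmetric.} Write the first step of the construction as a unitary $W$ on $\mathcal{H}\otimes\mathcal{H}_{\mathrm{anc}}$ sending $\mathcal{H}\otimes\ket{0}$ onto $\bigoplus_\lambda(\mathcal{M}_\lambda\otimes\mathcal{F}_\lambda)\otimes\ket{\lambda}$, i.e.~implementing the decomposition of Eq.~(\ref{eq: irrep decomp}) and writing the irrep label onto the ancilla. Replacing each controlled pseudorandom unitary in the second step by a controlled Haar-random $C_\lambda$ produces the operator $\bigoplus_\lambda(C_\lambda\otimes\mathbbm{1}_{\mathcal{F}_\lambda})$ tensored with the untouched ancilla, where the $C_\lambda$ are independent and Haar-distributed on $U(D_\lambda)$; here it is essential that the controlled-PRU construction of Appendix~\ref{sec: controlled PRUs} accommodates target registers of arbitrary dimension, since $D_\lambda=d_\lambda D/|G|$ need not be a power of two. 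Because the controlled unitaries leave the ancilla invariant, the third step $W^\dagger$ uncomputes it exactly, and the net operator on $\mathcal{H}$ is $W^\dagger\big(\bigoplus_\lambda(C_\lambda\otimes\mathbbm{1}_{\mathcal{F}_\lambda})\big)W$, which by Eq.~(\ref{eq: U lambda}) and the discussion of Appendix~\ref{sec: exact haar symmetric} is precisely a draw from $H_G$. The same computation, run for an \emph{arbitrary} choice of the $C_\lambda$, shows that every unitary in our ensemble is block-diagonal in the irreps and trivial on each $\mathcal{F}_\lambda$, hence commutes with $R_g=\bigotimes_i R^i_g$ for every $g\in G$; so the ensemble is genuinely symmetric.

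\emph{Steps (b)--(c): efficiency and the hybrid argument.} The transform $W$ (and $W^\dagger$) can be realized in $\poly(n)$ time by a recursive Clebsch--Gordan / generalized Schur circuit for $G$ in the regular representation, and the second step consists of only $r=O(1)$ controlled pseudorandom unitaries ($r$ being the fixed number of irreps of $G$), each $\poly(n)$-time by Appendix~\ref{sec: controlled PRUs} --- conditioning on ``ancilla $=\lambda$'' being reduced to the single-control definition by computing an indicator bit, applying the single-control controlled PRU, and uncomputing. For indistinguishability, enumerate the irreps $\lambda_1,\dots,\lambda_r$ and define hybrids $\mathcal{E}_0,\dots,\mathcal{E}_r$, where $\mathcal{E}_j$ uses a controlled pseudorandom unitary for $\lambda_1,\dots,\lambda_j$ and a controlled Haar-random unitary for the rest; by step (a), $\mathcal{E}_0=H_G$ and $\mathcal{E}_r$ is our ensemble. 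If an $\exp(o(n))$-time adversary $\mathcal{A}$ distinguished $\mathcal{E}_r$ from $H_G$ with advantage $\varepsilon$, then for some $j$ it distinguishes $\mathcal{E}_{j-1}$ from $\mathcal{E}_j$ with advantage $\geq\varepsilon/r=\Omega(\varepsilon)$, and I would turn this into an attack $\mathcal{B}$ on the controlled-PRU game for the block $\mathcal{M}_{\lambda_j}$: $\mathcal{B}$ simulates $\mathcal{A}$, answering each query by applying $W$, the controlled pseudorandom unitaries for $\lambda_1,\dots,\lambda_{j-1}$ (sampling their keys), its challenge oracle for $\lambda_j$, emulated Haar-random unitaries for $\lambda_{j+1},\dots,\lambda_r$, and $W^\dagger$.

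\emph{Main obstacle.} The one nontrivial point is that the Haar-random blocks used by $\mathcal{B}$ act on exponentially large multiplicity spaces and cannot be sampled in bounded time. This is handled by observing that $\mathcal{A}$ makes at most $\exp(o(n))$ queries, so each such Haar block can be emulated \emph{on the fly} up to that many queries, with $\mathrm{negl}(n)$ error and in time $\poly$ in the number of queries (independent of the dimension), via the path-recording / lazy-sampling technique used for ordinary PRUs~\cite{ma2024construct}, or by substituting a unitary design of sufficiently high order. This keeps $\mathcal{B}$'s runtime $\exp(o(n))$, so its $\Omega(\varepsilon)-\mathrm{negl}(n)$ advantage would contradict the security of controlled pseudorandom unitaries against $\exp(o(n))$-time adversaries, which holds under subexponential hardness of LWE by Appendix~\ref{sec: controlled PRUs}. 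Hence $\varepsilon=\mathrm{negl}(n)$, giving indistinguishability from $H_G$. I expect this emulation step, together with the work of turning the abstract ``decompose, apply a controlled PRU per block, recompose'' recipe into an honest $\poly(n)$-depth (and, for Abelian $G$, $5$-local symmetric) circuit, to absorb essentially all of the effort.
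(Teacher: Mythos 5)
Your overall route coincides with the paper's: its proof of Theorem~\ref{thm: sym PRU} likewise (i) implements the irrep decomposition efficiently (via a ladder of constant-size gates exploiting that the regular representation is a fixed point under tensor products, rather than a generic Schur/Clebsch--Gordan transform), (ii) observes that each multiplicity space $\mathcal{M}_\lambda$ is a local tensor product space with non-uniform local dimensions $|G|, d', d_\lambda$, and (iii) instantiates controlled (LRFC) PRUs on those spaces using the PRF and design constructions of Appendices~\ref{sec: controlled PRUs}--\ref{sec: symmetric PRUs}; the block-by-block hybrid that you spell out is left implicit there. Your steps (a) and (b) are sound.

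The problem is your fix for the ``main obstacle.'' In your reduction the blocks $\lambda_{j+1},\dots,\lambda_r$ must be applied \emph{controlled} on the irrep register, which the adversary's queries can place in superposition, so you need an efficient statistical emulation of a \emph{controlled} Haar-random unitary, not of a Haar-random unitary. The path-recording theorem of Ref.~\cite{ma2024construct} is proven for forward queries to $U$ itself and does not cover queries to controlled-$U$; and substituting ``a unitary design of sufficiently high order'' fails as stated, because controlled queries probe unbalanced moments $\E\!\left[C^{\otimes a}\otimes \bar{C}^{\otimes b}\right]$ with $a\neq b$, which vanish for Haar by global-phase invariance but are not constrained by the design property --- this mismatch is precisely why controlled PRUs needed a separate proof in this paper (Appendix~\ref{sec: controlled PRUs}). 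The step is repairable: either phase-randomize each design element and use relative-error designs (killing the unbalanced moments and reproducing the balanced ones), or restructure the hybrid so that no Haar emulation is ever needed --- first use PRF/PRP security to replace every pseudorandom component by a truly random one (simulable with $2q$-wise independent functions), after which the remaining block-by-block comparison against controlled Haar is purely statistical, and the bounds behind Theorems~\ref{thm: cPFC} and~\ref{thm: cRLFC} hold for arbitrary query-bounded adversaries, so the Haar blocks cost nothing to ``simulate.'' As written, however, the runtime and correctness of your reduction rest on tools that do not cover controlled access, and this is the one genuine gap in the proposal.
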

\noindent We prove the theorem in Appendix~\ref{sec: sym PRU proof} below. 
Our proof relies on several key ingredients whose proofs are provided later sections:
the efficient implementation of pseudorandom functions on general local tensor product spaces (Appendix~\ref{sec: PRFs LTP}), the efficient generation of unitary $k$-designs on general local tensor product spaces (Appendix~\ref{sec: designs LTP}), and, leveraging these two ingredients, the existence of controlled PRUs (Appendix~\ref{sec: controlled PRUs}).
We separate the proofs of these components since they may be of more general interest beyond our current discussion.

For Abelian discrete on-site symmetries, we prove an even stronger statement.
\begin{proposition}[Compilation with symmetric local gates] \label{prop: compile sym}
    For any Abelian discrete on-site symmetry $G$, each random unitary in the ensemble above can be compiled in circuit depth $\text{\emph{poly}}(n)$ using symmetric geometrically 5-local gates.
\end{proposition}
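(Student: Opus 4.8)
The plan is to obtain Proposition~\ref{prop: compile sym} as a corollary of a general compilation statement for Abelian symmetric unitaries, and then to put essentially all of the work into that statement. Concretely, I would prove: for any finite Abelian on-site symmetry group $G$ — which, by tensoring each qudit with a symmetric ancilla as in Appendix~\ref{sec: background}, we may assume acts in the regular representation on every site — a symmetric unitary $U = \bigoplus_\lambda U_\lambda$ can be implemented by a depth-$\poly(n,d)$ circuit of symmetric, geometrically $5$-local gates whenever each $U_\lambda$ admits a depth-$d$ implementation by geometrically $2$-local gates on the multiplicity space $\mathcal{M}_\lambda$. Proposition~\ref{prop: compile sym} then follows immediately: in the construction of Theorem~\ref{thm: sym PRU} each block $U_\lambda$ is a controlled pseudorandom unitary on $\mathcal{M}_\lambda$, and by the building blocks of Appendices~\ref{sec: PRFs LTP}, \ref{sec: designs LTP} and~\ref{sec: controlled PRUs} such controlled PRUs on local-tensor-product spaces are realized by $\poly(n)$-depth, $2$-local circuits.

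For the general statement I would first fix the standard decomposition: in the per-site Fourier (character) basis, $\mathcal{H} \cong \bigoplus_\mu \mathcal{M}_\mu$ with every block of equal dimension $D/|G|$, and $R_g$ acting only through the global charge $\mu=\prod_i \lambda_i$. The point is to realize $\bigoplus_\mu U_\mu$ \emph{in place}, never localizing $\mu$ — this is genuinely forced, since concentrating the global charge onto a bounded set of qudits would require unbounded local dimension and would violate total-charge conservation. The core construction is an induction that peels off one site at a time. At each stage I would exhibit a fixed $O(1)$-local symmetric gadget — acting on a constant number of adjacent physical qudits together with a constant number of symmetric ancilla qudits, the accounting giving geometric locality five — so that a brickwork of these gadgets performs a charge-preserving change of frame that exposes one logical qudit of the multiplicity space at a definite location, after which the corresponding $2$-local logical gates from the circuit for $U_\mu$ can be applied directly as symmetric physical gates; the frame change is then uncomputed. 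Concatenating the $n$ peeling stages with the (now localized) logical circuits yields total depth $\poly(n,d)$. The ``mild connectivity assumption'' enters precisely here: it guarantees that nearest-neighbour triples of symmetry-carrying qudits can interact, which is what lets the ancillas be introduced and keeps the gadgets $5$-local; without it one runs into the pathological architectures of~\cite{fn9}.

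The main obstacle is that the change of frame which is trivial in the non-symmetric setting — a ladder of controlled multiplications accumulating partial products of the local charges — is \emph{not} built from symmetric gates: a controlled multiply alters the two-body charge and so fails to commute with $R_g$. Hence one cannot simply conjugate by a disentangler as in the non-symmetric construction of Ref.~\cite{schuster2024random}. The technical heart is therefore to design genuinely charge-preserving gadgets that effect the same logical relabeling while only shuttling \emph{neutral} combinations of charge between neighbouring sites and borrowed ancillas, and to verify that the resulting symmetric gate set is expressive enough to synthesize an arbitrary $U_\mu$ on each block — a real subtlety, since naive symmetric $2$-local gates along a chain possess spurious conserved quantities beyond the global charge, and it is exactly the extra locality and ancillas that remove them. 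A secondary, more routine task is to track the depth overhead of the $n$ peeling stages and to confirm that the pseudorandom-function and $k$-design primitives on the multiplicity spaces (Appendices~\ref{sec: PRFs LTP} and~\ref{sec: designs LTP}) run in $2$-local, $\poly(n)$ depth, so that the final count is indeed $\poly(n)$.
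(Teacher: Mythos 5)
You have located the right difficulty — the generalized-CNOT ladder $W$ that accumulates partial charges is not gate-wise symmetric, since $\ket{y_i}\otimes\ket{x_{i+1}}\mapsto\ket{y_i}\otimes\ket{x_{i+1}\oplus y_i}$ changes the two-site charge — but your proposal then replaces the missing argument with an asserted object rather than a construction. The entire content of Proposition~\ref{prop: compile sym} is the ``$O(1)$-local symmetric gadget'' that you claim ``exposes one logical qudit of the multiplicity space at a definite location'' while ``only shuttling neutral combinations of charge''; you never exhibit it, and you yourself flag the fatal-looking issue (naive symmetric few-local gates have spurious conserved quantities, so it is unclear the resulting gate set can synthesize an arbitrary block unitary $U_\lambda$) without resolving it. As written, the proposal reduces the proposition to an unproved lemma that is at least as hard as the proposition itself. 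Moreover, your guiding principle is off: you declare that one is ``genuinely forced'' never to localize the global charge, but since $G$ is a fixed finite group the charge takes only $|G|=\mathcal{O}(1)$ values, and \emph{reading it out} coherently onto an ancilla on which the symmetry does not act is a manifestly symmetric operation (gates of the form $\ket{x}\ket{a}\mapsto\ket{x}\ket{a\oplus x}$ preserve the charge basis and hence commute with the diagonal symmetry action). What is obstructed is moving charge between charged sites with individually symmetric gates, not computing its value; conflating the two rules out exactly the maneuver that makes the proof work.

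For contrast, the paper (Appendix~\ref{sec: compiling constant local}) never tries to implement $W$ symmetrically. It writes $U=W^\dagger\big(\prod_{y_n}U^c_{y_n}\big)W$, decomposes each controlled block unitary into controlled geometrically $2$-local gates $\dyad{y_n}\otimes V_{i,i+1}+(\mathbbm{1}-\dyad{y_n})\otimes\mathbbm{1}$, and conjugates each such gate \emph{analytically} by the trailing layers of the ladder. A light-cone argument shows the conjugated gate $\tilde V_{i,i+1}$ is supported on $\{i-1,i,i+1,i+2\}$; charge bookkeeping shows qudit $i-1$ enters only as a control on the partial charge $y_{i-1}$, and that the controlled operation on $\{i,i+1,i+2\}$ preserves the three-site charge and is therefore symmetric. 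The compilation then symmetrically computes $y_n$ and $y_{i-1}$ onto two ancilla registers, applies $\tilde V$ as a symmetric gate on three adjacent system qudits plus those two ancillas (hence geometrically $5$-local, using the connectivity assumption), and uncomputes — at the cost of a factor-$n$ depth overhead, which keeps the total at $\poly(n)$. If you want to salvage your route, the concrete task is to actually build your gadget and prove universality of the resulting symmetric gate set on each charge sector; the paper's conjugate-then-control trick sidesteps precisely that universality question, which is why your plan, as it stands, has a genuine gap rather than an alternative proof.
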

\noindent Our proof involves a detailed analysis of the unitary circuit in the first step of our symmetric PRU construction, and is provided in Appendix~\ref{sec: compiling constant local}.
The combination of Theorem~\ref{thm: sym PRU} and Proposition~\ref{prop: compile sym} yields Theorem~\ref{thm:poly} of the main text.

\subsubsection{Proof of Theorem~\ref{thm: sym PRU}: Symmetric PRUs in polynomial depth} \label{sec: sym PRU proof}

    Our proof consists of two steps.
    First, we show that the irreducible decomposition in the first and third steps of our construction can be implemented efficiently.
    This follows from a relatively straightforward approach, in which we use Clebsch-Gordon coefficients to gradually build up the irreducible decomposition one qudit at a time.
    Second, we show that controlled PRUs exist and can be implemented in $\poly n$ circuit depth on arbitrary local tensor product spaces.
    This more general setting is necessary because the Hilbert spaces $\mathcal{M}_\lambda$ may not have uniform qudit dimensions.

    Let us begin with the first proof step.
    We work in a 1D circuit geometry for simplicity; our results can be extended to any circuit geometry using Lemma~9 and~10 of Ref.~\cite{schuster2024random}.
    Our symmetric pseudorandom unitary has the form,
    \begin{equation}
        U = W^\dagger \cdot V^\dagger \cdot \prod_\lambda U^c_\lambda \cdot V \cdot W,
    \end{equation}
    Here, we let $W$ denote a unitary circuit that enacts the isomorphism $\mathcal{H} = \bigotimes_{i=1}^n ( \mathcal{F}_{\text{reg},i} \otimes \mathcal{A}_i ) \cong \mathbb{C}^{D/|G|} \otimes \mathcal{F}_{\text{reg}}$ described in Appendix~\ref{sec: background}.
    Then, we let $V$ denote a circuit that implements the isomorphism $\mathcal{F}_{\text{reg}} \cong \bigoplus_\lambda ( \mathbb{C}^{d_\lambda} \otimes \mathcal{F}_\lambda )$ on the output of $W$.
    The product $V W$ corresponds to the first step of our symmetric PRU construction.

    To implement $W$ efficiently, we recall the behavior of the regular representation of a group under the tensor product.
    Each qudit Hilbert space has the form $\mathcal{H}_i = \mathcal{F}_{\text{reg}} \otimes \mathcal{A}$ where $|\mathcal{A}| = d'$ and the symmetry group acts in the regular representation on the first register.
    In Appendix~\ref{sec: background}, we reviewed that the tensor product of two regular representations of a group is also regular.
    This allows one to define an isomorphism, 
    \begin{equation}
        \mathcal{F}_{\text{reg}} \otimes \mathcal{F}_{\text{reg}} \cong \mathbb{C}^{|G|} \otimes \mathcal{F}_{\text{reg}},
    \end{equation}
    where the symmetry operator acts only on the second register on the right side.
    We let $W'$ denote the unitary map that implements this isomorphism.
    Since $W'$ acts on only a constant-size Hilbert space, it can be implemented in constant depth.
    To construct the full irreducible decomposition $W$, we apply $W'$ one-by-one in a ladder-like fashion down the 1D chain.
    That is, we take $W = W'_{n-1,n} \ldots W'_{3,4} W'_{2,3} W'_{1,2}$, where $W'_{i,i+1}$ denotes the application of $W'$ to the $\mathcal{F}_{\text{reg}}$ registers of qudits $i$ and $i+1$.
    After all $n-1$ applications, the unitary $W$ implements the isomorphism
    \begin{equation}
        \big( \mathcal{F}_\text{reg} \otimes \mathcal{A} \big)^{\otimes n} \cong
        \big(\mathbb{C}^{|G|} \otimes \mathcal{A} \big)^{\otimes n-1} \otimes \mathcal{F}_\text{reg} \otimes \mathcal{A},
    \end{equation}
    where the symmetry operator, $W R_g W^\dagger$ where $R_g = \otimes_i R_g^i$, acts only on the $\mathcal{F}_\text{reg}$ register of the right side.
    Since each $W'_{i,i+1}$ can be implemented in constant depth, the unitary $W$ can be implemented in depth $\mathcal{O}(n)$.

    To implement the unitary $V$, we will introduce several ancilla registers.
    %
    Let $\Lambda = | \{ \lambda \} |$ denote the total number of irreps.
    We introduce two ancilla registers of dimension $d_\lambda$ for each $\lambda$, as well as a final ancilla register of dimension $\Lambda$.
    %
    %
    We let $V$ act on the final $\mathcal{F}_{\text{reg}}$ register and the ancilla registers as, 
    \begin{equation}
    \begin{split}
        V & \left( \ket{\lambda; x_\lambda, k_\lambda}_{\mathcal{F}_{\text{reg}}} 
        \otimes \ket{0}_\Lambda 
        \otimes \big( \bigotimes_{\lambda'} \ket{0}_{d_{\lambda'}} 
        \otimes \ket{0}_{\mathcal{F}_{\lambda'}} \big)  \right) \\
        & \quad \quad \quad \quad=
        \ket{0}_{\mathcal{F}_{\text{reg}}} 
        \otimes \ket{\lambda}_\Lambda  
        \otimes \big( \ket{x}_{d_{\lambda}} \otimes \ket{k}_{\mathcal{F}_{\lambda}} \big) 
        \otimes \big( \bigotimes_{\lambda' \neq \lambda} \ket{0}_{d_{\lambda'}} \otimes \ket{0}_{\mathcal{F}_{\lambda'}} \big).
    \end{split}
    \end{equation}
    Here, $x_\lambda$ labels which multiplicity the state belongs to and $k_\lambda$ labels the state in the irrep.
    The unitary $V$ can be implemented in constant depth since it acts on a constant-size Hilbert space.
    In total, the product $VW$ implements the isomorphism,
    \begin{equation}
        \big( \mathcal{F}_\text{reg} \otimes \mathcal{A} \big)^{\otimes n} 
        \cong
        \bigoplus_\lambda 
        \left[ \big(\mathbb{C}^{|G|} \otimes \mathcal{A} \big)^{\otimes n-1}  \otimes \mathcal{A} \otimes \mathbb{C}^{d_\lambda} \right] \otimes \mathcal{F}_\lambda,
    \end{equation}
    where we identify the Hilbert space in square brackets as $\mathcal{M}_\lambda$.
    Note that $\mathcal{M}_\lambda$ has a local tensor product structure, but with a non-uniform set of local Hilbert space dimensions $|G|, d', d_\lambda$.

    It remains only to show that the second step of our symmetric pseudorandom unitary construction can be implemented efficiently.
    In Theorem~\ref{thm: cRLFC} of Appendix~\ref{sec: controlled PRUs}, we show that controlled PRUs exist on any Hilbert space that factorizes as $\mathcal{H} = \mathcal{H}_L \otimes \mathcal{H}_R$.
    Here, we have in mind $\mathcal{H} \rightarrow \mathcal{M}_\lambda$ for each $\lambda$.
    The security of our controlled PRU is determined by the dimensions of $\mathcal{H}_L$ and $\mathcal{H}_R$; in particular, when both spaces have dimension exponential in $n$, we achieve security against any sub-exponential time quantum adversary.
    Our construction assumes only two ingredients: an efficient implementation of a quantum-secure pseudorandom function on $\mathcal{H}_L$ and $\mathcal{H}_R$, and an efficient implementation of an approximate unitary 2-design on $\mathcal{H}$.
    In Appendix~\ref{sec: PRFs LTP}, we show that existing results allow one to generate quantum-secure pseudorandom functions in $\poly n$ depth on any local tensor product space.
    In Appendix~\ref{sec: designs LTP}, we introduce a new, nested application of the two-layer circuit construction from Ref.~\cite{schuster2024random} which allows one to generate unitary $k$-designs in $\poly(\log n)$ depth on any local tensor product space.
    Since we have shown that $\mathcal{M}_\lambda$ is a local tensor product space, these results prove that controlled PRUs with sub-exponential security can be implemented in $\poly n$ depth on each $\mathcal{M}_\lambda$.
    This completes our proof. \qed

\subsubsection{Pseudorandom functions on any local tensor product space} \label{sec: PRFs LTP}

In this brief section, we show that existing results imply that the first three components, $F$, $O_L$, and $O_R$, of our controlled PRU construction can be realized in $\poly n$ depth in any local tensor product space.
This extends known results for systems composed solely of qubits~\cite{ma2024construct,schuster2024random}.
Here, a local tensor product space is given by $\mathcal{H} = \bigotimes_{i=1}^n \mathbbm{C}^{q_i}$, for arbitrary constant-sized local qudit dimensions $q_i$.

\begin{fact} \label{fact: F}
    One can generate the operator $F$ in circuit depth $\poly n$ on any local tensor product space and any circuit geometry.
\end{fact}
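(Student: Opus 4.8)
The operator $F$ --- and, by the same recipe, the oracles $O_L$ and $O_R$ --- is built from a quantum-secure pseudorandom function (PRF): on a qubit register it implements a map such as the phase oracle $\ket{x}\mapsto \omega^{\mathrm{PRF}_k(x)}\ket{x}$ (or an XOR oracle $\ket{x}\ket{y}\mapsto\ket{x}\ket{y\oplus \mathrm{PRF}_k(x)}$ on an ancilla register), where $\omega$ is an appropriate root of unity and $\mathrm{PRF}_k$ is a keyed PRF. The plan is to (i) invoke the existence of quantum-secure PRFs whose evaluation is computed by classical circuits of size --- hence depth --- $\poly(n)$; (ii) turn such an evaluation circuit into the desired unitary by a reversible compilation together with a compute--phase--uncompute step; (iii) reduce a general local tensor product space $\mathcal{H}=\bigotimes_{i=1}^n \mathbb{C}^{q_i}$ to the all-qubit case by a constant-depth binary encoding of each qudit; and (iv) reduce an arbitrary circuit geometry to the all-to-all geometry using the routing lemmas of Ref.~\cite{schuster2024random}.

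For step (i) I would recall that, under the hardness of LWE~\cite{regev2009lattices}, quantum-secure one-way functions exist, and hence so do quantum-secure PRF families $\{\mathrm{PRF}_k : [N]\to[M]\}$ evaluable by deterministic Boolean circuits of size $\poly(\log N,\log M)=\poly(n)$~\cite{zhandry2016note,ma2024construct}; since a size-$s$ circuit has depth at most $s$, the evaluation has depth $\poly(n)$. For step (ii) I would make this circuit reversible in the standard way --- computing $\mathrm{PRF}_k(x)$ into a fresh $O(\log M)$-qubit register using $\poly(n)$ ancillas and $\poly(n)$ depth overhead --- then apply the phase $\ket{y}\mapsto\omega^{y}\ket{y}$ by a single layer of single-qubit phase gates acting on the binary digits of $y$, and finally uncompute the register and all ancillas, so that the net circuit is exactly the desired (diagonal, resp.\ XOR) oracle, with $\poly(n)$ depth on the all-to-all qubit architecture; $O_L$ and $O_R$ are obtained identically, restricted to the qubits encoding $\mathcal{H}_L$ and $\mathcal{H}_R$.

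For step (iii), since each $q_i=O(1)$ I would fix an injection $\iota_i:\{0,\dots,q_i-1\}\hookrightarrow\{0,1\}^{m_i}$ with $m_i=\lceil\log_2 q_i\rceil=O(1)$; the isometry $\bigotimes_i\iota_i$ embeds $\mathcal{H}$ into $m=\sum_i m_i=O(n)$ qubits and is realized by a constant-depth, geometrically local circuit. Defining the oracle through its action on encoded basis states reduces it to an $m$-qubit instance handled by steps (i)--(ii), and conjugating by the encoding isometry and its inverse costs only constant depth. For step (iv), the all-to-all circuits produced above can be emulated on any connected circuit geometry with at most a $\poly(n)$ multiplicative overhead in depth, via the routing constructions of Lemmas~9 and~10 of Ref.~\cite{schuster2024random}. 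Composing (i)--(iv) then yields $F$, $O_L$, and $O_R$ in circuit depth $\poly(n)$ on any local tensor product space and any circuit geometry.

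I expect the main point of care, rather than a genuine obstacle, to be the bookkeeping in step (ii): every ancilla introduced to evaluate $\mathrm{PRF}_k$ must be completely uncomputed so that the resulting operator is \emph{exactly} the intended oracle with no residual entanglement, and in step (iii) one must check that the encoded index set lies inside the PRF domain $[N]$ so that no out-of-range inputs occur. Both are routine relative to the all-qubit analyses of Refs.~\cite{ma2024construct,schuster2024random}; the only genuinely new element here is the constant-depth qudit-to-qubit encoding of step (iii), which is immediate precisely because the local dimensions are constant.
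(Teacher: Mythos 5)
Your proposal is correct and follows essentially the same route as the paper: the paper's proof simply cites a quantum-secure PRF (Zhandry) with domain $[2^\ell]$, invokes the $\poly(n)$-depth compilation on any circuit geometry from Ref.~\cite{schuster2024random}, and handles a general local tensor product space by choosing $\ell$ so that $2^\ell \geq D$ and restricting the function to the first $D$ basis labels. Your additional detail on the compute--phase--uncompute step and the constant-depth qudit-to-qubit encoding just makes explicit what the paper leaves to the cited compilation results.
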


\begin{proof}
Ref.~\cite{zhandry2021PRF} constructs a quantum-secure pseudorandom function $f: [D] \rightarrow \{0,1\}$ for $D = 2^\ell$.
The function can be compiled in circuit depth $\poly n$ on any circuit geometry~\cite{schuster2024random}.
Setting $\ell = \lceil D \rceil$ and restricting the action of the function to the first $D$ bits yields $F$.
\end{proof}

\begin{fact} \label{fact: OL OR}
    One can generate the operators $O_L, O_R$ in circuit depth $\poly n$ on any local tensor product space and any circuit geometry.
\end{fact}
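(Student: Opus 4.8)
\textit{Proof proposal.} The plan is to mirror the proof of Fact~\ref{fact: F}. Recall that $O_L$ and $O_R$ are the left and right random oracles appearing in the controlled PRU construction: $O_L$ acts on $\mathcal{H}_L \otimes (\text{ancilla})$ as $\ket{x}\ket{y}\mapsto\ket{x}\ket{y\oplus g_L(x)}$ for a quantum-secure pseudorandom function $g_L$ on $\mathcal{H}_L$, and symmetrically $O_R$ acts on $\mathcal{H}_R \otimes (\text{ancilla})$. (If instead $O_L,O_R$ are pseudorandom \emph{permutation} oracles $\ket{x}\mapsto\ket{\pi(x)}$, one uses a quantum-secure pseudorandom permutation, which exists assuming quantum-secure pseudorandom functions~\cite{zhandry2016note}; the argument below adapts using only the standard compute--uncompute inversion trick.)

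First I would reduce an arbitrary local tensor product space $\mathcal{H}_L = \bigotimes_i \mathbb{C}^{q_i}$ to a register of qubits, encoding each qudit into $\lceil \log_2 q_i \rceil$ qubits by a fixed constant-depth local isometry and restricting the oracle to the ``valid'' subspace $\bigotimes_i \{0,\dots,q_i-1\}$, exactly as in the proof of Fact~\ref{fact: F}. A pseudorandom function on $[2^\ell]$ with $\ell = \lceil \log_2 D_L \rceil$, restricted to the first $D_L$ inputs (an efficiently recognizable prefix, since $D_L = \prod_i q_i$ has a known binary expansion), supplies the pseudorandom function $g_L$ on $\mathcal{H}_L$, and likewise on the right with $D_R$. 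As in Fact~\ref{fact: F}, the pseudorandom function itself can be compiled in circuit depth $\poly(n)$ on any circuit geometry by Ref.~\cite{zhandry2021PRF} together with Ref.~\cite{schuster2024random}.

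Second I would assemble $O_L$ by the same three-step pattern used for $F$: evaluate $g_L(x)$ reversibly into a work register using the $\poly(n)$-size circuit for $g_L$, apply a single layer of CNOTs copying $g_L(x)$ onto the target register $\ket{y}$, and uncompute $g_L(x)$. Each step is a polynomial-size reversible computation and hence has $\poly(n)$ depth, and the XOR layer is depth one; the same construction yields $O_R$. Finally, to obtain the statement for an arbitrary circuit geometry, route the resulting $\poly(n)$-depth circuit onto the target architecture via Lemmas~9 and~10 of Ref.~\cite{schuster2024random}, at the cost of at most a $\poly(n)$ multiplicative overhead in depth.

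I do not expect a genuine obstacle here: the claim follows by the same argument as Fact~\ref{fact: F}, the only points of care being the embed-and-restrict handling of non-power-of-two local dimensions $q_i$, and --- should $O_L,O_R$ turn out to be genuine permutation oracles rather than function oracles --- the use of an efficiently invertible pseudorandom permutation so that the uncompute step remains within $\poly(n)$ depth. Both are standard.
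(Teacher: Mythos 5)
Your high-level route is the same as the paper's: invoke the quantum-secure PRF construction of Ref.~\cite{zhandry2021PRF}, compile it in $\poly(n)$ depth and route it onto an arbitrary geometry via Ref.~\cite{schuster2024random}, and realize the oracle by a compute--add--uncompute pattern. However, there is a gap in how you handle the non-power-of-two dimensions, and it sits exactly where the paper's one-line proof does its work. First, $O_L$ is not ``a PRF of $\mathcal{H}_L$ XORed onto an ancilla'': it is the Feistel-type round $S^c_L\ket{1,x_L,x_R}=\ket{1,\,x_L\oplus f_L(x_R),\,x_R}$ with $f_L:[D_R]\to[D_L]$, so the \emph{target} of the update is $\mathcal{H}_L$ itself, whose dimension $D_L=\prod_i q_i$ is generally not a power of two. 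Your embed-and-restrict step only fixes the \emph{domain}; with a PRF whose output lives in $\{0,1\}^{\lceil\log_2 D_L\rceil}$ and bitwise XOR, the map $x_L\mapsto x_L\oplus f_L(x_R)$ takes valid states of $\mathcal{H}_L$ out of the embedded subspace, so the operator is not well defined on $\mathcal{H}_L\otimes\mathcal{H}_R$; and even a patched version would break the property the security proof of Theorem~\ref{thm: cRLFC} actually uses, namely that $x_L\oplus f_L(y_R)$ is \emph{uniform over $[D_L]$} when $f_L(y_R)$ is. The paper's proof resolves this by using that Ref.~\cite{zhandry2021PRF} gives PRFs whose range is $[p]$ for an arbitrary $\poly(n)$-bit integer $p$, with $\oplus$ taken as the group operation (addition mod $D_L$, resp.\ $D_R$), implemented by reversible modular addition; with that replacement your compute--add--uncompute assembly and routing argument go through.

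A secondary point: your fallback ``if $O_L,O_R$ are permutation oracles, use a quantum-secure PRP via~\cite{zhandry2016note}'' would not rescue the statement in the regime it is needed. Known quantum-secure PRP constructions are for bit-string domains, which is precisely why the paper builds its symmetric PRUs from the LRFC ensemble (functions only) rather than the PFC ensemble on general local tensor product spaces. So the function-oracle reading is not merely the convenient case---it is the only one compatible with the paper's strategy, and the range-$[p]$ PRF is the ingredient that makes it work.
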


\begin{proof}
    Ref.~\cite{zhandry2021PRF} constructs a quantum-secure pseudorandom function $f_L$ for $D_L = 2^\ell$ and $D_R = p$, where $p$ is any $\poly n$-bit integer.
    The function can be compiled in circuit depth $\poly n$ on any circuit geometry~\cite{schuster2024random}.
    Setting $\ell = \lceil D_L \rceil$ and restricting the action of the function to the first $D_L$ bits yields $O_L$, and similar for $O_R$.
\end{proof}

\subsubsection{Unitary designs on any local tensor product space} \label{sec: designs LTP}

In this section, we show that unitary $k$-designs can be generated in $\poly(\log n)$ depth on any local tensor product space.
To do so, we introduce a \emph{nested} variant of the two-layer construction from Ref.~\cite{schuster2024random}, in which the small random unitaries of the two-layer circuit are themselves decomposed into smaller two-layer circuits.
This enables a stand-alone proof that unitary designs can be realized in extremely low depth, which does not rely on any existing results on the convergence of local random circuits to designs.
This is convenient for our purposes, since it avoids the need to perform a new spectral gap analysis for general local tensor product spaces.

Our main result is the following.
\begin{proposition}[Unitary designs on any local tensor product space]
    One can generate $\varepsilon$-approximate unitary $k$-designs in circuit depth $\mathcal{O}( \log(n) \cdot \log(1/\varepsilon) \cdot \log(nk/\varepsilon)^4 \cdot k^8 )$ on any local tensor product space and any circuit geometry. 
\end{proposition}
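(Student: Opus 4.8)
The plan is to construct $\varepsilon$-approximate unitary $k$-designs on an arbitrary local tensor product space $\mathcal{H} = \bigotimes_{i=1}^n \mathbb{C}^{q_i}$ by applying our two-layer gluing theorem \emph{recursively}, reducing the problem scale by scale to designs on small blocks of constant local dimension. The first point is that the gluing machinery is dimension-agnostic: the proofs of Lemma~\ref{lemma: AB BC to ABC app} and Theorem~\ref{thm:main-design}, specialized to the trivial symmetry group $G=\{e\}$ (equivalently, the gluing lemma of Ref.~\cite{schuster2024random}), refer to the subsystems $A,B,C$ only through their Hilbert-space dimensions $D_\alpha$, never through the number of qubits in $\alpha$. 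Replacing $D_\alpha = 2^{|\alpha|}$ by $D_\alpha = \prod_{i\in\alpha} q_i$ leaves every estimate in Eq.~(\ref{eq: sym gluing proof}) and Eq.~(\ref{eq: total error}) intact. This yields the building block: if one partitions $m$ consecutive qudits on a line into contiguous blocks of $b$ qudits, lays down the two-layer brickwork over those blocks, and implements each block-unitary as an $(\varepsilon/m)$-approximate $k$-design, then the result is an $\varepsilon$-approximate $k$-design on all $m$ qudits of depth $2$ times the block-unitary depth, provided the block dimension satisfies $\prod_{i\in\mathrm{block}} q_i \gtrsim m k^2/\varepsilon$. (We work in 1D throughout and convert to an arbitrary circuit geometry at the end using Lemmas~9--10 of Ref.~\cite{schuster2024random}.)

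Next I would set up the recursion. Writing $D(m,\varepsilon)$ for the depth of an $\varepsilon$-approximate $k$-design on $m$ such qudits, the building block with block size $b\approx\lceil\sqrt m\rceil$ gives $D(m,\varepsilon) \le 2\,D(\lceil\sqrt m\rceil, \varepsilon/m) + O(1)$, valid as long as $\sqrt m$ exceeds the floor $\xi_\ast = \Theta(\log(mk^2/\varepsilon_{\mathrm{cur}}))$ forced by the precondition, where $\varepsilon_{\mathrm{cur}}$ is the current target error. Since at each of the $O(\log\log n)$ levels the error divides by a factor at most $m_j \le n$, the accumulated product stays below $n^{2+o(1)}$, so every intermediate target error is at least $\varepsilon/n^{2+o(1)}$, the floor is uniformly $\xi_\ast = O(\log(nk/\varepsilon))$, and the block sizes shrink as $m_{j+1}\approx\sqrt{m_j}$ down to $\Theta(\log(nk/\varepsilon))$ within at most $\log_2\log_2 n$ levels. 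Unrolling the recursion then produces a prefactor $2^{\log_2\log_2 n}=O(\log n)$ multiplying the base-case depth, plus $O(\log n)$ additive depth from the per-level overheads, while the four multiplicative error terms of Lemma~\ref{lemma: AB BC to ABC app}, compounded over all levels and gluing steps, sum to $O(\varepsilon)$ by an iterated-product estimate analogous to Eqs.~(\ref{eq: total error})--(\ref{eq: total error 2}).

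It then remains to handle the base case: an $\varepsilon'$-approximate $k$-design, with $\varepsilon' \ge \varepsilon/n^{2+o(1)}$, on a block of $m_\ast = \Theta(\log(nk/\varepsilon))$ qudits all of whose local dimensions are bounded by a constant (in the target application the relevant space $\mathcal{M}_\lambda$ has local dimensions among $|G|, d', d_\lambda$, each $O(1)$). On such a block I would use a direct construction — for instance a depth-$T$ random circuit of nearest-neighbor $2$-local gates, each drawn from an explicitly constructible $2$-design on the relevant constant-dimensional pair of qudits — whose convergence to an $\varepsilon'$-approximate $k$-design at $T = \poly(k)\cdot\poly(\log(m_\ast k/\varepsilon')) = \poly(k)\cdot\poly(\log(nk/\varepsilon))$ follows from a standard moment-operator (spectral-gap) analysis; alternatively one reduces this block to qubits and invokes the qubit construction of Ref.~\cite{schuster2024random}. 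The essential point is that this argument is needed only for one fixed block structure, not uniformly across all sizes and all non-uniform local-dimension patterns — which is exactly what the nesting buys us. Combining everything, the total depth is $O(\log n)\cdot\poly(k)\cdot\poly(\log(nk/\varepsilon))$, and tracking the constants and exponents in the base-case construction yields the stated bound $\mathcal{O}\big(\log n\cdot\log(1/\varepsilon)\cdot\log(nk/\varepsilon)^4\cdot k^8\big)$.

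The hard part is the bookkeeping rather than any single inequality. One must choose the schedule of block sizes $m_j$ and error targets $\varepsilon_j$ so that the precondition $\prod_{i\in\mathrm{block}}q_i \gtrsim m_j k^2/\varepsilon_j$ holds at \emph{every} recursive level while the recursion still terminates in $O(\log\log n)$ steps — the tension being that the $\sqrt{\cdot}$-shrink must switch gracefully to the floor value $\Theta(\log(nk/\varepsilon))$ once $m_j$ gets close to it, without ever violating the precondition. One must also confirm that the compounded multiplicative errors stay $O(\varepsilon)$ and the additive per-level overheads stay $O(\log n)$, not larger. Finally, the base case must be handled cleanly for genuinely non-uniform constant local dimensions — placing the $2$-local gates and their $2$-designs consistently along a chain of qudits of varying dimension, and checking that the qudit dimensions arising in the application are indeed $O(1)$ so that the base case is a bona fide fixed-size problem and not a disguised copy of the original one.
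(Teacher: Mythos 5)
Your recursive square-root nesting is a genuinely different route from the paper, and the top of the recursion is fine: the gluing theorem is indeed dimension-agnostic, and the depth/error bookkeeping over $O(\log\log n)$ levels would give a prefactor $O(\log n)$ with intermediate errors no worse than $\varepsilon/n^{O(1)}$. The problem is the base case, and it is a genuine gap rather than bookkeeping. Because at every level you demand the full current target error $\varepsilon_j/m_j$ from the block unitaries, the gluing precondition forces the recursion to stall at blocks of $\Theta(\log(nk/\varepsilon))$ qudits, i.e.\ Hilbert-space dimension $\mathrm{poly}(nk/\varepsilon)$ — far too large to compile a Haar-random unitary directly in polylogarithmic depth. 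You then need a bona fide low-depth design construction on a chain of qudits with \emph{non-uniform} constant local dimensions, and both of your proposed routes fail: (i) there is no ``standard'' moment-operator/spectral-gap result for non-uniform local dimensions — supplying one is exactly the new analysis this proposition is designed to avoid (and, incidentally, the standard 1D results give depth linear in the block length, not $\mathrm{poly}\log(m_\ast)$, so your quoted $T$ is also misstated, though that slip alone would not break the final bound); (ii) ``reduce the block to qubits'' is not available, since $\prod_i q_i$ is generally not a power of two, there is no tensor factorization into qubits, and embedding into a larger qubit space does not yield a design on the original space.

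The paper closes this gap with a different idea that your scheme is missing: decouple the patch size from the final error by aiming only for \emph{constant} relative error at each nesting level and then amplifying by repetition, using the fact that composing a relative-error-$\varepsilon$ design with itself gives relative error $\le 2\varepsilon^2$ (Fact~\ref{fact: rel error square}). Concretely, it uses just two levels of nesting: outer patches of dimension $2^\xi$ with $\xi = \Theta(\log(nk^2/\varepsilon))$, inner patches of dimension only $2^\zeta = \Theta(\xi k^2)$ — polylogarithmic, because the inner two-layer circuit only needs constant relative error — so the bottom-level unitaries can be taken exactly Haar-random and compiled by brute force in depth $O(2^{4\zeta}) = \mathrm{poly}(k)\,\mathrm{poly}\log(nk/\varepsilon)$; repeating the inner circuit $O(\log n)$ times and the outer circuit $O(\log(1/\varepsilon))$ times then yields the stated depth, with no spectral-gap input and no qubit reduction anywhere. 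Without an analogue of this repetition/amplification step (or a proved base-case design for non-uniform qudits), your recursion does not terminate in a construction you can actually instantiate.
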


\begin{proof}
    We arrange the $n$ qudits into a 1D line.
    We then divide the qubits into $n/\xi$ contiguous patches, so that each patch has Hilbert space dimension at least $2^\xi$ for a chosen $\xi$.
    We then sub-divide each patch into $\xi / \zeta$ contiguous smaller patches, so that each smaller patch has Hilbert space dimension at least $2^\zeta$ for some $\zeta < \xi < n$.
    We consider a two-layer circuit on the large patches, repeated $\beta$ times.
    Each small random unitary of the two-layer circuit is generated, in turn, by a repeated two-layer circuit on the smaller patches, repeated $\alpha$ times.
    Each small random unitary in the small two-layer circuits is taken to be Haar-random.
    The construction has circuit depth $\mathcal{O}(\alpha \cdot \beta \cdot 2^{4\zeta})$, since a Haar-random unitary on a Hilbert space of dimension $\mathcal{O}(2^{2\zeta})$ can be compiled in circuit depth $\mathcal{O}(2^{4\zeta})$.

    To determine the required values of $\alpha$, $\beta$, $\xi$, $\zeta$, let us begin by analyzing the smaller repeated two-layer circuits.
    From Theorem~1 of Ref.~\cite{schuster2024random}, each small two-layer circuit within each patch forms a unitary $k$-design with relative error $\varepsilon'_0 \leq \xi k^2/2^\zeta$.
    Let us set $\zeta \geq \log_2(3 \xi k^2)$ so that $\varepsilon'_0 \leq 1/3$.
    From Fact~\ref{fact: rel error square} below, this implies that the repeated small two-layer circuit within each patch has relative error less than $2^{\alpha-1} \varepsilon_0'^\alpha \leq (2/3)^\alpha$.
    We can then set $\alpha \geq \log_{3/2}(1/\varepsilon')$ to obtain a $\varepsilon'$-approximate unitary $k$-design on each patch.
    %

    We can now apply Theorem~1 of Ref.~\cite{schuster2024random} again, to analyze the larger two-layer circuits.
    To do so, we set $\varepsilon' \leq \varepsilon_0/n$ and $\xi \geq \log_2(n k^2/\varepsilon)$.
    This guarantees that the nested two-layer circuit is an $\varepsilon_0$-approximate unitary $k$-design on all $n$ qubits.
    Similar to the previous paragraph, let us set $\varepsilon_0 \leq 1/3$.
    From Fact~\ref{fact: rel error square}, the repeated large two-layer circuit has relative error less than $2^{\beta-1} \varepsilon_0^\beta \leq (2/3)^\beta$.
    We can then set $\beta \geq \log_{3/2}(1/\varepsilon)$ to obtain a $\varepsilon$-approximate unitary $k$-design on all $n$ qubits.
    In total, our design has circuit depth $\mathcal{O}(\alpha \cdot \beta \cdot 2^{4\zeta}) = \mathcal{O}( \log(1/\varepsilon') \cdot \log(1/\varepsilon) \cdot (3\xi k^2)^4 ) = \mathcal{O}( \log(n) \cdot \log(1/\varepsilon) \cdot \log(nk/\varepsilon)^4 \cdot k^8 )$.
    This completes the proof.
\end{proof}

Our proof above uses the following fact.
The fact allows one to exponentially suppress the relative error of any unitary design by repeatedly applying the design several times in a row.

\begin{fact} \label{fact: rel error square}
    If $\Phi_{\mathcal{E}}$ has relative error $\varepsilon$, then $\Phi_{\mathcal{E}} \circ \Phi_{\mathcal{E}}$ has relative error less than $2 \varepsilon^2$.
\end{fact}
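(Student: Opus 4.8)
The plan is to use two elementary and standard properties of the Haar twirl $\Phi_H$ (the symmetric Haar twirl $\Phi^G_H$ when a symmetry is present, and the ordinary Haar twirl on the relevant local tensor product space otherwise): idempotence, $\Phi_H \circ \Phi_H = \Phi_H$, and absorption, $\Phi_H \circ \Phi_{\mathcal{E}} = \Phi_{\mathcal{E}} \circ \Phi_H = \Phi_H$. The absorption property follows from left/right invariance of the Haar measure over the appropriate unitary group, since for any element $U$ of that group one has $\E_{V\sim H}\!\left[ V^{\otimes k} U^{\otimes k} A U^{\dagger, \otimes k} V^{\dagger, \otimes k} \right] = \E_{W\sim H}\!\left[ W^{\otimes k} A W^{\dagger, \otimes k}\right]$, and the ensemble $\mathcal{E}$ consists of such elements. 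Writing $\Delta \equiv \Phi_{\mathcal{E}} - \Phi_H$, these identities give $\Phi_H \circ \Delta = \Delta \circ \Phi_H = 0$, and therefore
\begin{equation}
    \Phi_{\mathcal{E}} \circ \Phi_{\mathcal{E}} = (\Phi_H + \Delta) \circ (\Phi_H + \Delta) = \Phi_H + \Delta \circ \Delta ,
\end{equation}
so the task reduces to bounding $\Delta \circ \Delta$ relative to $\Phi_H$.

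Next I would unpack the hypothesis: ``$\Phi_{\mathcal{E}}$ has relative error $\varepsilon$'' means exactly that the superoperators $\varepsilon \Phi_H + \Delta$ and $\varepsilon \Phi_H - \Delta$ are both completely positive. The key step is to compose these two CP maps and cancel cross terms using $\Phi_H \circ \Phi_H = \Phi_H$ and $\Phi_H \circ \Delta = \Delta \circ \Phi_H = 0$, obtaining
\begin{align}
    (\varepsilon \Phi_H + \Delta) \circ (\varepsilon \Phi_H + \Delta) &= \varepsilon^2 \Phi_H + \Delta \circ \Delta , \\
    (\varepsilon \Phi_H + \Delta) \circ (\varepsilon \Phi_H - \Delta) &= \varepsilon^2 \Phi_H - \Delta \circ \Delta .
\end{align}
Both left-hand sides are compositions of completely positive maps, hence completely positive, so $-\varepsilon^2 \Phi_H \preceq \Delta \circ \Delta \preceq \varepsilon^2 \Phi_H$. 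Combined with the display above, this yields $(1-\varepsilon^2)\Phi_H \preceq \Phi_{\mathcal{E}} \circ \Phi_{\mathcal{E}} \preceq (1+\varepsilon^2)\Phi_H$, i.e.\ $\Phi_{\mathcal{E}} \circ \Phi_{\mathcal{E}}$ has relative error at most $\varepsilon^2$, which is the claimed bound with room to spare.

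I do not expect a genuine obstacle here; the argument is a two-line superoperator computation. The only points requiring care are (i) justifying the absorption identity $\Phi_H \circ \Phi_{\mathcal{E}} = \Phi_H$ from invariance of the Haar measure over the appropriate group (the full unitary group for local tensor product spaces, the group of symmetric unitaries in the symmetric setting), and (ii) recalling that completely positive maps are closed under composition, which is what makes the composed ``error maps'' above completely positive. Everything else is cross-term bookkeeping, and in fact the computation delivers the sharper constant $\varepsilon^2$ rather than $2\varepsilon^2$.
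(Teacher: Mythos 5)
Your proof is correct, and it takes a genuinely different route from the paper for the key step. Both arguments start the same way: write $\Phi_{\mathcal{E}} = \Phi_H + \delta\Phi$, use absorption and idempotence ($\Phi_H \circ \Phi_{\mathcal{E}} = \Phi_{\mathcal{E}} \circ \Phi_H = \Phi_H \circ \Phi_H = \Phi_H$, which indeed requires the ensemble to consist of unitaries from the relevant group, an assumption the paper also invokes implicitly via ``$\Phi_H \circ \delta\Phi = \delta\Phi \circ \Phi_H = 0$ by definition'') to reduce the claim to bounding $\delta\Phi \circ \delta\Phi$ relative to $\Phi_H$. The paper then works at the level of output operators: it splits $\delta\Phi(\rho) = \rho_+ - \rho_-$ into positive and negative parts, applies the relative-error hypothesis to each part separately, and pays a factor of $2$, yielding $2\varepsilon^2$; this route requires the somewhat delicate intermediate bound $0 \preceq \rho_\pm \preceq \varepsilon\,\Phi_H(\rho)$. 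You instead stay entirely at the superoperator level: since $\varepsilon\Phi_H + \delta\Phi$ and $\varepsilon\Phi_H - \delta\Phi$ are completely positive by hypothesis and CP maps are closed under composition, the identities $(\varepsilon\Phi_H + \delta\Phi)\circ(\varepsilon\Phi_H \pm \delta\Phi) = \varepsilon^2\Phi_H \pm \delta\Phi\circ\delta\Phi$ (cross terms vanish) immediately give $-\varepsilon^2\Phi_H \preceq \delta\Phi\circ\delta\Phi \preceq \varepsilon^2\Phi_H$. This avoids the positive/negative-part manipulation altogether and delivers the sharper constant $\varepsilon^2$ in place of $2\varepsilon^2$ (which would marginally improve the iterated bounds downstream, e.g.\ $2^{\alpha-1}\varepsilon_0^\alpha \to \varepsilon_0^\alpha$, though nothing in the paper's conclusions depends on this).
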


\begin{proof}
    We write $\Phi_{\mathcal{E}} = \Phi_H + \delta \Phi$, where $\Phi_H \circ \delta \Phi = \delta \Phi \circ \Phi_H = 0$ by definition.
    Squaring gives, $\Phi_{\mathcal{E}} \circ \Phi_{\mathcal{E}} = \Phi_H + \delta \Phi \circ \delta \Phi$.
    By assumption, we have $-\varepsilon \Phi_H(\rho) \preceq \delta \Phi(\rho) \preceq \varepsilon \Phi_H(\rho)$ for any $\rho$.
    We can break $\delta \Phi(\rho) =  \rho_+ -  \rho_-$ into a positive and negative part, $0 \preceq  \rho_{\pm} \preceq \varepsilon \Phi_H(\rho)$.
    We then have $\delta \Phi( \delta \Phi (\rho) ) = \delta \Phi(\rho_+ ) -  \delta \Phi( \rho_-)$.
    Moreover, since $\Phi_{\mathcal{E}}$ has relative error $\varepsilon$, we can bound $\delta \Phi( \rho_\pm ) \preceq \varepsilon \Phi_H(\rho_\pm)$.
    Combining these two inequalities, we have
    \begin{equation}
        - 2 \varepsilon^2 \Phi_H(\rho) \preceq - \varepsilon \Phi_H(\rho_+) - \varepsilon \Phi_H(\rho_-) \preceq \delta \Phi ( \delta \Phi (\rho) ) \preceq \varepsilon \Phi_H(\rho_+) + \varepsilon \Phi_H(\rho_-) \preceq 2 \varepsilon^2 \Phi_H(\rho),
    \end{equation}
    which completes the proof.
\end{proof}

\subsubsection{Proof of Proposition~\ref{prop: compile sym}: Compiling with symmetric geometrically 5-local gates} \label{sec: compiling constant local}

Let us begin by recalling a few facts about discrete on-site Abelian symmetries.
Any finite Abelian group is isomorphic to a tensor sum of cyclic groups,
\begin{equation}
    G \cong \mathbb{Z}_{p_1} \otimes \mathbb{Z}_{p_2} \otimes \cdots \otimes \mathbb{Z}_{p_l},
\end{equation}
for positive numbers $p_1, \ldots, p_l$.
Each irreducible representation $x$ of an Abelian group has dimension $d_x = 1$.
This implies that each irrep $x$ appears once in the regular representation, so we can write $\mathcal{F}_{\text{reg}} = \{ \ket{x} \}$.
It also implies that for any fixed irrep, the action of any group element amounts to an overall phase.
The label $x$ of each irrep is referred to as its \emph{charge},
\begin{equation}
    x \in \mathbb{Z}_{p_1} \otimes \mathbb{Z}_{p_2} \otimes \cdots \otimes \mathbb{Z}_{p_l},
\end{equation}
and determines the overall phase accrued when a symmetry group element is applied, $R_g \ket{x} = \prod_{j=1}^l e^{i (2\pi/p_j) x_j g_j} \ket{x}$. 
From this definition, it is clear that the charges of irreps add under the tensor product, i.e.~the state $\ket{x_1} \otimes \ket{x_2}$ has charge $x_1 \oplus x_2$, where $\oplus$ denotes addition in the group $G$.

Let us now turn to compilation of our symmetric random unitary ensemble.
We note two facts.
First, the unitary $V$ is equal to the identity for an Abelian group, since the regular representation is already a tensor sum with each irrep appearing once.
Second, we can take the unitary $W$ to be a generalized CNOT ladder.
Namely, we let $W'_{i,i+1} (\ket{ y_i } \otimes \ket{x_{i+1}}) = \ket{y_i} \otimes \ket{x_{i+1} \oplus y_i}$ for each $i$.
After $n$ applications, one finds $W \ket{x_1 \ldots x_n} = \ket{ y_1 \ldots y_n }$,
where $y_i = x_1 \oplus x_2 \oplus \cdots \oplus x_i$ is equal to the charge of all qudits to the left of $i$ (including $i$ itself).
The $n$-th qudit contains the charge $y_n$ of the entire system, as required by the definition of $W$.

To implement our symmetric unitary ensemble, $U = W^\dagger ( \prod_{y_n} U^c_{y_n} ) W$, we first decompose each controlled PRU, $U^c_{y_n}$, as a product of controlled unitary gates of the form,
\begin{equation} \label{eq: control V}
    \dyad{y_n} \otimes V_{i,i+1} + \big(\mathbbm{1}-\dyad{y_n} \big) \otimes \mathbbm{1}_{i,i+1}.
\end{equation}
Here, the projector $\dyad{y_n}$ acts on the $n$-th qudit (or any register that stores the total charge $y_n$), and $V_{i,i+1}$ acts on qudits $i,i+1$.
We can assume that each $V_{i,i+1}$ is a two-qudit geometrically-local gate since such gates are universal for quantum computation.

To proceed, we consider the conjugation of the gate $V_{i,i+1}$ by the final $n-i+1$ layers of the circuit $W$.
That is,
\begin{equation}
    \tilde{V}_{i,i+1} \equiv W^\dagger_{i-1,i} \ldots W^\dagger_{n-1,n} V_{i,i+1} W_{n-1,n} \ldots W_{i-1,i}.
\end{equation}
We let $y_j$ denote the value of qudit $j$ after the $W$ gates are applied (i.e.~where $V_{i,i+1}$ acts), and $x_j$ the value before the $W$ gates are applied.
We note three properties.
First, the light-cone of the $W$ gates implies that $\tilde{V}_{i,i+1}$ can act at most on qudits $\{i-1,i,i+1,i+2\}$.
Second, the qudit $\{ i-1 \}$ must function only as a control qudit for an operation on the remaining $\{i,i+1,i+2\}$.
That is, we can write $\tilde{V}_{i,i+1} = \sum_{y_i} \dyad{y_{i-1}} \otimes \tilde{V}^{y_i}_{i,i+1,i+2}$.
This follows because the value of $y_{i-1}$ is not changed by the gate $W_{i-1,i}$ nor the unitary $V_{i,i+1}$ (since $V_{i,i+1}$ acts only on $y_i$ and $y_{i+1}$ and not $y_{i-1}$).
Third, the controlled operations $\tilde{V}^{y_i}_{i,i+1,i+2}$ are symmetric on $\{i,i+1,i+2\}$.
This follows because $V_{i,i+1}$ cannot change the value of $y_{i+2} = y_{i-1} \oplus x_i \oplus x_{i+1} \oplus x_{i+2}$.
Since $V_{i,i+1}$ also cannot change the value of $y_{i-1}$, it must not change the charge, $x_i \oplus x_{i+1} \oplus x_{i+2}$, on $\{ i,i+1,i+2 \}$.
This implies that $\tilde{V}^{y_i}_{i,i+1,i+2}$ is symmetric.

With these three properties established, our compilation of $U$ is as follows.
We copy the charge $y_n$ of all qudits to a first ancilla register.
We then iterate over all charges $y_n$ and all gates in $U^c_{y_n}$ [Eq.~(\ref{eq: control V})].
For each gate, we first copy the charge $y_{i-1}$ to a second ancilla register.
We then perform the gate $\tilde{V}_{i,i+1}$ as a controlled gate from the first and second ancilla register to qudits $\{ i,i+1,i+2 \}$.
From the discussion above, this is a symmetric gate acting on three geometrically-local system qudits and two ancilla qudits.
We then uncompute the charge $y_{i-1}$ and repeat for all charges $y_n$ and all gates in $U^c_{y_n}$.
After all gates are completed, we uncompute the charge $y_n$.
This completes our compilation of $U$.
Our compilation has a factor of $n$ greater depth compared to the original compilation, due to the need to compute $y_{i-1}$ for every gate.
Hence, if the original circuit depth is $\poly n$, our symmetric compilation also has circuit depth $n \cdot \poly n = \poly n$.

In the above discussion, we neglected to notate the ancilla register $\mathcal{A}$ to reduce notation.
One can easily see that the analysis is unchanged by the addition of new registers that are not acted on by the symmetry group, since these registers are also not acted on by any of the $W_{i,i+1}$. \qed

\subsection{Proof of Theorem~\ref{thm:polylog}: Symmetric PRUs in poly-logarithmic depth} \label{sec: sym PRUs low depth}

The combination of our gluing theorem (Theorem~\ref{thm:main-design}) and our symmetric pseudorandom unitary construction (Theorem~\ref{thm: sym PRU}) immediately implies that symmetric pseudorandom unitaries can be generated in extremely low circuit depths.
This is formalized in Theorem~\ref{thm: low depth PRU} and Corollary~\ref{cor: low depth PRU} below, which together prove Theorem~\ref{thm:polylog} of the main text.
Our proofs of both statements are extremely short building upon our earlier results.

\begin{theorem}[Gluing small symmetric pseudorandom unitaries] \label{thm: low depth PRU}
For any discrete on-site symmetry group $G$.
Let $n$ be the number of qudits in the whole system and $\xi = \omega(\log n)$ be the number of qudits in each local patch.
Suppose each small random unitary in the two-layer brickwork ensemble $\mathcal{E}$ is a $2\xi$-qubit symmetric PRU secure against $\text{\emph{poly}}(n)$-time adversaries.
Then $\mathcal{E}$ is an $n$-qubit symmetric PRU secure against $\text{\emph{poly}}(n)$-time adversaries.
\end{theorem}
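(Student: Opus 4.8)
## Proof proposal for Theorem~\ref{thm: low depth PRU}

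The plan is to reduce the statement to a hybrid argument combining the approximate design guarantee from Theorem~\ref{thm:main-design} with the PRU security of each small patch. First I would fix a $\poly(n)$-time adversary $\mathcal{A}$ making $q = \poly(n)$ queries, and set $k = q$ so that query-bounded indistinguishability is controlled by the $k$-th moment (using the standard fact that a $q$-query distinguisher's advantage against two ensembles is bounded by a quantity governed by their $\varepsilon$-approximate $k$-design agreement for $k = q$; cf.~the argument in Ref.~\cite{schuster2024random} that a low-depth $k$-design for $k = \poly(n)$ is already a $\poly(n)$-query-indistinguishable object once combined with pseudorandomness at the patch level). The target reference ensemble is the $n$-qudit symmetric Haar ensemble $H_G$.

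The key steps, in order. (1) \emph{Patch replacement.} Introduce a hybrid $\mathcal{E}'$ in which every small $2\xi$-qudit symmetric PRU in the two-layer brickwork is replaced by a $2\xi$-qudit \emph{symmetric Haar} unitary. Since there are $m = n/\xi = \poly(n)$ small unitaries, a standard hybrid over patches shows that $|\Pr[\mathcal{A}^{\mathcal{E}} = 1] - \Pr[\mathcal{A}^{\mathcal{E}'} = 1]| \le m \cdot \mathrm{negl}(n) = \mathrm{negl}(n)$; each individual replacement is justified by the $\poly(n)$-time security of that small symmetric PRU, where the rest of the circuit (the other patches, drawn from efficiently samplable ensembles, plus the adversary) is absorbed into an efficient distinguisher against the one patch being swapped. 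One must check that the reduction is efficient: the other small unitaries in $\mathcal{E}$ are themselves $\poly(n)$-time implementable by the efficient-computation clause of Definition~\ref{def: PRU-t(lambda)}, and the small Haar unitaries on $2\xi$ qudits can be replaced first by $\varepsilon$-approximate $2\xi$-qudit symmetric $k$-designs of $\poly(2\xi)$ depth for negligible $\varepsilon$ (using Theorem~\ref{thm:main-design} recursively or a direct construction), so that the surrounding circuit is genuinely efficient. (2) \emph{Design guarantee on the glued ensemble.} By Theorem~\ref{thm:main-design}, with $\xi = \omega(\log n)$ chosen so that $\xi \ge \log_{|G|}(n k^2 |G|/\varepsilon)$ for $k = \poly(n)$ and $\varepsilon = \mathrm{negl}(n)$ (possible since $\xi = \omega(\log n)$ beats $\log(nk^2|G|/\varepsilon) = \poly(\log n)$ for suitable negligible $\varepsilon$), the ensemble $\mathcal{E}'$ is an $\varepsilon$-approximate symmetric unitary $k$-design on all $n$ qudits. (3) \emph{From design to query-indistinguishability.} A $q$-query adversary with $q \le k$ cannot distinguish an $\varepsilon$-approximate $k$-design from the exact symmetric Haar ensemble with advantage more than $O(\varepsilon)$ (the relative-error formulation of Definition~\ref{def: strong unitary design} directly bounds the diamond-norm-type distance seen by any such adversary). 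Hence $|\Pr[\mathcal{A}^{\mathcal{E}'} = 1] - \Pr_{U \sim H_G}[\mathcal{A}^U = 1]| \le \mathrm{negl}(n)$. (4) \emph{Triangle inequality.} Combining (1) and (3) gives $|\Pr[\mathcal{A}^{\mathcal{E}} = 1] - \Pr_{U \sim H_G}[\mathcal{A}^U = 1]| \le \mathrm{negl}(n)$, and the efficient-computation clause for $\mathcal{E}$ holds because the two-layer brickwork of $\poly(n)$-depth small symmetric PRUs is itself $\poly(n)$-depth. This establishes that $\mathcal{E}$ is an $n$-qudit symmetric PRU secure against $\poly(n)$-time adversaries.

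The main obstacle I expect is step (1) done carefully: ensuring the hybrid reduction against each small symmetric PRU is itself a legitimate \emph{efficient} quantum experiment. The subtlety is that the adversary queries the \emph{global} unitary $\mathcal{E}$ an unbounded (but $\poly(n)$-bounded-time) number of times, so the simulator must, for each query, implement the full brickwork with one patch replaced by the challenge oracle and all other patches freshly sampled and implemented efficiently and \emph{consistently across queries} (the same realization of the other patches must be used for every query, which requires sampling their keys once up front — fine, since $\mathcal{E}$ is key-indexed per Definition~\ref{def: PRU-t(lambda)}). A second, more minor subtlety is the quantitative bookkeeping tying $\xi = \omega(\log n)$ to the requirement $\xi \ge \log_{|G|}(nk^2|G|/\varepsilon)$ simultaneously for some negligible $\varepsilon$ and $k = \poly(n)$: since $\omega(\log n)$ asymptotically dominates $C \log n$ for every constant $C$, for each fixed polynomial $k(n)$ and each fixed inverse-polynomial $\varepsilon(n)$ the inequality holds for large $n$, and a standard diagonalization over polynomials yields a single negligible bound — this is routine but should be stated explicitly.
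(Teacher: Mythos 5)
Your proposal is correct and is essentially the paper's own argument: the paper proves this theorem by citing the proof of Theorem~2 of Ref.~\cite{schuster2024random} verbatim, with the non-symmetric gluing theorem replaced by Theorem~\ref{thm:main-design}, and that proof is exactly your hybrid-over-patches reduction (including the point that Haar patches must be simulated by efficiently implementable approximate symmetric designs for the reduction to be efficient) combined with the design-to-query-indistinguishability step. Your quantitative bookkeeping relating $\xi = \omega(\log n)$ to $\xi \geq \log_{|G|}(nk^2|G|/\varepsilon)$ for $k=\poly(n)$ and negligible $\varepsilon$ also matches what the cited argument needs.
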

\begin{proof}
    Our proof is identical to the proof of Theorem~2 in Ref.~\cite{schuster2024random}.
    The only change is to replace Theorem~1 in Ref.~\cite{schuster2024random}, which applies to random unitaries without symmetries, with Theorem~\ref{thm:main-design} in our work, which applies to random unitaries with symmetries.
\end{proof}

\begin{corollary}[Low-depth pseudorandom unitaries] \label{cor: low depth PRU}
For any discrete on-site symmetry group $G$.
Under the conjecture that no subexponential-time quantum algorithm can solve LWE, random quantum circuits over $n$ qubits can form symmetric PRUs secure against any polynomial-time quantum adversary in circuit depth
$d = \text{\emph{poly}} \log n$, on any circuit geometry.
\end{corollary}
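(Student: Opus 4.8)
The plan is to instantiate the gluing construction of Theorem~\ref{thm: low depth PRU} using the polynomial-depth symmetric PRUs of Theorem~\ref{thm:poly} as the small building blocks. First I would fix the local patch size to be any $\xi = \poly\log n$ with $\xi = \omega(\log n)$; for concreteness $\xi = \lceil \log^2 n \rceil$ works. Each patch of the two-layer brickwork ensemble $\mathcal{E}$ acts on $2\xi$ qudits. By Theorem~\ref{thm:poly}, under the conjectured subexponential hardness of LWE, there is a symmetric PRU on these $2\xi$ qudits, secure against every $\exp(o(\xi))$-time quantum adversary, that can be compiled in circuit depth $\poly(2\xi) = \poly\log n$ on any architecture (and, when $G$ is Abelian, from individually symmetric geometrically $5$-local gates by Proposition~\ref{prop: compile sym}).

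Next I would check that this subexponential-in-$\xi$ security suffices to meet the hypothesis of Theorem~\ref{thm: low depth PRU}, which requires each small unitary to be a $2\xi$-qubit symmetric PRU secure against $\poly(n)$-time adversaries. Since $\xi = \omega(\log n)$, choosing the subexponential function appropriately (e.g.\ $\exp(\xi/\log\xi)$) gives $\exp(o(\xi)) = n^{\omega(1)}$, so the small PRUs are indeed secure against all $\poly(n)$-time adversaries. Applying Theorem~\ref{thm: low depth PRU} then yields that $\mathcal{E}$ is an $n$-qudit symmetric PRU secure against any $\poly(n)$-time quantum adversary.

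It remains to account for the depth and the circuit geometry. The two-layer brickwork is a composition of two layers of the small PRUs, so its depth is $2d$ with $d = \poly\log n$; hence the total depth is $\poly\log n$. The construction is stated for a $1$D line, but as in Ref.~\cite{schuster2024random} (Lemmas~9 and~10) the $1$D brickwork embeds into any circuit geometry with only constant-factor depth overhead, which preserves symmetry of the gates. This establishes the corollary, which is exactly Theorem~\ref{thm:polylog} of the main text restated for $\poly\log n$ depth.

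The step I expect to be the main obstacle is the security bookkeeping in the second paragraph: one must confirm that the polynomial-size reductions internal to Theorem~\ref{thm: low depth PRU} — which turn a distinguisher against $\mathcal{E}$ into distinguishers against the small PRUs (and against the exact symmetric $k$-design supplied by the gluing theorem, via Theorem~\ref{thm:main-design}) — combine with the $\exp(o(\xi))$ security of the small PRUs to give negligible advantage against every $\poly(n)$-time adversary once $\xi = \omega(\log n)$. The remaining steps are direct applications of earlier results and routine depth counting.
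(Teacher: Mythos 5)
Your proposal is correct and follows essentially the same route as the paper: the paper's proof likewise instantiates the two-layer gluing result (Theorem~\ref{thm: low depth PRU}) with the polynomial-depth symmetric PRUs of Theorem~\ref{thm: sym PRU} on patches of size $\xi = \poly(\log n)$. Your added bookkeeping on the $\exp(o(\xi))$-versus-$\poly(n)$ security and the geometry embedding is sound and simply makes explicit what the paper leaves implicit.
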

\begin{proof}
    The corollary follows immediately from Theorem~\ref{thm: sym PRU} and Theorem~\ref{thm: low depth PRU}.
    Here, we take the number of qudits in Theorem~\ref{thm: sym PRU} to be $\xi = \poly(\log n)$.
\end{proof}

\section{Controlled pseudorandom unitaries} \label{sec: controlled PRUs}

In this section, we take a brief detour from our analysis of symmetric systems to prove the existence of controlled PRUs.
Controlled PRUs are unitary ensembles that are indistiguishable from controlled Haar-random unitaries by any polynomial-time quantum adversary.
They are also central to our construction of symmetric PRUs in the previous section.
Nonetheless, to this point, the existence of a controlled PRU ensemble is not known in the literature.
In what follows, we analyze controlled variants of the PFC and LRFC PRU ensembles, and prove that both controlled ensembles are controlled PRUs.
Our analysis requires multiple technical innovations compared to existing PRU proofs to handle the control register.

\subsection{Summary of results}

Let us first define several operators and notations.
We consider a Hilbert space $\mathcal{H} = \mathcal{H}_{\mathsf{C}} \otimes \mathcal{H}_{\mathsf{S}}$ where $\mathcal{H}_{\mathsf{C}}$ is a single-qubit control register and $\mathcal{H}_{\mathsf{S}}$ is a system register of dimension $D$.
The Hilbert space is spanned by the computational basis states, $\{ \ket{t, x} \, | \, t \in \{0,1\}, x \in [D] \}$.
We let $F^c$ denote a quantum-secure controlled pseudorandom function (PRF) acting on this Hilbert space in the phase representation,
\begin{equation}
    F^c \ket{0, x} = \ket{0, x}, \quad \quad \quad  F^c \ket{1, x} = (-1)^{f(x)} \ket{1, x},
\end{equation}
where $f: [D] \rightarrow \{0,1\}$.
We let $P^c$ denote a quantum-secure controlled pseudorandom permutation (PRP), 
\begin{equation}
    P^c \ket{0, x} = \ket{0, x}, \quad \quad \quad  P^c \ket{1, x} = \ket{1, P(x)},
\end{equation}
where $P: [D] \rightarrow [D]$ is one-to-one.
For any Hilbert space that factorizes, $\mathcal{H} = \mathcal{H}_L \otimes \mathcal{H}_R$, we let $S^c_L$ denote quantum-secure controlled PRFs, acting in the  representation,
\begin{equation}
    S^c_L \ket{0, x_L, x_R} = \ket{0, x_L, x_R}, \quad \quad \quad S^c_L \ket{1, x_L, x_R} = \ket{1, x_L \oplus f_L(x_R), x_R},
\end{equation}
where $f_L: [D_R] \rightarrow [D_L]$.
We let $S^c_R$ denote the analogous PRF,
\begin{equation}
    S^c_R \ket{0, x_L, x_R} = \ket{0, x_L, x_R}, \quad \quad \quad S^c_R \ket{1, x_L, x_R} = \ket{1, x_L, x_R \oplus f_R(x_L)},
\end{equation}
where $f_R: [D_L] \rightarrow [D_R]$.
Here, $D_L = | \mathcal{H}_L |$, $D_R = | \mathcal{H}_R |$, and $D_L D_R = D = | \mathcal{H} |$.
We note that quantum-secure controlled PRFs and PRPs are easily constructed from quantum-secure PRFs and PRPs without controls by using controlled SWAP gates\footnote{In more detail, to obtain a controlled PRF $F^c$ from a standard PRF $F$, we consider the operation $\text{cSWAP} (\mathds{1} \otimes \mathds{1} \otimes F)  \text{cSWAP}$, where the three subsystems are the control qubit, the system register, and an ancilla register of the same size as the system, respectively. Here, $\text{cSWAP}$ performs a controlled swap between the system and ancilla registers. If one prepares the ancilla register in the zero state, this operator is equal to the desired controlled PRF up to a controlled phase $(-1)^{f(0)}$. This phase can be determined and undone with one additional query to $f$.
\newline 
\newline
\noindent To obtain a controlled PRP $P^c$ from a standard PRP, we consider the operation $\text{cSWAP}  (\mathds{1} \otimes \mathds{1} \otimes P)  \text{cSWAP}$, and assume that the ancilla register again begins in the zero state.
If the control register is in the $1$ state, this is equal to the desired controlled PRP.
If the control register is in the $0$ state, it is equal to the desired PRP on the control and system registers, but leaves the ancilla register in the state $\ket{P(0)}$.
The ancilla register can then be traced out, leaving a controlled PRP on the control and system.
Alternatively, to maintain unitarity on the entire system including the ancilla qubits, one can apply another query to $P$ on an additional ancilla register prepared in the all zero state, use the result to controllably return the original ancilla register to the zero state, and then apply $P^{-1}$ to the additional ancilla register to return that register to the zero state as well.
\newline 
\newline
\noindent The controlled PRFs $O_L^c$ and $O_R^c$ can be obtained from their standard versions in an identical fashion to controlled PRPs.}.
Finally, we let $C^c$ denote a controlled application of any unitary 2-design, 
\begin{equation}
    C^c = \dyad{0} \otimes \mathbbm{1} + \dyad{1} \otimes C,
\end{equation}
where $C \sim \mathfrak{D}$ and $\mathfrak{D}$ is a 2-design.

Definitions in hand, we consider the following two controlled random unitary ensembles.
First, we consider a controlled variant of the Permutation-Function-Clifford (PFC) ensemble~\cite{metger2024simple,ma2024construct},
\begin{equation} \label{eq: Uc PFC}
    U^c = P^c \cdot F^c \cdot C^c,
\end{equation}
where $P^c, F^c, C^c$ are drawn randomly as described above.
The PFC ensemble without controls is the simplest known ensemble that forms a PRU. 
Second, we consider a controlled variant of the Luby-Rackoff-Function-Clifford (LRFC) ensemble~\cite{schuster2025strong},
\begin{equation} \label{eq: Uc RFLC}
    U^c = S^c_R \cdot F^c \cdot S^c_L \cdot C^c.
\end{equation}
The LRFC ensemble was introduced as an alternative to the PFC ensemble, which relies solely on pseudorandom functions and not permutations.
This can be convenient, as existing results on quantum-secure PRPs are more limited than those on PRFs.
In particular, for our application to symmetric pseudorandom unitaries in this work, we require the existence of controlled PRUs on systems composed of qudits of arbitrary dimension.
Existing analyses of quantum-secure PRPs are limited to qu\emph{bit} systems; hence, our main results will utilize only on the LRFC ensemble.
%

Our main result is that both of these controlled random unitary ensembles are controlled PRUs.
\begin{theorem} \label{thm: cPFC}
    The controlled PFC ensemble is a controlled PRU ensemble, with security against any sub-exponential time quantum adversary.
\end{theorem}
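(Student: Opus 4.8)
The plan is to mirror the PFC PRU proof of~\cite{ma2024construct,metger2024simple}, but carried out in the presence of the one-qubit control register, treating the control register as a ``which-branch'' label that the adversary can superpose over. First I would set up the path-recording (``purified'') oracle for the controlled Haar ensemble: the adversary's queries to $U^c = \dyad{0}\otimes\mathbbm{1} + \dyad{1}\otimes U$ are dilated to act on an auxiliary relation register that records, for each query whose control is in the $\ket{1}$ branch, the input/output pair $(x_j,y_j)$; queries in the $\ket{0}$ branch record nothing. The key structural fact, which I would state as a lemma, is that conditioned on the branch structure of the adversary's queries, the controlled Haar oracle is exactly the tensor product of (i) the identity channel on all $\ket{0}$-branch components, and (ii) the ordinary purified-Haar path-recording oracle on the $\ket{1}$-branch components. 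Thus the controlled problem reduces, branch-by-branch, to the uncontrolled problem, with the only subtlety being that the adversary controls which branch each query lands in via a quantum register.

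Next I would show that the controlled PFC oracle $U^c = P^c F^c C^c$ reproduces this purified controlled-Haar oracle up to negligible trace distance against any $t = 2^{o(n)}$-time adversary making $t$ queries. The argument proceeds in the standard three steps, each now carried out on the $\ket{1}$-branch register conditioned on the branch pattern: (1) replace the controlled $2$-design $C^c$ by a controlled Haar-random Clifford and use the fact that for inputs supported on distinct nonzero basis states (ensured with high probability by the Clifford scrambling on the $\ket{1}$ branch) the first two moments suffice; (2) replace the controlled PRF $F^c$ by a controlled random function and the controlled PRP $P^c$ by a controlled random permutation, each at the cost of a negligible term from the security of the underlying (uncontrolled) PRF/PRP — here I would invoke the footnoted construction of $F^c,P^c$ from $F,P$ via controlled-SWAP, so that a distinguisher for the controlled primitive yields one for the uncontrolled primitive with the same advantage; (3) show that $P F C$ with $P$ a random permutation, $F$ a random phase function, and $C$ a random Clifford, restricted to the $\ket{1}$-branch subspace, is statistically indistinguishable from the path-recording Haar oracle, exactly as in the uncontrolled PFC analysis, because on each branch the reduced dynamics is literally the uncontrolled PFC dynamics on the system register. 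Summing the branch-wise bounds (there are at most $t$ queries, hence the overhead is polynomial in $t$) gives total advantage $t \cdot \mathrm{negl}(n) + t^2 2^{-\Omega(n)} = \mathrm{negl}(n)$ for $t = 2^{o(n)}$.

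The main obstacle I anticipate is making rigorous the claim that the adversary's control over \emph{which} branch each query enters does not help — i.e. that one really can analyze the $\ket{1}$-branch dynamics ``conditioned on the branch pattern'' without the conditioning itself leaking information. The clean way to handle this is to note that both $U^c$ and its purified-Haar idealization are \emph{block-diagonal} with respect to the control register in a basis-independent sense: they commute with $\dyad{0}\otimes\mathbbm{1}$ and $\dyad{1}\otimes\mathbbm{1}$, and act as the identity on the $\ket{0}$ block. Consequently, by a standard hybrid/deferred-measurement argument, any adversary distinguishing $U^c$ from the ideal can be simulated by one that first measures the control register before each query (this measurement commutes with the oracle and only reveals information the adversary could compute itself), reducing to a classical mixture over branch patterns; for each fixed pattern the bound is exactly the uncontrolled PFC bound applied to the subsequence of $\ket{1}$-branch queries. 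I would also need to be careful that the ancilla registers introduced in the controlled-SWAP constructions of $F^c,P^c$ are returned to a fixed state (or traced out consistently) so that they do not become an additional side channel; the footnote's ``uncompute'' trick handles this, and I would verify it preserves unitarity on the adversary-visible registers. Everything else is a transcription of the known PFC analysis, so the writeup is mostly bookkeeping once the branch-decomposition lemma is in place.
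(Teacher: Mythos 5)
Your plan founders on the branch-decomposition lemma, and the failure is not a matter of bookkeeping. You argue that because $U^c$ commutes with $\dyad{0}\otimes\mathbbm{1}$ and $\dyad{1}\otimes\mathbbm{1}$, one may insert a measurement of the control register before each query ("deferred measurement") and thereby reduce to a classical mixture over branch patterns, each handled by the uncontrolled PFC bound. Commuting with the oracle is not enough: the inserted measurement must also commute with the adversary's interleaved unitaries, and it does not. Collapsing the control destroys exactly the inter-branch coherences that make controlled access more powerful than plain access — e.g.\ a Hadamard-test-style adversary that prepares $(\ket{0}+\ket{1})/\sqrt{2}\otimes\ket{x}$, queries $U^c$, and measures the control in the $X$ basis learns relative-phase information $\mathrm{Re}\bra{x}U\ket{x}$ that is invisible to any adversary who measures the control first. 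So indistinguishability of the diagonal (fixed-pattern) blocks does not imply indistinguishability of the full channel, and your reduction proves a strictly weaker statement. This is precisely the difficulty the paper flags: the control pattern ranges over $2^k$ values and naive per-branch arguments either miss the coherences or blow up exponentially in $k$.

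The paper's proof does not condition on a branch pattern at all. It follows the twirled-Bell-projector method of~\cite{cui2025unitary}: the $PF$-twirled projector $B'$ and its idealization $B$ are written as operators carrying \emph{off-diagonal} blocks $\dyad{t',t'}{t,t}$ between different control patterns $t\neq t'$ (with permutations mapping the $\ket{1}$-branch copies of $t$ to those of $t'$, and Bell projectors on the $\ket{0}$-branch copies), together with \emph{controlled} distinct-subspace projectors that act as the distinct-subspace projector only on the copies where the control is $1$. The argument then shows $B\approx B'$ on this controlled distinct subspace, that the controlled $2$-design twirl of the controlled distinct projector is close to the identity, and closes with an inductive trace estimate — i.e.\ it tracks the coherences between control patterns explicitly rather than measuring them away. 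If you want to salvage your route, you would need a genuinely coherent analogue of the path-recording oracle for $U^c$ (one that records nothing on the $\ket{0}$ branch while preserving superpositions between branches) and prove closeness of the full purified isometries, not of their pattern-conditioned restrictions; that is a substantially different and harder lemma than the one you state. Your points (2) on the security of $F^c,P^c$ via controlled-SWAP and the ancilla-uncomputation hygiene are fine and match the paper's footnote.
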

\begin{theorem} \label{thm: cRLFC}
    The controlled LRFC ensemble is also a controlled PRU ensemble, with security against any sub-exponential time quantum adversary.
\end{theorem}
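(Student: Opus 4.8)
The plan is to follow the standard two-step recipe for establishing that an ensemble forms a PRU --- first a computational reduction from the pseudorandom primitives to truly random ones, then a purely query-complexity (information-theoretic) indistinguishability argument --- while carrying an inert control qubit through every step. For Step~1, recall from the footnote above that $F^c$, $S^c_L$, and $S^c_R$ are built from ordinary quantum-secure pseudorandom functions by conjugation with controlled-$\mathrm{SWAP}$ gates; hence, by the quantum security of those PRFs~\cite{zhandry2021PRF} (which holds against $2^{o(n)}$-time adversaries under the conjecture that no sub-exponential-time quantum algorithm solves LWE), no such adversary can distinguish $U^c = S^c_R F^c S^c_L C^c$ from the \emph{ideal} controlled LRFC unitary $\widetilde U^c = \widetilde S^c_R\,\widetilde F^c\,\widetilde S^c_L\,C^c$ --- obtained by replacing $f,f_L,f_R$ with genuinely uniform random functions --- beyond a negligible advantage. (The $2$-design $C^c$ is already an efficiently implementable object and needs no replacement.) This reduces Theorem~\ref{thm: cRLFC} to the information-theoretic claim that $\widetilde U^c$ is indistinguishable from the controlled Haar ensemble $U^c_{\mathrm{Haar}} = \dyad 0 \otimes \mathbbm 1 + \dyad 1 \otimes U$, with $U$ Haar on $\mathcal H_{\mathsf S}$, by any algorithm making $q = 2^{o(n)}$ (forward and, since the underlying LRFC ensemble is a \emph{strong} PRU, also inverse) queries.

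For this step I would extend the path-recording / purified-oracle framework of Ref.~\cite{ma2024construct}, in the two-sided form of Ref.~\cite{schuster2025strong}, to the controlled setting. The key structural observation is that $U^c_{\mathrm{Haar}}$ and $\widetilde U^c$ --- and their inverses, using that $S^c_L, S^c_R, F^c$ are self-inverse and $C^{c\dagger}$ has the same controlled form --- act as the identity on the \emph{entire} $\ket 0$-control subspace. One therefore purifies only the $\ket 1$-branch: the purified controlled-Haar oracle leaves both the adversary's register and the path-recording database untouched on the $\ket 0$-branch, and performs the usual path-recording update (reuse the recorded output on a repeated input, append a fresh uniformly random output otherwise) on the $\ket 1$-branch. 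Since the $\ket 0$-branch never writes to the purifying register, the database states attached to the two branches stay orthogonal, so cross terms between the branches contribute nothing to the adversary's reduced state --- consistent with $\E_U[\dyad 0 \otimes \mathbbm 1] = \dyad 0 \otimes \mathbbm 1$ and $\E_U[\ketbra 1 0 \otimes U] = 0$. On the $\ket 1$-branch the argument then reduces, essentially verbatim, to the uncontrolled LRFC proof of Ref.~\cite{schuster2025strong}: the $2$-design $C$ reproduces the first two moments of a Haar-random unitary on the at most $q$ query states, which (as in Ref.~\cite{ma2024construct}) is all the path-recording bound uses; and the Feistel-type map $\widetilde S_R \widetilde F \widetilde S_L$, built from fresh uniform functions, sends each new input to a fresh uniformly random computational-basis state dressed by a fresh uniform $\pm 1$ phase, which is exactly the action of the two-sided path-recording oracle for a Haar-random $U$. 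The accumulated error is dominated by the $O(q^2/D)$ probability of a collision among the randomized query points (and $\widetilde F$ being evaluated on such a collision), which is negligible in $n$ since $D = D_L D_R = 2^{\Omega(n)}$.

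Two remarks close out the argument. First, I work with the controlled LRFC rather than the controlled PFC ensemble on purpose: the application in Appendix~\ref{sec: sym PRU} requires controlled PRUs on qudit systems of \emph{arbitrary} local dimension, and quantum-secure pseudorandom \emph{permutations} are known only for qubit systems, whereas pseudorandom functions exist on arbitrary domains --- so it is Theorem~\ref{thm: cRLFC}, not Theorem~\ref{thm: cPFC}, that we invoke downstream. Second, combining Steps~1 and~2 gives the claim: any $2^{o(n)}$-time quantum adversary distinguishes the controlled LRFC ensemble from the controlled Haar ensemble with only negligible advantage, so the controlled LRFC ensemble is a controlled PRU with security against any sub-exponential-time quantum adversary.

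I expect the main obstacle to be the second step --- verifying that the path-recording bookkeeping of Refs.~\cite{ma2024construct,schuster2025strong} remains sound when the oracle is only ``half active'' on a coherent control qubit. In the uncontrolled setting every query advances the oracle's internal state, whereas here part of the amplitude does nothing while the rest performs an update, and the branches subsequently interfere inside the adversary's computation. One must check that (i) no spurious database entries are created on the $\ket 0$-branch, (ii) the forward and inverse path-recording isometries still intertwine with the real oracle when restricted to the $\ket 1$-branch, and (iii) every error term in the telescoping bound of Ref.~\cite{schuster2025strong} is unaffected by a spectator control register. I anticipate that each point follows by a mechanical ``tensor on a passive control qubit'' argument, but the genuine work is confirming that no step of the original proofs covertly relies on the oracle acting on the full Hilbert space.
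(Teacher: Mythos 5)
Your high-level plan (swap the PRFs for truly random functions, then prove a statistical statement about $k$ queries) matches the paper's in spirit, but your statistical step is a genuinely different route: the paper does not use path recording at all. It follows the Bell-state strategy of~\cite{cui2025unitary}, writing the $k$-query output state via a Bell projection, twirling over $f_L,f,f_R$ to get an operator $B'$, and comparing it on a \emph{controlled} distinct subspace to an explicit operator $B$ whose blocks are indexed by pairs of control patterns $t,t'\in\{0,1\}^k$ with $|t|=|t'|$ and contain permutations $\pi_{t\to t'}$ mapping the active registers of one pattern onto the other; five operator inequalities (including a controlled-2-design twirl of the distinct projector and an inductive trace bound) then give $\lVert \rho-\rho^H\rVert_1 \le 4k^2/D_L + 2k^2/D_R$. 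A controlled path-recording argument might well be made to work, but as written your proposal has a genuine gap exactly at the point the paper flags as the technical difficulty.

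The gap is your claim that ``the database states attached to the two branches stay orthogonal, so cross terms between the branches contribute nothing.'' This is false, and it is false precisely where the control register matters. Consider a fixed Haar $U$ queried twice, with the adversary coherently superposing the pattern $t=(1,0)$ (first query active) and $t'=(0,1)$ (second query active): both branches apply $U$ exactly once, and if the same system state is fed to the active query in each branch the recorded databases coincide, so the purifying registers are \emph{not} orthogonal and the cross term survives --- as it must, since for controlled-Haar only cross terms between patterns of \emph{unequal} weight vanish (by phase invariance of the Haar measure), while equal-weight, different-pattern coherences are generically nonzero. These are exactly the terms the paper's operator $B$ (the $\dyad{t',t'}{t,t}$ blocks with $\pi_{t\to t'}$) is built to track, and bounding them without an error that diverges over the $2^k$ control patterns is the substantive content of the proof. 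Your sketch therefore faces a dilemma: if your controlled path-recording oracle genuinely kills inter-branch coherences, it is not close to controlled-Haar and the hybrid argument fails; if it preserves them (as a timestamp-free relation-state oracle plausibly does), then your stated justification is wrong and the two nontrivial closeness statements --- controlled-Haar $\approx$ controlled path-recording, and controlled LRFC (with random functions and only a \emph{controlled} 2-design $C^c$, which does not randomize the inactive branch at all) $\approx$ controlled path-recording --- remain unproven. Deferring these as a ``mechanical tensor-on-a-passive-control-qubit'' step is where the proposal falls short of a proof.
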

\noindent We prove the theorems in the following two subsections.
Both of our proofs follow the concise strategy introduced in Ref.~\cite{cui2025unitary} for proving the adaptive security of PRUs.
We augment this approach with several technical innovations to extend the analysis to controlled random unitaries.
This extension must be performed carefully, since the control register can take exponentially many values, $| \{ 0,1 \}^k | = 2^k$, in an experiment that queries the unitary $k = \poly n$ times.  
Thus, many naive proof approaches will diverge exponentially in $k$.
We solve this by keeping a detailed accounting of the controlled register and leveraging several useful operator inequalities.
Our proof of Theorem~\ref{thm: cPFC} is  simpler and is presented first.
Our proof of Theorem~\ref{thm: cRLFC} then follows by similar steps.

\subsection{Proof of Theorem~\ref{thm: cPFC}: Controlled-PFC ensemble}

We can express the output state of any experiment that queries $U_c$ up to $k$ times as,
\begin{equation} \label{eq: 1}
	\ket{\psi_U}_{\mg{AB}} = (2D)^k \cdot (\mathbbm{1}_{\mg{AB}} \otimes \bra{ \Psi_{\text{Bell}} }_{\mg{XY}} ) (\mathbbm{1}_{\mg{ABY}} \otimes \boldsymbol{U}_{\mg{X}})\ket{\Psi}_{\mg{ABXY}},
\end{equation}
where $\boldsymbol{U} \equiv (U^c)^{\otimes k}$.
We also denote $\bs{P} = (P^c)^{\otimes k}$, $\bs{F} = (F^c)^{\otimes k}$, $\bs{C} = (C^c)^{\otimes k}$.
We refer to Ref.~\cite{schuster2024random} for a visual depiction.
Here, $\mathsf A$ is the combined control and system register, $\mathsf B$ is an arbitrary ancilla register, and $\mathsf X$ and $\mathsf Y$ are ancilla registers of the same size as $k$ copies of $\mathsf A$.
$\ket{\Psi_{\text{Bell}}}_{\mg{XY}}$ is a Bell state between $\mathsf X$ and $\mathsf Y$.
The factor $(2D)^k = | \mathsf X | = | \mathsf Y |$ normalizes the state after the Bell projection.

To write the expected output state in a more compact form, we let
\begin{equation}
    B'_{\mg{XY}} \equiv (2D)^{2k} \cdot \E_{P, f} \big[ ( \bs{F}^\dagger \bs{P}^\dagger \otimes \mathbbm{1} )_{\mg{XY}} \dyad{\Psi_{\text{Bell}}}_{\mg{XY}} ( \bs{P} \bs{F} \otimes \mathbbm{1})_{\mg{XY}} \big]
\end{equation}
denote the Bell projector twirled over $PF$, multiplied by $(2D)^{2k}$.
This yields,
\begin{equation}
	\rho_{\mg{AB}} \equiv \E_{P, f} \big[ \dyad{\psi_U}_{\mg{AB}} \big] = \E_{C} \big[ \tr_{\mg{XY}} \big( B'_{\mg{XY}} \bs{C}_\mg{X} \dyad*{\Psi}_{\mg{ABXY}} \bs{C}^\dagger_{\mg{X}} \big) \big].
\end{equation}
We let $\rho^H_{\mg{AB}} \equiv \E_{U \sim H} \dyad{\psi_U}_{\mg{AB}}$ denote the output of the same experiment when $U_c$ is controlled Haar-random.
We will show that $\lVert \rho - \rho^H \rVert_1 \leq 10 \delta$, where $\delta = k^2/2^n$.
This establishes  security against any sub-exponential time quantum adversary.


We begin by writing down several definitions and facts.
We define the controlled distinct subspace projectors, 
\begin{equation}
    \cPD_{\mg{X}} = \sum_{t \in \{0,1\}^k} \dyad{t}_{\mg{X}_{\mg{C}}} \otimes \left[ \PD_t \otimes \mathbbm{1}_{\bar t} \right]_{\mg{X}_{\mg{S}}},
    \quad \quad \quad \cPD_{\mg{Y}} = \sum_{t \in \{0,1\}^k} \dyad{t}_{\mg{Y}_{\mg{C}}} \otimes \left[ \PD_t \otimes \mathbbm{1}_{\bar t} \right]_{\mg{Y}_{\mg{S}}},
\end{equation}
as well as their product,
\begin{equation}
    \cPD_{\mg{XY}} = \cPD_{\mg{X}} \otimes \cPD_{\mg{Y}}.
\end{equation}
Here, $\mathsf{X}_{\mathsf{L,C}}$ denotes the $k$ control registers in $\mathsf{X}_\mathsf{L}$, and $\mathsf{X}_{\mathsf{L,S}}$ the $k$ system register, and similar for $\mathsf{Y}_\mathsf{R}$.
Here, $\PD_t = \sum_{x \in \text{dist}} \dyad{x}$ is the projector onto the distinct subspace on the system registers of copies $j$ with $t_j = 1$, where sum runs over all distinct bitstrings in $[D_L]^{\otimes |t|}$ or $[D_R]^{\otimes |t|}$, and $\mathbbm{1}_{\bar t}$ is the identity matrix on the copies $j$ with $t_j=0$.

In analogy to $B'$ above, we also define the operator,
\begin{equation}
    B_{\mg{XY}} = D^{2(k-|t|)} \sum_{\substack{ t, t' \in \{0,1\}^k \\ |t| = |t'|}} \left( \dyad{t', t'}{t, t}_{\mg{X}_\mg{C} \mg{Y}_\mg{C}} \otimes \sum_{\pi \in S_{|t|}} \left[ \left( \pi_{t \rightarrow t'} \otimes \pi_{t \rightarrow t'} \right) \otimes \dyad*{ \Psi^{\bar t '}_{\text{Bell}} }{ \Psi^{\bar t}_{\text{Bell}} } \right]_{\mg{X}_\mg{S} \mg{Y}_\mg{S}} \right),
\end{equation}
where $\pi_{t \rightarrow t'}$ is a permutation mapping the system registers of the copies of $\mathsf{X}$ where $t_j=1$ to the those where $t'_j = 1$, and similarly for $\mathsf{Y}$.
With these definitions in hand, our proof uses five facts:

\vspace{3mm}
\noindent \text{(1)}
    $B$ and $B'$ are equal on the controlled distinct subspace up to small relative error, 
    $B_{\mg{XY}} \cPD_{\mg{XY}} \preceq B'_{\mg{XY}} \cPD_{\mg{XY}} \preceq (1+\delta) B_{\mg{XY}} \cPD_{\mg{XY}}$.
    To see this, we express the former as 
    \begin{equation}
    B'_{\mg{XY}} =  \sum_{\substack{ t, t' \in \{0,1\}^k \\ |t| = |t'|}} D^{2(k-|t|)} \cdot \dyad{t, t}{t', t'}_{\mg{X}_\mg{C} \mg{Y}_\mg{C}} \otimes \left[ \left( \tilde{\pi}_{t \rightarrow t'} \otimes \tilde{\pi}_{t \rightarrow t'} \right) \cdot
    \left( B'_t \otimes \dyad*{ \Psi^{\bar t}_{\text{Bell}} }{ \Psi^{\bar t}_{\text{Bell}} } \right) \right]_{\mg{X}_\mg{S} \mg{Y}_\mg{S}} ,
\end{equation}
where $\tilde{\pi}_{t\rightarrow t'}$ is any permutation that maps copies $j$ where $t_j = 1$ to those where $t'_j = 1$. We also abbreviate,
\begin{equation}
     (B'_{t})_{\mg{XY}_t} \equiv D^{2|t|} \E_{P, f} \big[ ( F^{\dagger,\otimes |t|} P^{\dagger,\otimes |t|} \otimes \mathbbm{1} )_{\mg{XY}_t} \dyad{\Psi^t_{\text{Bell}}}_{\mg{XY}_t} ( P^{\otimes |t|} F^{\otimes |t|}\otimes \mathbbm{1})_{\mg{XY}_t} \big].
\end{equation}
We can perform precisely the same decomposition for $B$, replacing $(B'_t)_{\mg{XY}_t}$ with $(B_t)_{\mg{XY}_t} \equiv \sum_{\pi \in S_{|t|}} \pi \otimes \pi$.
Inserting the controlled distinct subspace projectors, we find
\begin{equation} \nonumber
\begin{split}
    B'_t ( \PD_{\mg{X}_{t}} \otimes \PD_{\mg{Y}_{t}} )  
    &= D^{|t|} \E_{P, f}  \sum_{x \,\text{dist}} \sum_{\tilde x}
        (-1)^{f(x)+f(\tilde{x})} \dyad{\tilde x, P \tilde x}{x, P x}_{\mg{XY}_t} \\
    & = D^{|t|} \E_{P} \sum_\pi \sum_{x \,\text{dist}}
        \dyad{\pi x, \pi P  x}{x, P x}_{\mg{XY}_t} \\
    & = (D^{|t|}/\mathfrak{D}_{|t|}) \sum_\pi \sum_{x \,\text{dist}} \sum_{y \,\text{dist}}
        \dyad{ \pi x, \pi y}{x, y}_{\mg{XY}_t}\\
    & = (D^{|t|}/\mathfrak{D}_{|t|}) \sum_\pi (\pi \otimes \pi)_{\mg{XY}_t} \cPD_{\mg{XY}_t} = (D^{|t|}/\mathfrak{D}_{|t|}) (B_t)_{\mg{XY}_t} ( \PD_{\mg{X}_{t}} \otimes \PD_{\mg{Y}_{t}} ),
\end{split}
\end{equation}
where the expectation over $f$ enforces $\tilde x = \pi x$ for some $\pi \in S_{|t|}$, and the expectation over $P$ yields a random distinct bitstring $y$.
In the final line, we use $\sum_{x,y} \dyad{x,y} = \mathbbm{1}$ to eliminate the sums.
We see that both $B'_t$ and $B_t$ are block diagonal in $|t|$, and are equal in each block up to the constant factor $D^{|t|}/\mathfrak{D}_{|t|}$.
The inequality $1 \leq D^{|t|}/\mathfrak{D}_{|t|} \leq 1+\delta$ completes our claim.
    
\vspace{5mm}
\noindent \text{(2.)}
    The twirl of $\cPD_{\mg{X}}$ over a controlled 2-design is close to the identity, $\big\lVert \mathbbm{1} - \E_{C \sim \mathfrak{D}} \big[ \bs{C}_\mg{X}^\dagger \cPD_{\mg{X}} \bs{C}_\mg{X} \big] \big\rVert_\infty \leq \delta$.
    To show this, we bound $\mathbbm{1} - \cPD_{\mg{X}} \leq \sum_{t\in \{0,1\}^k} \sum_{i<j \in t} \dyad{t}_{\mg{X}_{\mg{C}}} \otimes \Pi_{\mg{X}_{\mg{S}}}^{\mathsf{eq},ij}$, where $$\Pi_{\mg{X}_\mg{S}}^{\mathsf{eq},ij} = \sum_{x} \dyad{x, x}_{\mg{X}_{\mg{S},i} \mg{X}_{\mg{S},j}}.$$
    The twirl over a controlled 2-design yields $\E_{C} [ \bs{C}_\mg{X}^\dagger \sum_t \sum_{i<j \in t}  \dyad{t} \otimes \Pi_{\mg{X}_\mg{S}}^{\mathsf{eq},ij} \bs{C}_\mg{X} ] = \sum_t \sum_{i<j \in t} \dyad{t} \otimes
    (\mathbbm{1} + \mathcal{S}_{ij})/(D+1)$, where $\mathcal{S}_{ij}$ is the swap operator.
    Hence, the spectral norm is less $\max_t \sum_{i<j\in t} 2/(D+1) \leq (k^2/2)(2/D) = \delta$.

\vspace{5mm}
\noindent \text{(3.)}
$B$ and $B'$ commute with $\cPD_{\mg{X}}$ and $\cPD_{\mg{Y}}$. 
$B$ also commutes with $\bs{C}$.

\vspace{5mm}
\noindent \text{(4.)} For any positive Hermitian matrices $P,Q,R$, where $P$ and $Q$ commute, we have $\tr(PQR)  \leq \lVert Q \rVert_\infty \tr(P R)$.
This follows from Holder's inequality, $$\tr(PQR) = \text{tr}(Q \sqrt{P} R \sqrt{P}) \leq \lVert Q \rVert_\infty \lVert \sqrt{P} R \sqrt{P} \rVert_1 =  \lVert Q \rVert_\infty \tr(P R).$$

\vspace{5mm}
\noindent \text{(5.)} We have $1-\delta \leq \tr(B^{}_{\mg{XY}} \cPD_{\mg{Y}} \dyad{\Psi} ) \leq 1$.
This is shown below the current proof.

\vspace{5mm}
\noindent We can now prove the claim.
We insert $\mathbbm{1} = \cPD_{\mg{XY}} + (\mathbbm{1}-\cPD_{\mg{XY}})$ to decompose $\rho$ into two terms,
\begin{align*}
	\rho = \rho_{\text{dist}} + \delta\rho \equiv \E_{C} \big[ \tr_{\mg{XY}}( B' \cdot \cPD_{\mg{XY}} \cdot \bs{C} \dyad*{\Psi} \bs{C}^\dagger ) \big] + \E_{C} \big[ \tr_{\mg{XY}}( B' \cdot (\mathbbm{1}-\cPD_{\mg{XY}}) \cdot \bs{C} \dyad*{\Psi} \bs{C}^\dagger ) \big].
\end{align*}
\noindent We bound the second term as follows,
\begin{align*}
    \lVert \delta \rho \rVert_1 = \tr( \delta \rho ) & = \E_{C} \big[ \tr( B' (\mathbbm{1}-\cPD_{\mg{XY}}) \bs{C} \dyad*{\Psi} \bs{C}^\dagger ) \big] && \text{(since $\delta \rho$ is positive)} \\
    & \leq 1 - \E_{C} \big[ \tr \big( B \cdot \cPD_{\mg{XY}} \cdot \bs{C}\dyad*{\Psi} \bs{C}^\dagger \big)  \big], && \text{(from $\tr(\rho) = 1$ and (1.))} \\
    & = 1 - \tr \big( B \cdot \E_{C} \big[ \bs{C}^\dagger \cPD_{\mg{X}} \bs{C} \big] \otimes \cPD_{\mg{Y}} \cdot \dyad*{\Psi} \big), && \text{(from (3.))} \\
    & \leq 1 - (1-\delta) \tr \big( B (\mathbbm{1} \otimes \cPD_{\mg{Y}}) \dyad*{\Psi} \big) && \text{(from (2.), (3.), (4.))} \\
    & = 1 - (1-\delta)^2 \leq 2 \delta. && \text{(from (5.))}
\end{align*}
Meanwhile, the first term is close to a fixed density matrix $\rho_a$, independent of the ensemble $\mathfrak{D}$,
\begin{align*}
    \rho_{\text{dist}} & = \tr_{\mg{XY}} \!  \big( B \cdot \E_{C} \big[ \bs{C}^\dagger \cPD_{\mg{X}} \bs{C} \big] \otimes \cPD_{\mg{Y}} \cdot \dyad*{\Psi} \big) && \text{(up to relative error $\delta$, from (1.))} \\
    & = \tr_{\mg{XY}} \!  \big( B \cdot \mathbbm{1}_{\mg{X}} \otimes \cPD_{\mg{Y}} \cdot \dyad*{\Psi} \big) + \Delta \equiv \rho_a + \Delta, && \text{(from (2.), (3.), (4.))}
\end{align*} 
where $\lVert \Delta \rVert_1 \leq \delta  \tr( B \cdot \mathbbm{1}_{\mg{X}} \otimes \cPD_{\mg{Y}} \cdot \dyad*{\Psi}) \leq \delta$ from (2.), (3.), (4.).
We have $\lVert \rho_{\text{dist}} - \rho_a \rVert_1 \leq 3 \delta$, since the 1-norm error is upper bounded by twice the relative error~\cite{schuster2024random}.

This completes our proof: We have $\lVert \rho - \rho_a \rVert_1 \leq 2\delta + 3\delta$ when $C$ is drawn from any 2-design $\mathfrak{D}$.
Since the Haar ensemble is a 2-design, we have $\lVert \rho - \rho^H \rVert_1 \leq \lVert \rho^H - \rho_a \rVert_1 + \lVert \rho - \rho_a \rVert_1 \leq 4\delta + 6\delta = 10k^2/D$. \qed

\vspace{3mm}
\noindent \emph{Proof of (5.)}---We proceed by induction.
%
Let $\rho_j = \tr_{\mg{AB} \mg{XY}_{>j}}\!( \dyad*{\Psi} )$ denote the reduced density matrix on on $\mathsf{XY}_{\leq j}$.
The key property we use is that $\rho_j$ is maximally mixed on $\mathsf{Y}_j$. 

\vspace{2mm}
\noindent \textbf{Base case ($j=1$): }Here, $B = \mathbbm{1} \otimes \mathbbm{1}$ and $\cPD_{\mg{Y}_{\leq 1}} = \mathbbm{1}_{\mg{Y}_{1}}$. Hence, $\text{tr}(B \cdot \cPD_{\mg{Y}_{\leq 1}} \cdot \dyad*{\Psi}) = 1$.

\vspace{2mm}
\noindent \textbf{Inductive step: }We can first expand,
\begin{equation} \nonumber
    B_{\mg{XY}} \cPD_{\mg{Y}} = \!\!\!\! \sum_{\substack{t,t' \in \{0,1\}^j \\ |t|=|t'|}}  \!\!
    D^{2(j-|t|)}
    \cdot \left( \dyad{t',t'}{t,t}_{\mg{X}_{\mg{C}}\mg{Y}_{\mg{C}}}
    \otimes \sum_{\pi \in S_{|t|}}
    \left[ \left( \pi_{t \rightarrow t'} \PD_{\mg{Y}_{\mg{S},t}}  \otimes \pi_{t \rightarrow t'} \right) \otimes \dyad*{ \Psi^{\bar t '}_{\text{Bell}} }{ \Psi^{\bar t}_{\text{Bell}} } \right]_{\mg{X}_\mg{S} \mg{Y}_\mg{S}}
    \! \right).
\end{equation}
We now perform the trace over $\mathsf{Y}_j$.
Since $\rho_j$ is maximally mixed on $\mathsf{Y}_j$, the trace sets $t_j = t'_j$.
If $t_j = 0$, we find 
\begin{equation}
    \tr_{\mg{XY}_j} \!\! \big( \big[ \dyad{0,0} \otimes \dyad*{\Psi_{\text{Bell}}^j} \big]_{\mg{XY}_j} \rho_j \big) = p_j(0) \cdot \rho^{(0)}_{j-1} / D^2,
\end{equation}
where $p_j(0) \rho^{(0)}_{j-1} \equiv \tr_{\mg{XY}_j}( \dyad{0}_{\mg{X}_{\mg{C},j}} \rho_j)$ is the probability of measuring a $0$ on register $\mathsf{X}_{\mathsf{C},j}$, multiplied by the resulting normalized quantum state.
Meanwhile, if $t_j = 1$, we find
\begin{equation}
    \tr_{\mg{XY}_j} \!\! \big( \big[ \dyad{1,1} \otimes \sum_{\pi \in S_{|t|}}
    \left( \pi_{t \rightarrow t'}   \otimes \pi_{t \rightarrow t'} \PD_{\mg{Y}_{\mg{S},t}} \right) \big]_{\mg{XY}_j} \rho_j. \big),
\end{equation}
Leveraging the identity, $\tr_{\mg{Y}_{\mg{S},j}}( \pi_{\mg{Y}_\mg{S}} \PD_{\mg{Y}_{\mg{S}}} ) = (D-j+1) \pi_{\mg{Y}_{\mg{S},< j}} \PD_{\mg{Y}_{\mg{S},< j}}$ if $\pi(j)=j$, and $\tr_{\mg{Y}_{\mg{S},j}}( \pi_{\mg{Y}_\mg{S}} \PD_{\mg{Y}_{\mg{S}}} ) = 0$ otherwise, we can compute
\begin{equation}
    \tr_{\mg{XY}_j} \!\! \big( \big[ \dyad{1,1} \otimes \sum_{\pi \in S_{|t|}}
    \left( \pi_{t \rightarrow t'} \PD_{\mg{Y}_{\mg{S},t}}  \otimes \pi_{t \rightarrow t'} \right) \big]_{\mg{XY}_j} \rho_j \big) = \frac{(D-j+1)}{D} \cdot p_j(1) \cdot \rho^{(1)}_{j-1},
\end{equation}
where $p_j(1) \rho^{(1)}_{j-1} \equiv \tr_{\mg{XY}_j}( \dyad{1}_{\mg{X}_{\mg{C},j}} \rho_j)$.
We have $p_j(0) + p_j(1) = 1$ by definition.

Iterating $k$ times and summing over each $t \in \{0,1\}^k$, we find
\begin{equation}
    \tr( B_{\mg{XY}} \cPD_{\mg{Y}} \dyad{\Psi} ) = \sum_{t \in \{0,1\}^k} (\mathfrak{D}_{|t|} / D^{|t|}) \cdot p(t)
\end{equation}
where we let $p(t)$ denote the product of each $p_j(0)$ or $p_j(1)$ encountered in the iteration, and we let $\mathfrak{D}_{|t|} = D!/(D-|t|+1)!$ denote the dimension of the distinct subspace on $t$ copies.
We have $\tr( B_{\mg{XY}} \cPD_{\mg{Y}} \dyad{\Psi} ) \leq 1$, because $\sum_t p(t) = 1$ and $\mathfrak{D}_{|t|} \leq D^{|t|}$.
We also have $\tr( B_{\mg{XY}} \cPD_{\mg{Y}} \dyad{\Psi} ) \geq 1-k^2/D$ because $\mathfrak{D}_{|t|} \geq D^{|t|} (1-k^2/D)$. \qed

\subsection{Proof of Theorem~\ref{thm: cRLFC}: Controlled-LRFC ensemble}

We follow the same notation as in the previous section.
We denote $\bs{L} = (O_L^c)^{\otimes k}$, $\bs{F} = (F^c)^{\otimes k}$, $\bs{R} = (S^c_R)^{\otimes k}$, $\bs{D} = (D^c)^{\otimes k}$.
To write the expected output state in a more compact form, we let
\begin{equation}
    B'_{\mg{XY}} \equiv (2D)^{2k} \cdot \E_{f_R, f, f_L} \big[ (\bs{R}^\dagger \bs{F}^\dagger \bs{L}^\dagger \otimes \mathbbm{1} )_{\mg{XY}} \dyad{\Psi_{\text{Bell}}}_{\mg{XY}} ( \bs{L} \bs{F} \bs{R} \otimes \mathbbm{1})_{\mg{XY}} \big]
\end{equation}
denote the Bell projector twirled over $O_R, F, O_L$, multiplied by $(2D)^{2k}$.
This yields,
\begin{equation}
	\rho_{\mg{AB}} \equiv \E_{f_R, f, f_L, C} \big[ \dyad{\psi_U}_{\mg{AB}} \big] = \E_{C} \big[ \tr_{\mg{XY}} \big( B'_{\mg{XY}} \bs{C}_\mg{X} \dyad*{\Psi}_{\mg{ABXY}} \bs{C}^\dagger_{\mg{X}} \big) \big].
\end{equation}
We let $\rho^H_{\mg{AB}} \equiv \E_{U \sim H} \dyad{\psi_U}_{\mg{AB}}$ denote the output of the same experiment when $U_c$ is controlled Haar-random.
We will show that $\lVert \rho - \rho^H \rVert_1 \leq 4 \delta$, where $\delta = k^2/2^n$.
This establishes  security against any sub-exponential time quantum adversary.


We begin by writing down several definitions and facts.
We define the controlled distinct subspace projectors, 
\begin{equation}
    \cPD_{\mg{X}_\mg{L}} = \sum_{t \in \{0,1\}^k} \dyad{t}_{\mg{X}_{\mg{L,C}}} \otimes \left[ \PD_t \otimes \mathbbm{1}_{\bar t} \right]_{\mg{X}_{\mg{L,S}}},
    \quad \quad \quad \cPD_{\mg{Y}_\mg{R}} = \sum_{t \in \{0,1\}^k} \dyad{t}_{\mg{Y}_{\mg{R,C}}} \otimes \left[ \PD_t \otimes \mathbbm{1}_{\bar t} \right]_{\mg{Y}_{\mg{R,S}}},
\end{equation}
as well as their product,
\begin{equation}
    \cPDLR \equiv \cPD_{\mg{X}_\mg{L}} \otimes \cPD_{\mg{Y}_\mg{R}}.
\end{equation}
Here, $\mathsf{X}_{\mathsf{L,C}}$ denotes the $k$ control registers in $\mathsf{X}_\mathsf{L}$, and $\mathsf{X}_{\mathsf{L,S}}$ the $k$ system register, and similar for $\mathsf{Y}_\mathsf{R}$.
Here, $\PD_t = \sum_{x \in \text{dist}} \dyad{x}$ is the projector onto the distinct subspace on the system registers of copies $j$ with $t_j = 1$, where sum runs over all distinct bitstrings in $[D_L]^{\otimes |t|}$ or $[D_R]^{\otimes |t|}$.
Also, $\mathbbm{1}_{\bar t}$ is the identity matrix on the copies $j$ with $t_j=0$.

In analogy to $B'$ above, we also define the operator,
\begin{equation}
    B_{\mg{XY}} = D^{2(k-|t|)} \sum_{\substack{ t, t' \in \{0,1\}^k \\ |t| = |t'|}} \left( \dyad{t', t'}{t, t}_{\mg{X}_\mg{C} \mg{Y}_\mg{C}} \otimes \sum_{\pi \in S_{|t|}} \left[ \left( \pi_{t \rightarrow t'} \otimes \pi_{t \rightarrow t'} \right) \otimes \dyad*{ \Psi^{\bar t '}_{\text{Bell}} }{ \Psi^{\bar t}_{\text{Bell}} } \right]_{\mg{X}_\mg{S} \mg{Y}_\mg{S}} \right),
\end{equation}
where $\pi_{t \rightarrow t'}$ is a permutation mapping the system registers of the copies of $\mathsf{X}$ where $t_j=1$ to the those where $t'_j = 1$, and similarly for $\mathsf{Y}$.
With these definitions in hand, our proof uses five facts:

\vspace{3mm}
\noindent \text{(1)}
    $B$ and $B'$ are equal on the distinct subspace, $B'_{\mg{XY}} \cdot \cPDLR
        =
        B_{\mg{XY}} \cdot \cPDLR$.
    To see this, we express the left side as 
    \begin{equation}
    B'_{\mg{XY}} =  \sum_{\substack{ t, t' \in \{0,1\}^k \\ |t| = |t'|}} D^{2(k-|t|)} \cdot \dyad{t, t}{t', t'}_{\mg{X}_\mg{C} \mg{Y}_\mg{C}} \otimes \left[ \left( \tilde{\pi}_{t \rightarrow t'} \otimes \tilde{\pi}_{t \rightarrow t'} \right) \cdot
    \left( B'_t \otimes \dyad*{ \Psi^{\bar t}_{\text{Bell}} }{ \Psi^{\bar t}_{\text{Bell}} } \right) \right]_{\mg{X}_\mg{S} \mg{Y}_\mg{S}} ,
\end{equation}
where $\tilde{\pi}_{t\rightarrow t'}$ is any permutation that maps copies $j$ where $t_j = 1$ to those where $t'_j = 1$, and we abbreviate
\begin{equation}
     (B'_{t})_{\mg{XY}_t} \equiv D^{2|t|} \E_{f_R, f, f_L} \big[ (O_R^{\dagger,\otimes |t|} F^{\dagger,\otimes |t|} O_L^{\dagger,\otimes |t|} \otimes \mathbbm{1} )_{\mg{XY}_t} \dyad{\Psi^t_{\text{Bell}}}_{\mg{XY}_t} ( O_L^{\otimes |t|} F^{\otimes |t|} O_R^{\otimes |t|} \otimes \mathbbm{1})_{\mg{XY}_t} \big]
\end{equation}
We can perform precisely the same decomposition for $B$, replacing $(B'_t)_{\mg{XY}_t}$ with $(B_t)_{\mg{XY}_t} \equiv \sum_{\pi \in S_{|t|}} \pi \otimes \pi$.
Inserting the distinct subspace projectors, we find
\begin{equation} \nonumber
\begin{split}
    B'_t & ( \PD_{\mg{X}_{\mg{L},t}} \otimes \PD_{\mg{Y}_{\mg{R},t}} )  = D^{|t|} \E_{f_R,f,f_L} \sum_{\substack{x_L, y_R \\ \text{dist}}}
        \sum_{\substack{ \tilde{x}_L, \tilde{y}_R \\ \text{dist}}}
        (-1)^{f(x_L \lVert y_R)+f(\tilde{x}_L \lVert \tilde{y}_R)} \ket{x_L \lVert y_R \oplus f_R(x_L)}_{\mg{X}_t} \otimes  \\
        & \quad  \quad \quad \quad \quad \quad \quad \quad  \quad \quad \quad \quad \quad \quad \quad \ket{x_L \oplus f_L(y_R) \lVert y_R}_{\mg{Y}_t}
        \bra{\tilde{x}_L \lVert \tilde{y}_R \oplus f_R(\tilde{x}_L)}_{\mg{X}_t} \otimes 
        \bra{\tilde{x}_L \oplus f_L(\tilde{y}_R) \lVert \tilde{y}_R}_{\mg{Y}_t} \\
        & = D^{2|t|} \E_{f_R,f_L} \sum_{\substack{x_L, y_R \\ \text{dist}}}
        \sum_{\pi \in S_{|t|}}
        \left( \dyad{x_L \lVert y_R \oplus f_R(x_L)} \pi \right)_{\mg{X}_t} \otimes \\
        & \quad  \quad \quad \quad \quad \quad \quad \quad  \quad \quad \quad \quad \quad \quad \quad  \left( \dyad{ x_L \oplus  f_L(y_R) \lVert 
 y_R} \pi \right)_{\mg{Y}_t},
\end{split}
\end{equation}
where in the second line, taking the expectation over $f$ enforces $x_L \lVert y_R = \pi(x_L \lVert y_R)$ for some $\pi \in S_{|t|}$.
Taking the expectation over $f_L$ and $f_R$ yields,
\begin{equation} 
    B'_t ( \PD_{\mg{X}_{\mg{L},t}} \otimes \PD_{\mg{Y}_{\mg{R},t}} )  = 
    \sum_{\substack{ x_L, y_R \\ \text{dist}}}
    \sum_{ y_L, x_R}
    \left( \dyad{x_L \lVert x_R} \pi \right)_{\mg{X}_t}
    \otimes 
    \left( \dyad{y_L \lVert y_R} \pi \right)_{\mg{Y}_t},
\end{equation}
since $y_R \oplus f_R(x_L)$ is a uniformly random bitstring in $[D_R]^{\otimes |t|}$ when $x_L$ is distinct, and similar for $x_L \oplus f_L(y_R)$.
This is clearly equal to $B_t ( \PD_{\mg{X}_{\mg{L},t}} \otimes \PD_{\mg{Y}_{\mg{R},t}} )$, using the resolution of the identity, $\mathbbm{1}_{\mg{XY}_t} = \sum_{x_L, x_R, y_L, y_R} \dyad{x_L \lVert x_R}_{\mg{X}_t} \otimes \dyad{y_L \lVert y_R}_{\mg{Y}_t}$.
    
\vspace{5mm}
\noindent \text{(2.)}
    The twirl of $\cPD_{\mg{X}_\mg{L}}$ over a controlled 2-design is close to the identity, $\big\lVert \mathbbm{1} - \E_{C \sim \mathfrak{D}} \big[ \bs{C}_\mg{X}^\dagger \cPD_{\mg{X}_\mg{L}} \bs{C}_\mg{X} \big] \big\rVert_\infty \leq \delta_L$.
    To show this, we bound $\mathbbm{1} - \cPD_{\mg{X}_\mg{L}} \leq \sum_{t\in \{0,1\}^k} \sum_{i<j \in t} \dyad{t}_{\mg{X}_{\mg{L,C}}} \otimes \Pi_{\mg{X}_{\mg{L,S}}}^{ij}$, where
    \begin{equation}
        \Pi_{\mg{X}_\mg{L,S}}^{\text{eq},ij} = \sum_{x\in [D_L]} \dyad{x, x}_{\mg{X}_{\mg{L}_\mg{S},i} \mg{X}_{\mg{L}_\mg{S},j}}.
    \end{equation}
    The twirl over a controlled 2-design yields $\E_{C} [ \bs{C}_\mg{X}^\dagger \sum_t \sum_{i<j \in t}  \dyad{t} \otimes \Pi_{\mg{X}_\mg{L,S}}^{\text{eq},ij} \bs{C}_\mg{X} ] = \sum_t \sum_{i<j \in t} \dyad{t} \otimes
    (a\mathbbm{1} + b\mathcal{S}_{ij})$, where $\mathcal{S}_{ij}$ is the swap operator, and $a = (D_R^2D_L-1)/(D_R^2D_L^2-1) \leq 1/D_L$ and $b = (D_R D_L - D_R)/(D_R^2D_L^2-1) \leq 1/D_R D_L$.
    Hence, the spectral norm is less $\max_t \sum_{i<j\in t} (a+b) \leq k^2(a+b)/2 \leq \delta_L (1+1/D_R) \leq \delta_L$.

\vspace{5mm}
\noindent \text{(3.)}
$B$ and $B'$ commute with $\cPD_{\mg{X}_\mg{L}}$ and $\cPD_{\mg{Y}_\mg{R}}$. 
$B$ also commutes with $\bs{C}$.

\vspace{5mm}
\noindent \text{(4.)} For any positive Hermitian matrices $P,Q,R$, where $P$ and $Q$ commute, we have $\tr(PQR)  \leq \lVert Q \rVert_\infty \tr(P R)$.
This follows from Holder's inequality,
$$\tr(PQR) = \text{tr}(Q \sqrt{P} R \sqrt{P}) \leq \lVert Q \rVert_\infty \lVert \sqrt{P} R \sqrt{P} \rVert_1 =  \lVert Q \rVert_\infty \tr(P R).$$

\vspace{5mm}
\noindent \text{(5.)} We have $1-\delta_R\leq \tr(B_{\mg{XY}} \cPD_{\mg{Y}_{\mg{R}}} \dyad{\Psi} ) \leq 1$.
This is shown below the current proof.

\vspace{5mm}
\noindent We can now prove the claim.
We insert $\mathbbm{1} = \cPDLR + (\mathbbm{1}-\cPDLR)$ to decompose $\rho$ into two terms,
\begin{align*}
	\rho = \rho_{\text{dist}} + \delta\rho \equiv \E_{C} \big[ \tr_{\mg{XY}}( B' \cdot \cPDLR \cdot \bs{C} \dyad*{\Psi} \bs{C}^\dagger ) \big] + \E_{C} \big[ \tr_{\mg{XY}}( B' \cdot (\mathbbm{1}-\cPDLR) \cdot \bs{C} \dyad*{\Psi} \bs{C}^\dagger ) \big].
\end{align*}
\noindent We bound the second term as follows,
\begin{align*}
    \lVert \delta \rho \rVert_1 = \tr( \delta \rho ) & = \E_{C} \big[ \tr( B' (\mathbbm{1}-\cPDLR) \bs{C} \dyad*{\Psi} \bs{C}^\dagger ) \big] && \text{(since $\delta \rho$ is positive)} \\
    & = 1 - \E_{C} \big[ \tr \big( B \cdot \cPDLR \cdot \bs{C}\dyad*{\Psi} \bs{C}^\dagger \big)  \big], && \text{(from $\tr(\rho) = 1$ and (1.))} \\
    & = 1 - \tr \big( B \cdot \E_{C} \big[ \bs{C}^\dagger \cPD_{\mg{X}_\mg{L}} \bs{C} \big] \otimes \cPD_{\mg{Y}_\mg{R}} \cdot \dyad*{\Psi} \big), && \text{(from (3.))} \\
    & \leq 1 - (1-\delta_L) \tr \big( B (\mathbbm{1} \otimes \cPD_{\mg{Y}_\mg{R}}) \dyad*{\Psi} \big) && \text{(from (2.), (3.), (4.))} \\
    & = 1 - (1-\delta_L)(1-\delta_R) \leq \delta_L + \delta_R. && \text{(from (5.))}
\end{align*}
Meanwhile, the first term is close to a fixed density matrix $\rho_a$, independent of the ensemble $\mathfrak{D}$,
\begin{align*}
    \rho_{\text{dist}} & = \tr_{\mg{XY}} \!  \big( B \cdot \E_{C} \big[ \bs{C}^\dagger \cPD_{\mg{X}_\mg{L}} \bs{C} \big] \otimes \cPD_{\mg{Y}_\mg{R}} \cdot \dyad*{\Psi} \big) && \text{(from (1.))} \\
    & = \tr_{\mg{XY}} \!  \big( B \cdot \mathbbm{1}_{\mg{X}_\mg{L}} \otimes \cPD_{\mg{Y}_\mg{R}} \cdot \dyad*{\Psi} \big) + \Delta \equiv \rho_a + \Delta, && \text{(from (2.), (3.), (4.))}
\end{align*} 
where $\lVert \Delta \rVert_1 \leq \delta_L  \tr( B \cdot \mathbbm{1}_{\mg{X}_\mg{L}} \otimes \cPD_{\mg{Y}_\mg{R}} \cdot \dyad*{\Psi}) \leq \delta_L$ from (2.), (3.), (4.).

This completes our proof: We have $\lVert \rho - \rho_a \rVert_1 \leq 2\delta_L + \delta_R$ when $C$ is drawn from any 2-design $\mathfrak{D}$.
Since the Haar ensemble is a 2-design, we have $\lVert \rho - \rho^H \rVert_1 \leq \lVert \rho^H - \rho_a \rVert_1 + \lVert \rho - \rho_a \rVert_1 \leq 4\delta_L + 2\delta_R = 4k^2/D_L + 2k^2/D_R$. \qed

\vspace{3mm}
\noindent \emph{Proof of (5.)}---We proceed by induction, similar to the previous section.
%
Let $\rho_j = \tr_{\mg{AB} \mg{XY}_{>j}}\!( \dyad*{\Psi} )$ denote the reduced density matrix on on $\mathsf{XY}_{\mathsf{R},\leq j}$.
The key property we use is that $\rho_j$ is maximally mixed on $\mathsf{Y}_j$. 

\vspace{2mm}
\noindent \textbf{Base case ($j=1$): }Here, $B = \mathbbm{1} \otimes \mathbbm{1}$ and $\cPD_{\mg{Y}_{\mg{R},\leq 1}} = \mathbbm{1}_{\mg{Y}_{\mg{R},1}}$. Hence, $\tr(B \cdot \cPD_{\mg{Y}_{\mg{R},\leq 1}} \cdot \dyad*{\Psi}) = 1$.

\vspace{2mm}
\noindent \textbf{Inductive step: }We can first expand,
\begin{equation} \nonumber
    B_{\mg{XY}} \cPD_{\mg{Y}_{\mg{R}}} =  \!\!\!\! \sum_{\substack{t,t' \in \{0,1\}^j \\ |t|=|t'|}}  \!\!
    D^{2(j-|t|)}
    \cdot \left( \dyad{t',t'}{t,t}_{\mg{X}_{\mg{C}}\mg{Y}_{\mg{C}}}
    \otimes \sum_{\pi \in S_{|t|}}
    \left[ \left( \pi_{t \rightarrow t'} \PD_{\mg{Y}_{\mg{R,S,t}}}  \otimes \pi_{t \rightarrow t'} \right) \otimes \dyad*{ \Psi^{\bar t '}_{\text{Bell}} }{ \Psi^{\bar t}_{\text{Bell}} } \right]_{\mg{X}_\mg{S} \mg{Y}_\mg{S}}
    \! \right).
\end{equation}
We now perform the trace over $\mathsf{Y}_j$.
Since $\rho_j$ is maximally mixed on $\mathsf{Y}_j$, the trace sets $t_j = t'_j$.
If $t_j = 0$, we find 
\begin{equation}
    \tr_{\mg{XY}_j} \!\! \big( \big[ \dyad{0,0} \otimes \dyad*{\Psi_{\text{Bell}}^j} \big]_{\mg{XY}_j} \rho_j \big) = p_j(0) \cdot \rho^{(0)}_{j-1} / D^2,
\end{equation}
where $p_j(0) \rho^{(0)}_{j-1} \equiv \tr_{\mg{XY}_j}( \dyad{0}_{\mg{X}_{\mg{C},j}} \rho_j)$ is the probability of measuring a $0$ on register $\mathsf{X}_{\mathsf{C},j}$, multiplied by the resulting normalized quantum state.
Meanwhile, if $t_j = 1$, we find
\begin{equation}
    \tr_{\mg{XY}_j} \!\! \big( \big[ \dyad{1,1} \otimes \sum_{\pi \in S_{|t|}}
    \left( \pi_{t \rightarrow t'} \PD_{\mg{Y}_{\mg{R,S,t}}}  \otimes \pi_{t \rightarrow t'} \right) \big]_{\mg{XY}_j} \rho_j. \big),
\end{equation}
Leveraging the identity, $$\tr_{\mg{Y}_{\mg{S},j}}( \pi_{\mg{Y}_\mg{S}} \PD_{\mg{Y}_{\mg{R,S}}} ) = D_L (D_R-j+1) \pi_{\mg{Y}_{\mg{S},< j}} \PD_{\mg{Y}_{\mg{R,S},< j}}$$ if $\pi(j)=j,$ and $\tr_{\mg{Y}_{\mg{S},j}}( \pi_{\mg{Y}_\mg{S}} \PD_{\mg{Y}_{\mg{R,S}}} ) = 0$ otherwise, we can compute
\begin{equation}
    \tr_{\mg{XY}_j} \!\! \big( \big[ \dyad{1,1} \otimes \sum_{\pi \in S_{|t|}}
    \left( \pi_{t \rightarrow t'} \PD_{\mg{Y}_{\mg{R,S,t}}}  \otimes \pi_{t \rightarrow t'} \right) \big]_{\mg{XY}_j} \rho_j \big) = \frac{D_L (D_R-j+1)}{D_L D_R} \cdot p_j(1) \cdot \rho^{(1)}_{j-1},
\end{equation}
where $p_j(1) \rho^{(1)}_{j-1} \equiv \tr_{\mg{XY}_j}( \dyad{1}_{\mg{X}_{\mg{C},j}} \rho_j)$.
We have $p_j(0) + p_j(1) = 1$ by definition.

Iterating $k$ times and repeating for every $t \in \{0,1\}^k$, we find
\begin{equation}
    \tr( B_{\mg{XY}} \cPD_{\mg{Y}_{\mg{R}}} \dyad{\Psi} ) = \sum_{t \in \{0,1\}^k} (\mathfrak{D}_R^{|t|} / D_R^{|t|}) \cdot p(t)
\end{equation}
where we let $p(t)$ denote the product of each $p_j(0)$ or $p_j(1)$ encountered in the iteration, and we let $\mathfrak{D}_R^{|t|} = ((D_R)!/(D_R-|t|+1)!)$ denote the dimension of the distinct subspace on $[D_R]^{\otimes |t|}$.
We have $\tr( B_{\mg{XY}} \cPD_{\mg{Y}_{\mg{R}}} \dyad{\Psi} ) \leq 1$, because $\sum_t p(t) = 1$ and $\mathfrak{D}_R^{|t|} \leq D_R^{|t|}$.
We also have $\tr( B_{\mg{XY}} \cPD_{\mg{Y}_{\mg{R}}} \dyad{\Psi} ) \geq 1-k^2/D_R$ because $\mathfrak{D}_R^{|t|} \geq D_R^{|t|} (1-k^2/D_R)$. \qed

\section{Translation-invariant random unitaries} \label{sec: TI}

In this section, we present the extension of our results to translation-invariant random unitaries.
For simplicity, we focus on random unitaries that are invariant with respect to translations over the same distance, $2\xi$, as the patch size in our two-layer brickwork construction.
We begin in Appendix~\ref{sec: intro TI} by establishing a few basic results on translation-invariant random unitaries.
We then analyze the gluing construction for translation-invariant unitaries in Appendix~\ref{sec: no on site TI}.
We show how this leads to extremely low depth translation-invariant PRUs. 
In Appendix~\ref{sec: on site TI}, we extend our results to translation-invariant unitaries that also have on-site symmetries.
%

\subsection{Approximate translation-invariant Haar twirl} \label{sec: intro TI}

We consider a periodic 1D chain of $n$ qubits.
We suppose that the system is symmetric under translations over a distance $2\xi$.
We let $\mathcal{T}$ denote the operator which implements this translation. 
We have $\mathcal{T}^m = 1$, where $m = n/2\xi$.

We are interested in the Haar measure over unitaries that commute with the translation symmetry operator.
To spare time, we will not analyze the exact expression involving Weingarten functions for the translation-invariant Haar twirl.
Instead, we will begin by positing the following approximate expression for the translation-invariant Haar twirl,
\begin{equation} \label{eq: approx TI Haar twirl}
    \Phi_a^{\text{TI}}(\rho) = \frac{1}{D^k} 
    \sum_\pi \sum_{\bs{t}} 
    \tr \bigg( \rho \cdot \pi^{-1} \cdot \bigotimes_{j=1}^k \mathcal{T}^{-t_j}  \bigg) 
    \cdot 
    \bigotimes_{j=1}^k \mathcal{T}^{t_j}
    \cdot 
    \pi.
\end{equation}
Here, $\pi \in S_k$ is summed over all permutations amongst the $k$ copies, and $\bs{t} = (t_1,\ldots,t_k)$ is summed over all translations, over distance $2\xi t_j $, in each copy $j$.
%

The key technical result of this subsection is the following lemma.
The lemma allows one to bound the relative error to the \emph{exact} translation-invariant Haar twirl, by bounding the additive error to the \emph{approximate} translation-invariant Haar twirl on the EPR state.

\begin{lemma}[Translation-invariant unitary designs from EPR states] \label{lemma: relative error to additive TI}
    An ensemble $\mathcal{E}$ of translation-invariant unitaries forms a translation-invariant $\varepsilon$-approximate unitary $k$-design with relative error
    \begin{equation}  \label{eq: relative EPR TI}
        \varepsilon =  \frac{3 D^{2k}}{n^k k!} \left\lVert [\delta \Phi \otimes \mathbbm{1}](P_{\text{EPR}}) \right\rVert_\infty,
    \end{equation}
    where $\delta \Phi = \Phi_\mathcal{E} - \Phi_a^{\text{\emph{TI}}}$, and $P_{\text{EPR}}$ is the EPR state on $\mathcal{H}^{\otimes k} \otimes \mathcal{H}^{\otimes k}$.
\end{lemma}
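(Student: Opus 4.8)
The proof should mirror the structure of Lemma~\ref{lemma: relative error to additive} (and its proof of Lemma~2 in Ref.~\cite{schuster2024random}), adapting the four-step argument to the translation-invariant setting. The plan is to show that (i) $\rho^{\text{TI}}_a \equiv [\Phi_a^{\text{TI}} \otimes \mathbbm{1}](P_{\text{EPR}})$ is a constant times a projector $P_{\text{TI}}$, with flat spectrum of eigenvalue $n^k k! / D^{2k}$ on its support; (ii) $\rho_{\mathcal{E}} \equiv [\Phi_{\mathcal{E}} \otimes \mathbbm{1}](P_{\text{EPR}})$ is supported entirely inside the support of $P_{\text{TI}}$; (iii) the additive-error bound on the EPR state therefore converts to a relative-error bound on the EPR state; and (iv) relative error on the EPR state controls relative error on all inputs via $\Phi(\rho) = D^{2k}\tr_2((\mathbbm{1}\otimes \rho^T)[\Phi\otimes\mathbbm{1}](P_{\text{EPR}}))$.

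The first step is the one requiring care. I would define
\begin{equation}
    P_{\text{TI}} \equiv \frac{1}{n^k k!} \sum_{\pi \in S_k} \sum_{\bs{t}} \bigg( \bigotimes_j \mathcal{T}^{t_j} \bigg) \pi \otimes \bigg( \bigotimes_j (\mathcal{T}^{*})^{t_j} \bigg) \pi,
\end{equation}
where $\bs t$ ranges over $\{0,\ldots,m-1\}^k$ so that the number of distinct translations per copy is $m = n/2\xi$; the normalization $n^k k!$ should be read as the product of the number of terms in the double sum (note $m^k$ together with the correct prefactor $D^k$ in Eq.~(\ref{eq: approx TI Haar twirl}) must be reconciled — I would double-check whether the intended count is $m^k k!$ rather than $n^k k!$, and adjust the stated constant in Eq.~(\ref{eq: relative EPR TI}) accordingly). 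To verify $P_{\text{TI}}^2 = P_{\text{TI}}$, I would compute $(\mathcal{T}^{t}\pi)(\mathcal{T}^{t'}\pi') = \mathcal{T}^{t + \pi(t')}\pi\pi'$ within each copy-block, re-index $\bs t' \to \pi^{-1}(\bs t')$ (a bijection on $\{0,\ldots,m-1\}^k$ since $\pi$ just permutes copies), and then re-index $\bs t \to \bs t - \bs t'$ modulo $m$ and $\pi \to \pi\pi'^{-1}$; the sums over $\bs t'$ and $\pi'$ then each collapse to a count, giving back $P_{\text{TI}}$. This is the exact analogue of Eq.~(\ref{eq: P squares to 1}), with $R_{\bs g}$ replaced by $\bigotimes_j \mathcal{T}^{t_j}$ and $G^k$ replaced by $(\mathbb{Z}_m)^k$. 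Step (ii) follows because any translation-invariant $U^{\otimes k}$ commutes with every $\pi$ and with every $\bigotimes_j \mathcal{T}^{t_j}$, so $P_{\text{TI}}(U^{\otimes k}\otimes\mathbbm{1})\ket{\Psi_{\text{EPR}}} = (U^{\otimes k}\otimes\mathbbm{1})P_{\text{TI}}\ket{\Psi_{\text{EPR}}}$, and $P_{\text{TI}}\ket{\Psi_{\text{EPR}}} = \ket{\Psi_{\text{EPR}}}$ since $(\mathcal{T}^t\pi \otimes \mathcal{T}^{*t}\pi)\ket{\Psi_{\text{EPR}}} = \ket{\Psi_{\text{EPR}}}$ by the transpose trick. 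Steps (iii) and (iv) are then purely formal and identical to the cited proofs, yielding $\varepsilon = (D^{2k}/(m^k k!)) \lVert [\delta\Phi\otimes\mathbbm{1}](P_{\text{EPR}})\rVert_\infty$ up to the constant factor that I would reconcile with the statement (the factor of $3$ in Eq.~(\ref{eq: relative EPR TI}) presumably absorbs slack between the posited approximate twirl and the true Weingarten twirl, plus the $m^k$ vs.\ $n^k$ discrepancy from $(2\xi)^k$).

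The main obstacle I anticipate is step (i) — specifically confirming that $\Phi_a^{\text{TI}}$ as written in Eq.~(\ref{eq: approx TI Haar twirl}) is genuinely an (approximate) projector-normalized twirl, i.e.\ that the constant $1/D^k$ in front is the one that makes $[\Phi_a^{\text{TI}}\otimes\mathbbm{1}](P_{\text{EPR}})$ flat. Because the commutant of translation-invariant $U^{\otimes k}$ is spanned by $\{(\bigotimes_j\mathcal{T}^{t_j})\pi\}$, these operators are \emph{not} mutually orthogonal in general (distinct $(\pi,\bs t)$ can give the same operator, or overlapping ones), unlike the regular-representation case where the orthogonality relation $\tr(R_gR_h^{-1}) = D\delta_{g,h}$ was used. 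I would handle this by working directly with the EPR state and the explicit projector $P_{\text{TI}}$, rather than trying to invert a Gram matrix; the flatness and idempotency computations above sidestep the orthogonality issue entirely, which is exactly why this EPR-based route is the right one. The only genuinely delicate point is bookkeeping the translation counts ($m^k$ values, each a translation by a multiple of $2\xi$ on an $n$-site ring), and making sure these match whatever normalization the later gluing argument in Appendix~\ref{sec: no on site TI} will feed in.
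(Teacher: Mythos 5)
Your steps (i)--(iv) correctly reproduce the first half of the argument (and match the paper's, which simply invokes the analogy with Lemma~\ref{lemma: relative error to additive}): the operators $(\bigotimes_j \mathcal{T}^{t_j})\pi$ form a group, so the normalized sum over them tensored with their conjugates is a projector, $[\Phi_a^{\text{TI}}\otimes\mathbbm{1}](P_{\text{EPR}})$ has flat spectrum on its support, $\rho_{\mathcal{E}}$ lives inside that support, and the EPR argument converts the additive bound into a relative-error bound. Your worry about non-orthogonality is indeed sidestepped exactly as you say, and your flag about $m^k k!$ versus $n^k k!$ (with $m=n/2\xi$) is a fair bookkeeping point about the paper's own normalization, which cancels against the same factor in the downstream gluing bound.

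However, there is a genuine gap: what you obtain is $(1-\varepsilon')\,\Phi_a^{\text{TI}} \preceq \Phi_{\mathcal{E}} \preceq (1+\varepsilon')\,\Phi_a^{\text{TI}}$, i.e.\ relative error with respect to the \emph{approximate} twirl, whereas the lemma asserts that $\mathcal{E}$ is a translation-invariant $\varepsilon$-approximate $k$-design, i.e.\ relative error with respect to the \emph{exact} translation-invariant Haar twirl $\Phi_H^{\text{TI}}$. In the on-site-symmetric case that bridge is supplied separately by Lemma~\ref{lemma: approximate symmetric Haar twirl} (a Weingarten computation); no such result is available here, and the paper deliberately avoids proving one. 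You wave at this by saying the factor of $3$ ``absorbs slack between the posited approximate twirl and the true Weingarten twirl,'' but absorbing that slack requires an argument you have not given. The paper's mechanism is a composition trick: apply $\Phi_H^{\text{TI}}$ to the sandwich inequality and use $\Phi_H^{\text{TI}}\circ\Phi_{\mathcal{E}}=\Phi_H^{\text{TI}}$ (left-invariance of the Haar measure, since every $U\sim\mathcal{E}$ is translation-invariant) together with $\Phi_H^{\text{TI}}\circ\Phi_a^{\text{TI}}=\Phi_a^{\text{TI}}$ (the output of $\Phi_a^{\text{TI}}$ lies in the commutant of translation-invariant $U^{\otimes k}$), which yields $(1-\varepsilon')\,\Phi_a^{\text{TI}} \preceq \Phi_H^{\text{TI}} \preceq (1+\varepsilon')\,\Phi_a^{\text{TI}}$; chaining the two sandwiches then gives $(1-3\varepsilon')\,\Phi_H^{\text{TI}} \preceq \Phi_{\mathcal{E}} \preceq (1+3\varepsilon')\,\Phi_H^{\text{TI}}$ for $3\varepsilon'\le 1$, which is precisely the origin of the factor $3$. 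Without this step (or an independently proven translation-invariant analog of Lemma~\ref{lemma: approximate symmetric Haar twirl}), your proposal proves a different, weaker statement than the lemma.
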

\noindent Curiously, the lemma can be proven even without knowing that $\Phi^{\text{TI}}_a$ is close to the translation-invariant Haar twirl in relative error.
The only property of $\Phi^{\text{TI}}_a$ that we require is that it is invariant under the Haar twirl, $\Phi^{\text{TI}}_H \circ \Phi^{\text{TI}}_a = \Phi^{\text{TI}}_a$.
This is convenient, as it allows us to sidestep a brute force Weingarten analysis of the exact twirl.
\begin{proof}
    The first part of our proof proceeds exactly analogously to the proof of Lemma~2 in Ref.~\cite{schuster2024random} and the proof of Lemma~\ref{lemma: relative error to additive} in this work.
    The sole difference is that the non-zero eigenvalue of $[\Phi_a^{\text{TI}} \otimes \mathbbm{1}](P_{\text{EPR}})$ is now equal to $n^k k! / D^{2k}$, owing to the $n^k k!$ permutations appearing $\Phi_a^{\text{TI}}$.
    This immediately shows that $\Phi_\mathcal{E}$ is close to the \emph{approximate} Haar twirl, 
    \begin{equation} \label{eq: close TI}
        (1-\varepsilon') \Phi^{\text{TI}}_a \preceq \Phi_\mathcal{E} \preceq (1+\varepsilon') \Phi^{\text{TI}}_a,
    \end{equation}
    up to relative error $\varepsilon' = (D^{2k}/n^k k!) \left\lVert [\delta \Phi \otimes \mathbbm{1}](P_{\text{EPR}}) \right\rVert_\infty$.

    To bound the relative error between $\Phi_\mathcal{E}$ and the \emph{exact} Haar twirl $\Phi^{\text{TI}}_H$, we develop the following short-cut.
    We first apply the exact Haar twirl to each term in Eq.~(\ref{eq: close TI}).
    Using $\Phi_H^{\text{TI}} \circ \Phi_\mathcal{E} = \Phi_H^{\text{TI}}$ (from the definition of the Haar measure) and $\Phi_H^{\text{TI}} \circ \Phi_a^{\text{TI}} = \Phi_a^{\text{TI}}$ (since $\Phi_a^{\text{TI}}(\cdot)$ is a sum of operators that commute with any translation-invariant unitary), we find,
    \begin{equation} \label{eq: phi E phi TI a}
        (1-\varepsilon') \Phi^{\text{TI}}_a \preceq \Phi_H^{\text{TI}} \preceq (1+\varepsilon') \Phi^{\text{TI}}_a.
    \end{equation}
    Combining this with Eq.~(\ref{eq: close TI}), we have
    \begin{equation} 
        (1-3\varepsilon') \Phi^{\text{TI}}_H \preceq \frac{1-\varepsilon'}{1+\varepsilon'} \Phi^{\text{TI}}_a \preceq \Phi_\mathcal{E} \preceq \frac{1+\varepsilon'}{1-\varepsilon'} \Phi^{\text{TI}}_H \preceq (1+3\varepsilon') \Phi^{\text{TI}}_H,
    \end{equation}
    where the outside inequalities hold for $3 \varepsilon' \leq 1$.
    Setting $\varepsilon = 3\varepsilon'$ completes the proof.
\end{proof}

\subsection{Translation-invariant PRUs in extremely low depth} \label{sec: no on site TI}

We now show that the two-layer brickwork construction naturally allows one to construct translation-invariant random unitaries.
As before, we break the system into $2m = n/\xi$ patches of $\xi$ qubits each, and apply a translation-invariant version of the two-layer circuit,
\begin{equation} \label{eq: TI TL BW circuit}
    U = U_2^{\otimes m}  \cdot U_1^{\otimes m},
\end{equation}
where the first layer is a tensor product of $m$ applications of the unitary $U_1$ acting on all even nearest-neighbor pairs of patches, and the second layer is a tensor product of $m$ applications of $U_2$ acting on all odd nearest-neighbor pairs of patches.
The unitary is clearly symmetric under translations, $\mathcal{T} U = U \mathcal{T}$.

Our main result is the following theorem, which shows that $U$ forms a translation-invariant random unitary design.

\begin{theorem}[Gluing translation-invariant random unitaries] \label{thm: gluing TI}
    For any $\varepsilon \leq 1$.
    Consider the translation-invariant two-layer brickwork circuit, $U$ [Eq.~(\ref{eq: TI TL BW circuit})].
    Suppose that the small random unitaries, $U_1$ and $U_2$, are each drawn from $\frac{\varepsilon}{4}$-approximate unitary $mk$-designs.
    Then $U$ forms an $\varepsilon$-approximate translation-invariant unitary $k$-design, as long as $\xi \geq \log_2(32 n^6 k^6 / \varepsilon)$.
\end{theorem}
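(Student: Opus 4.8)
The plan is to follow the template of the non-translation-invariant gluing arguments (Lemma~\ref{lemma: AB BC to ABC app} and Theorem~\ref{thm:main-design}), but with the crucial structural difference that the two layers are \emph{not} built from independent small unitaries, so one cannot glue a single patch at a time; the whole ring must be analyzed at once. First I would invoke Lemma~\ref{lemma: relative error to additive TI} to reduce the statement to an additive bound on the EPR state: it suffices to show that $\rho_{\mathcal{E}} := [\Phi_{\mathcal{E}} \otimes \mathbbm{1}](P_{\text{EPR}})$ and $\rho_a^{\text{TI}} := [\Phi_a^{\text{TI}} \otimes \mathbbm{1}](P_{\text{EPR}})$ are close in $\lVert \cdot \rVert_\infty$ relative to the non-zero eigenvalue $m^k k!/D^{2k}$ of $\rho_a^{\text{TI}}$ (up to the constant $3$ in that lemma). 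Since $U = U_2^{\otimes m} U_1^{\otimes m}$, twirling $U^{\otimes k}$ uses $mk$ copies of each of $U_1$ and $U_2$; as each is an $\tfrac{\varepsilon}{4}$-approximate unitary $mk$-design, I would replace each twirl by the approximate Haar twirl at order $mk$ on the relevant patch-pairs, using Lemma~\ref{lemma: approximate symmetric Haar twirl} with $G$ trivial. This costs relative error $\tfrac{\varepsilon}{4}$ per layer plus a correction of order $(mk)^2/2^{2\xi}$, which is negligible under $\xi \geq \log_2(32 n^6 k^6/\varepsilon)$. Thus it is enough to control $\rho_{\mathcal{E},a} := [(\Phi_a^{(2)} \circ \Phi_a^{(1)}) \otimes \mathbbm{1}](P_{\text{EPR}})$ in place of $\rho_{\mathcal{E}}$.

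Next I would expand $\rho_{\mathcal{E},a}$ explicitly. The layer-one approximate Haar twirl contributes a single permutation $\sigma \in S_{mk}$ acting on all $mk$ slots (the $k$ copies of the circuit times the $m$ even patch-pairs), and the layer-two twirl a single permutation $\tau \in S_{mk}$ on the odd patch-pairs; contracting the two leaves, on each qubit lying on a boundary between consecutive patches, a partial trace of $\sigma\tau^{-1}$ restricted to that boundary copy, exactly as in Eq.~(\ref{eq: sym gluing proof}) but now with no symmetry-group insertions. The goal is then to \emph{identify} which pairs $(\sigma,\tau)$ reassemble into a term of $\Phi_a^{\text{TI}}$, namely a term of the form $\big(\bigotimes_{j=1}^k \mathcal{T}^{t_j}\big)\pi$ from Eq.~(\ref{eq: approx TI Haar twirl}). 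The point is that a global translation by $2\xi t_j$ on copy $j$ corresponds precisely to cyclically shifting, by $t_j$ positions around the ring of $2m$ patches, the permutation pattern of that copy; so the ``diagonal'' $(\sigma,\tau)$ are those for which $\sigma$ and $\tau$ agree after such a copywise cyclic shift, and I would show these sum exactly to $\rho_a^{\text{TI}}$ (this is the TI analogue of the $\sigma=\tau$ diagonal in the symmetric gluing proof, and it is what pins down the correct approximate twirl and its normalization $m^k k!/D^{2k}$).

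The remaining, and main, task is to bound the off-diagonal contribution, i.e.\ the sum over all $(\sigma,\tau)$ that are \emph{not} related by a copywise cyclic shift. Each such term is suppressed by a factor $2^{-\xi}$ for every patch boundary at which the contracted permutation $\sigma\tau^{-1}$ restricted to that boundary is nontrivial, and the number of such boundaries grows with the ``distance'' between $\sigma$ and the orbit of $\tau$ under the $m$ cyclic shifts. The difficulty — and the reason the threshold is $\xi \gtrsim \log_2(n^6 k^6/\varepsilon)$ rather than $\log_2(nk^2/\varepsilon)$ as in Theorem~\ref{thm:main-design} — is that the $m$ patch-pairs in a layer are correlated through a single $S_{mk}$ permutation, so the clean per-patch independence used in the non-TI proof is unavailable; one must instead count permutations of $S_{mk}$ organized by their cycle structure relative to the $m$-fold patch grouping, the $k$-fold copy grouping, and the $m$ cyclic shifts. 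I would handle this with dedicated counting lemmas (cycle-type / generating-function estimates for $S_{mk}$ stratified by the number of ``active'' boundaries), show that the entropy of these permutations is beaten by the $2^{-\xi}$ suppression once $\xi \geq \log_2(32 n^6 k^6/\varepsilon)$, and thereby bound the off-diagonal sum by at most $\tfrac{\varepsilon}{2}$ in the appropriate normalization. Combining the Haar-replacement error, this off-diagonal bound, and Lemma~\ref{lemma: relative error to additive TI} (which absorbs an extra factor $3$ and converts everything to relative error against the exact translation-invariant Haar twirl) then yields the claimed $\varepsilon$-approximate translation-invariant unitary $k$-design. I expect the permutation-counting in this last step to be by far the most delicate part of the argument, and the place where the bulk of the technical work lies.
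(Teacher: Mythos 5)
Your overall route coincides with the paper's: reduce to an additive bound on the EPR state via Lemma~\ref{lemma: relative error to additive TI}, replace each layer by an approximate Haar twirl at order $mk$ (cost $\varepsilon/4$ per layer plus an $O((mk)^2/2^{2\xi})$ correction), expand the result as a double sum over permutations $\sigma,\tilde\sigma\in S_{mk}$ of the patch slots, split into terms that reproduce $\Phi^{\text{TI}}_a$ plus a suppressed remainder, and control the remainder by counting permutations against $2^{-\xi}$ damping. Two problems remain, one minor and one substantive. The minor one: your identification of the ``diagonal'' is wrong as stated. The terms of Eq.~(\ref{eq: Phi TI E EPR}) that reproduce Eq.~(\ref{eq: Phi TI a EPR}) are exactly those with $\sigma=\tilde\sigma$ \emph{and} $[\sigma,\mathcal{T}]=0$, i.e.\ the terms whose weight $D^{-|\tilde\sigma^{-1}\mathcal{T}^{-1}\sigma\mathcal{T}|}D^{-|\tilde\sigma^{-1}\sigma|}$ equals $1$; pairs merely related by a copywise cyclic shift (e.g.\ $\tilde\sigma=\mathcal{T}^{-1}\sigma\mathcal{T}$ with $\sigma\neq\tilde\sigma$) carry nontrivial suppression and do not have the operator structure $[\sigma_e\otimes\sigma_o]\otimes[\sigma_e\otimes\sigma_o]$, so they cannot ``sum exactly to $\rho_a^{\text{TI}}$'' and must be counted in the error. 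This would be caught and fixed in execution.

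The substantive gap is that the decisive step -- bounding the off-diagonal sum -- is only gestured at (``dedicated counting lemmas\dots show that the entropy is beaten by the $2^{-\xi}$ suppression''), and this is precisely where the theorem lives. What the paper actually needs, and proves, are two specific estimates: first, summing over the second permutation at fixed $\sigma$, $\sum_{\Delta\in S_K} D^{-|\Delta|}D^{-|\Delta^{-1}\tau|}\leq 8\,(K^2/2D)^{|\tau|}$ with $\tau=\sigma^{-1}\mathcal{T}^{-1}\sigma\mathcal{T}$ (Proposition~\ref{prop: perm bound 1}); second, and crucially, that the number of $\sigma\in S_K$ with $|\sigma^{-1}\mathcal{T}^{-1}\sigma\mathcal{T}|=r$ is at most $(n^k k!)\,K^{4r}$ (Proposition~\ref{prop: perm bound 2}). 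The second bound is the whole point: it says the entropy at ``commutator distance'' $r$ from the translation-commuting set exceeds the size $n^k k!$ of the diagonal only by $K^{O(r)}$, so the suppression wins term by term and the total relative error is $O(K^6/D)$, which is where the threshold $\xi\geq\log_2(32 n^6 k^6/\varepsilon)$ comes from. A generic stratification by cycle type or by ``active boundaries'' does not obviously produce a prefactor proportional to the diagonal count; the proof hinges on the fixed-point structure of $\sigma$ relative to $\mathcal{T}$ (at most $2r$ non-fixed points, whose images determine all remaining images within each copy, with a free choice only in wholly fixed copies). Since you yourself identify this counting as the bulk of the work and do not supply it, the proposal as written leaves the central difficulty of the theorem unproven, even though the scaffolding around it matches the paper.
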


\noindent By drawing each of $U_1$ and $U_2$ from a PRU ensemble on $2\xi$ qubits with security against sub-exponential time quantum adversaries, the theorem immediately leads to translation-invariant pseudorandom unitaries in extremely low depth.

\begin{corollary}[Translation-invariant PRUs in extremely low depth]
    Translation-invariant PRUs exist, and can be formed in depth $\poly(\log n)$ in geometrically-local circuits, and $\poly \log \log n$ in all-to-all connected circuits.
\end{corollary}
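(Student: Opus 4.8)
The plan is to feed the pseudorandom-unitary constructions of this paper into the slots that Theorem~\ref{thm: gluing TI} reserves for approximate unitary $mk$-designs, and then run the same ``design $\Rightarrow$ PRU'' argument used to prove Theorem~\ref{thm: low depth PRU}. Fix a patch size $\xi = \poly(\log n)$ with $\xi = \omega(\log n)$, chosen large enough (still $\poly(\log n)$) that the hypothesis $\xi \ge \log_2(32 n^6 k^6 / \varepsilon)$ of Theorem~\ref{thm: gluing TI} holds for $k = \poly(n)$ and the negligible design error fixed below. Draw both small unitaries $U_1, U_2$ in the translation-invariant two-layer circuit of Eq.~(\ref{eq: TI TL BW circuit}) from a $2\xi$-qubit PRU ensemble secure against all $\exp(o(\xi))$-time quantum adversaries, which exists under LWE by the non-symmetric specialization of Theorem~\ref{thm:poly} (equivalently, by~\cite{ma2024construct}). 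Since $\xi = \omega(\log n)$, each such block is secure against every $\poly(n)$-time adversary. Call the resulting $n$-qubit translation-invariant ensemble $\mathcal{U}$; the claim is that $\mathcal{U}$ is a translation-invariant PRU.

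To prove indistinguishability, fix a $\poly(n)$-time adversary $\mathcal{A}$ making $k = \poly(n)$ oracle queries, and compare three ensembles: $H_0 = \mathcal{U}$, in which all $2m$ blocks are $2\xi$-qubit PRUs; $H_1$, in which every block is replaced by an efficiently sampleable $\varepsilon'$-approximate unitary $mk$-design on $2\xi$ qubits with $\varepsilon' = \mathrm{negl}(n)$ (these exist and can be generated in $\poly(n)$ depth by standard random-circuit design results); and $H_2$, the translation-invariant Haar ensemble. The distance $\norm{H_1 - H_2}$ is negligible: by Theorem~\ref{thm: gluing TI} applied with $\varepsilon = 4\varepsilon'$, the ensemble $H_1$ is a $(4\varepsilon')$-approximate translation-invariant unitary $k$-design, so no $k$-query algorithm separates it from $H_2$ with advantage beyond $O(\varepsilon') = \mathrm{negl}(n)$. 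The distance $\norm{H_0 - H_1}$ is handled by a block-by-block hybrid over the $2m$ blocks, changing one block at a time from a PRU to an approximate $mk$-design. Each step is detected only by an efficient distinguisher that holds oracle access to the single changed block and simulates the other $2m-1$ blocks internally (as PRUs or as efficient $mk$-designs --- both $\poly(n)$-time), makes at most $mk \le \poly(n)$ oracle queries, and therefore (i) cannot tell the PRU block from Haar, by PRU security of the block, and (ii) cannot tell Haar from the $mk$-design, by the defining property of an $mk$-design; so each step costs $\mathrm{negl}(n)$, and summing over $2m \le n$ steps gives $\norm{H_0 - H_1} \le \mathrm{negl}(n)$. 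Hence $\mathcal{A}$'s advantage against $\mathcal{U}$ versus translation-invariant Haar is negligible, and $\mathcal{U}$ is a translation-invariant PRU.

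For the circuit depth, it only remains to compile each $2\xi$-qubit block PRU. In a geometrically-local architecture one compiles the $2\xi$-qubit ensemble of~\cite{ma2024construct} directly inside the patch in depth $\poly(\xi) = \poly(\log n)$, so the two-layer circuit has depth $2\,\poly(\log n) = \poly(\log n)$. In an all-to-all-connected architecture, the classical primitives of that ensemble are low-depth and random Cliffords are low-depth, so a $2\xi$-qubit PRU can be compiled in depth $\poly(\log \xi) = \poly(\log\log n)$ --- or, alternatively, by nesting the two-layer gluing construction once more inside each patch (the non-symmetric analogue of Theorem~\ref{thm:polylog} applied to a $2\xi$-qubit system, whose inner sub-patches remain of size $\omega(\log n)$ and so retain $\poly(n)$-security) --- giving overall depth $\poly(\log\log n)$. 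This is exactly the depth bookkeeping already carried out for non-symmetric PRUs in~\cite{schuster2024random}, which I would simply import.

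The main obstacle is the $\norm{H_0 - H_1}$ step. One must notice that a single query to the $n$-qubit oracle induces $m$ \emph{parallel} uses of each first-layer block and $m$ of each second-layer block, so over $\mathcal{A}$'s $k$ queries every individual block instance is queried $\Theta(mk) = \poly(n)$ times. This is precisely why Theorem~\ref{thm: gluing TI} is stated for $mk$-designs rather than $k$-designs, and why the block PRUs must be secure against $\poly(n)$-time (equivalently, $\poly(n)$-query) adversaries rather than merely $k$-query ones; it is also why, inside the hybrid, the ``other'' blocks must be simulated by efficiently sampleable approximate $mk$-designs (not by Haar-random blocks, which are not efficiently implementable) in order for every intermediate distinguisher to be genuinely $\poly(n)$-time. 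Once this accounting is pinned down, the rest is a direct application of Theorem~\ref{thm: gluing TI} together with the low-depth PRU machinery established in Theorems~\ref{thm:poly} and~\ref{thm:polylog}.
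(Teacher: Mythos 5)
Your overall route is the same as the paper's: instantiate the two blocks of the translation-invariant brickwork with low-depth PRUs secure against $\exp(o(\xi))$-time adversaries, pass to approximate $mk$-designs, invoke Theorem~\ref{thm: gluing TI} to get an approximate translation-invariant $k$-design, and import the depth bookkeeping of~\cite{schuster2024random} for the $\poly(\log n)$ and $\poly\log\log n$ compilations. Your ``main obstacle'' paragraph also correctly identifies the key accounting point: each query to the $n$-qubit unitary uses each layer's unitary $m$ times in parallel, so over $k$ queries a block is invoked $\Theta(mk)$ times, which is exactly why the gluing theorem is stated for $mk$-designs and why the block PRUs need security against general $\poly(n)$-time adversaries.

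The one step that does not survive scrutiny as written is the $\lVert H_0 - H_1\rVert$ hybrid ``over the $2m$ blocks, changing one block at a time.'' In the translation-invariant construction $U = U_2^{\otimes m} U_1^{\otimes m}$ there are only \emph{two} independent random unitaries; the $m$ copies within a layer are the same realization, not independent draws. An intermediate hybrid in which one position of layer one has been switched to an independent design draw while the remaining $m-1$ positions still carry the common PRU draw cannot be analyzed by giving the distinguisher ``oracle access to the single changed block'' while it ``simulates the other $2m-1$ blocks internally'': those other copies are perfectly correlated with the block being changed, so the reduction to the single-oracle PRU security of that block is not valid, and the intermediate ensembles are not even of the translation-invariant form. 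The fix is immediate and is what the paper (via the proof template of Theorem~\ref{thm: low depth PRU} and~\cite{schuster2024random}) intends: a two-step hybrid, first replacing $U_1$ and then $U_2$, where the reduction holds oracle access to the one unknown $2\xi$-qubit unitary, calls it $m$ times per adversary query (hence $mk = \poly(n)$ times in total) to realize all copies in its layer, and simulates the other layer efficiently (as a PRU or as an efficiently sampleable $mk$-design). With that correction, and your own observation about the $mk$-fold query count, the rest of your argument goes through and matches the paper.
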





\begin{proof}[Proof of Theorem~\ref{thm: gluing TI}]
We let $K \equiv mk$, and set $D \equiv 2^\xi$ for the duration of this proof.
By assumption, the twirl when $U_1$ and $U_2$ are drawn from approximate designs is close to the twirl when they are both drawn from the Haar measure, up to relative error $(1+\varepsilon/4)(1+\varepsilon/4)$.
The latter is given by,
\begin{equation} \label{eq: Phi TI E EPR}
\begin{split}
    [ \Phi_{\mathcal{E}} \otimes \mathbbm{1} ](P_{\text{EPR}})
    & =
    \frac{1}{D^{3k}} \sum_{\sigma,\tilde{\sigma} \in S_K}
    \tr_e ( (\mathcal{T} \tilde{\sigma} \mathcal{T}^{-1})^{-1} \sigma )
    \cdot
    \tr_o ( \tilde{\sigma}^{-1} \sigma )
    \cdot
    [ ( \mathcal{T} \tilde{\sigma} \mathcal{T}^{-1} )_e \otimes \tilde{\sigma}_o ]
    \otimes 
    [ \sigma_e \otimes \sigma_o ] \\
    & =
    \frac{1}{D^{2k}} \sum_{\sigma,\tilde{\sigma} \in S_K}
    D^{-| \tilde{\sigma}^{-1}  \mathcal{T}^{-1} \sigma \mathcal{T} | }
    \cdot
    D^{-| \tilde{\sigma}^{-1} \sigma |} 
    \cdot
    [ ( \mathcal{T} \tilde{\sigma} \mathcal{T}^{-1} )_e \otimes \tilde{\sigma}_o ]
    \otimes 
    [ \sigma_e \otimes \sigma_o ],
\end{split}
\end{equation}
up to an additional relative error $(1+ K^2 / D^{2})^2$ from replacing each small Haar twirl with its approximate form (from Lemma~2 of Ref.~\cite{schuster2024random}).
Here, $\sigma,\tilde{\sigma} \in S_K$ are summed over all possible permutations of the $K = mk$ patches within the $k$ copies.
We can also consider the analogous expression for the approximate Haar twirl,
\begin{equation} \label{eq: Phi TI a EPR}
\begin{split}
    [ \Phi_{a}^{\text{TI}} \otimes \mathbbm{1} ](P_{\text{EPR}})
    & =
    \frac{1}{D^{2k}} \!\! \sum_{\sigma  \, : \,  [\sigma,\mathcal{T}^{\otimes k}] = 0  } \!\!
    [ \sigma_e \otimes \sigma_o ]
    \otimes 
    [ \sigma_e \otimes \sigma_o ],
\end{split}
\end{equation}
where $\sigma$ is summed over the $n^k k!$ permutations in $S_K$ that commute with $\mathcal{T}^{\otimes k}$.
Note that this summation is precisely equivalent to the summation over $\pi$ and $\bs{t}$ in Eq.~(\ref{eq: approx TI Haar twirl}), upon identifying $\sigma = \bigotimes_{j=1}^k \mathcal{T}^{t_j} \cdot \pi$.
The terms in Eq.~(\ref{eq: Phi TI a EPR}) exactly correspond to the terms in Eq.~(\ref{eq: Phi TI E EPR}) with $\sigma = \tilde{\sigma}$ and $[\sigma,\mathcal{T}^{\otimes k}] = 0$.
Taking the difference between the two expressions and applying the triangle inequality, we have,
\begin{equation}
    \left\lVert [ \delta \Phi \otimes \mathbbm{1} ](P_{\text{EPR}}) \right\rVert_\infty
     \leq
    \frac{1}{D^{2k}} \left( \sum_{\sigma,\tilde{\sigma} \in S_K}
    D^{-| \tilde{\sigma}^{-1}  \mathcal{T}^{-1} \sigma \mathcal{T} | }
    D^{-| \tilde{\sigma}^{-1} \sigma |} 
    - n^k k! \right),
\end{equation}
where the negative term accounts for the $n^k k!$ permutations with $\sigma = \tilde{\sigma}$ and $[\sigma,\mathcal{T}^{\otimes k}] = 0$.

We bound the remaining sum in two steps. First, to compute the sum over $\tilde \sigma$, we prove the following proposition.
\begin{proposition} \label{prop: perm bound 1}
    $\sum_{\Delta \in S_K} D^{-|\Delta|} D^{-| \Delta^{-1} \tau |} \leq 8 (K^2 /2D)^{-|\tau|}$ for any permutation $\tau \in S_K$.
\end{proposition}
\begin{proof}
    Let us write $\sum_{\Delta \in S_K} D^{-|\Delta|} D^{-| \Delta^{-1} \tau |} = \sum_{r_1,r_2} N(r_1,r_2) D^{-r_1 - r_2}$, where $N(r_1,r_2)$ counts the number of $\Delta$ with both $|\Delta| = r_1$ and $| \Delta^{-1} \tau | = r_2$.
    We can upper bound,
    $$N(r_1,r_2) \leq \min( N_1(r_1) , N_2(r_2) ),$$
    where $N_1(r_1)$ counts the number of $\Delta$ with $|\Delta| = r_1$, and $N_2(r_2)$ counts the number with $|\Delta^{-1} \tau| = r_2$.
    We have $N_1(r_1) \leq (K^2/2)^{r_1}$, since there are at most ${K \choose 2} \leq K^2/2$ ways to place each of $r_1$ transpositions, to obtain $\Delta$ from the identity permutation.
    Similarly, $N_2(r_2) \leq (K^2/2)^{r_2}$.
    Thus, we have $\sum_{\Delta \in S_K} D^{-|\Delta|} D^{-| \Delta^{-1} \tau |} \leq \sum_{r_1,r_2} (K^2/2)^{\min(r_1,r_2)} D^{-r_1-r_2}$.

    To proceed, let us re-arrange the sums via $\sum_{r_1,r_2} = \sum_r \sum_{r_1+r_2=r}$.
    Note that for a fixed $r \equiv r_1+r_2$, there are $r+1$ valid values of $r_1,r_2$.
    We can quickly upper bound the sum over these $r_1,r_2$, 
    \begin{equation}
        \sum_{r_1+r_2=r} (K^2/2)^{\min(r_1,r_2)} \leq 2 \sum_{r_1=0}^{\lfloor r/2 \rfloor} (K^2/2)^{r_1} = 2 \frac{(K^2/2)^{\lfloor r/2 \rfloor}-2/K^2}{1-2/K^2} \leq 4 (K^2/2)^{r/2},
    \end{equation}
    where the final inequality assumes $K^2 \geq 4$.
    This yields, $\sum_{\Delta \in S_K} D^{-|\Delta|} D^{-| \Delta^{-1} \tau |} \leq 4 \sum_{r} (K^2/2)^{r/2} D^{-r}$.
    Now, we have $|\Delta^{-1} \tau | \geq |\tau| - |\Delta|$, since each transposition in $\Delta$ can decrease the distance of $\Delta^{-1} \tau$ by at most one.
    Thus, the sum is restricted to $r = r_1 + r_2 \geq |\tau|$.
    We can compute,
    \begin{equation}
        4 \sum_{r= |\tau|}^{2K} (K^2/2)^{r/2} D^{-r} \leq 4 \left( \frac{K^2}{2D} \right)^{|\tau|} \frac{1}{1-K^2/2D} \leq 8 \left( \frac{K^2}{2D} \right)^{|\tau|},
    \end{equation}
    where the final inequality assumes $K^2/D \leq 1$. This completes the proof.
\end{proof}
\noindent Applying Proposition~\ref{prop: perm bound 1} to our spectral norm bound, with $\tau = \sigma^{-1} \mathcal{T}^{-1} \sigma \mathcal{T}$, yields,
\begin{equation}
    \left\lVert [ \delta \Phi \otimes \mathbbm{1} ](P_{\text{EPR}}) \right\rVert_\infty
     \leq 
    \frac{8}{D^{2k}} \sum_{\sigma \, : \, [ \sigma, \mathcal{T}^{\otimes k}] \neq 0}
    (K^2/2D)^{-| \sigma^{-1}  \mathcal{T}^{-1} \sigma \mathcal{T} | }.
\end{equation}
As our second step, we compute the remaining sum using the following counting argument.

\begin{proposition} \label{prop: perm bound 2}
    The number of permutations $\sigma \in S_K$ with $|\sigma^{-1} \mathcal{T}^{-1} \sigma \mathcal{T}| = r$ is upper bounded by $(n^k k!) K^{4r}$.
\end{proposition}
\begin{proof}
    Any permutation of distance $r$ has at most $2r$ points that are not fixed.
    In the present context, a point $(i,j) \in [n] \otimes [k]$ is fixed if $\sigma((i,j)+(1,0)) = \sigma((i,j))+(1,0)$.
    There are ${K \choose 2r} \leq K^{2r}$ ways to choose the location of the non-fixed points (i.e. the $(i,j)$ for which this condition is not satisfied), and at most $K!/(K-2r)! \leq K^{2r}$ locations they can be sent to by $\sigma$.
    Thus, there are at most $K^{4r}$ configurations.
    
    For the remaining, fixed, points, we note that, as soon as $\sigma((i,j))$ is decided for all the non-fixed points in a given copy $j$, the values of $\sigma((i,j))$ for the remaining fixed points in that copy are fully determined.
    This follows because the value of $\sigma((i,j))$ for each fixed point is determined by the value at $(i+1,j)$, which is determined by $(i+2,j)$, and so on, until one reaches a non-fixed point.
    The only exception to this argument is if all points in a given copy $j$ are fixed.
    In this case, we are free to choose $\sigma((i,j))$ for a \emph{single} point in copy $j$, and the remaining $n-1$ points are fully determined by this choice.
    Our choice must place $\sigma((i,j))$ into a new copy where no points have been placed thus far, because the fixed point condition will place all $n$ qubits from copy $j$ onto this new copy.
    There are at most $k$ copies with no non-fixed points.
    Thus, there are at most $n^k k!$ ways to place the fixed points.
    
    Combining our two upper bounds, we find that there are at most $(n^k k!) K^{4r}$ ways to place the fixed and non-fixed points of $\sigma$.
    This completes our proof. 
\end{proof}

\noindent Applying Proposition~\ref{prop: perm bound 2} to our spectral norm bound yields,
\begin{equation}
    \left\lVert [ \delta \Phi \otimes \mathbbm{1} ](P_{\text{EPR}}) \right\rVert_\infty
     \leq 
    \frac{n^k k!}{D^{2k}} \cdot 8 \sum_{r=1}^{\infty}(K^6/2D)^{- r }
    = \frac{n^k k!}{D^{2k}} \cdot 8 K^6/D,
\end{equation}
where the final inequality holds for $K^6 \leq D$.
Applying Lemma~\ref{lemma: relative error to additive TI} shows that our ensemble is a translation-invariant unitary $k$-design with relative error $(1+\varepsilon/4)^2 (1+ K^2 / D^{2})^2 (1+8K^6/D)$.
We set $8K^6/D \leq \varepsilon/4$, i.e.~$\xi \geq \log_2(32 n^6 k^6/\varepsilon)$, which also yields $K^2/D^2 \leq \varepsilon/(32 D K^4) \leq \varepsilon/1088$, assuming trivially that $D, K \geq 2$.
The inequality $(1+\varepsilon/4)^3 (1+\varepsilon/1088)^2 \leq 1+\varepsilon$ for $\varepsilon \leq 1$ completes our proof.
\end{proof}

\subsection{Symmetric translation-invariant PRUs in extremely low depth} \label{sec: on site TI}

We conclude our discussion of translation-invariant random unitaries by showing that the results of the previous two subsections also extend to translation-invariant random unitaries with any discrete on-site symmetry.
We will move relatively quickly through the details in this section, since the derivations exactly mimic those previous.

We consider a periodic 1D chain of $n$ qubits. 
We suppose that the system possesses a translation symmetry $\mathcal{T}$ over a distance $2\xi$, as well as a discrete on-site symmetry $R_g = \bigotimes_{i=1}^n R^i_g$, where $g \in G$ is the symmetry group element.
As in Appendix~\ref{sec: no on site TI}, we will begin by positing an expression for the symmetric translation-invariant Haar twirl,
\begin{equation}
    \Phi_a^{G,\text{TI}}(\rho) = \frac{1}{D^k} 
    \sum_\pi \sum_{\bs{t}} \sum_{\bs{g}}
    \tr \bigg( \rho \cdot \pi^{-1} \cdot \bigotimes_{j=1}^k \mathcal{T}^{-t_j}  \cdot R_{\bs{g}}^{-1} \bigg) 
    \cdot 
    R_{\bs{g}}
    \cdot
    \bigotimes_{j=1}^k \mathcal{T}^{t_j}
    \cdot 
    \pi.
\end{equation}
The expression is precisely analogous to our previous formulas for the approximate symmetric and translation-invariant Haar twirls.
Next, we can establish an analogous technical lemma for bounding the relative error of any symmetric translation-invariant unitary design.
\begin{lemma}[Symmetric translation-invariant unitary designs from EPR states] \label{lemma: relative error to additive TI sym}
    For any discrete on-site symmetry group $G$. 
    An ensemble $\mathcal{E}$ of symmetric translation-invariant unitaries forms a symmetric translation-invariant $\varepsilon$-approximate unitary $k$-design with relative error
    \begin{equation}  \label{eq: relative EPR TI 2}
        \varepsilon =  \frac{3 D^{2k}}{|G|^k n^k k!} \left\lVert [\delta \Phi \otimes \mathbbm{1}](P_{\text{EPR}}) \right\rVert_\infty,
    \end{equation}
    where $\delta \Phi = \Phi_\mathcal{E} - \Phi_a^{G,\text{\emph{TI}}}$, and $P_{\text{EPR}}$ is the EPR state on $\mathcal{H}^{\otimes k} \otimes \mathcal{H}^{\otimes k}$.
\end{lemma}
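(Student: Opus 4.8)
The plan is to follow the template of the two earlier relative-error lemmas --- Lemma~\ref{lemma: relative error to additive} (symmetric, no translation) and Lemma~\ref{lemma: relative error to additive TI} (translation-invariant, no on-site symmetry) --- since the present statement is essentially the ``product'' of the two. The four steps are: identify the commutant of $\{U^{\otimes k}\}$ over symmetric translation-invariant $U$; show $[\Phi_a^{G,\text{TI}}\otimes\mathbbm{1}](P_{\text{EPR}})$ is a flat-spectrum operator supported on a projector subspace; show that any symmetric translation-invariant twirl $\rho_{\mathcal{E}}$ is supported in that same subspace; and finally convert the resulting bound on the distance between $\Phi_{\mathcal{E}}$ and $\Phi_a^{G,\text{TI}}$ into a bound on the distance between $\Phi_{\mathcal{E}}$ and the exact Haar twirl $\Phi_H^{G,\text{TI}}$ via a Haar-twirl short-cut.

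First I would observe that for a symmetric translation-invariant unitary $U$, the operator $U^{\otimes k}$ commutes with every permutation $\pi\in S_k$ of the copies (it is a tensor power), with every $\bigotimes_{j=1}^k \mathcal{T}^{t_j}$ (each $U$ commutes with $\mathcal{T}$), and with every $R_{\bs g} = \bigotimes_{j=1}^k R_{g_j}$ (each $U$ commutes with $R_g$). Since $\mathcal{T}$ commutes with each $R_g$ and $\pi$ merely permutes both families of indices, the set $\{\, R_{\bs g}\,(\bigotimes_j \mathcal{T}^{t_j})\,\pi \,\}$ is closed under composition up to relabelling $(\bs g,\bs t,\pi)$; hence it spans the commutant and $\Phi_a^{G,\text{TI}}$ is precisely the map that projects onto its span. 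Exactly as in Eq.~\eqref{eq: P squares to 1}, this lets one write $[\Phi_a^{G,\text{TI}}\otimes\mathbbm{1}](P_{\text{EPR}}) = (N/D^{2k})\,P$, where $P = N^{-1}\sum_{\pi,\bs t,\bs g}\big(R_{\bs g}(\bigotimes_j\mathcal{T}^{t_j})\pi\big)\otimes\big(R_{\bs g}^*(\bigotimes_j\mathcal{T}^{t_j})\pi\big)$ and $N = |G|^k\, n^k\, k!$ is the count of generators (the $n^k k!$ translation-permutation combinations of Lemma~\ref{lemma: relative error to additive TI} times the $|G|^k$ choices of $\bs g$); the same telescoping reindexing as in the two earlier proofs shows $P^2=P$, with the only new input being that the $R$'s and $\mathcal{T}$'s commute so the composition again closes inside the generating set.

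Next, $\rho_{\mathcal{E}} \equiv [\Phi_{\mathcal{E}}\otimes\mathbbm{1}](P_{\text{EPR}})$ is supported inside $\mathrm{supp}(P)$: each generator commutes with $U^{\otimes k}$ and satisfies $(O\otimes O^*)\ket{\Psi_{\text{EPR}}} = \ket{\Psi_{\text{EPR}}}$ for unitary $O$, so $P\,(U^{\otimes k}\otimes\mathbbm{1})\ket{\Psi_{\text{EPR}}} = (U^{\otimes k}\otimes\mathbbm{1})\ket{\Psi_{\text{EPR}}}$. Flat spectrum plus containment gives a relative-error bound $\varepsilon' = (D^{2k}/N)\,\lVert[\delta\Phi\otimes\mathbbm{1}](P_{\text{EPR}})\rVert_\infty$ between $\Phi_{\mathcal{E}}$ and $\Phi_a^{G,\text{TI}}$ on the EPR state, which upgrades to all states via $\Phi(\rho) = D^{2k}\,\tr_2\!\big((\mathbbm{1}\otimes\rho^T)\,[\Phi\otimes\mathbbm{1}](P_{\text{EPR}})\big)$. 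Then, exactly as in the proof of Lemma~\ref{lemma: relative error to additive TI}, applying $\Phi_H^{G,\text{TI}}$ to $(1-\varepsilon')\Phi_a^{G,\text{TI}} \preceq \Phi_{\mathcal{E}} \preceq (1+\varepsilon')\Phi_a^{G,\text{TI}}$ and using $\Phi_H^{G,\text{TI}}\circ\Phi_{\mathcal{E}} = \Phi_H^{G,\text{TI}}$ (invariance of the Haar measure) together with $\Phi_H^{G,\text{TI}}\circ\Phi_a^{G,\text{TI}} = \Phi_a^{G,\text{TI}}$ (since $\Phi_a^{G,\text{TI}}$ is a sum of operators commuting with every symmetric translation-invariant unitary) yields $(1-\varepsilon')\Phi_a^{G,\text{TI}} \preceq \Phi_H^{G,\text{TI}} \preceq (1+\varepsilon')\Phi_a^{G,\text{TI}}$; chaining the two sandwiches gives $(1-3\varepsilon')\Phi_H^{G,\text{TI}} \preceq \Phi_{\mathcal{E}} \preceq (1+3\varepsilon')\Phi_H^{G,\text{TI}}$ for $3\varepsilon'\le 1$, and setting $\varepsilon = 3\varepsilon'$ is precisely Eq.~\eqref{eq: relative EPR TI 2}.

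The main obstacle is bookkeeping rather than a new idea: getting the normalization $N$ right, i.e. checking that the $P$ defined above really is a projector (the generating set is closed under composition, and each distinct generator is counted with equal multiplicity in the sum), so that the nonzero eigenvalue of $[\Phi_a^{G,\text{TI}}\otimes\mathbbm{1}](P_{\text{EPR}})$ is $N/D^{2k}$ with $N = |G|^k n^k k!$. This is the symmetric analogue of the remark in the proof of Lemma~\ref{lemma: relative error to additive TI} that the eigenvalue there is $n^k k!/D^{2k}$; everything else is a line-by-line transcription of the two earlier proofs with the permutation index set $S_k$ enlarged to include the translation powers $\bs t$ and the symmetry-group elements $\bs g$, using only the single extra fact that the translation operator commutes with the on-site symmetry operators.
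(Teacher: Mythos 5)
Your proposal is correct and follows essentially the same route as the paper: the paper's own proof simply says the argument proceeds step-by-step as in Lemma~\ref{lemma: relative error to additive TI} (which itself mirrors Lemma~\ref{lemma: relative error to additive}), with the extra factor $|G|^k$ coming from the sum over $R_{\bs g}$, which is exactly your flat-spectrum-projector-plus-Haar-shortcut argument with $N = |G|^k n^k k!$. You in fact spell out more of the bookkeeping (closure of the generating set using $[\mathcal{T},R_g]=0$ and the projector check) than the paper does.
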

\begin{proof}
    The proof of the lemma proceeds step-by-step in the exact same manner as the proof of Lemma~\ref{lemma: relative error to additive TI}.
    The additional factor of $|G|^k$ arises from the sum over symmetry operators $R_{\bs{g}}$.
\end{proof}

We can now analyze the symmetric translation-invariant two-layer circuit.
Similar to in Appendix~\ref{sec: no on site TI}, we have $U = U_2^{\otimes m}  \cdot U_1^{\otimes m}$, where $U_1, U_2$ are symmetric random unitaries on $2\xi$ qubits each, and $U_1^{\otimes m}$ acts on even pairs of patches and $U_2^{\otimes m}$ acts on odd pairs of patches.
Our main result is as follows.
\begin{theorem}[Gluing symmetric translation-invariant random unitaries] \label{thm: gluing TI sym}
    For any discrete on-site symmetry group $G$ and any $\varepsilon \leq 1$.
    Consider the symmetric translation-invariant two-layer brickwork circuit, $U$, described above.
    Suppose that the small random unitaries, $U_1$ and $U_2$, are each drawn from $\frac{\varepsilon}{4}$-approximate symmetric unitary $mk$-designs.
    Then $U$ forms an $\varepsilon$-approximate symmetric translation-invariant unitary $k$-design, as long as $\xi \geq \log_2(32 |G| n^6 k^6 / \varepsilon)$.
\end{theorem}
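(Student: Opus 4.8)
The plan is to follow the same three-part structure as the proof of Theorem~\ref{thm: gluing TI}, but carrying the symmetry-operator sums through every step. First I would compute the EPR-state image of the twirl $\Phi_\mathcal{E}$ when $U_1, U_2$ are drawn from the \emph{symmetric} Haar measure on each patch. By the approximate symmetric Haar twirl (Lemma~\ref{lemma: approximate symmetric Haar twirl}), each small patch twirl contributes a sum over both a permutation and a tensor of symmetry operators $R_{\bs g}$; gluing the even and odd layers yields an expression of the schematic form
\begin{equation}
    [\Phi_\mathcal{E} \otimes \mathbbm{1}](P_{\text{EPR}})
    = \frac{1}{D^{2k}} \sum_{\sigma,\tilde\sigma \in S_K} \sum_{\bs g, \tilde{\bs g}}
    (\ldots) \cdot
    D^{-|\tilde\sigma^{-1} \mathcal{T}^{-1} \sigma \mathcal{T}|}
    \cdot \widetilde{|G|}^{\,-(\ldots)} \cdots,
\end{equation}
where the matching condition on the group labels on the gluing region $B$ produces delta functions forcing $\tilde{\bs g}$ to equal $\bs g$ up to the permutation $\tilde\sigma^{-1}\mathcal{T}^{-1}\sigma\mathcal{T}$, in exact analogy to the fourth line of Eq.~(\ref{eq: sym gluing proof}). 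The diagonal terms with $\sigma = \tilde\sigma$, $[\sigma,\mathcal{T}^{\otimes k}]=0$ reproduce $[\Phi_a^{G,\text{TI}} \otimes \mathbbm{1}](P_{\text{EPR}})$, whose nonzero eigenvalue is $|G|^k n^k k!/D^{2k}$.

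Next I would bound the off-diagonal remainder. Taking the difference and applying the triangle inequality reduces matters to a sum
\begin{equation}
    \big\lVert [\delta\Phi \otimes \mathbbm{1}](P_{\text{EPR}}) \big\rVert_\infty
    \leq \frac{|G|^k}{D^{2k}} \bigg( \sum_{\sigma,\tilde\sigma \in S_K} (|G|/D)^{|\tilde\sigma^{-1}\mathcal{T}^{-1}\sigma\mathcal{T}|} (|G|/D)^{|\tilde\sigma^{-1}\sigma|} - n^k k! \bigg),
\end{equation}
where the extra factor $|G|^k$ and the replacement $D \to D/|G|$ in the geometric weights come from performing the group sums exactly as in Eq.~(\ref{eq: sym gluing proof}): each cycle of the relevant permutation contributes a factor $|G|$ from summing over free group elements and a factor $D/|G|$ (rather than $D$) from the trace with the delta-function constraint. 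Then I would reuse Proposition~\ref{prop: perm bound 1} (with $D$ replaced by $D/|G|$) to perform the $\tilde\sigma$ sum, and Proposition~\ref{prop: perm bound 2} to count the permutations $\sigma$ with $|\sigma^{-1}\mathcal{T}^{-1}\sigma\mathcal{T}| = r$ — that counting argument is purely combinatorial and is unchanged by the symmetry. This gives
\begin{equation}
    \big\lVert [\delta\Phi \otimes \mathbbm{1}](P_{\text{EPR}}) \big\rVert_\infty
    \leq \frac{|G|^k n^k k!}{D^{2k}} \cdot 8 \sum_{r \geq 1} (K^6 |G|/(2D))^{-r}
    = \frac{|G|^k n^k k!}{D^{2k}} \cdot \frac{8 K^6 |G|}{D},
\end{equation}
valid when $K^6 |G| \leq D$. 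Feeding this into Lemma~\ref{lemma: relative error to additive TI sym} (which supplies the normalizing factor $3D^{2k}/(|G|^k n^k k!)$) shows $\mathcal{E}$ is a symmetric translation-invariant $\varepsilon$-design provided $8 K^6 |G|/D \leq \varepsilon/4$, i.e.\ $\xi \geq \log_2(32 |G| n^6 k^6/\varepsilon)$, after absorbing the small design errors $(1+\varepsilon/4)^2$ and the approximate-twirl errors $(1+k^2|G|/D^2)^2$ exactly as at the end of the proof of Theorem~\ref{thm: gluing TI}.

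\textbf{Main obstacle.} The routine steps are the adaptation of Propositions~\ref{prop: perm bound 1} and~\ref{prop: perm bound 2}. The genuinely delicate step is verifying that, after gluing the two symmetric layers on the overlap region $B$, the sum over the \emph{two} independent families of group labels $\bs g$ (from $U_1$) and $\tilde{\bs g}$ (from $U_2$) collapses correctly: one must check that the trace over $B$ of a product $R_{\bs g} \sigma \tau^{-1} R_{\tilde{\bs g}}^{-1}$ factorizes over the cycles of $\sigma\tau^{-1}$ and vanishes unless the product of group elements around each cycle is trivial — the same mechanism as in Eq.~(\ref{eq: sym gluing proof}) but now intertwined with the translation operators $\mathcal{T}$ acting on whole patches. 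Keeping track of which patch-level translations and which on-site group elements commute past one another, so that the combined commutant is exactly $\{ R_{\bs g} \, \mathcal{T}^{\otimes \bs t} \, \pi \}$ and no cross terms are missed, is where the bookkeeping is heaviest; everything downstream is then a mechanical combination of the earlier lemmas.
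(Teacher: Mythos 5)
Your proposal matches the paper's proof essentially step for step: the same gluing of approximate symmetric Haar twirls on even/odd layers, the same cycle-wise delta-function collapse of the group labels leading to the effective replacement $D \rightarrow D/|G|$ with an overall $|G|^k$ from the remaining group sum, the same reuse of Propositions~\ref{prop: perm bound 1} and~\ref{prop: perm bound 2}, and the same final application of Lemma~\ref{lemma: relative error to additive TI sym} with the threshold $8K^6|G|/D \leq \varepsilon/4$, i.e.\ $\xi \geq \log_2(32|G|n^6k^6/\varepsilon)$. This is correct and is the paper's argument.
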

\noindent By drawing each of $U_1$ and $U_2$ from a symmetric PRU ensemble on $2\xi$ qubits with security against sub-exponential time quantum adversaries (Appendix~\ref{sec: symmetric PRUs}), Theorem~\ref{thm: gluing TI sym} immediately allows one to construct symmetric translation-invariant pseudorandom unitaries in extremely low depth.

\begin{corollary}[Symmetric translation-invariant PRUs in extremely low depth]
    Symmetric translation-invariant PRUs exist, and can be formed in depth $\poly(\log n)$ in geometrically-local circuits, and $\poly \log \log n$ in all-to-all connected circuits.
\end{corollary}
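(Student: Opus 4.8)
The plan is to prove Theorem~\ref{thm: gluing TI sym} first; the Corollary then follows by instantiating the two small unitaries with symmetric PRUs. The strategy fuses two ingredients already in hand: the translation-invariant gluing argument of Theorem~\ref{thm: gluing TI} and the symmetric gluing argument of Lemma~\ref{lemma: AB BC to ABC app}. First I would invoke Lemma~\ref{lemma: relative error to additive TI sym} to reduce the statement to an additive-error estimate on the EPR state, i.e.~it suffices to bound $\lVert [\delta\Phi \otimes \mathbbm{1}](P_{\text{EPR}}) \rVert_\infty$, where $\delta\Phi = \Phi_{\mathcal{E}} - \Phi_a^{G,\text{TI}}$ and $\Phi_a^{G,\text{TI}}$ is the posited approximate symmetric translation-invariant Haar twirl. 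The normalization factor $3D^{2k}/(|G|^k n^k k!)$ in that lemma is exactly what is needed to convert the additive bound below into the claimed relative error.

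Second, I would compute $[\Phi_{\mathcal{E}} \otimes \mathbbm{1}](P_{\text{EPR}})$ when $U_1, U_2$ are drawn from the symmetric Haar ensemble, accounting separately for the relative error $(1+\varepsilon/4)^2$ from replacing those twirls by approximate symmetric $mk$-designs and the relative error from Lemma~\ref{lemma: approximate symmetric Haar twirl} for replacing each small symmetric Haar twirl by its approximate form $\Phi_a^G$. Writing $K = mk$, this yields a double sum over permutations $\sigma, \tilde\sigma \in S_K$ and over tensor products of symmetry operators $R_{\bs g}, R_{\bs g'}$ with $\bs g, \bs g' \in G^{\otimes K}$, weighted by $D^{-|\cdots|}$ factors and by traces over the system registers of the overlapping half-patches — precisely the translation-invariant structure of Eq.~(\ref{eq: Phi TI E EPR}), but with each permutation operator now decorated by a symmetry operator, exactly as Eq.~(\ref{eq: sym gluing proof}) decorates the non-translation-invariant gluing sum. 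The diagonal terms, with $\sigma = \tilde\sigma$ satisfying $[\sigma, \mathcal{T}^{\otimes k}] = 0$ and $\bs g = \bs g'$ compatible with $\sigma$, reproduce $[\Phi_a^{G,\text{TI}} \otimes \mathbbm{1}](P_{\text{EPR}})$, whose single nonzero eigenvalue is $|G|^k n^k k!/D^{2k}$.

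Third — the heart of the argument — I would bound the off-diagonal sum. Factoring the trace over the overlapping even and odd half-patches into a product over the cycles of $\sigma\tilde\sigma^{-1}$, each cycle contributes $0$ unless the product of group elements around it equals the identity, in which case it contributes $D_B$, exactly as in the third and fourth lines of Eq.~(\ref{eq: sym gluing proof}). Summing over $\bs g, \bs g'$ subject to these constraints produces a power of $|G|$ that combines with the $D^{-|\sigma\tilde\sigma^{-1}|}$ weights into a factor $(|G|/D_B)^{|\sigma\tilde\sigma^{-1}|}$. The residual permutation sum is then dispatched by the two counting lemmas of the non-symmetric translation-invariant proof: Proposition~\ref{prop: perm bound 1} performs the sum over $\tilde\sigma$ with $\tau = \sigma^{-1}\mathcal{T}^{-1}\sigma\mathcal{T}$, and Proposition~\ref{prop: perm bound 2} bounds the number of $\sigma$ with $|\sigma^{-1}\mathcal{T}^{-1}\sigma\mathcal{T}| = r$ by $(n^k k!)K^{4r}$. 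The extra $|G|$ in the threshold $\xi \geq \log_2(32|G|n^6 k^6/\varepsilon)$ — versus $\xi \geq \log_2(32 n^6 k^6/\varepsilon)$ in Theorem~\ref{thm: gluing TI} — enters precisely here: one requires $K^6 |G|/D \leq \varepsilon/4$ rather than $K^6/D \leq \varepsilon/4$. Assembling via Lemma~\ref{lemma: relative error to additive TI sym} and verifying $(1+\varepsilon/4)^3(1 + \text{small})^2 \leq 1+\varepsilon$ for $\varepsilon \leq 1$ finishes the theorem.

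Finally, the Corollary is immediate: take $\xi = \Theta(\log n)$, which satisfies $\xi \geq \log_2(32|G|n^6 k^6/\varepsilon)$ with room to spare for polynomial-time security and $k = \poly(n)$, and instantiate $U_1, U_2$ as symmetric PRUs on $2\xi$ qubits secure against sub-exponential-time adversaries via the constructions of Appendix~\ref{sec: symmetric PRUs} (Theorems~\ref{thm:poly} and~\ref{thm:polylog}); these match the depths of the non-symmetric translation-invariant corollary, giving $\poly(\log n)$ depth in geometrically-local circuits and $\poly\log\log n$ depth in all-to-all connected circuits. Theorem~\ref{thm: gluing TI sym} then certifies that the two-layer circuit is a symmetric translation-invariant PRU. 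The main obstacle I anticipate is the off-diagonal estimate in the third step: one must manage the permutation structure (which drives the translation-counting of Proposition~\ref{prop: perm bound 2}) and the symmetry-group structure (whose cycle-product constraints must yield exactly the $(|G|/D_B)^{|\cdot|}$ scaling and not overcount) at the same time, and keep the powers of $D$ versus $|G|$ aligned so the error degrades by only the single extra factor of $|G|$ inside the logarithm. The bookkeeping of which group-element constraints are active on shared versus unshared half-patches is where the technical care is concentrated.
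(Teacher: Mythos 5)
Your proposal follows essentially the same route as the paper: reduce to an additive EPR-state bound via Lemma~\ref{lemma: relative error to additive TI sym}, decorate the translation-invariant gluing sum with symmetry operators whose cycle-product delta constraints fix $\tilde{\bs g}$ and effectively replace $D \to D/|G|$, reuse Propositions~\ref{prop: perm bound 1} and~\ref{prop: perm bound 2} to get the threshold $\xi \geq \log_2(32|G|n^6k^6/\varepsilon)$, and then instantiate $U_1,U_2$ with the symmetric PRUs of Appendix~\ref{sec: symmetric PRUs}, exactly as the paper does. The only nitpick is that security against all polynomial-time adversaries with negligible advantage and $k=\poly(n)$ queries requires $\xi=\omega(\log n)$ (e.g.\ $\poly\log n$) rather than $\Theta(\log n)$, which leaves the claimed $\poly(\log n)$ and $\poly\log\log n$ depths unchanged.
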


\begin{proof}[Proof of Theorem~\ref{thm: gluing TI sym}]
    We proceed as in the proof of Theorem~\ref{thm: gluing TI}.
The twirl when both $U_1$ and $U_2$ are drawn from the symmetric Haar ensembles is approximated by,
\begin{equation} \label{eq: Phi TI E EPR sym}
\begin{split}
    &[ \Phi_{\mathcal{E}}  \otimes \mathbbm{1} ](P_{\text{EPR}}) \\
    & =
    \frac{1}{D^{3k}} \sum_{\sigma,\tilde{\sigma}} \sum_{\bs{g},\tilde{\bs{g}}}
    \tr_e ( (\mathcal{T} R_{\tilde{\bs{g}}}^{} \tilde{\sigma} \mathcal{T}^{-1})^{-1} R_{\bs{g}} \sigma )
    \cdot
    \tr_o ( \tilde{\sigma}^{-1} R_{\tilde{\bs{g}}}^{-1} R_{\bs{g}}^{} \sigma )\\
    & \quad\quad\quad \cdot
    [ ( \mathcal{T} R_{\tilde{\bs{g}}} \tilde{\sigma} \mathcal{T}^{-1} )_e \otimes (R_{\tilde{\bs{g}}} \tilde{\sigma})_o ]
    \otimes 
    [ (R_{\bs{g}} \sigma)_e \otimes (R_{\bs{g}} \sigma)_o ], \\
\end{split}
\end{equation}
up to relative error $(1+|G|K^2/D^2)^2$.
The pair of traces are equal to
\begin{align}
    \tr_o ( \tilde{\sigma}^{-1} R_{\tilde{\bs{g}}}^{-1} R_{\bs{g}}^{} \sigma )
    & = D^{-| \tilde{\sigma}^{-1} \sigma |} 
    \cdot
    \prod_{\ell \in \sigma\tilde{\sigma}^{-1}} \delta \bigg( \prod_{i \in \ell} \tilde{g}_i g_i^{-1} = e \bigg) \\
    \tr_e ( (\mathcal{T} R_{\tilde{\bs{g}}}^{} \tilde{\sigma} \mathcal{T}^{-1})^{-1} R_{\bs{g}} \sigma )
    & = 
    D^{-| \tilde{\sigma}^{-1}  \mathcal{T}^{-1} \sigma \mathcal{T} | }
    \cdot
    \prod_{\ell \in \tilde{\sigma}^{-1}  \mathcal{T}^{-1} \sigma \mathcal{T} } \delta \bigg( \prod_{i \in \ell} \tilde{g}_i g_i^{-1} = e \bigg),
\end{align}
which is identical to the translation-invariant circuit without symmetries, except for the additional condition that the symmetry operators within each cycle $\ell$ product to the identity.
This uniquely fixes $\tilde{\bs{g}}$ within each unit cycle.
The analogous expression for the approximate symmetric translation-invariant Haar twirl is
\begin{equation} \label{eq: Phi TI a EPR 2}
\begin{split}
    [ \Phi_{a}^{G,\text{TI}} \otimes \mathbbm{1} ](P_{\text{EPR}})
    & =
    \frac{1}{D^{2k}} \!\! \sum_{\sigma  \, : \,  [\sigma,\mathcal{T}^{\otimes k}] = 0  } \!\!
    \sum_{\bs{g}}
    [ ( R_{\bs{g}} \sigma )_e \otimes ( R_{\bs{g}} \sigma )_o ]
    \otimes 
    [ ( R_{\bs{g}} \sigma )_e \otimes ( R_{\bs{g}} \sigma )_o ].
\end{split}
\end{equation}
Taking the difference of the two and applying the triangle inequality gives,
\begin{align}
    &\left\lVert [ \delta \Phi \otimes \mathbbm{1} ](P_{\text{EPR}}) \right\rVert_\infty\\
    &\leq
    \frac{1}{D^{2k}} \left( \sum_{\sigma,\tilde{\sigma}} \sum_{\bs{g}, \tilde{\bs{g}}}
    D^{-| \tilde{\sigma}^{-1}  \mathcal{T}^{-1} \sigma \mathcal{T} | }
    D^{-| \tilde{\sigma}^{-1} \sigma |} 
    \!\!\!\!\! \prod_{\ell \in \tilde{\sigma}^{-1}  \mathcal{T}^{-1} \sigma \mathcal{T} } \!\!\!\! \delta \bigg( \prod_{i \in \ell} \tilde{g}_i g_i^{-1} \bigg)
    \! \prod_{\ell \in \sigma\tilde{\sigma}^{-1}} \! \delta \bigg( \prod_{i \in \ell} \tilde{g}_i g_i^{-1} \bigg)
    - |G|^k n^k k! \right),
\end{align}
As in the proof of the symmetric gluing lemma, Lemma~\ref{lemma: AB BC to ABC app}, we can upper bound the sum over $\tilde{\bs{g}}$ by noting that each cycle (of either permutation) uniquely fixes one value of $\tilde{g}_i$.
There are therefore, at most, $|G|^{\min(| \tilde{\sigma}^{-1} \sigma |,| \tilde{\sigma}^{-1}  \mathcal{T}^{-1} \sigma \mathcal{T} |)}$ values that $\tilde{\bs{g}}$ can take.
Using $\min(x,y) \leq x + y$, we can upper bound our previous expression by,
\begin{equation}
\begin{split}
    \left\lVert [ \delta \Phi \otimes \mathbbm{1} ](P_{\text{EPR}}) \right\rVert_\infty
     & \leq
    \frac{1}{D^{2k}} \left( \sum_{\sigma,\tilde{\sigma}} \sum_{\bs{g}}
    (D/|G|)^{-| \tilde{\sigma}^{-1}  \mathcal{T}^{-1} \sigma \mathcal{T} | }
    (D/|G|)^{-| \tilde{\sigma}^{-1} \sigma |} 
    - |G|^k n^k k! \right) \\
    & =
    \frac{|G|^k}{D^{2k}} \left( \sum_{\sigma,\tilde{\sigma}}
    (D/|G|)^{-| \tilde{\sigma}^{-1}  \mathcal{T}^{-1} \sigma \mathcal{T} | }
    (D/|G|)^{-| \tilde{\sigma}^{-1} \sigma |} 
    - n^k k! \right),
\end{split}
\end{equation}
where in the second line we perform the sum over $\bs{g}$, which yields a factor of $|G|^k$.
The sum is identical to the sum that we bounded in the proof of Theorem~\ref{thm: gluing TI}, replacing $D \rightarrow D/|G|$.
Making this replacement in the bounds in Theorem~\ref{thm: gluing TI} completes our proof. 
\end{proof}

\section{Quantum phases of matter} \label{sec: phases of matter}

In this section, we provide full details on the implications of our results to the hardness of recognizing phases of matter in quantum systems.
We begin by precisely stating the definitions of phases of matter that our results apply to.
We then provide detailed proofs of our Theorems~\ref{thm:quantum} and~\ref{thm:polylog} from the main text.

There exist many closely-related definitions of phases of matter in the literature.
Our results apply to any definitions that obey the following criteria.
\begin{definition}
    [Pure state phases of matter]
    Our results apply to any definition of pure state phases of matter such that:
    \begin{itemize}
        \vspace{-1mm}
        \item For each phase, one can associate a reference ``fixed point'' state $\ket{\psi_0}$.
        \vspace{-1mm}
        \item Any state $\ket{\psi}$ that can be mapped to $\ket{\psi_0}$ via a symmetric geometrically-local depth-$\ell$ light-cone-$\xi$ unitary circuit, $U \ket{\psi} = \ket{\psi_0}$, is in the same phase as $\ket{\psi_0}$.
    \end{itemize}
\end{definition}

\begin{definition}
    [Mixed state phases of matter]
    Our results apply to any definition of mixed state phases of matter such that:
    \begin{itemize}
        \vspace{-1mm}
        \item For each phase, one can associate a reference ``fixed point'' state $\rho_0$.
        \vspace{-1mm}
        \item Any state $\rho$ that can be mapped to and from $\rho_0$ via symmetry-covariant geometrically-local depth-$\ell$ light-cone-$\xi$  quantum channels, $\mathcal{C}_1(\rho) = \rho_0$ and $\mathcal{C}_2(\rho_0) = \rho$, is in the same phase as $\rho_0$.
    \end{itemize}
\end{definition}
\noindent To be precise, in the second definition, the only channels that we require are combinations of symmetric local unitary circuits and adding and removing ancilla qubits in the maximally mixed state.

\subsection{Proof of Theorem~\ref{thm:quantum}: Hardness of recognizing pure state phases of matter}

Let us first address the setting when $\xi = \omega(\log n)$. 
Here, Theorem~\ref{thm:quantum} follows immediately from our construction of symmetric PRUs on $n$ qubits in Theorem~\ref{thm:polylog}.
Consider two quantum states $\ket{\psi_{\text{P1}}} = U \ket{\psi_{0,\text{P1}}}$ and $\ket{\psi_{\text{P2}}} = U \ket{\psi_{0,\text{P2}}}$.
Here, P1 and P2 denote two different phases of matter (for example, a trivial and non-trivial phase).
The states $\ket{\psi_{0,\text{P1}}}$ and $\ket{\psi_{0,\text{P2}}}$ denote their fixed point states.
The states $\ket{\psi_{\text{P1}}}$ and $\ket{\psi_{\text{P2}}}$ are random and are obtained by drawing $U$ from the two-layer circuit with light-cone $\xi$, circuit depth $\poly(\xi)$, and security against any $\exp(o(\xi))$-time adversary (Theorem~\ref{thm:polylog}).
Without loss of generality, we assume that $\ket{\psi_{0,\text{P1}}}$ and $\ket{\psi_{0,\text{P2}}}$ are symmetric, since any phase of matter has a symmetric representative.
By the definition of a symmetric PRU, $U$ cannot be distinguished from a symmetric Haar-random unitary by any $\exp(o(\xi))$-time algorithm.
Hence, neither $\ket{\psi_{\text{P1}}}$ nor $\ket{\psi_{\text{P2}}}$ can be distinguished from a Haar-random symmetric state.
Hence, the two random states also cannot be distinguished from each other.
If the phase of matter could be recognized, then one could use this to distinguish the states. 
Hence, the phase of matter cannot be recognized.

We now address smaller values of $\xi$, i.e.~$\xi = \mathcal{O}(\log n)$ or smaller.
In this regime, the required circuit size will always be polynomial or smaller in $n$, but may still grow exponentially in the value of $\xi$.
As before, we draw the random unitary $U$ from the two-layer circuit ensemble with light-cone $\xi$, however $\xi$ is now too small for the two-layer circuit to form a PRU on $n$ qubits.
Let $C$ denote the circuit size of the quantum or classical algorithm.
By assumption, $C$ is smaller than any exponential in $\xi$, hence $C$ is smaller than any constant fractional power of $n$.
Now, any circuit of size $C$ can act on at most $C$ qubits of the unknown quantum state.
Hence,  we can trace out all but $C$ of the small random unitaries in the two-layer circuit.
The output of the experiment is unaffected by the details of the two-layer circuit in the traced out regions.
Thus, the output would be the same if the $\mathcal{O}(n)$ traced out small random unitaries were replaced with $\mathcal{O}(C)$ small random unitaries that glue together the remaining un-traced out regions that the algorithm acts on.
Hence, the output is identical to the output of a fictitious two-layer circuit that acts on only $\mathcal{O}(C)$ qubits instead of $n$ qubits.
The fictitious two-layer circuit forms a symmetric PRU with security against any $\poly(C)$-time adversary whenever $\xi = \omega(\log C)$.
Hence, recognizing the phase of matter requires $\xi = \mathcal{O}(\log C)$, which implies that the circuit size must grow exponentially in $\xi$. \qed

\subsection{Proof of Theorem~\ref{thm:mixed}: Hardness of recognizing mixed state phases of matter}

The proof follows by an identical argument to Theorem~\ref{thm:quantum} with only small modifications.
Let $\rho_0$ denote the fixed point of any mixed-state phase of matter.
Our first step is to take the tensor product of $\rho_0$ with $n$ ancilla qudits in the maximally mixed state, yielding $\tilde{\rho}_0 \equiv \rho_0 \otimes (\mathbbm{1}/d)^{\otimes n}$ where $d$ is the local qudit dimension.
Here, the system symmetry $G$ does not act on the ancilla qubits.
We assume that each ancilla qudit is placed geometrically next to a corresponding qudit in the original system.
Our second step is to apply a two-layer circuit of small symmetric PRUs on $2\xi$ pairs of qudits each.
This yields a random state $\rho = U \tilde{\rho_0} U^\dagger$  in the same phase as $\tilde{\rho_0}$ and hence $\rho_0$. 

We will now show that the random state $\rho$ cannot be distinguished a mixture of maximally mixed states in each symmetry sector. 
Let $\rho_{\text{mixed}} = \bigotimes_\lambda p_\lambda (\mathbbm{1}_{\mathcal{M}_\lambda} \otimes \rho_{\mathcal{F}_\lambda})$ denote the mixed state obtained by completely depolarizing $\rho$ (or equivalently, $\tilde{\rho}_0$) in each symmetry sector.
Here, the factor of $p_\lambda$ is chosen such that $\tr(\rho_{\mathcal{F}_\lambda}) = 1$.
We assume $\xi = \omega(\log n)$; the proof for smaller $\xi$ follows by an identical argument as in the proof of Theorem~\ref{thm:quantum}.
Since $U$ is a symmetric PRU, we know that $\rho$ cannot be distinguished from the $V \tilde{\rho_0} V^\dagger$ where $V$ is a symmetric Haar-random unitary.
It remains only to show that $V \tilde{\rho_0} V^\dagger$ is indistinguishable from $\rho_{\text{mixed}}$.

To show this, we borrow a technique from recent studies of pseudorandom states and unitaries known as the distinct subspace.
We refer to Refs.~\cite{metger2024simple,cui2025unitary} for a thorough introduction.
Let $k$ denote the number of queries that an algorithm makes to the random state of interest.
Let $x^{(j)}$ label a computational basis of $d^n$ bitstrings for the $n$ ancilla qudits in the $j$-th copy of state, where $j = 1,\ldots,k$.
Then, let $\Pi^{\text{dist}}$ denote the projector onto the subspace of the $k$ copies of $n$ ancilla qudits where all $k$ values of  $x^{(j)}$ are distinct.
The projector is upper bounded by the operator inequality, $\mathbbm{1}-\Pi^{\text{dist}} \preceq \sum_{i<j} \Pi^{\text{eq},ij}$, where $\Pi^{\text{eq},ij}$ is the projector onto states where $x^{(i)} = x^{(j)}$.
The first step of our proof is then to project the $k$ copies of the state $\tilde{\rho}_0$ onto the distinct subspace on the ancilla qudits.
This incurs a negligibly small error,
\begin{equation} \label{eq: error distinct mixed}
    \left\lVert \tilde{\rho}^{\otimes k}_0 - \Pi^{\text{dist}} \tilde{\rho}^{\otimes k}_0 \Pi^{\text{dist}}\right\rVert_1 = \tr((\mathbbm{1}-\Pi^{\text{dist}})\tilde{\rho}_0^{\otimes k}) \leq \sum_{i<j} \tr(\Pi^{\text{eq},ij} \tilde{\rho}_0^{\otimes k}) \leq k^2/2d^n.
\end{equation}
The term is upper bounded by $k^2/2d^n$, since the probability for two copies of the $n$ ancilla qudits to be in the same state is $1/d^n$.
The factor of $k(k-1)/2\leq k^2/2$ counts the number of terms in the sum.

The second step of our proof is to compute the twirl of the projected state over the symmetric Haar-random unitary $V$.
From Lemma~\ref{lemma: approximate symmetric Haar twirl}, the twirl is equal to the following up to negligibly small relative error,
\begin{equation}
    \frac{1}{D^k} \sum_{\pi} \sum_{\bs{g}} \tr( \Pi^{\text{dist}} \tilde{\rho}_0^{\otimes k} \Pi^{\text{dist}} \pi^{-1} R_{\bs g}^{-1} ) R_{\bs g} \pi = \frac{1}{D^k}  \sum_{\bs{g}} \tr( \Pi^{\text{dist}} \tilde{\rho}_0^{\otimes k} \Pi^{\text{dist}} R_{\bs g}^{-1} ) R_{\bs g},
\end{equation}
where we use that $\tr_{\text{ancillas}}(\Pi^{\text{dist}} \pi^{-1}) = \delta_{\mathbbm{1},\pi}$.
From Eq.~(\ref{eq: error distinct mixed}), this state is equal to 
\begin{equation}
    \frac{1}{D^k}  \sum_{\bs{g}} \tr(  \tilde{\rho}_0^{\otimes k}  R_{\bs g}^{-1} ) R_{\bs g} = \rho_{\text{mixed}}^{\otimes k}
\end{equation}
up to an additional 1-norm error $k^2/2d$.
This completes the proof.\qed


\section{Classical phases of matter} \label{sec: classical}

In this section, we provide the full details of  our results on the hardness of recognizing phases of matter in classical systems.
We begin by precisely stating our definition of classical phases of matter, which is closely modeled off of standard definitions for quantum phases of matter (restricted to classical states and channels).
We then provide the detailed proof of our Theorems~\ref{thm:classical} from the main text.

To our knowledge, a precise definition of classical phases of matter, in a similar format to the now standard definitions of quantum phases of matter, has not been given in the literature.
We propose the following definition, which is identical to the definition of mixed state quantum phases of matter but with all operations restricted to be entirely classical.
\begin{definition}
    [Classical phases of matter]
    Our results apply to any definition of classical phases of matter such that:
    \begin{itemize}
        \vspace{-2mm}
        \item For each phase, one can associate a reference ``fixed point'' probability distribution $p_0(x)$.
        \vspace{-2mm}
        \item Any distribution $p(x)$ that can be mapped to and from $p_0(x)$ via symmetry-covariant geometrically-local depth-$\ell$ stochastic processes, $\mathcal{M}_1(p) = p_0$ and $\mathcal{M}_2(p_0) = p$, is in the same phase as $p_0(x)$.
    \end{itemize}
\end{definition}
\noindent As in the mixed state quantum setting, the only stochastic processes that we require are combinations of symmetric local reversible circuits and adding and removing ancilla bits in the maximally mixed state.

\subsection{Proof of Theorem~\ref{thm:classical}: Hardness of recognizing classical phases of matter}


Let $\rho_0$ denote the diagonal density matrix with entries $\bra{x} \rho_0 \ket{x} = p_0(x)$.
Let $\mathcal{M}_P(\rho) = P [ \rho \otimes (\mathbbm{1}/2)^{\otimes n} ] P^\dagger$ denote the classical stochastic process of interest [Fig.~\ref{fig:classical}(a)], where $P = \bigotimes_{\alpha=1}^m P_\alpha$ and each $P_\alpha$ is a pseudorandom permutation acting on $2\xi = 2n/m$ bits.
By assumption, each pseudorandom permutation cannot be distinguished from a truly random permutation by any $\poly n$-time classical or quantum experiment.
Hence, from hereon, we can assume each $P_\alpha$ is truly random.

Consider a classical or quantum experiment that queries $\mathcal{M}(p_0(x))$ at most $k = \poly n$ times and outputs $0$ if it decides that the state is in the trivial phase and $1$ if it decides the state is in the symmetry-breaking phase.
The expected output of the experiment can be written as,
\begin{equation}
    \E_P \left[ \tr( M \cdot \mathcal{M}_P(\rho_0)^{\otimes k} ) \right],
\end{equation}
for some Hermitian observable $M$ acting on all $k$ copies, with $0 \preceq M \preceq 1$.

To proceed, let us define the \emph{symmetric distinct subspace} as the set of all sets of $k$ bitstrings such that no two strings are equal up to the symmetry operation.
That is, $x_i \neq x_j$ for all $1 \leq i < j \leq k$ and $x_i \neq \bar x_j$ as well, where $\bar x_j$ is equal to $x_j$ with all bits flipped (i.e.~acted on by the symmetry operation).
We let $\Pi^{\text{sym-dist}}_\alpha$ denote the projector onto the symmetric distinct subspace on patch $\alpha$; that is, onto all sets of $k$ $2\xi$-bit strings on patch $i$ such that no two strings are equal to one another up to the symmetry.
We also let $\Pi^{\text{sym-dist}}_{*} = \bigotimes_{\alpha=1}^m \Pi^{\text{dist}}_\alpha$ denote the projector onto the distinct subspace on every patch.

The first step of our proof is to replace $\rho_0 \otimes (\mathbbm{1}/2)^{\otimes n}$ with its projection onto the symmetric distinct subspace $\Pi^{\text{sym-dist}}_{*}$.
This incurs an error
\begin{equation} \label{eq: classical 1-norm}
    E \equiv \left\lVert \rho_0 \otimes (\mathbbm{1}/2)^{\otimes n} - \Pi^{\text{sym-dist}}_{*} \big[ \rho_0 \otimes (\mathbbm{1}/2)^{\otimes n} \big] \Pi^{\text{sym-dist}}_{*} \right\rVert_1 = \tr( \big(\mathbbm{1} - \Pi^{\text{sym-dist}}_{*} \big) \big[ \rho_0 \otimes (\mathbbm{1}/2)^{\otimes n} \big] ),
\end{equation}
where we use that $\Pi^{\text{sym-dist}}_{*}$ and the state commute since the state is classical.
The trace measures the probability that any non-distinct set of $k$ $2\xi$-bit strings is observed when measuring $\rho_0 \otimes (\mathbbm{1}/2)^{\otimes n}$.
This can be upper bounded using the inequality $\Pi^{\text{sym-dist}}_{*} \preceq \sum_\alpha \sum_{i < j} \Pi^{\text{eq},ij}_{\alpha}$, where $\Pi^{\text{eq},ij}_{\alpha}$ is the projector onto all states with the same bitstring (up to the symmetry operation) on copies $i$ and $j$.
Here, the first sum is over all patches and the second sum is over all pairs of copies, $1 \leq i < j \leq k$.
The probability to observe the same (up to the symmetry) $2\xi$-bit string on copies $i$ and $j$ is upper bounded by the probability to observe the same (up to the symmetry) $\xi$-bit string when restricting attention to the $\xi$ ancilla qubits.
Since the ancilla qubits are maximally mixed, this probability is equal to $2/2^\xi$, where the factor of $2$ in the numerator accounts for equality up to the symmetry operation.
From this, the 1-norm in Eq.~(\ref{eq: classical 1-norm}) is upper bounded by $m k (k-1) / 2^{\xi}$.

To complete the proof, we note that the action of each symmetric random permutation on a symmetry-distinct set of bitstrings is especially simple,
\begin{equation} \label{eq: classical twirl P}
    \E_{P_\alpha}\big[ \dyad{x_\alpha} \big] = \Pi^{\text{sym-dist}}_\alpha / \mathcal{D}^{\text{sym-dist}}_\alpha, \,\,\,\,\,\, \text{ if } \,\,\,\,\,\, \Pi^{\text{sym-dist}}_\alpha \ket{x_\alpha} = \ket{x_\alpha},
\end{equation}
where $\mathcal{D}^{\text{sym-dist}}_\alpha = 2^{2\xi} (2^{2\xi}-2) (2^{2\xi}-4) \cdots (2^{2\xi}-2k+2) \geq 2^{2\xi}(1-k^2/2^{2\xi})$ is the dimension of the symmetric distinct subspace on patch $\alpha$.
The formula follows because a symmetric random permutation sends each input bitstring to a random output bitstring, up to only the condition that two inputs related by the symmetry operation must go to outputs that are also related by the symmetry.
Since $x_\alpha$ is symmetry-distinct, no two inputs are related by the symmetry.
Hence, all outputs are distinct, not related by the symmetry, and otherwise independently random.
When taken in expectation, this produces the maximally mixed state on the symmetric distinct subspace.

Applying Eq.~(\ref{eq: classical twirl P}) to each patch $\alpha$, we find
\begin{equation}
    \E_P \left[ \left( P \big[\Pi^{\text{sym-dist}}_{*} \big[ \rho_0 \otimes (\mathbbm{1}/2)^{\otimes n} \big] \Pi^{\text{sym-dist}}_{*} \big] P^\dagger \right)^{\otimes k}  \right] = (1-E) \cdot \bigotimes_{\alpha=1}^m \left( \frac{\Pi^{\text{sym-dist}}_\alpha }{ \mathcal{D}^{\text{sym-dist}}_\alpha } \right),  
\end{equation}
where $1-E$ is the normalization of the state, with $E$ defined in Eq.~(\ref{eq: classical 1-norm}).
The state on the right hand side is close to the fixed state,
\begin{equation}
    \bigotimes_{\alpha=1}^m \left( \frac{\Pi^{\text{sym-dist}}_\alpha }{ \mathcal{D}^{\text{sym-dist}}_\alpha } \right),
\end{equation}
up to 1-norm error $E \leq mk(k-1)/2^\xi$.
However, this state is independent of the input probability distribution $p_0(x)$.
Hence, the expected output states of any two different input distributions, $p_0(x)$ and $p_1(x)$, will be equal up to 1-norm error $4 mk(k-1)/2^\xi$.
This immediately implies that their expected output decisions are equal up to $4 m k(k-1)/2^\xi = 1/\omega(\poly n)$ given $\xi = \omega(\log n)$.
Hence, the experiment cannot distinguish the output distributions arising from $p_0(x)$ and $p_1(x)$ in polynomial time.
Setting $p_1(x)$ to be the maximally mixed distribution yields Theorem~\ref{thm:classical}. \qed

\section{Numerics} \label{sec: numerics_appendix}

\subsection{Construction of symmetric Clifford circuits}
\subsection{Additional numerical data}

\bibliography{refs}
\bibliographystyle{unsrt}

\end{document}